\newtheorem{theorem}{Theorem}
\newtheorem{proposition}[theorem]{Proposition}
\newtheorem{lemma}[theorem]{Lemma}
\newtheorem{corollary}[theorem]{Corollary}
\newtheorem{example}{Example}
\newcommand{\TokFn}{g} 
\newcommand{\TokExp}{\beta} 
\newcommand{\TokExpSecond}{\gamma} 
\newcommand{\subfee}{\alpha}
\newcommand{\price}{\nu}
\newcommand{\costadditional}{\price^{H}}
\newcommand{\costmanual}{\price_0}
\newcommand{\NumConsumers}{C}
\newcommand{\NumProviders}{P}
\newcommand{\w}{w} 
\newcommand{\ic}{\price^*}
\newcommand{\marg}{\gamma}
\newcommand{\argmax}{\mathop{\mathrm{arg\,max}}}
\newcommand{\providerchoice}[1]{i_{#1}} 
\newcommand{\providerchoicemiddleman}{\providerchoice{m}}
\newcommand{\wmiddleman}{\w_{m}} 
\newcommand{\wuser}[1]{\w_{c, #1}} 
\newcommand{\action}[1]{a_{#1}}
\newcommand{\Direct}{D} 
\newcommand{\Middleman}{M} 
\newcommand{\LowerThreshold}{T_L}
\newcommand{\UpperThreshold}{T_U}
\newcommand{\LowerThresholdMarg}{T^{\text{marg}}_L}
\newcommand{\UpperThresholdMarg}{T^{\text{marg}}_U}
\newcommand{\LowerThresholdMon}{T^{\text{mon}}_L}
\newcommand{\UpperThresholdMon}{T^{\text{mon}}_U}
\title{Flattening Supply Chains: When do Technology Improvements lead to Disintermediation?\footnote{Authors in alphabetical order. Part of this work was conducted while MJ was at Microsoft Research.}}
\author[1]{S. Nageeb Ali}
\author[2]{Nicole Immorlica}
\author[3]{Meena Jagadeesan}
\author[2]{Brendan Lucier}
\affil[1]{\textit{Penn State}}
\affil[2]{\textit{Microsoft Research}}
\affil[3]{\textit{University of California, Berkeley}}
\begin{document}

\maketitle

\begin{abstract}
In the digital economy, technological innovations make it cheaper to produce high-quality content. For example, generative AI tools reduce costs for creators who develop content to be distributed online, but can also reduce production costs for the users who consume that content. These innovations can thus lead to disintermediation, since consumers may choose to use these technologies directly, bypassing intermediaries. To investigate when technological improvements lead to disintermediation, we study a game with an intermediary, suppliers of a production technology, and consumers. First, we show disintermediation occurs whenever production costs are too high or too low. We then investigate the consequences of disintermediation for welfare and content quality at equilibrium. While the intermediary is welfare-improving, the intermediary extracts all gains to social welfare and its presence can raise or lower content quality. We further analyze how disintermediation is affected by the level of competition between suppliers and the intermediary's fee structure. More broadly, our results take a step towards assessing how production technology innovations affect the survival of intermediaries and impact the digital economy.
\end{abstract}

\section{Introduction}\label{sec:intro}

The digital economy thrives on the ability to distribute content and services at scale. Consider an artist or other creator that uses the latest technology to develop high-quality content, which can be shared through online distribution platforms for a portion of advertising or subscription revenue. A key feature of these platforms is that they enable creators to distribute content to a broad audience at negligible marginal cost.

As technological innovations lead to new tools for content creation, the fixed costs of content production also continue to fall. For example, generative AI tools are making it cheaper to produce art and other digital content.\footnote{We use the term \textit{Generative AI tools} to refer to a wide range of generative models such as language models, text-to-image models, and text-to-video models.} 
At first glance, it might appear that these technology improvements would benefit creators by reducing the cost of creating high-quality content. 
However, technology improvements can also threaten \emph{disintermediation}: if the cost of content development gets sufficiently low, consumers may consider bypassing creators in favor of using the underlying technology themselves. 
Since content creators typically have no control over pricing on these platforms, the only way for them to retain their audience is to provision quality content.  Improvements in technology make this strategy cheaper for content creators, but also make self-production more enticing for consumers.

In this work, we investigate how shifts in production technology impact disintermediation and, through it, the overall welfare and the landscape of content quality enjoyed at the equilibrium of the content market.  We study the strategic choices and pressures faced by intermediaries that sit between an underlying production technology and content consumers (Figure \ref{fig:model}).  The intermediaries can produce content at a chosen level of quality
to be distributed
at scale using a fixed distribution platform. 
Our focus is the relationship between the underlying production technology (which determines costs), the quality of content produced, and users' choice to use (or not) the intermediary. 

Intermediaries provide a social good by unlocking economies of scale: multiple consumers can derive benefit from a one-time investment in content quality.\footnote{There are many other reasons intermediaries are a desirable feature of markets, particularly in the context of content production.  For example, when intermediaries distribute the same content to multiple consumers, it creates a sense of community among the audience which has arguably additional positive externalities.  We leave investigation of these forces open for future work.}  As long as there are multiple consumers, it is socially optimal for an intermediary to create content and distribute it broadly.
If the underlying technology improves and production costs decrease then, at this socially optimal solution, the intermediary will produce at a higher level of quality.
However, depending on the cost of production, equilibrium forces can push against this first-best outcome.  On one hand, if producing high-quality content is too expensive, being a creator that earns some fixed subscription fee or advertising revenue stream may simply not be profitable.  On the other hand, if production technology improves to the point of being extremely cheap and easy to use, then consumers are likely to prefer self-made solutions that bypass platform fees or the nuisance cost of advertising.  Anticipation of these issues leads to a distortion in the level of quality produced by the intermediary, who must stave off these potential market failures.

\paragraph{Model and results overview.} To focus on the competitive disintermediation pressure between the intermediary and the consumers that form their target audience, we study a model with a single intermediary that offers content to a population of consumers. We think of the intermediary as having built an audience for their content, perhaps by developing a niche or reputation, and our model abstracts away from horizontal differentiation and other inter-creator competitive forces that would shape that audience. Consumers pay a fee to access the intermediary's content, and we likewise view the fee structure as determined by the distribution channel(s) used by the intermediary and not the intermediary itself.  This is motivated by a lack of market power on the part of any single creator of content: rather than directly influencing market prices or advertising deals, the intermediary controls the quality of the content they produce.

The intermediary contracts with one of multiple possible suppliers of a production technology, which maps quality levels to production costs.  The intermediary then chooses to produce their content at a strategically-chosen level of quality.  Crucially, the production technology is also available to the consumers, who can choose whether to consume the content created by the intermediary or to bypass it and create their own content. If the consumers consume the content of the intermediary, the intermediary receives a fixed fee per consumer.

We analyze the subgame perfect equilibrium of the resulting game, which we show to be effectively unique.  We focus on how this equilibrium changes as the production technology improves, modeled as a multiplicative shift in the cost of production.  We summarize our findings as follows: 

\begin{itemize}
   \item  \textbf{Disintermediation at the extremes of production technology.}  We find that the intermediary stays in the market at a bounded range of technology levels, whereas disintermediation (i.e., all consumers choosing to bypass the intermediary) occurs when the production costs are either too high or too low (Section \ref{sec:disintermediation}; Figure \ref{fig:thresholds}).

   \item \textbf{Intermediary is welfare-improving.}  When the intermediary is present in the market, social welfare at equilibrium is higher than in a counterfactual where the intermediary does not exist (Figure \ref{fig:welfare-comparison}).  The equilibrium welfare is increasing as the production technology improves, and there is at most one ``bliss point'' where equilibrium welfare matches the first-best (Section \ref{subsec:welfare}; Figure \ref{fig:socialwelfare}).
    
  \item \textbf{Intermediary extracts all gains to social welfare.} Any increase in welfare over the counterfactual with no intermediary is captured by the intermediary itself. In other words, consumers and suppliers are indifferent to the intermediary's presence (Section \ref{subsec:consumers}; Figure \ref{fig:consumer-utility-comparison}).  We emphasize that this full extraction of the surplus improvement occurs even though the price of consuming the intermediary's content is fixed: the only strategic knob available to the intermediary is the quality of content they produce.  The intermediary's utility is ``inverse U-shaped'' as a function of the technology level: it first rises as production technology improves, but eventually levels out and begins to drop as the impact of potential disintermediation grows stronger, until eventually the intermediary's utility falls to $0$ and disintermediation occurs (Section \ref{subsec:middleman}; Figure \ref{fig:middleman}).

   \item \textbf{Intermediary can raise or lower quality, and reduces the sensitivity to technology improvements.}  We also consider the impact of the intermediary and technology costs on the quality of content produced (Section \ref{subsec:quality}; Figure \ref{fig:quality-comparison}).  Cheaper production costs lead to higher equilibrium levels of quality.  However, whenever the intermediary is being used at equilibrium, their presence dampens the sensitivity of quality to technology costs.  When production costs are high, the intermediary produces higher-quality content than what consumers would produce themselves.  As technology improves and production costs go down, quality improvements lag, ultimately reaching a point where the intermediary is producing lower-quality content than what consumers would produce. Once production costs drop low enough that disintermediation occurs, the quality of content increases sharply as the intermediary leaves the market (Figure \ref{fig:quality}). 
\end{itemize}

\paragraph{Model extensions.}
Finally, as a robustness check, we consider extensions to our base model (Section \ref{sec:extensions}).  First, we consider what changes if the production technology is controlled by a monopolist who can strategically set the relative price of production to maximize revenue, given the underlying technology cost (Section \ref{subsec:monopolist}; Figure \ref{fig:monopolist}).  We still find that the intermediary stays in the market at intermediate levels of technology. However, the range of technology levels supporting an intermediary is shifted to be lower: the intermediary can survive when the technology costs are lower, but is less likely to enter when the technology costs are high.

We also consider a variation of the model where marginal distribution is not free, but rather occurs at a reduced cost that is still proportional to content creation costs (Section \ref{subsec:marginalcosts}; Figure \ref{fig:marginalcosts}).  This can capture settings where the intermediary 
{produces content that might need to be specialized for each consumer at a small marginal cost.  For example, the intermediary might be a graphic designer with a suite of standard wedding invitation designs that can be tweaked for any particular client.} We again find that the intermediary stays in the market at intermediate levels of technology. However, the range of technology levels supporting an intermediary reduces in width.

Finally, we note the importance of our modeling assumption that the intermediary does not have full control over both the content quality and the fee structure.  We consider a variation where the 
intermediary can charge a fee that increases linearly with content quality (Section \ref{subsec:transfers}).  In this case, disintermediation can be avoided entirely, at all technology levels, as long as the marginal fee is not too high. We conclude that the nature of the fee structure in digital content distribution, and in particular that individual creators may not be able to differentiate on price, has a substantial impact on the market's sensitivity to underlying technology changes.

\subsection{Related work}\label{subsec:relatedwork}

Our work connects with research threads on the \textit{economic ramifications of generative AI tools}, \textit{content creator incentives in recommender systems}, and \textit{fragility of endogenous supply chain networks}.

\paragraph{Economic ramifications of generative AI tools.} Our work contributes to an emerging literature on the impact of generative AI tools on labor. Many of these models consider how generative AI substitutes for or is complementary to workers at a macroeconomic level~\citep{acemoglu2025simple,ide2024artificial}. Our paper views AI tools as complementing creators through cost reduction while also threatening to substitute for them via disintermediation.

In field experiments across specific domains as diverse as software development (\cite{cui2024effects,yeverechyahu2024impact}), customer service (\cite{brynjolfsson2025generative}), research (\cite{toner2024artificial}) and art (\cite{zhou2024generative}), generative AI also shows promising gains. For example, the last of these papers showed that
for artists, adoption of generative AI tools resulted in 25\% more artworks and 25\% more “favorites” per view. Our paper tries to understand the implication of  these demonstrated effects (that quality production is ``cheaper'' with the introduction of AI), especially in the artistic domain as consumers bypass experts and use AI to satisfy their own individual demand.

\paragraph{Content creator incentives in recommender systems.} Our work relates to a rich line of work on content creator incentives in recommender systems. Most of these works have focused on the \textit{impact of the recommendation algorithm} on the supply of digital content (e.g., \citep{ghoshincentivizing2011, benporatgametheoretic2018, JGS22, hronmodeling2022, yaocontentcreation2023}), pricing decisions by creators (e.g., \citep{calvano2023artificial, castellini2023recommender}), and creator participation decisions (e.g., \cite{ghoshincentivizing2011, MCBSZB20, BT23, HHLOS24}). In contrast, we study the impact of generative AI tools on whether creators can survive in the market. 

Similar to our work, a handful of recent works study how generative AI tools affect the digital content economy. For example, \citet{YLNWX24, EBFWX24} study competition between creators and generative models, capturing how the quality of the generative model depends on the quality of human-generated content. \citet{BP24, BP25} study the interplay between participation on human-based platforms and the generative model-provider's strategic retraining decisions. \citet{R24} studies  how competition between creators who use generative AI tools affects content diversity. \citet{burtch2024consequences} empirically study how LLMs affect participation in online knowledge communities. 
In contrast to these works, we focus on how generative AI tools lead to implicit competition between creators and consumers: specifically, we analyze when consumers are incentivized to bypass creators and create content on their own.\footnote{Our work focuses on a monopolist creator, and also does not account for interdependence between quality of the generative model and the quality of human-generated content.}

\paragraph{Fragility of endogenous supply chain networks.} Our work concerns the study of a specific supply chain network -- one where there is a single intermediary positioned between firms providing improving production technologies and consumers.  A long line of literature in economics, computer science and operations research considers more complex supply chain networks and how they react to shocks such as the severance of links or the bankruptcy of nodes~\citep{acemoglu2024macroeconomics,bimpikis2019supply,blume2013network,elliott2022supply}.  
Similar to our work, these papers allow relationships to form (or be severed) endogenously, but these decisions are driven by the desire to be robust to failures as opposed to changing production technology.  
Furthermore, in contrast to our work where quality selection is the only lever available to the intermediary, these papers sometimes assume price setting capability. 
The general message of these papers is that equilibrium supply chain networks can be inefficient and small shocks can cause disproportionate disruptions, suggesting an endogenous fragility.  Our work also demonstrates a discrete jump in some comparative statics, namely provisioned quality and disintermediation (i.e., network structure), from small changes.  However these outcomes are not disastrous for overall welfare.  

Some papers also consider how changes in production technology impact networks \citep{acemoglu2020endogenous} (see also cited papers on economic ramifications of generative AI tools above).  These papers tend to show that improvements in technology  reduce prices as they diffuse through a fixed production network and lead to a denser endogenous production network.

\section{Model}\label{sec:model}

\begin{figure}
\centering
\begin{tikzpicture}[scale=0.9, every node/.style={scale=0.9},
    box/.style={draw, minimum width=2cm, minimum height=1cm, align=center, fill=cyan!5},
    >=latex
]
\node[box] (s0) at (0,2) {\textbf{Manual} \\ \textbf{production}};

\node[box] (s1) at (0,0) {\textbf{Supplier 1} \\ \textit{Offers price $\price_1$}};

\node[box] (s2) at (0,-2) {\textbf{Supplier 2} \\ \textit{Offers price $\price_2$}};

\node[box] (c1) at (12,1) {\textbf{Consumer 1} \\ \textit{\textcolor{blue}{Decides whether to produce content}} \\ \textit{Consumes content $\wuser{1}$}};
\node[box] (c2) at (12,-1) {\textbf{Consumer 2} \\ \textit{\textcolor{blue}{Decides whether to produce content}} \\ \textit{Consumes content $\wuser{2}$} };

\draw[<->, blue, very thick] (s2) to[bend left=20] node[midway] {} (c1);
\draw[<->, blue, very thick] (s0) to[bend right=20] node[midway, below] {} (c2);

\node[box] (m) at (5,0) {\textbf{Intermediary} \\ \textit{Produces content $\wmiddleman$}};

\draw[<->, very thick] (s0) -- node[midway, above] {} (m);
\draw[<->, very thick] (s1) -- node[midway, above] {} (m);
\draw[<->, very thick] (s2) -- node[midway, below] {} (m);

\draw[<->, very thick] (m) -- node[midway, above] {} (c1);
\draw[<->, very thick] (m) -- node[midway, below] {} (c2);

\draw[<->, blue, very thick] (s0) to[bend left=20] node[midway, below] {} (c1);
\draw[<->, blue, very thick] (s1) to[bend left=20] node[midway, above] {} (c1);
\draw[<->, blue, very thick] (s1) to[bend right=20] node[midway] {} (c2);
\draw[<->, blue, very thick] (s2) to[bend right=20] node[midway, below] {} (c2);

\end{tikzpicture}
\caption{Our model for a digital content supply chain with suppliers, a intermediary, and consumers (Section \ref{sec:model}). 
The supplier offers a technology to produce content, the intermediary produces content, and the consumers consume content. 
The suppliers also offer the technology to the consumers, so the consumers have the option to directly produce content and bypass the intermediary (the blue arrows). }
\label{fig:model}
\end{figure}
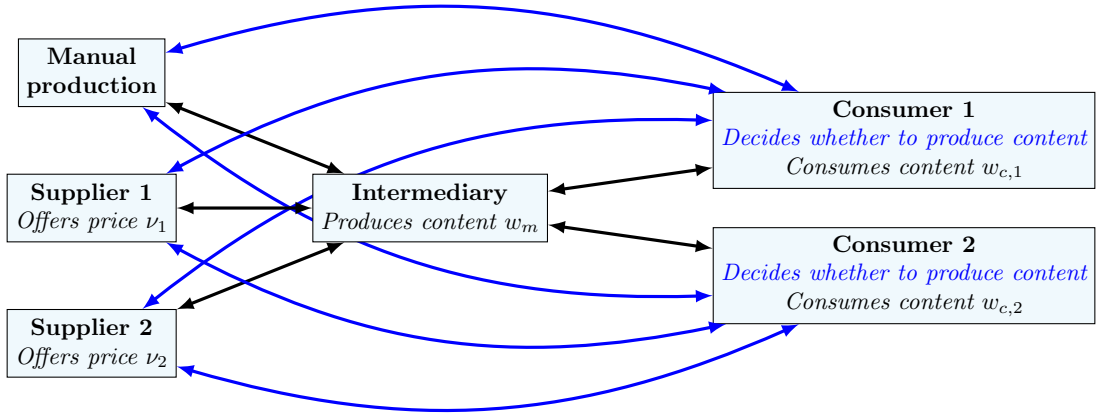

We develop the following model of a digital content supply chain (Figure \ref{fig:model}). There are $\NumProviders \ge 2$ homogeneous suppliers\footnote{See Section \ref{subsec:monopolist} for a generalization to a monopolist supplier.}, a monopolist intermediary (a content creator), and $\NumConsumers \ge 2$ homogeneous consumers.\footnote{While the consumer population is finite, we emphasize that they are homogeneous and do not interact strategically with each other in our model. We model consumers as finite, rather than a large-market continuum, to allow an apples-to-apples comparison of production costs borne by consumers and by the intermediary.}  We embed content into a single dimension that captures content quality.  We normalize our quality measure according to consumer utility: we say that content has quality $\w \in \mathbb{R}_{\ge 0}$ if a consumer derives utility $\w$ from consuming it.  Producing content incurs a cost that scales with its quality: we write $\TokFn(\w)$ for the cost of (manually) producing content of quality $\w$, where $\TokFn$ is an increasing function. We will place technical conditions on 
cost function $\TokFn$ in Section \ref{subsec:generalcase}.

The suppliers offer a technology that assists with digital content creation. The technology automates some---though not necessarily all---aspects of content creation. Furthermore, the technology is available not only to the intermediary, but also to consumers. The supplier incurs supply-side costs for operating the technology that scale with content quality: it costs the supplier $\ic \cdot \TokFn(\w)$ to operate the technology to assist in creating content with quality $\w$, where $\ic > 0$ describes the supplier's relative marginal cost.  When the technology is used for content creation, the additional human labor required to create content is captured by the human-driven production costs, which scale with content quality $\w$ according to $\costadditional \cdot \TokFn(\w)$ where $\costadditional \ge 0$.
Alternatively, content can be created manually (i.e., without the use of the technology), and these manual production costs scale with content quality $\w$ according to $\costmanual \cdot \TokFn(\w)$. It will be notationally convenient to define $\costmanual = 1$, so that $\TokFn(\w)$ is the cost of creating content manually. 

Each supplier sets a price $\nu$ (which can be different from $\ic$) and charges $\price \cdot \TokFn(\w)$ for operating the technology to assist in creating content with quality $\w$. The intermediary selects which supplier to link to (or whether to create content manually), and also strategically chooses the quality of the content that they produce. Each consumer then strategically decides whether to consume the content that the intermediary has produced, paying a fee to the intermediary in order to do so, or to directly produce content themselves for their own personal consumption.   

To further motivate our model, we illustrate how it provides a framework to study the impact of generative AI on content creation supply chains. 
\begin{example} 
\label{example:genai}
Consider the ongoing trend of generative AI being integrated into content creation supply chains. To capture this, let the suppliers correspond to companies which serve generative models (e.g., text-to-image models and text-to-video models) to users. Let the function $\TokFn(\w)$ capture the number of tokens needed to produce content with quality $\w$.\footnote{We might expect that the number of tokens $\TokFn(\w)$ to increase with content quality if the user needs to go back and forth over more rounds of dialogue to obtain higher-quality, or if higher-quality content requires more tokens to generate than lower-quality content.} Let the supply-side costs $\ic \cdot g(w)$ capture company's inference costs from querying the model, which tend to scale with the number of generated tokens. Let the pricing structure of suppliers capture setting a price-per-token and charging users based on the number of tokens, which is a common approach used in practice.\footnote{See \url{https://openai.com/api/pricing/}.} In contrast, the marginal transaction fee $\alpha$ can represent either a fixed platform fee (perhaps in the form of advertising) or a market rate for the commissioned service of the creator's content.

At a conceptual level, our model captures 
how generative AI models reduce the expertise needed for content creation, as reflected by how consumers and the intermediary can both leverage generative AI to assist with content creation. Nonetheless, our model also accounts the possibility of human-driven production costs which persist even in the presence of generative AI.

\end{example}

\paragraph{Stages of the Game.} The game proceeds in the following stages: 
\begin{enumerate}
    \item Each supplier $i \in [\NumProviders]$ chooses a price $\price_i$ for their technology. 
    \item The intermediary choose a content quality $\wmiddleman$.\footnote{We could have included an option for the intermediary to opt out. However, this is already implicitly captured by the intermediary producing content with quality $0$.} To produce $\wmiddleman$, they either choose a provider $\providerchoicemiddleman\in [\NumProviders]$ whose technology to use,  or they decide to create content manually $\providerchoicemiddleman = 0$. If $\providerchoicemiddleman = 0$, they incur a manual production cost of $\costmanual \cdot \TokFn(\wmiddleman)$; if $\providerchoicemiddleman \in [\NumProviders]$, they 
    pay $\price_{\providerchoicemiddleman} \cdot \TokFn(\wmiddleman)$ to supplier $i$ and also incur a human-driven production cost of $\costadditional \cdot \TokFn(\wmiddleman)$.  
    \item Each consumer $j \in [\NumConsumers]$ chooses a mode of consumption $\action{j} \in \left\{\Middleman, \Direct \right\}$.\footnote{We could have also included an option for the consumer to opt out. However, the consumer would never be incentivized to opt out, since they can always choose the direct creation option $\Direct$ and create content $\w = 0$ with quality level $0$ for free.}  
    \begin{itemize}
        \item If they choose $\action{j} = \Middleman$ (the intermediary option), they pay the fee $\alpha > 0$ to the intermediary and consume $\wuser{j} := \wmiddleman$. 
        \item   If they choose $\action{j} = \Direct$ (the direct creation option), they instead  choose quality of the content $\wuser{j}$ which they will produce and consume. To produce the content, they either choose a provider $\providerchoice{j} \in [\NumProviders]$ whose technology to use,  or they decide to create content manually $\providerchoice{j} = 0$. If $\providerchoice{j} = 0$, they incur a manual production cost of $\costmanual \cdot \TokFn(\wuser{j})$. If $\providerchoice{j} \in [\NumProviders]$, they 
    pay $\price_{\providerchoice{j}} \cdot \TokFn(\wuser{j})$ to supplier $i$ and also incur a human-driven production cost of $\costadditional \cdot \TokFn(\wuser{j})$. 
    \end{itemize}
 
\end{enumerate}

This game captures several features of digital content creation. First, observe that the intermediary incurs the same production costs regardless of how many consumers consume the content: this captures how digital content is typically free to distribute, regardless of how many consumers consume the content.\footnote{See Section \ref{subsec:marginalcosts} for an extension to the case of nonzero marginal costs.} Moreover, observe the intermediary (the content creator) receives the same fee from consumers regardless of content quality. This captures how when content creators rely on online platforms to share their content with consumers, it is common for creators to be rewarded based on exposure.  This exposure is proportional to the size of the creator's audience; our model abstracts away from inter-creator forces and content types that would determine the size of that audience.  On the consumer side, this fee could capture either the subscription fees paid to the platform or disutility from being shown advertisements.

\subsection{Utility functions}

We specify the utility functions of the suppliers, intermediary, and consumers. Each supplier $i$ derives profit equal to their revenue from usage of their technology minus supply-side costs: 
\[
\underbrace{1[\providerchoicemiddleman = i] \cdot (\price_i \cdot \TokFn(\wmiddleman) - \ic \cdot \TokFn(\wmiddleman))}_{\text{profit from intermediary usage}} + \sum_{j=1}^{\NumConsumers} \underbrace{1[\action{j} = \Direct] \cdot 1[\providerchoice{j} = i] \cdot (\price_i \cdot \TokFn(\wuser{j}) - \ic \cdot \TokFn(\wuser{j}))}_{\text{profit from consumer $j$'s usage}}.\]
The intermediary derives utility equal to their revenue from consumer fees minus production costs: 
\[ \sum_{j=1}^{\NumConsumers} \underbrace{1[\action{j} = M] \cdot \alpha}_{\text{revenue from consumer $j$}} - \underbrace{1[\providerchoicemiddleman \in [\NumProviders]] \cdot \left(\price_{\providerchoicemiddleman} \cdot \TokFn(\wmiddleman) + \costadditional \cdot \TokFn(\wmiddleman)\right)}_{\text{costs if technology is used}} - \underbrace{1[\providerchoicemiddleman = 0] \cdot \left( \price_{0} \cdot \TokFn(\wmiddleman)\right)}_{\text{manual costs}}.\]
Each consumer $j$ derives utility equal to the quality of the content they consume minus fees paid to the intermediary or production costs, depending on their chosen mode of consumption: 
\[\underbrace{\wuser{j}}_{\text{quality}} - \underbrace{1[\action{j} = M] \cdot \alpha}_{\text{intermediary fee}} -  \underbrace{1[\action{j} = D] \cdot \left(1[\providerchoice{j} \in [\NumProviders]] \cdot (\price_{\providerchoice{j}} \cdot \TokFn(\wuser{j}) + \costadditional \cdot \TokFn(\wuser{j})) + 1[\providerchoice{j} =  0] \cdot \left( \costmanual \cdot \TokFn(\wuser{j})\right)\right)}_{\text{production costs}}.\]

\subsection{Equilibrium concept and equilibrium existence}\label{subsec:equilibriumexistence}

We focus on the pure strategy \textit{subgame perfect equilibria} in the game between suppliers, the intermediary, and users. 
The following result shows that a pure strategy equilibrium exists (proof deferred to Appendix \ref{appendix:proofsequilibria}).
\begin{theorem}
\label{thm:equilibriumconstruction}
There exists a pure strategy equilibrium in the game between suppliers, the intermediary, and consumers. 
\end{theorem}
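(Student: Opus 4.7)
The plan is to construct a pure strategy subgame perfect equilibrium by backward induction through the three stages, fixing a tie-breaking convention whenever a player is indifferent. I will break ties by having consumers select the intermediary option $M$ when indifferent with $D$, having the intermediary choose the cheapest production route with a fixed ordering, and having suppliers coordinate on the standard Bertrand-style prescription described below.

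First I would solve the Stage 3 subgame for each fixed profile of supplier prices $(\nu_1,\dots,\nu_P)$, intermediary quality $\wmiddleman$, and intermediary's sourcing choice. Each consumer's decision reduces to comparing $\wmiddleman - \alpha$ against $\max_{w \ge 0}\bigl(w - c_D \cdot \TokFn(w)\bigr)$, where $c_D = \min\bigl(1,\,\min_{i \in [P]}(\nu_i + \costadditional)\bigr)$ is the consumer's effective marginal cost rate under an optimal sourcing choice. Under the regularity conditions on $\TokFn$ that will be imposed in Section~\ref{subsec:generalcase}, this inner maximization is attained at some $\w^*_D$, giving a well-defined best response; I will record this optimal direct-production utility as $U_D(c_D)$.

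Next I would analyze Stage 2. Given supplier prices, the intermediary faces a two-option problem: either (a) set $\wmiddleman$ low (e.g.\ $0$) and receive utility $0$, or (b) choose $\wmiddleman$ large enough that every consumer weakly prefers $M$, i.e.\ $\wmiddleman - \alpha \ge U_D(c_D)$, and earn $\NumConsumers \cdot \alpha - c_M \cdot \TokFn(\wmiddleman)$, where $c_M = \min\bigl(1,\,\min_{i \in [P]}(\nu_i + \costadditional)\bigr)$. Since $\TokFn$ is increasing, the optimal $\wmiddleman$ in regime (b) is the smallest quality satisfying the consumer constraint, namely $\wmiddleman = \alpha + U_D(c_D)$. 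The intermediary's best response is then the better of the two regimes; under the regularity assumptions on $\TokFn$ this maximum is attained, and ties are broken by a fixed rule (e.g., choose (b) when indifferent, and the lowest-indexed cheapest supplier).

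Finally I would handle Stage 1, which is a Bertrand-style pricing game among $P \ge 2$ homogeneous suppliers with common marginal cost $\ic$. Given the downstream best responses derived above, supplier $i$'s revenue is positive only if $i$ is (selected as) a cheapest supplier and is actually used by the intermediary or by some consumer choosing $D$. The standard Bertrand argument yields a pure equilibrium in which every supplier posts $\nu_i = \ic$: no supplier can profitably deviate upward (it loses all demand to a competitor at $\ic$), and no supplier can profitably deviate downward (it earns nonpositive markup). Combining this with the Stage 2 and Stage 3 best responses constructed above produces a pure strategy subgame perfect equilibrium.

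The main obstacle is verifying that maxima are attained in Stages 2 and 3 despite the unbounded quality space and the discontinuity at the indifference threshold between $M$ and $D$. This is handled by leaning on the Section~\ref{subsec:generalcase} assumptions on $\TokFn$ (which force the consumer and intermediary optimizations to be coercive with interior maximizers) and by the tie-breaking convention, which selects a specific best response on the measure-zero set of indifferences so that best-response correspondences admit pure selections at every stage.
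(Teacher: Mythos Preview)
Your proposal is correct and mirrors the paper's approach almost exactly: the paper also proceeds by backward induction, first solving the intermediary--consumer subgame (their Lemma~\ref{lemma:middlemansubgame}, which yields precisely your Stage~2/Stage~3 analysis with $\wmiddleman = \alpha + U_D(c_D)$ versus $\wmiddleman = 0$), and then verifying that all suppliers pricing at $\nu_i = \ic$ is a Bertrand equilibrium given that subgame (their Lemma~\ref{lemma:equilibriumcharacterization}). The only cosmetic difference is that the paper packages the Stage~2/Stage~3 argument into a standalone lemma before layering on the supplier analysis, whereas you sketch all three stages in one pass.
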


When we place additional structure on tiebreaking, we can also show a partial uniqueness result. In particular, we assume the following structure on tiebreaking: each consumer $j$ tiebreaks in favor of the intermediary (i.e., $a_j = M$), the intermediary tiebreaks in favor of producing higher-quality content over lower-quality content, the intermediary and consumers tiebreak in favor of suppliers over manual production, the intermediary and consumers tiebreak in favor of suppliers with a lower index.   
\begin{theorem}
\label{thm:equilibriumuniqueness}
Under the tiebreaking assumptions described above, the actions of the intermediary and consumers are the same at every pure strategy equilibrium. Moreover, the production cost $\price = \min(\costadditional + \min_{i \in [\NumProviders]} \price_i, \costmanual)$ is the same at every pure strategy equilibrium.  
\end{theorem}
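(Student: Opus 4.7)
The plan is backward induction through the three stages of the game, combined with a Bertrand-style argument for the supplier stage. Write $\price := \min(\costadditional + \min_i \price_i, \costmanual)$ for the effective per-unit-quality cost of content production that is available to both the intermediary and the consumers, and let $U^*(\price) := \max_w \{w - \price \TokFn(w)\}$ denote the corresponding optimal direct-production utility. Fixing $\{\price_i\}$ and $\wmiddleman$, each homogeneous consumer compares $\wmiddleman - \alpha$ against $U^*(\price)$ and chooses $M$ iff the former is at least the latter; by the tiebreaking rules (consumer in favor of $M$; supplier over manual; lower-index supplier over higher-index) the triple $(a_j,\, \wuser{j},\, \providerchoice{j})$ is uniquely pinned down by $\wmiddleman$ and $\price$.

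Next, fixing only $\{\price_i\}$, I analyze the intermediary's best response. Since the intermediary faces the same effective production cost $\price$, their profit is either $C\alpha - \price \TokFn(\wmiddleman)$ (if $\wmiddleman \ge U^*(\price) + \alpha$, attracting all consumers) or $-\price \TokFn(\wmiddleman)$ (otherwise). As $\TokFn$ is increasing, the optimal choice lies in $\{0,\; U^*(\price) + \alpha\}$, with corresponding profits $0$ and $C\alpha - \price \TokFn(U^*(\price) + \alpha)$. The higher-quality tiebreak selects the latter whenever $C\alpha \ge \price \TokFn(U^*(\price) + \alpha)$, so both $\wmiddleman$ and $\providerchoicemiddleman$ (via the lower-index tiebreak) are uniquely determined as functions of $\price$.

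Finally, I pin down $\price$ via a two-case Bertrand argument over the $\NumProviders \ge 2$ homogeneous suppliers (marginal cost $\ic$). In the regime $\ic + \costadditional < \costmanual$, any price vector with $\min_i \price_i > \ic$ admits a profitable undercut: a deviator pricing at $\ic + \varepsilon$ induces strictly positive downstream demand (traced through Steps 1--2) and earns strictly positive profit. Hence $\min_i \price_i = \ic$ and $\price = \ic + \costadditional$ in any equilibrium of this regime. In the complementary regime $\ic + \costadditional \ge \costmanual$, downstream demand for the technology is zero at every price vector with $\min_i \price_i \ge \ic$ (manual is weakly cheaper, and the supplier-over-manual tiebreak only binds at equality, which leaves $\price = \costmanual$), so no supplier has a profitable deviation and $\price = \costmanual$. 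In either regime $\price = \min(\ic + \costadditional, \costmanual)$, the same across all equilibria. Composing with the earlier steps yields uniqueness of $\wmiddleman$, $\providerchoicemiddleman$, and all consumer actions.

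The hardest part will be the Bertrand step, specifically verifying that an undercut to $\ic + \varepsilon$ in the first regime actually generates strictly positive downstream demand: one must trace how a small decrease in $\price$ propagates through the intermediary's piecewise best response and then through the consumer's indifference threshold, and check that no pathological intermediary response leaves the deviating supplier with zero sales. A secondary technicality is the uniqueness of the maximizer defining $U^*(\price)$, which I will import from the structural conditions on $\TokFn$ introduced in Section \ref{subsec:generalcase} (or handle via an auxiliary tiebreak on $\wuser{j}$ and $\wmiddleman$).
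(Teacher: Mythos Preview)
Your plan is essentially the paper's own proof: the backward induction for Stages~2--3 reproduces Lemma~\ref{lemma:middlemansubgame}, and the Bertrand step for Stage~1 is exactly the content of the paper's proof of Theorem~\ref{thm:equilibriumuniqueness}. One omission to patch: in both regimes your Bertrand argument only rules out $\min_i \price_i > \ic$, but you must also dispose of $\min_i \price_i < \ic$. The paper handles this directly by observing that whenever $\min_i \price_i + \costadditional \le \costmanual$ the lowest-index minimum-price supplier is selected (by the tiebreak) and earns strictly negative profit, so deviating to $\ic$ is profitable; and whenever $\min_i \price_i + \costadditional > \costmanual$ (possible only in your second regime) manual production is used anyway and $\price = \costmanual$.

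Your flagged ``hardest part'' is not actually hard: by your own Stage~2--3 analysis, at any finite $\price$ either the intermediary produces $\wmiddleman = \alpha + U^*(\price) > 0$ or every consumer produces $\wuser{j} = \argmax_w(w - \price\TokFn(w)) > 0$, so the cheapest supplier always faces strictly positive demand and an undercut to just below the current minimum earns strictly positive profit.
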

We will focus on this class of equilibria throughout our analysis in Sections \ref{sec:disintermediation}-\ref{sec:utility}. 
The formal equilibrium construction is deferred to Appendix~\ref{appendix:proofsequilibria}, but let us briefly describe its structure.  At equilibrium, all suppliers will select the same price $\price_i$, which (due to competitive pressure) will be equal to the supplier's marginal production cost $\ic$.  There are then two cases, corresponding to two types of equilibria. 
\begin{itemize}
    \item In the disintermediation case, the intermediary chooses to produce at quality $0$ (i.e., exits the market).  Each of the $C$ consumers then produces, directly and separately, their own content at a utility-maximizing level of quality, either by contracting with the minimum-index supplier or generating the content manually, whichever is cheapest. 
    \item In the intermediation case, the intermediary produces at positive quality and all consumers choose to consume the content created by the intermediary. 
\end{itemize}
 For each choice of model parameters, exactly one of these two cases applies.

\section{Characterization of Disintermediation}\label{sec:disintermediation}

\begin{figure}
    \centering
    \begin{subfigure}[b]{0.49\textwidth}
        \centering
        \includegraphics[width=\textwidth]{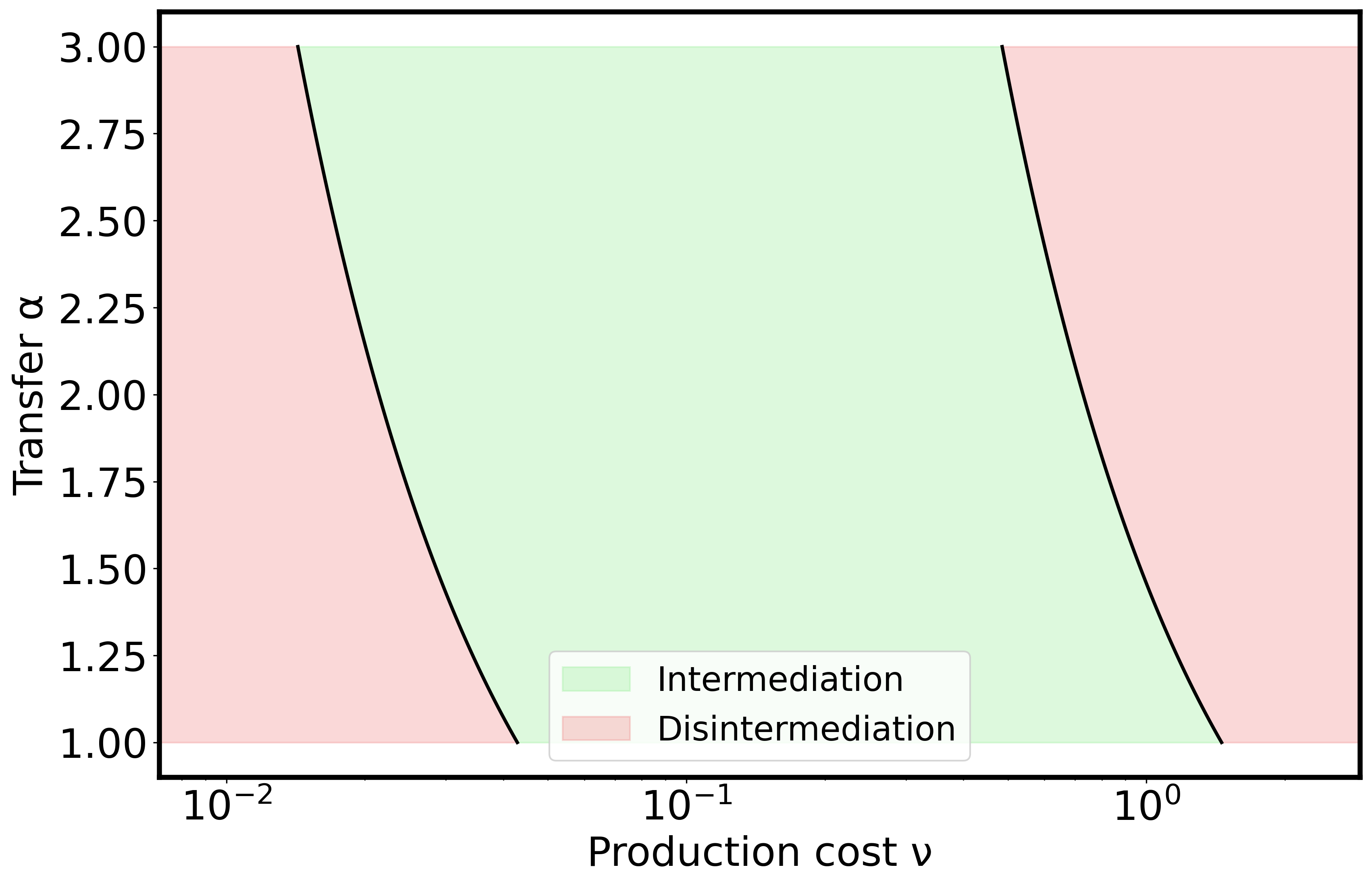}
        \caption{Impact of transfer $\subfee$}
        \label{fig:thresholds-alpha}
    \end{subfigure}
    \hfill
    \begin{subfigure}[b]{0.49\textwidth}
        \centering
        \includegraphics[width=\textwidth]{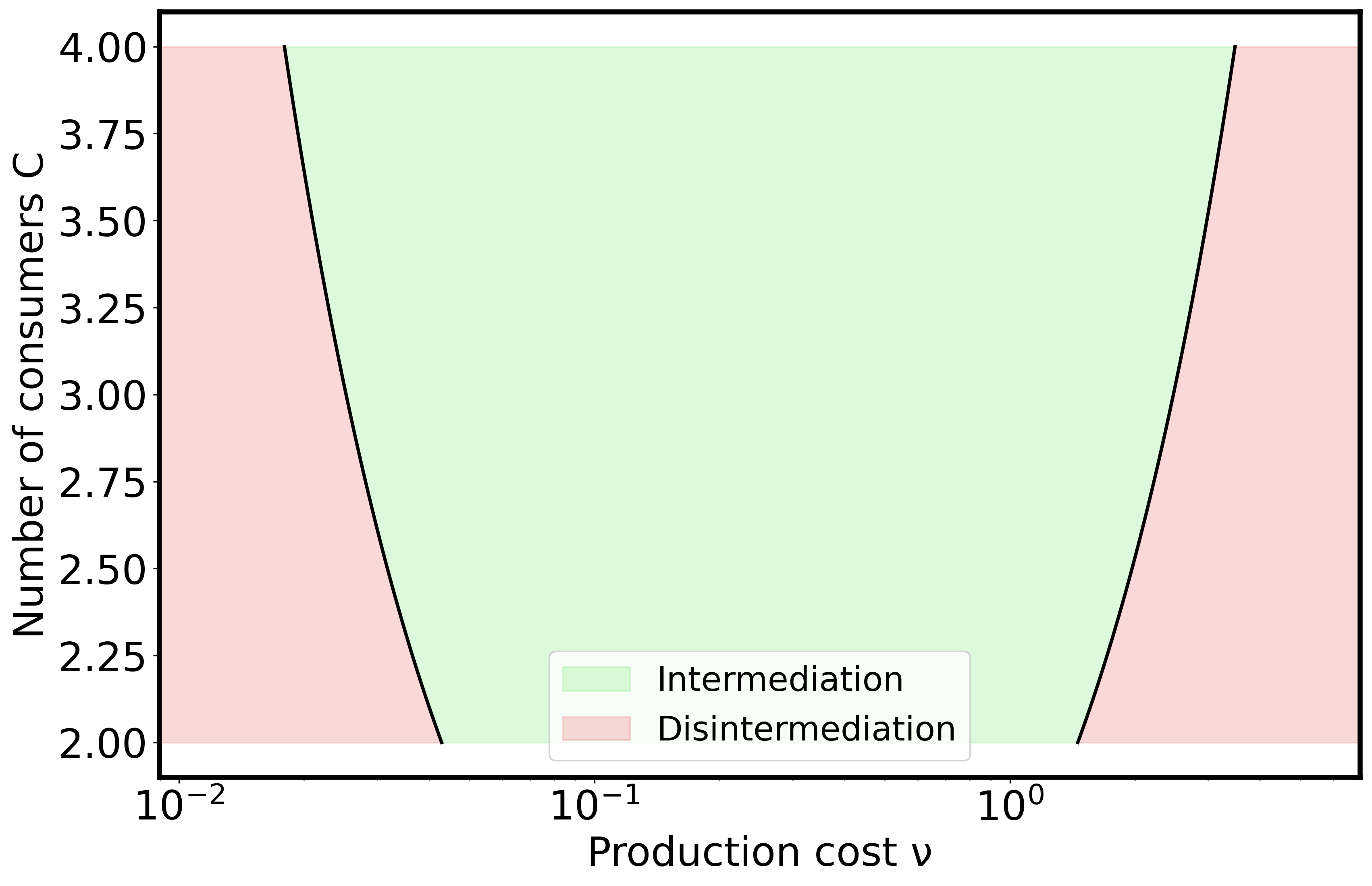}
        \caption{Impact of number of consumers $\NumConsumers$}
        \label{fig:thresholds-C}
    \end{subfigure}
    \caption{Production costs where disintermediation (red) vs. intermediation (green) occurs, for $g(\w) = \w^2$ (Theorem \ref{thm:specialcasemiddlemanusageexposure}). We vary the transfer $\alpha$ (left) and number of consumers $C$ (right). The intermediary only survives in the market when the production costs are at intermediate levels: the intermediary is driven out of the market when production costs are sufficiently low or sufficiently high. The range of values where intermediation occurs shifts lower when the fees $\subfee$ are higher, and the range expands when the number of consumers is larger. (We have generated these plots with a small number of consumers for ease of visualization, but our results apply for any number of consumers.)}
    \label{fig:thresholds}
\end{figure}

In this section, we characterize when the disintermediation occurs: that is, when the intermediary does not survive in the market. To study this, we analyze the \textit{intermediary usage} $\sum_{j=1}^{\NumConsumers} \mathbb{E}[1[\action{j} = \Middleman]]$ which measures the number of consumers who consume content produced by the intermediary, at equilibrium. In Section \ref{subsec:specialcase}, we characterize the intermediary usage in the special class of cost functions of the form $\TokFn(\w) = \w^{\TokExp}$. In Section \ref{subsec:generalcase}, we extend this result to general cost functions $\TokFn$. 

\subsection{Special Class of Cost Functions: $\TokFn(\w) = \w^{\TokExp}$}\label{subsec:specialcase}

To gain intuition, we first consider 
cost functions that are power functions of the form $\TokFn(\w) = \w^{\TokExp}$ for $\TokExp > 1$. This class of functions permits the following characterization. 

\begin{theorem}
\label{thm:specialcasemiddlemanusageexposure}
Let $\TokFn(\w) = \w^{\TokExp}$ where $\TokExp > 1$. Fix $\subfee > 0$, $\NumConsumers > 1$. Suppose there are $\NumProviders > 1$ providers. There exist thresholds $0 < \LowerThreshold( \NumConsumers, \subfee, \TokExp) <  \UpperThreshold(\NumConsumers, \subfee, \TokExp) < \infty$ such that the intermediary usage at equilibrium satisfies:
\[ 
\sum_{j=1}^{\NumConsumers} \mathbb{E}[1[\action{j} = \Middleman]] =
\begin{cases}
0 & \text{ if } \min(\ic + \costadditional, \costmanual) < \LowerThreshold(\NumConsumers, \subfee, \TokExp) \\
\NumConsumers & \text{ if } \min(\ic + \costadditional, \costmanual) \in [\LowerThreshold(\NumConsumers,\subfee,  \TokExp), \UpperThreshold(\NumConsumers, \subfee, \TokExp)]\\
0 & \text{ if } \min(\ic + \costadditional, \costmanual) > \UpperThreshold( \NumConsumers, \subfee, \TokExp) \\
\end{cases}
\]
The thresholds $\LowerThreshold( \NumConsumers,\subfee, \TokExp)$ and $\UpperThreshold( \NumConsumers,\subfee, \TokExp)$ are the two unique solutions to: 
\[\price^{-\frac{1}{\TokExp(\TokExp-1)}} \cdot \left(\TokExp^{-\frac{1}{\TokExp-1}} - \TokExp^{-\frac{\TokExp}{\TokExp-1}} \right) \cdot \alpha^{-\frac{1}{\TokExp}} + \price^{\frac{1}{\TokExp}} \cdot \alpha^{\frac{\TokExp-1}{\TokExp}}  = \NumConsumers^{\frac{1}{\TokExp}}.\] Moreover, the lower threshold $\LowerThreshold( \NumConsumers,\subfee, \TokExp)$ is decreasing as a function of the number of consumers $\NumConsumers$, and the upper threshold $\UpperThreshold( \NumConsumers, \subfee, \TokExp)$ is increasing as a function of $\NumConsumers$. 
\end{theorem}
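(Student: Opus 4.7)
The plan is to reduce the claim to a single-variable condition on the effective marginal production cost and then analyze the resulting indifference equation via elementary calculus. By Theorem~\ref{thm:equilibriumuniqueness}, competing suppliers price at marginal cost, so the per-quality production cost faced by both the intermediary and any consumer who produces directly is the common value $\price := \min(\ic + \costadditional, \costmanual)$. Under $\TokFn(\w) = \w^{\TokExp}$, a consumer in the direct case solves $\max_\w (\w - \price \w^{\TokExp})$, which gives $\w^\ast = (\price\TokExp)^{-1/(\TokExp-1)}$ and reservation utility $u^\ast(\price) = \tfrac{\TokExp-1}{\TokExp}(\price\TokExp)^{-1/(\TokExp-1)}$. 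To hold all $\NumConsumers$ homogeneous consumers at fee $\subfee$, the intermediary must choose $\wmiddleman \ge w_{\min}(\price) := \subfee + u^\ast(\price)$, and since $\price \w^{\TokExp}$ is increasing in $\w$, picking $\wmiddleman = w_{\min}(\price)$ is optimal conditional on serving the market. Therefore the intermediary's equilibrium profit is
\[ \Profit(\price) = \NumConsumers\subfee - \price \cdot w_{\min}(\price)^{\TokExp}, \]
and by the structure in Theorem~\ref{thm:equilibriumuniqueness} the intermediation case applies exactly when $\Profit(\price) \ge 0$.

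The key step is to study $\phi(\price) := \price \cdot w_{\min}(\price)^{\TokExp}$. As $\price \to 0^+$, the second term of $w_{\min}$ dominates and a direct estimate gives $\phi(\price) \sim \text{const}\cdot \price^{-1/(\TokExp-1)} \to \infty$; as $\price \to \infty$, $w_{\min}(\price) \to \subfee$ and $\phi(\price) \sim \subfee^{\TokExp} \price \to \infty$. Substituting $t = (\price\TokExp)^{-1/(\TokExp-1)}$, which bijects $(0,\infty)$ with $(0,\infty)$ and rewrites $\phi = \TokExp^{-1} t^{-(\TokExp-1)}\bigl(\subfee + \tfrac{\TokExp-1}{\TokExp} t\bigr)^{\TokExp}$, setting the derivative in $t$ to zero yields the unique critical point $t = \TokExp\subfee$, i.e.\ $\price = \TokExp^{-1}(\TokExp\subfee)^{-(\TokExp-1)}$, at which $w_{\min} = \TokExp\subfee$ and the value simplifies cleanly to $\min_\price \phi(\price) = \subfee$. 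Since $\NumConsumers > 1$, we get $\NumConsumers\subfee > \subfee = \min \phi$, so $\{\price : \phi(\price) \le \NumConsumers\subfee\}$ is a nonempty compact interval $[\LowerThreshold,\UpperThreshold] \subset (0,\infty)$ with $\LowerThreshold < \UpperThreshold$, and outside it $\Profit(\price) < 0$ (disintermediation).

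To obtain the displayed closed-form equation for the thresholds, rewrite $\phi(\price) = \NumConsumers\subfee$ as $\price^{1/\TokExp} w_{\min}(\price) \subfee^{-1/\TokExp} = \NumConsumers^{1/\TokExp}$, substitute $w_{\min}(\price) = \subfee + \tfrac{\TokExp-1}{\TokExp}(\price\TokExp)^{-1/(\TokExp-1)}$, and use the identity $\tfrac{\TokExp-1}{\TokExp}\TokExp^{-1/(\TokExp-1)} = \TokExp^{-1/(\TokExp-1)} - \TokExp^{-\TokExp/(\TokExp-1)}$ to arrive at the stated form; the two solutions correspond to $\LowerThreshold$ and $\UpperThreshold$ because $\phi$ is strictly decreasing then strictly increasing through its unique minimizer. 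The comparative statics in $\NumConsumers$ are immediate: $\phi$ does not depend on $\NumConsumers$, so as $\NumConsumers$ grows the sublevel set $\{\price : \phi(\price) \le \NumConsumers\subfee\}$ expands while always containing the (fixed) minimizer of $\phi$; hence $\LowerThreshold(\NumConsumers)$ is decreasing and $\UpperThreshold(\NumConsumers)$ is increasing.

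The main obstacle, and really the only nonroutine calculation, is verifying that $\min_\price \phi(\price) = \subfee$ exactly, since this collapse is what pins down that the intermediation interval is nonempty for every $\NumConsumers > 1$ and delivers the monotonicity in $\NumConsumers$ without further work. Everything else is backward induction plus single-variable optimization for a power-law cost, so the argument is essentially self-contained once the reduction to $\phi$ is made.
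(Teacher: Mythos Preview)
Your proof is correct and follows essentially the same logical structure as the paper: reduce to the single condition $\price \cdot \TokFn(\subfee + \max_{\w}(\w - \price\TokFn(\w))) \le \subfee\NumConsumers$, show the left-hand side is U-shaped in $\price$ with endpoints tending to $\infty$ and minimum value exactly $\subfee$, and read off the threshold equation and comparative statics. The one difference worth noting is that the paper's formal proof in the appendix derives the special case as a corollary of the general results (Theorem~\ref{thm:sufficientcondition} and Lemma~\ref{lemma:optima}) for log-concave cost functions, whereas you work directly with $\TokFn(\w)=\w^{\TokExp}$ via the substitution $t=(\price\TokExp)^{-1/(\TokExp-1)}$; your route is more elementary and self-contained for this case, while the paper's buys reusability across the broader class of cost functions treated in Section~\ref{subsec:generalcase}.
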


Theorem \ref{thm:specialcasemiddlemanusageexposure} (Figure \ref{fig:thresholds}) demonstrates that the equilibrium intermediary usage exhibits up to \textit{three regimes of behavior} as a function of the production costs. In the first regime, which occurs when production costs are small, disintermediation occurs and consumers do not leverage the intermediary for content production. In the middle regime, intermediation instead occurs: all of the consumers rely on the intermediary for content production.  In the last regime, where production costs are very large, disintermediation again occurs. Notably, the middle regime of intermediation is larger when the number of consumers $\NumConsumers$ is large.   

The intuition for Theorem \ref{thm:specialcasemiddlemanusageexposure} is as follows. The intermediary's main advantage 
is that they only need to incur the production cost once regardless of how many consumers consume their content, and yet their revenue still scales with the number of consumers that they attract. This economy of scale is more pronounced when the number of consumers is large. However, the intermediary's ability to leverage this advantage is constrained by the fact that the fee paid by consumers is fixed and might not align with production costs.  This creates friction that can impede the intermediary's ability to generate utility. When costs are in the middle regime, the 
misalignment is outweighed by the intermediary's fixed-cost advantage: the intermediary can create much higher quality than what the consumer can afford to create for themselves, and the consumers are willing to pay the requisite fee for this additional quality.

When costs are sufficiently low, the consumers are incentivized to create even higher quality content than what the intermediary can afford with the fees they collect.
When costs are sufficiently high, the fee is insufficient to cover the cost of producing quality that the consumers find acceptable for the price, so consumers are incentivized to create lower-quality content to reduce costs.

Taking a closer look at the structure of production costs, Theorem \ref{thm:specialcasemiddlemanusageexposure} also disentangles how 
two different axes of technological advances--- reductions in supply-side costs and reductions in human-driven production costs---both affect whether disintermediation occurs. Specifically, production costs are captured by $\min(\ic + \costadditional, \costmanual)$. Assuming that manual production costs $\costmanual$ exceed production costs $\ic+\costadditional$ from leveraging the technology, the first regime of disintermediation only occurs when technology improves to a sufficient degree along both of these axes. If supply-side costs $\ic$ are overly high, then even if the human-driven costs $\costadditional$ are pushed down to zero (i.e., fully automation is possible), this regime will not occur; similarly, if human-driven costs $\costadditional$ are overly high, then even if the supply-side costs $\ic$ are pushed down to zero (i.e., technology operation is free), then this regime will also not occur. 

These findings have interesting implications for the integration of generative AI into content supply chains (Example \ref{example:genai}). 
Prior to the generative AI era, the manual production costs likely placed the 
content creation ecosystem within the second regime, where all consumers rely on the intermediary (i.e., content creators) for production. If generative AI continues to reduce production costs, the ecosystem could move to the first regime where creators are cut out of the ecosystem. However, this shift would rely on technological advances along two axes: inference costs (which depend on the computational efficiency of querying these models), and human-driven production costs (which depends on the balance between automation vs. augmentation in content creation).

We provide a proof sketch of Theorem \ref{thm:specialcasemiddlemanusageexposure}. 
\begin{proof}[Proof sketch of Theorem \ref{thm:specialcasemiddlemanusageexposure}]
Disintermediation occurs when the intermediary can't afford to match the consumer utility from direct usage: that is, when the costs of producing content  achieving that consumer utility level exceeds the intermediary's revenue from consumer usage. Lemma \ref{lemma:equilibriumcharacterization} shows that this occurs if and only if 
\begin{equation}
\label{eq:conditionrestated}
 (\ic + \costadditional) \cdot \TokFn(\subfee + \max_{w \ge 0} (\w - (\ic + \costadditional)\TokFn(w))) > \subfee \NumConsumers.   
\end{equation}
The intuition for \eqref{eq:conditionrestated} is that $\max_{w \ge 0} (\w - (\ic + \costadditional)\TokFn(w))$ is the utility that the consumer would have achieved from direct usage, so a content quality $\subfee + \max_{w \ge 0} (\w - (\ic + \costadditional)\TokFn(w))$ is needed to match that utility level while also offsetting the fee that the consumer pays to the intermediary. The intermediary can't survive in the market if their production cost $ (\ic + \costadditional) \cdot \TokFn(\subfee + \max_{w \ge 0} (\w - (\ic + \costadditional)\TokFn(w)))$ exceeds their revenue $\subfee \NumConsumers$ from fees. 

Using the fact that $\TokFn(\w) = \w^{\TokExp}$, we explicitly compute when this condition is satisfied: 
\begin{equation}
\label{eq:simplifiedcondition}
\price^{-\frac{1}{\TokExp(\TokExp-1)}} \cdot \left(\TokExp^{-\frac{1}{\TokExp-1}} - \TokExp^{-\frac{\TokExp}{\TokExp-1}} \right) \cdot \subfee^{-\frac{1}{\TokExp}}  + \price^{\frac{1}{\TokExp}} \cdot \alpha^{\frac{\TokExp - 1}{\TokExp}}  > \NumConsumers^{\frac{1}{\TokExp}},
\end{equation}
where $\price = \ic + \costadditional$. The left-hand side of \eqref{eq:simplifiedcondition} is concave in $\price$ (since it is the sum of two concave functions), and it approaches $\infty$ as $\price \rightarrow \infty$ and as $\price \rightarrow 0$. Moreover, the minimum value of the left-hand side across all $\price \in (0, \infty)$ is $1$ which violates \eqref{eq:simplifiedcondition}. This establishes the existence of thresholds $0 < \LowerThreshold( \NumConsumers, \subfee, \TokExp) <  \UpperThreshold(\NumConsumers, \subfee, \TokExp) < \infty$. Moreover, these properties, together with the fact that the right-hand of \eqref{eq:simplifiedcondition} is increasing in $C$, imply that the lower threshold is decreasing in $\NumConsumers$ and the upper threshold is increasing in $\NumConsumers$. The full proof is deferred to Appendix \ref{appendix:proofsdintermediation}.
\end{proof}

\subsection{General Cost Functions}\label{subsec:generalcase}

We now move beyond the particular functional form in Section \ref{subsec:specialcase}, and analyze disintermediation for  more general cost functions $\TokFn$.

Specifically, we consider strictly increasing, continuously differentiable cost functions $\TokFn$ which are (1) strictly convex, (2) satisfy $\TokFn(0) = \TokFn'(0) = 0$ and $\lim_{\w \rightarrow \infty} \TokFn(\w) = \lim_{\w \rightarrow \infty} \TokFn'(\w) = \infty$, and  (3) strictly log-concave. 
Some examples of cost functions that satisfy assumptions (1)-(3) are  $\TokFn(\w) = \w^{\TokExp}$ for $\TokExp > 1$, $\TokFn(\w) = \w^{\TokExp} \cdot e^{\w}$ for $\TokExp \ge 1$, $\TokFn(\w) = \w^{\TokExp} \cdot e^{\sqrt{\w}}$ for $\TokExp \ge 1$, and $\TokFn(\w) = \w^{\TokExp} \cdot (\log(\w+1)^{\TokExpSecond})$ for any $\TokExp, \TokExpSecond > 1$ (Proposition \ref{prop:costs}). To elucidate the role of these assumptions, suppose that the consumer selects the direct creation option and optimally chooses their content quality level to maximally their utility. The first two assumptions imply that the consumer chooses content quality in the interior of $(0, \infty)$. Taken together with the first two assumptions, the third assumption implies that as production costs become cheaper, a consumer who directly creates their own content would expend more on content production.

Under these assumptions on the cost function $\TokFn$, we show a partial generalization of Theorem \ref{thm:specialcasemiddlemanusageexposure}. The following result demonstrates that the intermediary usage exhibits up to three regimes of behavior as a function of the production costs.
\begin{theorem}
\label{thm:middlemanusageexposure}
Let $\TokFn$ be a strictly increasing, continuously differentiable function which is strictly convex, satisfies $\TokFn(0) = \TokFn'(0) = 0$ and $\lim_{\w \rightarrow \infty} \TokFn(\w) = \lim_{\w \rightarrow \infty} \TokFn'(\w) = \infty$, and strictly log-concave. 
Fix $\subfee > 0$, $\NumConsumers > 1$. Suppose there are $\NumProviders > 1$ providers. There exist thresholds $ \LowerThreshold(\NumConsumers, \subfee, \TokFn) < \UpperThreshold(\NumConsumers, \subfee, \TokFn) \le \infty$ such that the intermediary usage at equilibrium satisfies:
\[ 
\sum_{j=1}^{\NumConsumers} \mathbb{E}[1[\action{j} = \Middleman]] =
\begin{cases}
0 & \text{ if } \min(\ic + \costadditional, \costmanual) < \LowerThreshold(\NumConsumers, \subfee, \TokFn)   \\
\NumConsumers & \text{ if } \min(\ic + \costadditional, \costmanual) \in [\LowerThreshold(\NumConsumers, \subfee, \TokFn), \UpperThreshold(\NumConsumers, \subfee,  \TokFn)]\\
0 & \text{ if } \min(\ic + \costadditional, \costmanual) > \UpperThreshold(\NumConsumers, \subfee, \TokFn).
\end{cases}
\]
\end{theorem}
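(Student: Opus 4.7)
The plan is to reduce the theorem to a quasiconvexity statement about a single-variable function. As in the proof of Theorem~\ref{thm:specialcasemiddlemanusageexposure}, I apply Lemma~\ref{lemma:equilibriumcharacterization}: letting $\price := \min(\ic + \costadditional, \costmanual)$ denote the effective cost per unit of $\TokFn(\w)$ faced by the intermediary and the consumers, and writing $V(\price) := \max_{\w \geq 0}(\w - \price \TokFn(\w))$ for the consumer's optimized utility from direct creation, disintermediation occurs at equilibrium iff
\[
F(\price) \;:=\; \price \cdot \TokFn\!\bigl(\subfee + V(\price)\bigr) \;>\; \subfee \NumConsumers.
\]
Under assumptions (1)--(2), the maximizer $\w^*(\price)$ in the definition of $V$ is the unique interior point satisfying $\price \TokFn'(\w^*(\price)) = 1$, and the envelope theorem gives $V'(\price) = -\TokFn(\w^*(\price))$. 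It then suffices to show that the sublevel set $\{\price > 0 : F(\price) \leq \subfee \NumConsumers\}$ is a (possibly one-sided) interval that strictly contains any interior critical point of $F$.

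The main step is a quasiconvexity argument for $F$. Differentiating,
\[
F'(\price) \;=\; \TokFn(\subfee + V(\price)) \;-\; \price\, \TokFn'(\subfee + V(\price))\, \TokFn(\w^*(\price)),
\]
and, using $\price \TokFn(\w^*(\price)) = \TokFn(\w^*(\price))/\TokFn'(\w^*(\price))$ from the FOC, dividing by $\TokFn'(\subfee + V(\price)) > 0$ yields
\[
\frac{F'(\price)}{\TokFn'(\subfee + V(\price))} \;=\; \frac{\TokFn(\subfee + V(\price))}{\TokFn'(\subfee + V(\price))} - \frac{\TokFn(\w^*(\price))}{\TokFn'(\w^*(\price))}.
\]
By strict log-concavity of $\TokFn$, the ratio $\TokFn/\TokFn'$ is strictly increasing, so the sign of $F'(\price)$ matches the sign of $(\subfee + V(\price)) - \w^*(\price)$. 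Substituting $V(\price) = \w^*(\price) - \price \TokFn(\w^*(\price))$, this reduces to the sign of $\subfee - H(\price)$, where $H(\price) := \price \TokFn(\w^*(\price)) = \TokFn(\w^*(\price))/\TokFn'(\w^*(\price))$. Now $\w^*(\price)$ is strictly decreasing in $\price$ (from strict convexity of $\TokFn$), while $\TokFn/\TokFn'$ is strictly increasing, so $H$ is strictly decreasing in $\price$. Hence $F'(\price) < 0$ on some initial interval (possibly empty) and $F'(\price) > 0$ on its complement, so $F$ is strictly quasiconvex on $(0, \infty)$.

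It remains to identify the thresholds and check the interval is nondegenerate. At any interior minimizer $\price^*$ one has $H(\price^*) = \subfee$, which forces $\subfee + V(\price^*) = \w^*(\price^*)$, so $F(\price^*) = \price^* \TokFn(\w^*(\price^*)) = H(\price^*) = \subfee < \subfee \NumConsumers$, using $\NumConsumers > 1$. Thus when $F$ has an interior minimum the sublevel set $\{F \leq \subfee \NumConsumers\}$ is a nondegenerate closed interval $[\LowerThreshold, \UpperThreshold]$; if $F$ is globally monotone instead, one takes $\LowerThreshold = 0$ or $\UpperThreshold = \infty$ accordingly, matching the statement's $\UpperThreshold \leq \infty$. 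The limit $F(\price) \to \infty$ as $\price \to \infty$ follows from $V(\price) \to 0$ (so $F(\price) \sim \price\, \TokFn(\subfee)$), which gives $\UpperThreshold < \infty$ whenever the sublevel set is bounded above. \textbf{The main obstacle} is the equivalence $F'(\price) > 0 \iff H(\price) < \subfee$: this is the single place where strict log-concavity is essential (mere convexity would not force monotonicity of $\TokFn/\TokFn'$), and keeping the FOC substitutions consistent on both sides of the equivalence is the delicate bookkeeping step. Once that is in hand, the remaining monotonicity and limit arguments are standard.
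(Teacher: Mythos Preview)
Your proof is correct and follows essentially the same route as the paper: compute $F'(\price)$ via the envelope theorem, use strict log-concavity to reduce the sign of $F'$ to the comparison $\alpha$ versus $\TokFn(\w^*)/\TokFn'(\w^*)$, and then use monotonicity of $\w^*(\price)$ together with monotonicity of $\TokFn/\TokFn'$ to conclude quasiconvexity (the paper packages these two steps as Lemmas~\ref{lemma:derivative} and~\ref{lemma:signswitch}). You actually go slightly beyond the paper's proof of this particular theorem by verifying $F(\price^*)=\alpha<\alpha\NumConsumers$ at an interior minimizer (the paper defers this computation to Lemma~\ref{lemma:optima}, stated under the stronger hypotheses of Theorem~\ref{thm:sufficientcondition}) and by observing $F(\price)\to\infty$ as $\price\to\infty$, which in fact forces $\UpperThreshold<\infty$.
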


While Theorem \ref{thm:middlemanusageexposure}
 provides a partial generalization of Theorem \ref{thm:specialcasemiddlemanusageexposure}, a key difference is that Theorem \ref{thm:middlemanusageexposure} 
 only guarantees that there are \textit{up to} three regimes rather than \textit{exactly} three regimes. Crucially, Theorem \ref{thm:middlemanusageexposure} does not guarantee that disintermediation occurs when as production costs tend to zero. 
To help address this, we show a sufficient condition for having exactly three regimes. 
\begin{theorem}
\label{thm:sufficientcondition}
Consider the setup of Theorem \ref{thm:middlemanusageexposure}. Suppose also that 
\[\lim_{\w \rightarrow \infty} \frac{\TokFn\left(\w - \frac{\TokFn(\w)}{\TokFn'(\w)} \right)}{\TokFn'(\w)} = \infty.\]
Then it holds that $0 < \LowerThreshold(\NumConsumers, \subfee, \TokFn) < \UpperThreshold(\NumConsumers, \subfee, \TokFn) < \infty$. 
The lower threshold is decreasing as a function of the number of consumers $\NumConsumers$, and the upper threshold is increasing as a function of $\NumConsumers$. The thresholds are the two unique solutions to $\price \cdot g(\subfee + \max_{\w \ge 0} (\w - \price \TokFn(\w))) = \subfee \NumConsumers$. 
\end{theorem}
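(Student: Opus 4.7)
The plan is to reduce the entire statement to the behavior of the single scalar function $F(\price) := \price \cdot \TokFn(\subfee + v(\price))$ on $(0,\infty)$, where $v(\price) := \max_{\w \ge 0}(\w - \price \TokFn(\w))$ is the consumer's optimal direct-creation utility and $\price := \min(\ic + \costadditional, \costmanual)$. By the disintermediation characterization invoked in the sketch of Theorem \ref{thm:specialcasemiddlemanusageexposure}, disintermediation occurs iff $F(\price) > \subfee \NumConsumers$, so the thresholds $\LowerThreshold, \UpperThreshold$ are exactly the solutions of $F(\price) = \subfee \NumConsumers$. The overall goal is therefore to show that $F$ is continuous, $U$-shaped, diverges to $\infty$ at both endpoints of $(0,\infty)$, and has minimum value exactly $\subfee$; then $\NumConsumers > 1$ forces exactly two crossings.

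First I would nail down the boundary behavior. Let $\w^*(\price)$ denote the interior maximizer of $v$, determined by the first-order condition $\price \TokFn'(\w^*) = 1$. As $\price \to \infty$, $\TokFn'(\w^*) \to 0$ forces $\w^* \to 0$ (since $\TokFn'$ is strictly increasing with $\TokFn'(0) = 0$), hence $v(\price) \to 0$ and $F(\price) \sim \price \cdot \TokFn(\subfee) \to \infty$. As $\price \to 0^+$, $\w^* \to \infty$, and the FOC gives $v(\price) = \w^* - \TokFn(\w^*)/\TokFn'(\w^*)$ and $\price = 1/\TokFn'(\w^*)$, so since $\TokFn$ is increasing,
\[
F(\price) \;=\; \frac{\TokFn(\subfee + v(\price))}{\TokFn'(\w^*)} \;\ge\; \frac{\TokFn\!\left(\w^* - \TokFn(\w^*)/\TokFn'(\w^*)\right)}{\TokFn'(\w^*)},
\]
and the hypothesis of the theorem is precisely that this lower bound diverges as $\w^* \to \infty$.

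The crux is showing $F$ has a unique interior stationary point and computing its value. By the envelope theorem, $v'(\price) = -\TokFn(\w^*(\price))$, so $F'(\price) = \TokFn(\subfee + v) - \price \TokFn'(\subfee + v) \TokFn(\w^*)$. Setting $F'(\price) = 0$ and substituting $\price = 1/\TokFn'(\w^*)$, the stationarity condition simplifies to
\[
\frac{\TokFn(\subfee + v(\price))}{\TokFn'(\subfee + v(\price))} \;=\; \frac{\TokFn(\w^*)}{\TokFn'(\w^*)}.
\]
Strict log-concavity of $\TokFn$ makes $\TokFn/\TokFn'$ strictly increasing, so this equation forces $\subfee + v(\price) = \w^*$, which rearranges to $\subfee = \price\,\TokFn(\w^*) = \TokFn(\w^*)/\TokFn'(\w^*)$. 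Since $\TokFn/\TokFn' \to 0$ as $\w \to 0$ (by convexity) and $\TokFn/\TokFn' \to \infty$ as $\w \to \infty$ (a consequence of the hypothesis, otherwise the bound in the previous paragraph would be uniformly bounded), this equation has a unique solution $\w^*_0$, and hence a unique critical price $\price^* = 1/\TokFn'(\w^*_0)$. Substituting back, $F(\price^*) = \price^*\,\TokFn(\w^*_0) = \subfee$.

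Putting it together, $F$ is continuous on $(0,\infty)$, diverges at both endpoints, and has a unique critical point at which $F = \subfee$; hence $F$ is strictly decreasing then strictly increasing, with global minimum $\subfee$. Because $\NumConsumers > 1$, the level $\subfee \NumConsumers > \subfee$ lies strictly above the minimum, so $F(\price) = \subfee \NumConsumers$ has exactly two roots $0 < \LowerThreshold < \price^* < \UpperThreshold < \infty$, as claimed. Monotonicity in $\NumConsumers$ is then immediate: raising $\NumConsumers$ lifts the horizontal level $\subfee \NumConsumers$, which shifts the left root further left along the decreasing branch and the right root further right along the increasing branch. The main obstacle I expect is the uniqueness of the critical point: strict log-concavity has to do double duty, first collapsing the two-variable stationarity condition into $\subfee + v = \w^*$ and then making $\TokFn/\TokFn'$ uniquely invertible, while the new limit hypothesis is exactly what is needed to force the $\price \to 0$ divergence that Theorem \ref{thm:middlemanusageexposure} alone cannot guarantee.
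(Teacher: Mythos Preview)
Your proposal is correct and follows essentially the same route as the paper: the paper likewise reduces everything to the function $F(\price) = \price\,\TokFn(\subfee + v(\price))$, computes $F'$ via the envelope theorem, uses strict log-concavity to show the stationarity condition collapses to $\TokFn(\w^*)/\TokFn'(\w^*) = \subfee$ (Lemma~\ref{lemma:derivative} and Lemma~\ref{lemma:optima}), and invokes the new limit hypothesis for the $\price \to 0$ divergence (Lemma~\ref{lemma:twosolutions}). One small sharpening: your parenthetical justification that $\TokFn/\TokFn' \to \infty$ is cleaner if you just note $\TokFn(\w) \ge \TokFn(\w - \TokFn(\w)/\TokFn'(\w))$ by monotonicity, so the hypothesis immediately gives $\TokFn(\w)/\TokFn'(\w) \to \infty$.
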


Theorem \ref{thm:sufficientcondition} provides a more complete generalization of Theorem \ref{thm:specialcasemiddlemanusageexposure} and guarantees that there are exactly three regimes. Specifically, disintermediation occurs when the production costs tend to zero and also occurs when production costs become sufficiently large. 

The condition in 
Theorem \ref{thm:sufficientcondition} bears resemblance to requiring that the log derivative approaches $\infty$ in the limit as $\w \rightarrow \infty$, which would instead take the form $\lim_{\w \rightarrow \infty} \frac{\TokFn(\w)}{\TokFn'(\w)} = \infty$. This simpler condition captures that the amount that the consumer expends on content production, if they directly use the technology, is unbounded in the limit as production costs become cheaper. However, the requirement in Theorem \ref{thm:sufficientcondition}  is slightly stricter since the numerator is replaced with 
$\TokFn\left(\w - \frac{\TokFn(\w)}{\TokFn'(\w)}\right)$. Some examples of functions $\TokFn$ satisfying the conditions in Theorem \ref{thm:sufficientcondition} are $\TokFn(\w) = \w^{\TokExp}$ for $\TokExp > 1$, $\TokFn(\w) = \w^{\TokExp} \cdot e^{\sqrt{\w}}$ for $\TokExp > 1$, and $\TokFn(\w) = \w^{\TokExp} \cdot (\log(\w+1)^{\TokExpSecond})$ for any $\TokExp, \TokExpSecond > 1$ (Proposition \ref{prop:costsstronger}).

We note that not all functions satisfy this condition. For example, functions of the form $\TokFn(\w) = \w^{\TokExp} \cdot e^{\w}$ for $\TokExp > 1$ do not satisfy the conditions in Theorem \ref{thm:sufficientcondition}, but do satisfy the conditions in Theorem \ref{thm:middlemanusageexposure}. 
For this function class, this is not just an artifact of the analysis:  we show that the intermediary survives even when production costs tend to zero. 

\begin{proposition}
\label{prop:counterexample}
Consider the setup of Theorem \ref{thm:middlemanusageexposure}. Let $\TokFn = \w^{\TokExp} \cdot e^{\w}$ for $\TokExp > 1$. Fix $\subfee > 1$ and $\NumConsumers > 1$ satisfying $e^{\subfee } < \subfee \cdot \NumConsumers$. Then,  $\LowerThreshold(\NumConsumers, \subfee, \TokFn) \le 0$. That is, the intermediary usage $\sum_{j=1}^{\NumConsumers} \mathbb{E}[1[\action{j} = \Middleman]] = \NumConsumers$ even when the production costs $\min(\ic + \costadditional, \costmanual)$ are sufficiently small. 
\end{proposition}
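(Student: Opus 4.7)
The plan is to invoke the characterization in equation (3) (from Lemma \ref{lemma:equilibriumcharacterization}) that disintermediation occurs iff $\price \cdot \TokFn(\subfee + V(\price)) > \subfee \NumConsumers$, where $\price = \min(\ic + \costadditional, \costmanual)$ and $V(\price) := \max_{\w \ge 0}\bigl(\w - \price \TokFn(\w)\bigr)$ is the consumer's best direct-use value. By Theorem \ref{thm:middlemanusageexposure} the intermediation regime is an interval $[\LowerThreshold, \UpperThreshold]$, so to obtain $\LowerThreshold(\NumConsumers, \subfee, \TokFn) \le 0$ it suffices to show that intermediation (rather than disintermediation) occurs for all sufficiently small $\price > 0$.

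First I would pin down the maximizer $\w^*(\price)$ defining $V(\price)$. For $\TokFn(\w) = \w^{\TokExp} e^{\w}$, the identity $\TokFn'(\w) = \TokFn(\w)\bigl(\TokExp/\w + 1\bigr)$ combined with the first-order condition $\price \TokFn'(\w^*) = 1$ gives
\[\price \cdot \TokFn(\w^*) \;=\; \frac{\TokFn(\w^*)}{\TokFn'(\w^*)} \;=\; \frac{\w^*}{\w^* + \TokExp},\]
so $\w^*(\price) \to \infty$ and $\price \TokFn(\w^*) \to 1$ as $\price \to 0^+$. Consequently
\[V(\price) \;=\; \w^* - \frac{\w^*}{\w^* + \TokExp}, \qquad \subfee + V(\price) - \w^* \;=\; \subfee - \frac{\w^*}{\w^* + \TokExp} \;\longrightarrow\; \subfee - 1.\]

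Next, to evaluate $\price \cdot \TokFn\bigl(\subfee + V(\price)\bigr)$, I would write $y(\price) := \subfee + V(\price) = \w^* + \delta(\price)$ with $\delta(\price) \to \subfee - 1$, and use the multiplicative structure of $\TokFn$:
\[\frac{\TokFn(y)}{\TokFn(\w^*)} \;=\; \left(1 + \frac{\delta(\price)}{\w^*}\right)^{\!\TokExp} e^{\delta(\price)} \;\longrightarrow\; e^{\subfee - 1}.\]
Combined with $\price \TokFn(\w^*) \to 1$, this yields $\lim_{\price \to 0^+} \price \cdot \TokFn\bigl(\subfee + V(\price)\bigr) = e^{\subfee - 1}$. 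Under the hypothesis $e^{\subfee} < \subfee \NumConsumers$ we have $e^{\subfee - 1} < e^{\subfee} < \subfee \NumConsumers$, so the strict inequality $\price \cdot \TokFn(\subfee + V(\price)) < \subfee \NumConsumers$ holds for all sufficiently small $\price > 0$, which by equation (3) means intermediation occurs. Combined with Theorem \ref{thm:middlemanusageexposure}, this gives $\LowerThreshold \le 0$.

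The main obstacle is keeping the asymptotic bookkeeping tight through the exponential factor of $\TokFn$: the key observation that unlocks everything is that $\TokFn/\TokFn'$ stays bounded (tending to $1$), which both prevents $\price \TokFn(\w^*)$ from blowing up and forces $\subfee + V(\price)$ to track $\w^*$ up to the finite shift $\subfee - 1$. This is precisely the failure of the sufficient condition in Theorem \ref{thm:sufficientcondition} for this particular cost function, and it is what allows the intermediary to persist in the cheap-production limit.
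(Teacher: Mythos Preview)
Your argument is correct and follows essentially the same route as the paper: both reduce via Lemma~\ref{lemma:equilibriumcharacterization} to bounding $\lim_{\price\to 0^+}\price\,\TokFn(\subfee+V(\price))$, decompose this as $\bigl(\price\TokFn(\w^*)\bigr)\cdot\bigl(\TokFn(y)/\TokFn(\w^*)\bigr)$ using the first-order condition, and exploit the identity $\TokFn/\TokFn'=\w/(\w+\TokExp)$. The only difference is that you compute the exact limit $e^{\subfee-1}$, whereas the paper settles for the cruder upper bound $e^{\subfee}$ by dropping the $-\w^*/(\w^*+\TokExp)$ shift in the argument and the $\w^*/(\w^*+\TokExp)$ multiplicative factor; since the hypothesis is $e^{\subfee}<\subfee\NumConsumers$, either bound suffices.
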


\section{Welfare Consequences}\label{sec:utility}

\begin{figure}
    \centering
    \begin{subfigure}[b]{0.49\textwidth}
        \centering
        \includegraphics[width=\textwidth]{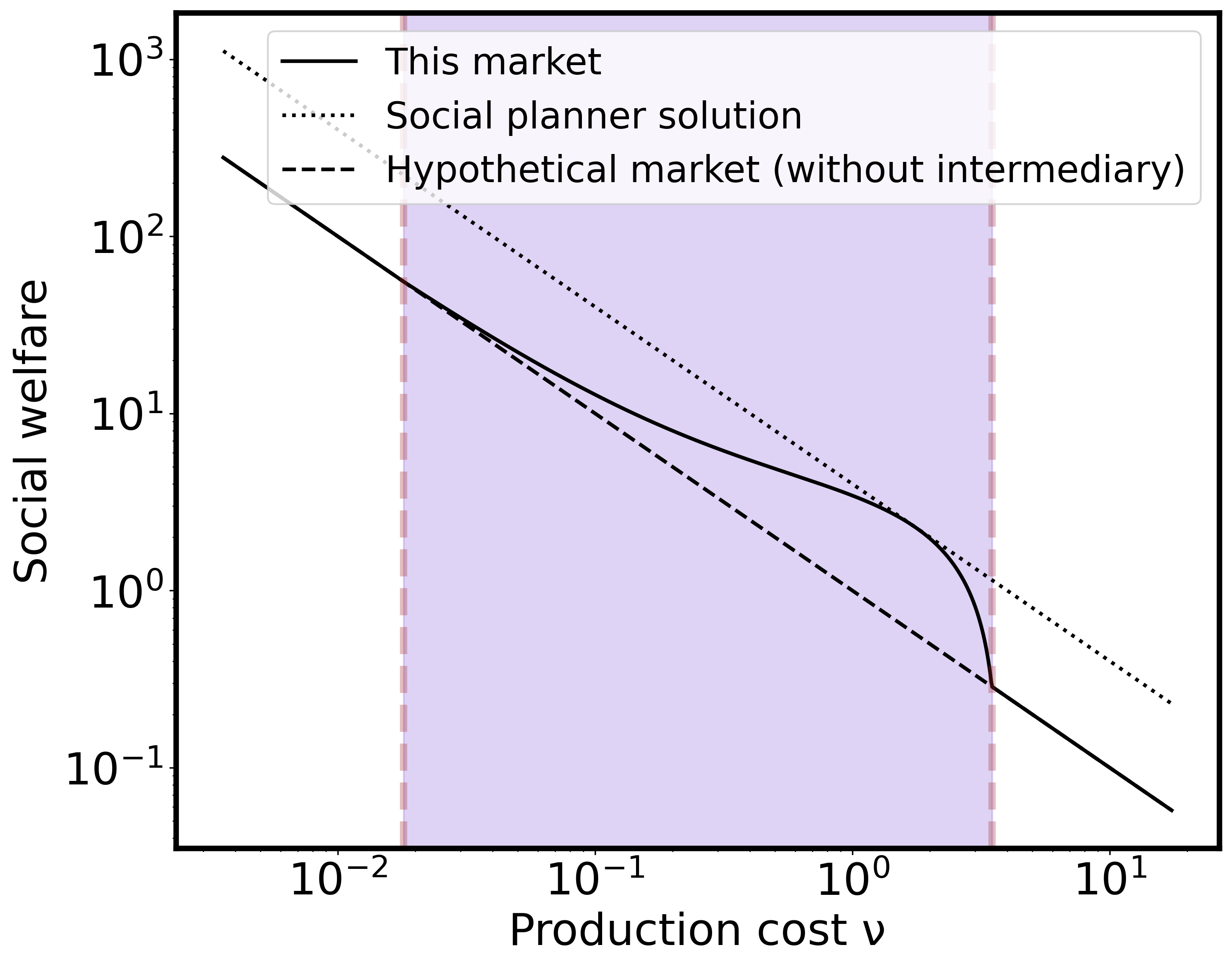}
        \caption{Social welfare}
        \label{fig:welfare-comparison}
    \end{subfigure}
    \begin{subfigure}[b]{0.49\textwidth}
        \centering
        \includegraphics[width=\textwidth]{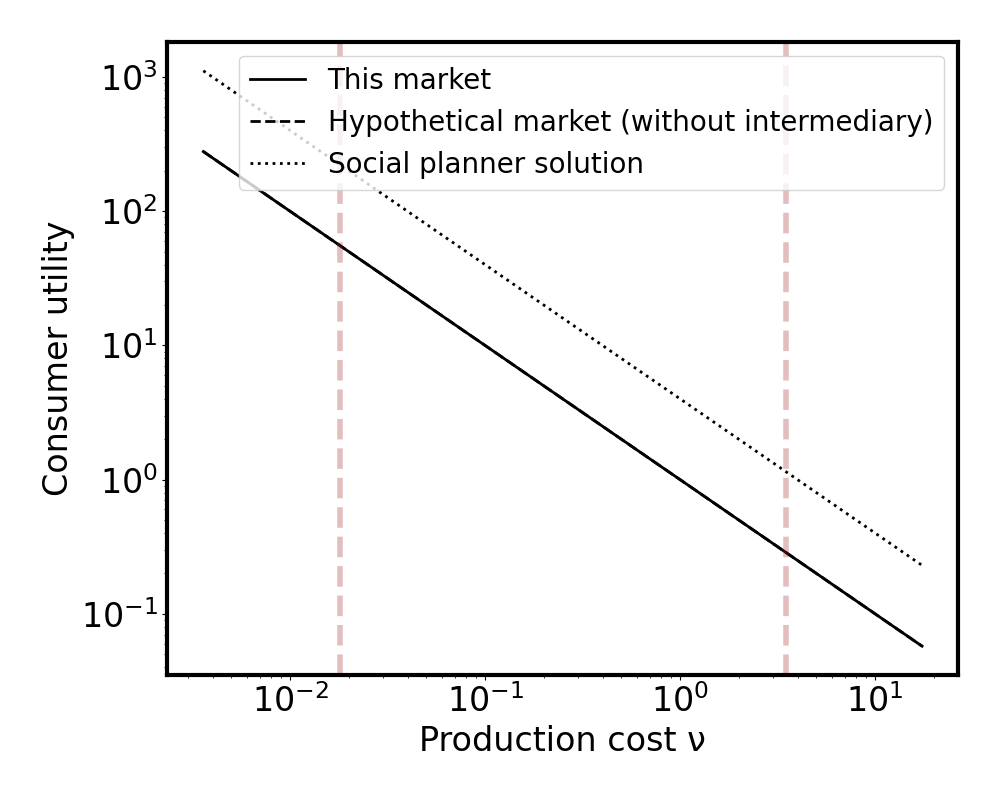}
        \caption{Consumer utility}
        \label{fig:consumer-utility-comparison}
    \end{subfigure}
    \begin{subfigure}[b]{0.49\textwidth}
        \centering
        \includegraphics[width=\textwidth]{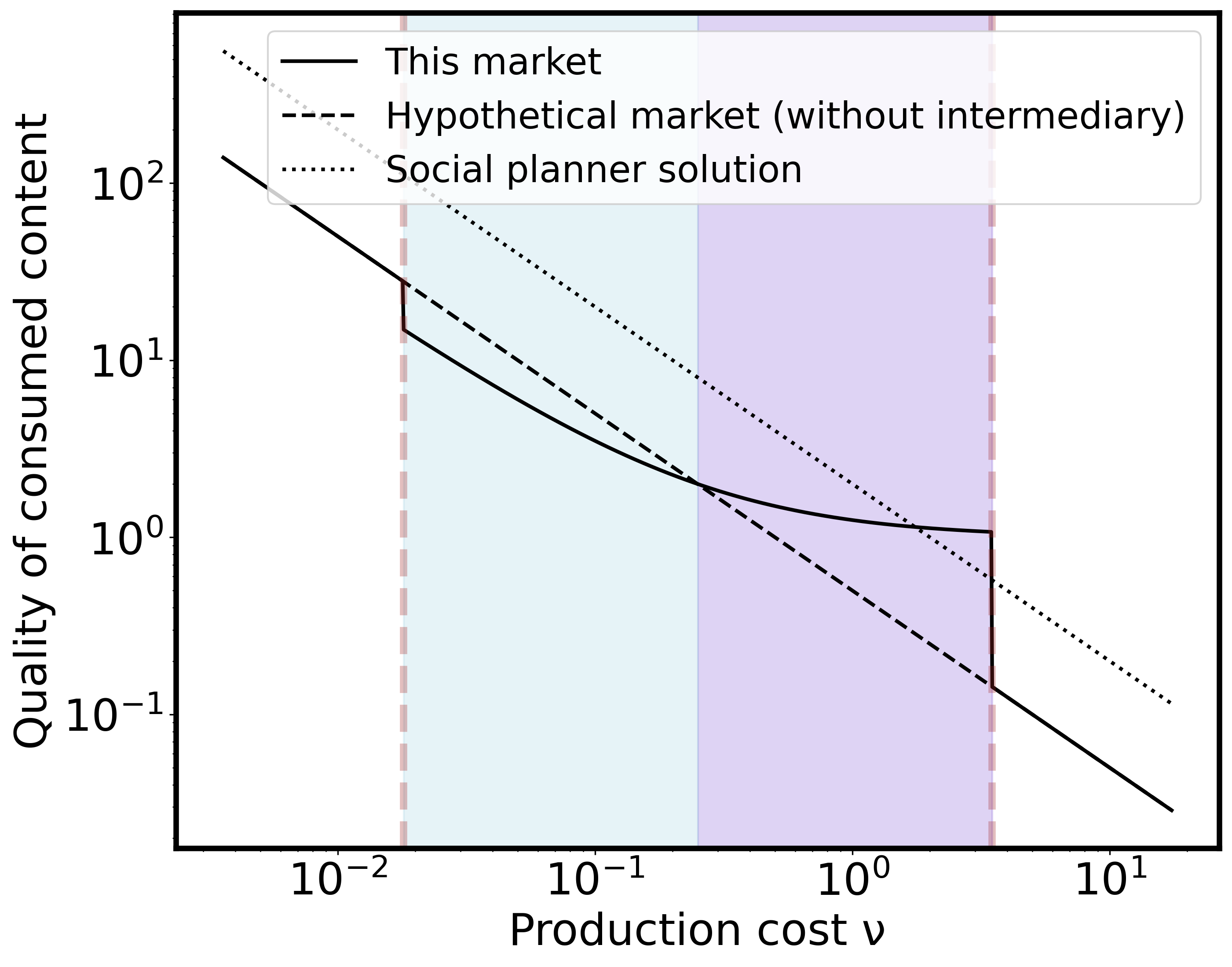}
        \caption{Content quality}
        \label{fig:quality-comparison}
    \end{subfigure}
    \caption{Analysis of social welfare, consumer utility, and content quality in this market in comparison to a hypothetical market where the intermediary does not exist, for $g(\w) = \w^2$. We show how the the intermediary increases (purple), decreases (blue), or does not affect (white) each of these metrics. The intermediary is always welfare-improving (left; Theorem \ref{thm:socialwelfare}). However, the intermediary does not increase consumer utility (middle; Theorem \ref{thm:consumerutility}), and instead extracts all of the surplus for themselves. The intermediary can increase content quality or decrease content quality (right; Theorem \ref{thm:contentquality}).
    } 
    \label{fig:metrics-comparison}
\end{figure}

Having established when disintermediation occurs, we now turn to the consequences for social welfare and the overall digital economy. We study the impact on content quality (Section \ref{subsec:quality}), the intermediary's utility (Section \ref{subsec:middleman}), consumer utility (Section \ref{subsec:consumers}), and social welfare (Section \ref{subsec:welfare}). We analyze how these metrics change with technology improvements, and to gain intuition for the impact of the intermediary, we also make comparisons to a hypothetical market where the intermediary does not exist (Figure \ref{fig:metrics-comparison}). 
Throughout this section, we focus on the setup of Theorem \ref{thm:sufficientcondition} where disintermediation occurs exactly at the extreme values of production costs (i.e., where there are three regimes of behaviors).

\subsection{Quality of content}\label{subsec:quality}

We first examine how disintermediation impacts content quality. To gain intuition, we compute a closed-form characterization of the quality of the content consumed at equilibrium. 
\begin{proposition}
\label{prop:contentquality}
Consider the setup of Theorem \ref{thm:sufficientcondition}. Let $\price = \min(\ic + \costadditional, \costmanual)$. At equilibrium, the quality $\wuser{j}$ of the content consumed by any consumer $j \in [\NumConsumers]$ is: 
\[ 
\begin{cases}
\argmax_{\w \ge 0}(\w - \price \cdot \TokFn(\w)) & \text{ if } \price < \LowerThreshold( \NumConsumers, \subfee, \TokFn)  \\
 \subfee + \max_{\w \ge 0}(\w - \price \cdot \TokFn(\w))  & \text{ if } \price \in [\LowerThreshold(\NumConsumers, \subfee, \TokFn), \UpperThreshold(\NumConsumers, \subfee,  \TokFn)]\\
\argmax_{\w \ge 0}(\w - \price \cdot \TokFn(\w))  & \text{ if } \price > \UpperThreshold( \NumConsumers, \subfee, \TokFn).
\end{cases}.
\]
\end{proposition}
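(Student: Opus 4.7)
The plan is to combine the equilibrium structure sketched after Theorem~\ref{thm:equilibriumuniqueness} with the regime classification of Theorem~\ref{thm:sufficientcondition}, and then read off $\wuser{j}$ in each of the three cases. For both disintermediation regimes ($\price$ outside $[\LowerThreshold(\NumConsumers,\subfee,\TokFn), \UpperThreshold(\NumConsumers,\subfee,\TokFn)]$), Theorem~\ref{thm:sufficientcondition} says that every consumer chooses $\action{j} = \Direct$. By competitive pricing of the $\NumProviders \ge 2$ suppliers down to cost $\ic$ and the tiebreaking rules, the effective cost per unit of $\TokFn$ faced by consumer $j$ is exactly $\price = \min(\ic + \costadditional, \costmanual)$, so consumer $j$ solves $\max_{\w \ge 0}(\w - \price \cdot \TokFn(\w))$. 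Strict convexity of $\TokFn$ together with $\TokFn'(0) = 0$ and $\lim_{\w\to\infty}\TokFn'(\w) = \infty$ guarantees a unique interior maximizer, yielding $\wuser{j} = \argmax_{\w \ge 0}(\w - \price \cdot \TokFn(\w))$.

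For the intermediation regime $\price \in [\LowerThreshold(\NumConsumers,\subfee,\TokFn), \UpperThreshold(\NumConsumers,\subfee,\TokFn)]$, all consumers select $\action{j} = \Middleman$, so $\wuser{j} = \wmiddleman$ and it remains to pin down $\wmiddleman$. I would argue that the intermediary's best response has a two-part structure: for any $\wmiddleman$ such that all consumers are attracted, the intermediary's utility equals $\NumConsumers \subfee - \price \cdot \TokFn(\wmiddleman)$, which is strictly decreasing in $\wmiddleman$ since $\TokFn$ is strictly increasing and $\price > 0$; for any $\wmiddleman$ that fails to attract consumers, the intermediary's best option collapses to producing quality $0$ for utility $0$. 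A consumer weakly prefers the intermediary exactly when $\wmiddleman - \subfee \ge \max_{\w \ge 0}(\w - \price \cdot \TokFn(\w))$, since the right-hand side is the consumer's outside option from direct creation at effective cost $\price$, and by the tiebreak in favor of the intermediary, equality suffices for attraction. Hence the intermediary chooses the smallest attracting quality, $\wmiddleman = \subfee + \max_{\w \ge 0}(\w - \price \cdot \TokFn(\w))$.

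The main subtlety is verifying that this candidate is in fact optimal for the intermediary throughout $[\LowerThreshold, \UpperThreshold]$: one must check that $\NumConsumers \subfee - \price \cdot \TokFn\bigl(\subfee + \max_{\w}(\w - \price \cdot \TokFn(\w))\bigr) \ge 0$ on the entire interval, so that the intermediary does not prefer to exit and play $\wmiddleman = 0$. But this inequality is precisely the defining condition of $[\LowerThreshold, \UpperThreshold]$ identified in Theorem~\ref{thm:sufficientcondition} (via the characterization in Lemma~\ref{lemma:equilibriumcharacterization}); outside this interval the inequality reverses, recovering the disintermediation case and confirming the three-way split. A minor additional point is that the intermediary's strategy at the boundary prices $\LowerThreshold, \UpperThreshold$ is sustained by the tiebreaking rule in favor of higher quality, which picks $\wmiddleman^\ast$ over $0$ when utilities coincide, consistent with the closed intervals in the statement.
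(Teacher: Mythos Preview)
Your proposal is correct and follows essentially the same approach as the paper: both split into the disintermediation and intermediation regimes via the threshold characterization and then read off $\wuser{j}$ from the equilibrium construction in Lemma~\ref{lemma:equilibriumcharacterization} (which in turn rests on Lemma~\ref{lemma:middlemansubgame}). The only difference is cosmetic: the paper's proof simply cites Lemma~\ref{lemma:equilibriumcharacterization} for each case, whereas you re-derive inline the intermediary's best-response logic and the nonnegativity check that underlies that lemma.
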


\begin{figure}
    \centering
    \begin{subfigure}[b]{0.49\textwidth}
        \centering
        \includegraphics[width=\textwidth]{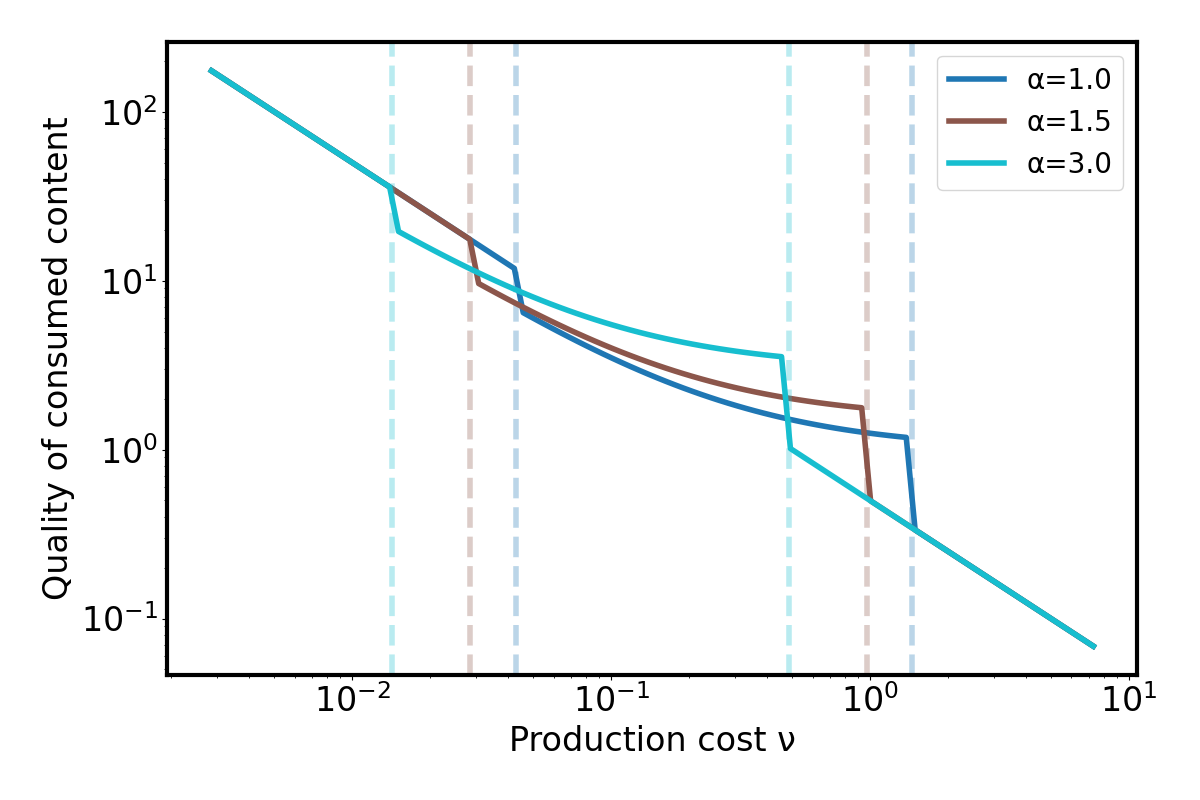}
        \caption{Impact of transfer $\subfee$}
        \label{fig:quality-alpha}
    \end{subfigure}
    \hfill
    \begin{subfigure}[b]{0.49\textwidth}
        \centering
        \includegraphics[width=\textwidth]{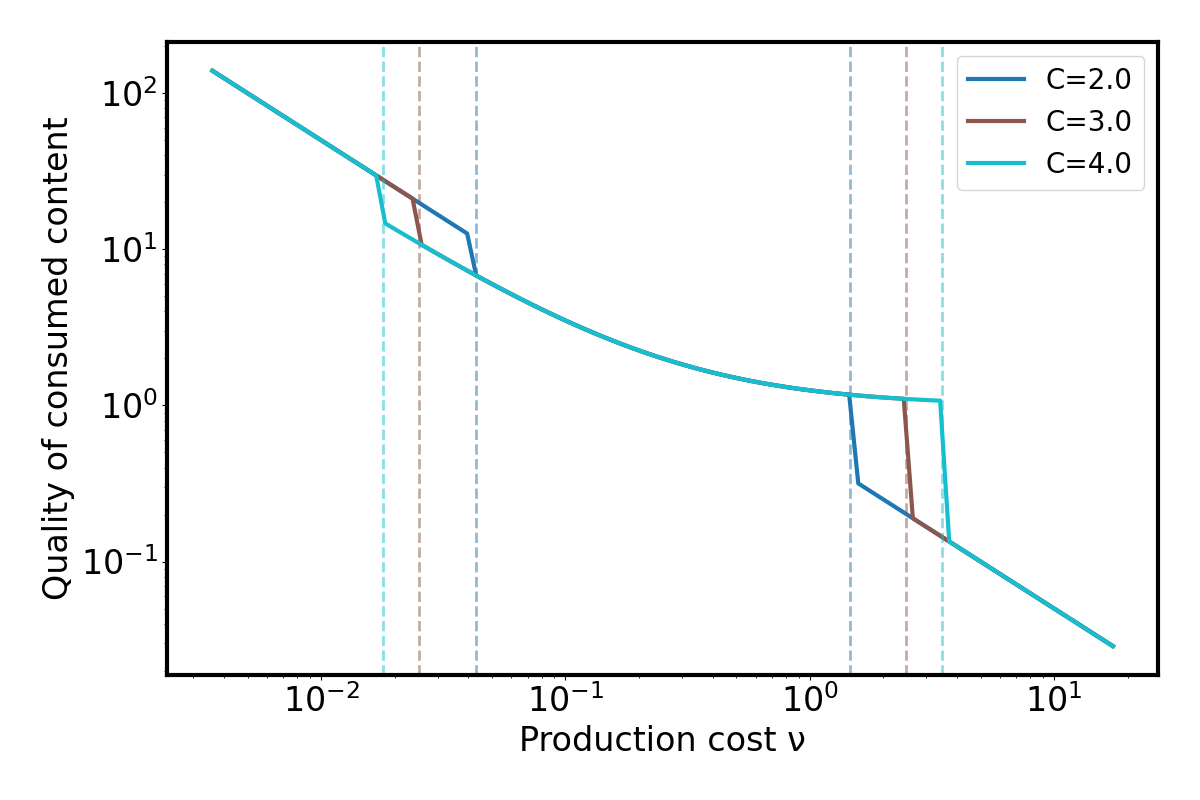}
        \caption{Impact of number of consumers $\NumConsumers$}
        \label{fig:quality-C}
    \end{subfigure}
    \caption{Quality of the content consumed at a pure strategy equilibrium as a function of production costs, for $g(\w) = \w^{2}$ (Proposition \ref{prop:contentquality}) We vary the transfers $\alpha$ (left), and the number of consumers $\NumConsumers$ (right). The vertical dashed lines show the production costs at which disintermediation starts to occur. Observe that the quality is decreasing in production costs, and is discontinuous at the thresholds where disintermediation starts to occur (Theorem \ref{thm:contentquality}).
    } 
    \label{fig:quality}
\end{figure}

Proposition \ref{prop:contentquality} demonstrates that the content quality has three regimes of behavior as a function of the production costs. These are conceptually the same three regimes as in Theorem \ref{thm:middlemanusageexposure}. In the first and third regimes, the consumer consumes  the content that they produce themselves; in the second regime, the intermediary survives in the market, and consumers consume the content that is produced by the intermediary.

Using Proposition \ref{prop:contentquality}, we analyze how the content quality changes as production costs improve, and we also compare content quality to a hypothetical market where the intermediary does not exist, where the content quality would have been  $\argmax_{\w \ge 0}(\w - \price \cdot \TokFn(\w))$. 
\begin{theorem}
\label{thm:contentquality}
Consider the setup of Proposition \ref{prop:contentquality}. The quality of content consumed at equilibrium is decreasing in $\price$. Moreover, the quality is continuous in $\price$ except for  at the thresholds $\LowerThreshold( \NumConsumers,\subfee, \TokFn)$ and $\UpperThreshold( \NumConsumers, \subfee, \TokFn)$. The quality when $\price \in [\LowerThreshold(\NumConsumers, \subfee, \TokFn), \UpperThreshold(\NumConsumers, \subfee,  \TokFn)]$ can be higher or lower than  $\argmax_{\w \ge 0}(\w - \price \cdot \TokFn(\w))$ when $\price$ is at the higher or lower end of the range, respectively. 
\end{theorem}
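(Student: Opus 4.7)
The plan is to build on Proposition \ref{prop:contentquality}, which provides a closed-form piecewise expression for the equilibrium quality in terms of $\w^*(\price) := \argmax_{\w \ge 0}(\w - \price \TokFn(\w))$ and $v(\price) := \max_{\w \ge 0}(\w - \price \TokFn(\w))$. Within each of the three regimes, monotonicity and continuity will be routine: in the two disintermediation regimes the FOC $\TokFn'(\w^*(\price)) = 1/\price$ together with strict convexity of $\TokFn$ gives that $\w^*(\price)$ is strictly decreasing and continuous, while in the middle regime the envelope theorem gives $v'(\price) = -\TokFn(\w^*(\price)) < 0$. All remaining content of the theorem---that quality is still decreasing across the two transitions, that a genuine jump occurs at each threshold, and that the intermediary's quality is below (resp.\ above) $\w^*(\price)$ at the lower (resp.\ upper) end of $[\LowerThreshold,\UpperThreshold]$---will reduce to understanding the ``gap'' function
\[
\Delta(\price) \;:=\; (\subfee + v(\price)) - \w^*(\price) \;=\; \subfee - \price \TokFn(\w^*(\price)),
\]
where the second equality uses the envelope identity $v(\price) = \w^*(\price) - \price \TokFn(\w^*(\price))$.

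The two key subclaims are that $\Delta$ is strictly increasing on $(0,\infty)$ and that $\Delta(\LowerThreshold) < 0 < \Delta(\UpperThreshold)$. For strict monotonicity, I differentiate $\Delta$, substitute the FOC $\TokFn'(\w^*) = 1/\price$ and the implicit-differentiation identity $(\w^*)'(\price) = -1/(\price^2 \TokFn''(\w^*))$, and reduce $\Delta'(\price) > 0$ to $(\TokFn'(\w^*))^2 > \TokFn(\w^*)\TokFn''(\w^*)$, which is exactly strict log-concavity of $\TokFn$ from the hypotheses. For the signs at the thresholds, my plan is to locate the unique zero $\price_0$ of $\Delta$ and show it sits strictly inside $(\LowerThreshold,\UpperThreshold)$. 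Existence of $\price_0$ uses the limits $\Delta(\price) \to -\infty$ as $\price \to 0$ (this is where the stronger condition $\TokFn(\w - \TokFn(\w)/\TokFn'(\w))/\TokFn'(\w) \to \infty$ from Theorem \ref{thm:sufficientcondition} feeds in, forcing $\price \TokFn(\w^*(\price)) \to \infty$) and $\Delta(\price) \to \subfee > 0$ as $\price \to \infty$ (since $\w^*(\price) \to 0$ and a Taylor expansion of $\TokFn$ at $0$ gives $\price \TokFn(\w^*(\price)) \to 0$). The punchline: at $\price_0$ the intermediary's best entry quality $\subfee + v(\price_0)$ coincides with $\w^*(\price_0)$, so the intermediary's utility from entering equals $\NumConsumers\subfee - \price_0 \TokFn(\w^*(\price_0)) = \NumConsumers\subfee - \subfee = (\NumConsumers - 1)\subfee > 0$. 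Since the intermediary's entry utility is exactly $0$ at each threshold, I conclude $\price_0 \in (\LowerThreshold,\UpperThreshold)$, and strict monotonicity of $\Delta$ then gives $\Delta(\LowerThreshold) < 0 < \Delta(\UpperThreshold)$.

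Assembling everything: at $\LowerThreshold$ the quality drops from $\w^*(\LowerThreshold)$ (just below) to $\w^*(\LowerThreshold) + \Delta(\LowerThreshold) < \w^*(\LowerThreshold)$ (at and just above), which simultaneously preserves monotonicity and produces a genuine discontinuity; the argument at $\UpperThreshold$ is symmetric, with the drop going from $\subfee + v(\UpperThreshold)$ down to $\w^*(\UpperThreshold) = \subfee + v(\UpperThreshold) - \Delta(\UpperThreshold)$. The sign pattern $\Delta(\LowerThreshold) < 0 < \Delta(\UpperThreshold)$ directly yields the final ``higher or lower'' claim. I expect the main obstacle to be the derivative calculation identifying $\Delta'(\price) > 0$ with strict log-concavity of $\TokFn$---the rest of the argument is mostly bookkeeping, and the clean reformulation via the intermediary-utility value $(\NumConsumers-1)\subfee > 0$ at $\price_0$ is what collapses monotonicity, discontinuity, and the quality comparison into a single observation.
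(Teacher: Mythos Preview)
Your proposal is correct and the mathematical core matches the paper's proof: both arguments hinge on the sign of what you call the gap $\Delta(\price) = \subfee - \price\,\TokFn(\w^*(\price)) = \subfee - \TokFn(\w^*)/\TokFn'(\w^*)$, both use strict log-concavity of $\TokFn$ to force this sign to flip exactly once, and both locate that flip strictly inside $(\LowerThreshold,\UpperThreshold)$ via the observation that there the intermediary's entry utility equals $(\NumConsumers-1)\subfee>0$.

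The only difference is packaging. The paper does not work with $\Delta$ directly; instead it invokes its earlier Lemmas~\ref{lemma:derivative} and~\ref{lemma:optima}, which establish that $F(\price):=\price\,\TokFn(\subfee+v(\price))$ is U-shaped with minimum $\subfee$ and that the sign of $F'$ is exactly $-\mathrm{sign}\,\Delta$. From the U-shape the paper reads off that $F'<0$ at $\LowerThreshold$ and $F'>0$ at $\UpperThreshold$, hence $\Delta(\LowerThreshold)<0<\Delta(\UpperThreshold)$. Your route---differentiating $\Delta$ itself, reducing $\Delta'>0$ to $(\TokFn')^2>\TokFn\,\TokFn''$, and then pinning the zero $\price_0$ via the utility value $(\NumConsumers-1)\subfee$---is more self-contained and avoids the detour through $F$, while the paper's version amortizes those lemmas across several theorems. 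Both reach the same inequalities at $\LowerThreshold$ and $\UpperThreshold$, which simultaneously deliver the monotonicity across the jumps, the genuine discontinuities, and the higher/lower comparison.
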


Theorem \ref{thm:contentquality} (Figure \ref{fig:quality-comparison}) illustrates how the intermediary distorts content quality. First, the shape of the curve qualitatively changes in the presence of the intermediary: specifically, the slope of the curve becomes flatter. This means that the intermediary reduces the responsiveness of quality to technology changes. Moreover, the intermediary can raise or lower content quality compared to a hypothetical market where the intermediary does not exist. Specifically, we see that when the production costs at the lower end of the regime where the intermediary survives, the content quality is lower than in this hypothetical market; when the production costs are at the upper end of the regime, then the content quality is higher than in this hypothetical market. A striking consequence is that disintermediation at the lower threshold \textit{improves} content quality, even though the market no longer benefits from economies of scale from the intermediary.  

Theorem \ref{thm:contentquality} (Figure \ref{fig:quality}) also uncovers other global properties of the content quality in this market. Even though the market transitions between intermediation and disintermediation, content quality is decreasing with production costs: this illustrates how technological improvements consistently improves content quality. However, perhaps counterintuitively, increasing the number of consumers can lead to \textit{lower} content quality for some production costs (Figure \ref{fig:quality-C}). This comes as a side effect of intermediation, since the number of consumers impacts the range of production costs where intermediation occurs. The impact of the fees $\subfee$ is similarly ambiguous, since the fees also impact when intermediation occurs (Figure \ref{fig:quality-alpha}). 

\subsection{Intermediary utility}\label{subsec:middleman}

\begin{figure}
    \centering
    \begin{subfigure}[b]{0.49\textwidth}
        \centering
        \includegraphics[width=\textwidth]{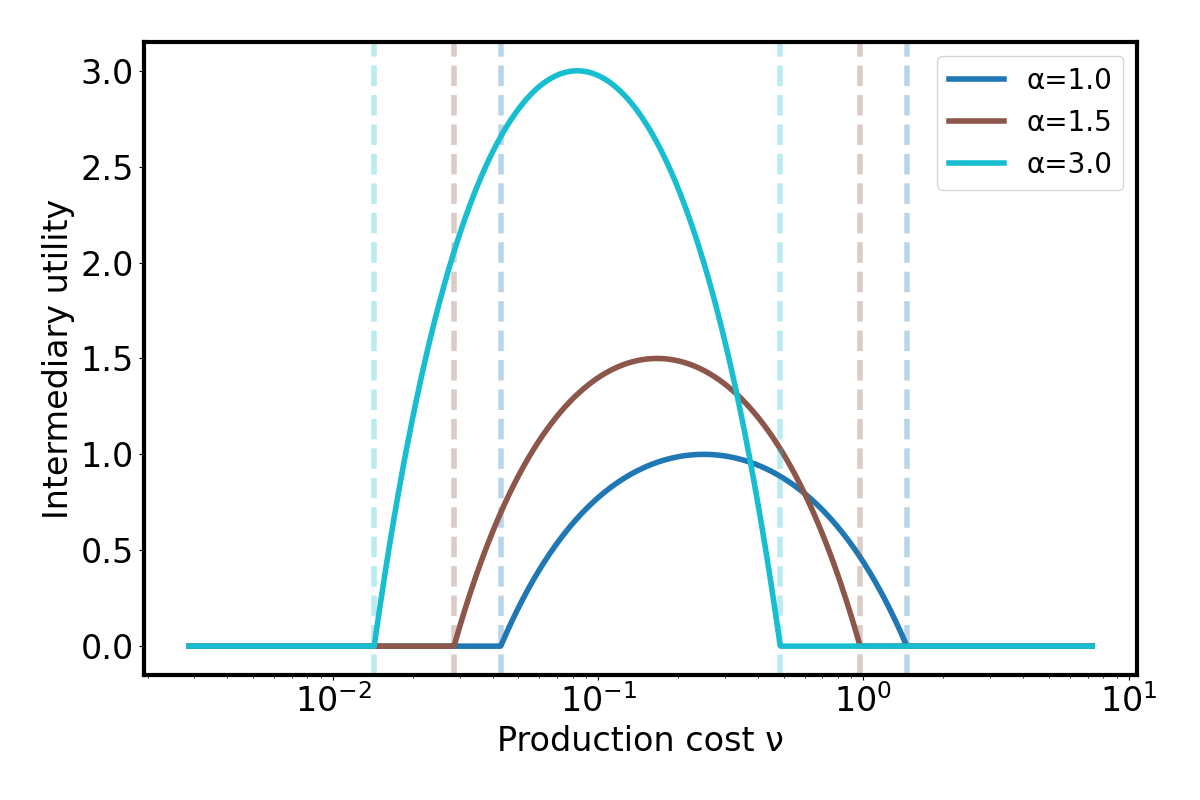}
        \caption{Impact of transfer $\subfee$}
        \label{fig:middleman-alpha}
    \end{subfigure}
    \hfill
    \begin{subfigure}[b]{0.49\textwidth}
        \centering
        \includegraphics[width=\textwidth]{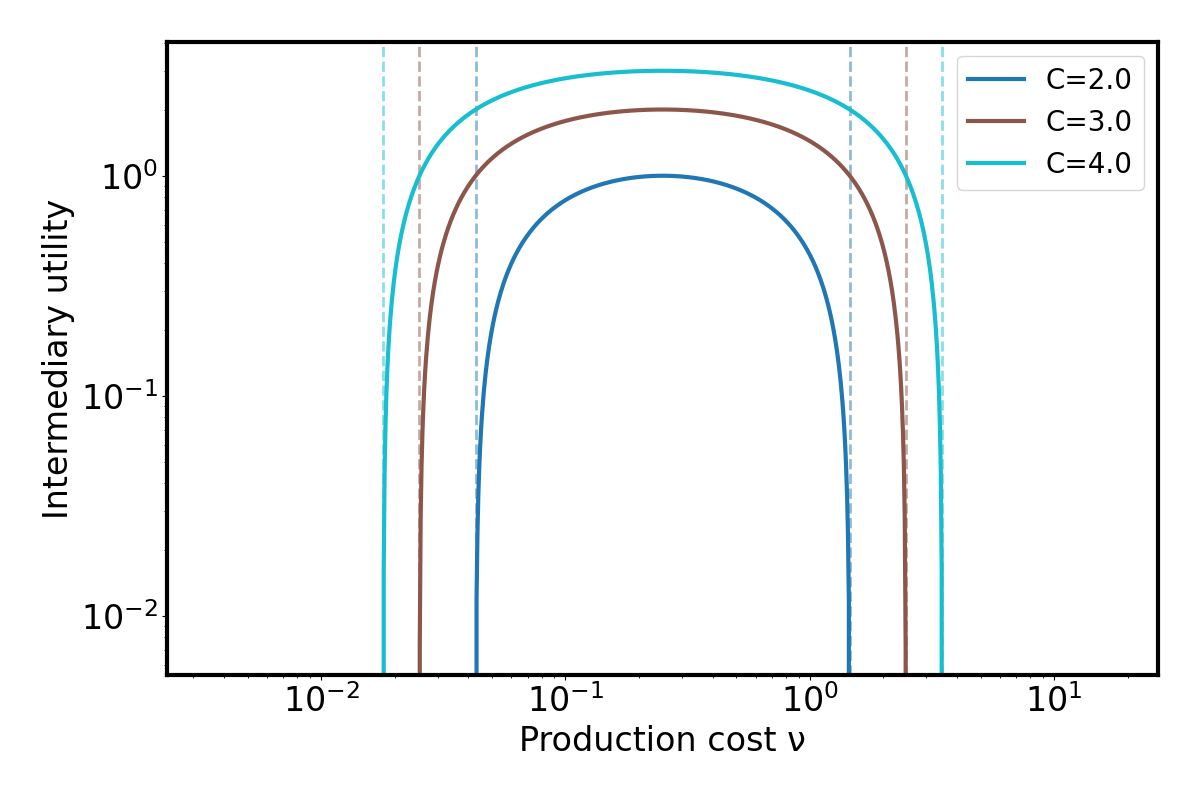}
        \caption{Impact of number of consumers $\NumConsumers$}
        \label{fig:middleman-C}
    \end{subfigure}
    \caption{Intermediary utility at a pure strategy equilibrium as a function of production costs, for $g(\w) = \w^{2}$ (Proposition \ref{prop:middlemanutility}).  We vary the transfers $\alpha$ (left), and the number of consumers $\NumConsumers$ (right). The vertical dashed lines show the production costs at which disintermediation starts to occur. Observe that the intermediary utility is inverse U-shaped in production costs (Theorem \ref{thm:middlemanutility}).}
    \label{fig:middleman}
\end{figure}

We next turn to the intermediary's utility. To gain intuition, we first compute a closed-form characterization of the intermediary's utility at equilibrium. 
\begin{proposition}
\label{prop:middlemanutility}
Consider the setup of Theorem \ref{thm:sufficientcondition}. Let $\price = \min(\ic + \costadditional, \costmanual)$. The intermediary's utility at equilibrium is of the form:  
\[ 
\begin{cases}
0 & \text{ if } \price < \LowerThreshold(\NumConsumers, \subfee, \TokFn)  \\
 \subfee \NumConsumers - \price \TokFn\left(\subfee +  \max_{w \ge 0} \left( w - \price \cdot \TokFn(\w)\right) \right)  & \text{ if } \price \in [\LowerThreshold(\NumConsumers, \subfee, \TokFn), \UpperThreshold(\NumConsumers, \subfee,  \TokFn)]\\
0 & \text{ if } \price > \UpperThreshold(\NumConsumers, \subfee, \TokFn).
\end{cases}
\]  
\end{proposition}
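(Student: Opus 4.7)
The plan is to read the intermediary's utility off the equilibrium structure described after Theorem~\ref{thm:equilibriumuniqueness}, using Theorem~\ref{thm:sufficientcondition} to decide which regime we are in based on $\price = \min(\ic + \costadditional, \costmanual)$. The two boundary regimes are straightforward: when $\price < \LowerThreshold(\NumConsumers,\subfee,\TokFn)$ or $\price > \UpperThreshold(\NumConsumers,\subfee,\TokFn)$, Theorem~\ref{thm:sufficientcondition} gives $\sum_j 1[a_j = M] = 0$, so the intermediary earns no revenue. I then only need to verify that at equilibrium the intermediary optimally sets $\wmiddleman = 0$, which follows since $\TokFn$ is strictly increasing with $\TokFn(0) = 0$: any positive quality would yield a strictly negative payoff once the consumers are not paying the fee, and the tiebreaking rule is irrelevant because a single deviation by the intermediary cannot attract consumers when the disintermediation thresholds put them strictly above their intermediary-option participation constraint.

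For the intermediation regime $\price \in [\LowerThreshold, \UpperThreshold]$, the plan is to first argue that consumer revenue is exactly $\subfee \NumConsumers$ (all $\NumConsumers$ consumers choose $\action{j} = M$), and then identify the intermediary's cost-minimizing choice of $\wmiddleman$ subject to keeping consumers on board. The key observation is that a consumer who picks $\action{j} = D$ can achieve utility $\max_{\w \ge 0}(\w - \price\TokFn(\w))$ by optimally producing content using the cheaper of the best supplier or manual production (together with the associated human-driven cost). Hence the consumer's participation constraint for the intermediary option is $\wmiddleman - \subfee \ge \max_{\w \ge 0}(\w - \price\TokFn(\w))$, i.e. $\wmiddleman \ge \subfee + \max_{\w \ge 0}(\w - \price\TokFn(\w))$. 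Since the intermediary's cost $\price \cdot \TokFn(\wmiddleman)$ is strictly increasing in $\wmiddleman$ and their revenue $\subfee\NumConsumers$ is independent of $\wmiddleman$ (as long as all consumers still choose $M$), the intermediary strictly prefers to set $\wmiddleman$ as small as possible while maintaining consumer participation. Together with the tiebreaking assumption that consumers break indifferences in favor of $M$, this pins down $\wmiddleman = \subfee + \max_{\w \ge 0}(\w - \price\TokFn(\w))$ and yields the stated utility.

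The subtlety I expect is verifying that this candidate quality indeed arises in some pure-strategy equilibrium, rather than the intermediary profitably deviating by producing more (to overshoot the participation constraint) or by exiting. Overshooting is ruled out by strict monotonicity of $\TokFn$, since extra quality strictly lowers utility without attracting more consumers (there are only $\NumConsumers$). Exiting is ruled out by showing the derived utility is nonnegative throughout $[\LowerThreshold,\UpperThreshold]$, which is exactly the content of Theorem~\ref{thm:sufficientcondition}: the thresholds are the two roots of $\price g(\subfee + \max_{\w \ge 0}(\w - \price\TokFn(\w))) = \subfee\NumConsumers$, so between them the revenue weakly dominates the cost. I would also briefly note that the supplier-side equilibrium yields $\price_i = \ic$ for at least one supplier and that the intermediary (resp.\ consumer) optimally picks $\price = \min(\ic + \costadditional, \costmanual)$ as the effective marginal production cost, as asserted in Theorem~\ref{thm:equilibriumuniqueness}; with this, the formula is immediate.

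The main obstacle, and the only nontrivial piece, is the participation-constraint argument: I need to be careful that the consumer's best direct-production utility is genuinely $\max_{\w \ge 0}(\w - \price\TokFn(\w))$ with $\price = \min(\ic + \costadditional, \costmanual)$ and that this maximum is attained at some finite interior $\w$, which is guaranteed by the conditions $\TokFn(0) = \TokFn'(0) = 0$ and $\lim_{\w \to \infty}\TokFn'(\w) = \infty$ from Theorem~\ref{thm:middlemanusageexposure}. Once this is in hand, the three-way case analysis is a direct application of the equilibrium characterization plus the one-line optimization in $\wmiddleman$.
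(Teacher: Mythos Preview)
Your approach is essentially the paper's: the paper simply invokes Lemma~\ref{lemma:equilibriumcharacterization} (which in turn rests on the subgame analysis in Lemma~\ref{lemma:middlemansubgame}) to read off $\wmiddleman$ and the intermediary's utility in each regime, and you are unpacking that same subgame argument inline. The intermediation case is handled correctly.

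One point to tighten in the disintermediation case: your sentence ``a single deviation by the intermediary cannot attract consumers when the disintermediation thresholds put them strictly above their intermediary-option participation constraint'' is not quite right as stated. The intermediary moves at Stage~2 and consumers best-respond at Stage~3, so the intermediary \emph{can} attract all consumers by choosing any $\wmiddleman \ge \subfee + \max_{\w \ge 0}(\w - \price\TokFn(\w))$; the reason for exit is not that attraction is impossible but that it is unprofitable. The clean argument is the one you already have the ingredients for: outside $[\LowerThreshold,\UpperThreshold]$ we have $\price\,\TokFn(\subfee + \max_{\w}(\w - \price\TokFn(\w))) > \subfee\NumConsumers$ by Theorem~\ref{thm:sufficientcondition}, so any quality meeting the participation constraint yields strictly negative utility, while any quality below it attracts no one and yields $-\price\TokFn(\wmiddleman) \le 0$; hence $\wmiddleman = 0$ is the unique best response. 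With that rephrasing the proof is complete and matches the paper's.
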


Proposition \ref{prop:middlemanutility} lets us analyze how intermediary utility changes with production costs. 
\begin{theorem}
\label{thm:middlemanutility}
Consider the setup of Proposition \ref{prop:middlemanutility}. 
As a function of $\price$, the intermediary utility at equilibrium is continuous and inverse U-shaped. The maximum intermediary utility across all values $\price > 0$ is equal to $\subfee (\NumConsumers - 1)$ and occurs when $\wmiddleman = \argmax_{w \ge 0} (w - \price g(w))$.  
\end{theorem}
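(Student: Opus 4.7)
The plan is to work directly from the closed-form in Proposition \ref{prop:middlemanutility}. Write $u^*(\price) := \max_{w \ge 0}(w - \price \TokFn(w))$ and $w^*(\price) := \argmax_{w \ge 0}(w - \price \TokFn(w))$, so that on the intermediation regime $[\LowerThreshold, \UpperThreshold]$ the intermediary utility equals $U(\price) := \subfee \NumConsumers - \price \TokFn(\subfee + u^*(\price))$, and $U(\price) = 0$ outside. First I would establish continuity: by Theorem \ref{thm:sufficientcondition} the thresholds are precisely the solutions to $\price \TokFn(\subfee + u^*(\price)) = \subfee \NumConsumers$, so $U(\LowerThreshold) = U(\UpperThreshold) = 0$, matching the outer regimes. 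Inside the interval, $U$ is smooth by standard smoothness of $u^*$ under the stated assumptions on $\TokFn$.

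Next I would analyze monotonicity on $[\LowerThreshold, \UpperThreshold]$ via the envelope theorem, which gives $(u^*)'(\price) = -\TokFn(w^*(\price))$, and the first-order condition $\price \TokFn'(w^*(\price)) = 1$. A direct differentiation then yields
\[
U'(\price) = -\TokFn(\subfee + u^*(\price)) + \price \TokFn'(\subfee + u^*(\price)) \TokFn(w^*(\price)).
\]
Factoring out $\TokFn'(\subfee + u^*(\price))$ and setting $\phi(w) := \TokFn(w)/\TokFn'(w)$, I would rewrite this as
\[
U'(\price) = \TokFn'(\subfee + u^*(\price))\bigl[\phi(w^*(\price)) - \phi(\subfee + u^*(\price))\bigr],
\]
using $\price \TokFn(w^*(\price)) = \phi(w^*(\price))$ via the FOC. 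Strict log-concavity of $\TokFn$ is exactly $\TokFn \TokFn'' < (\TokFn')^2$, which is equivalent to $\phi' > 0$, so $\phi$ is strictly increasing. Therefore the sign of $U'(\price)$ agrees with the sign of $w^*(\price) - (\subfee + u^*(\price)) = \price \TokFn(w^*(\price)) - \subfee$, using the identity $u^*(\price) = w^*(\price) - \price \TokFn(w^*(\price))$.

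The third step is to show that $k(\price) := \price \TokFn(w^*(\price))$ is strictly decreasing. Differentiating the FOC $\price \TokFn'(w^*) = 1$ gives $(w^*)'(\price) = -\TokFn'(w^*)/(\price \TokFn''(w^*))$, from which a short computation (again using the FOC) yields $k'(\price) = \TokFn(w^*) - (\TokFn'(w^*))^2/\TokFn''(w^*)$. Strict log-concavity rearranges to $\TokFn(w^*) < (\TokFn'(w^*))^2/\TokFn''(w^*)$, hence $k'(\price) < 0$. Since $k$ is strictly decreasing and $U$ is positive in the interior of $[\LowerThreshold, \UpperThreshold]$ and zero at the endpoints, there is a unique $\price^\star$ in the interior where $k(\price^\star) = \subfee$; on this interval $U$ is strictly increasing before $\price^\star$ and strictly decreasing after, which is the inverse-U shape.

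Finally, I would compute the maximum value. At $\price^\star$ we have $\wmiddleman = \subfee + u^*(\price^\star) = \subfee + w^*(\price^\star) - \price^\star \TokFn(w^*(\price^\star)) = w^*(\price^\star) = \argmax_{w \ge 0}(w - \price^\star \TokFn(w))$, and $U(\price^\star) = \subfee \NumConsumers - \price^\star \TokFn(w^*(\price^\star)) = \subfee \NumConsumers - \subfee = \subfee (\NumConsumers - 1)$, as claimed. The main technical obstacle is the sign analysis in steps two and three: both the simplification of $U'(\price)$ via $\phi$ and the monotonicity of $k$ hinge on converting strict log-concavity of $\TokFn$ into the right inequality, and on cleanly applying the envelope identity together with the consumer's FOC so that $\wmiddleman$ and $w^*$ can be directly compared.
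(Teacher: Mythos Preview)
Your proof is correct and follows essentially the same route as the paper: both reduce to showing that $\price\,\TokFn(\subfee + u^*(\price))$ is U-shaped by differentiating, factoring through $\phi = \TokFn/\TokFn'$, using strict log-concavity to make $\phi$ increasing, and then identifying the minimizer via $\price\,\TokFn(w^*(\price)) = \subfee$, which immediately gives $\wmiddleman = w^*(\price)$ and $U = \subfee(\NumConsumers - 1)$. One small refinement: your step three invokes $\TokFn''$, which the stated hypotheses (continuously differentiable, strictly convex, strictly log-concave) do not guarantee; you can bypass this entirely by observing that $k(\price) = \price\,\TokFn(w^*(\price)) = \phi(w^*(\price))$, and this is strictly decreasing because $\phi$ is strictly increasing and $w^*$ is strictly decreasing (from $\TokFn'(w^*(\price)) = 1/\price$ with $\TokFn'$ strictly increasing).
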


Theorem \ref{thm:middlemanutility} (Figure \ref{fig:middleman}) illustrates how the intermediary's utility is inverse U-shaped. This non-monotone behavior means that even though technology improvements first benefit the intermediary, the intermediary's utility later starts to fall until the intermediary is eventually driven out of the market. The intermediary's utility is maximized when production costs are in the middle of the range. 
At the optima, the intermediary expends one consumer's fee on content production, creating the same content that the consumer would have created if the intermediary did not exist. The intermediary retains the rest of the consumers' fees for themselves: in this sense, the intermediary  extracts all the value from the economies of scale. The intermediary benefits from increasing the number of consumers (Figure \ref{fig:middleman-C}), but the impact of the fee $\subfee$ is ambiguous (Figure \ref{fig:middleman-alpha}).

\subsection{Consumer utility}\label{subsec:consumers}

\begin{figure}
    \centering
    \begin{subfigure}[b]{0.49\textwidth}
        \centering
        \includegraphics[width=\textwidth]{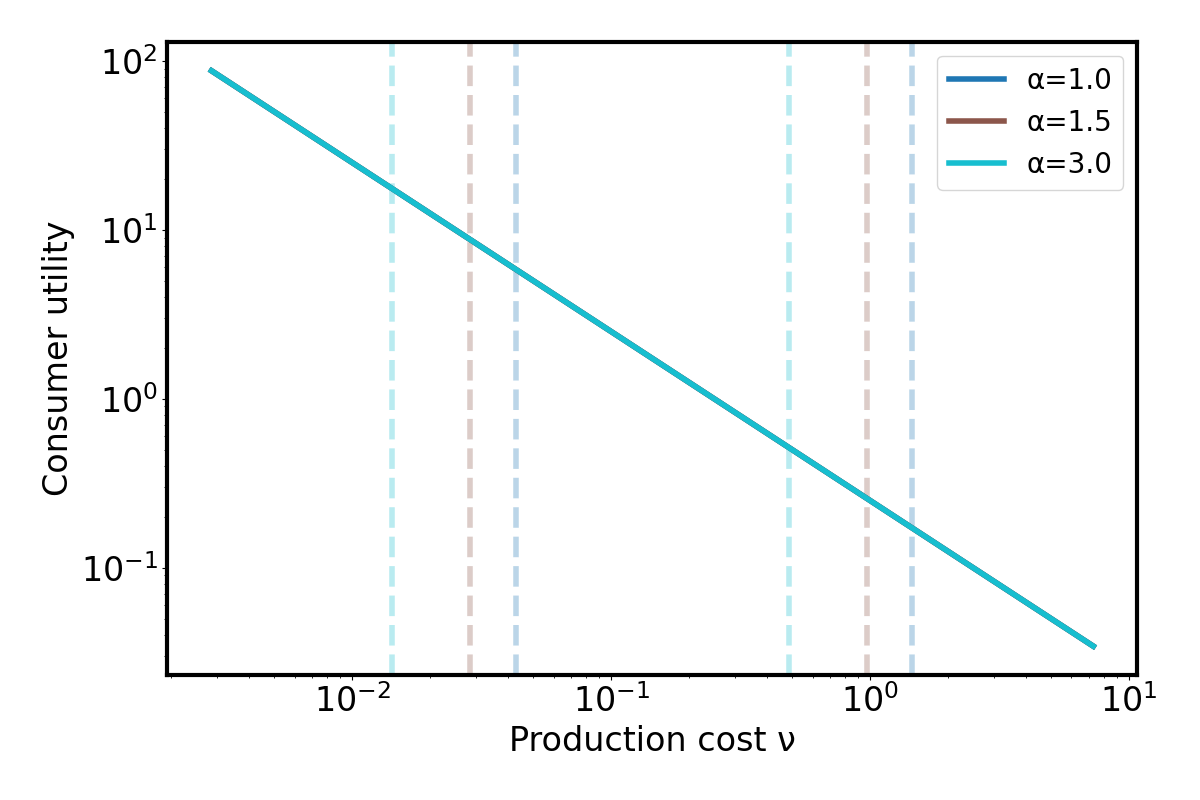}
        \caption{Impact of transfer $\subfee$}
        \label{fig:consumer-alpha}
    \end{subfigure}
    \hfill
    \begin{subfigure}[b]{0.49\textwidth}
        \centering
        \includegraphics[width=\textwidth]{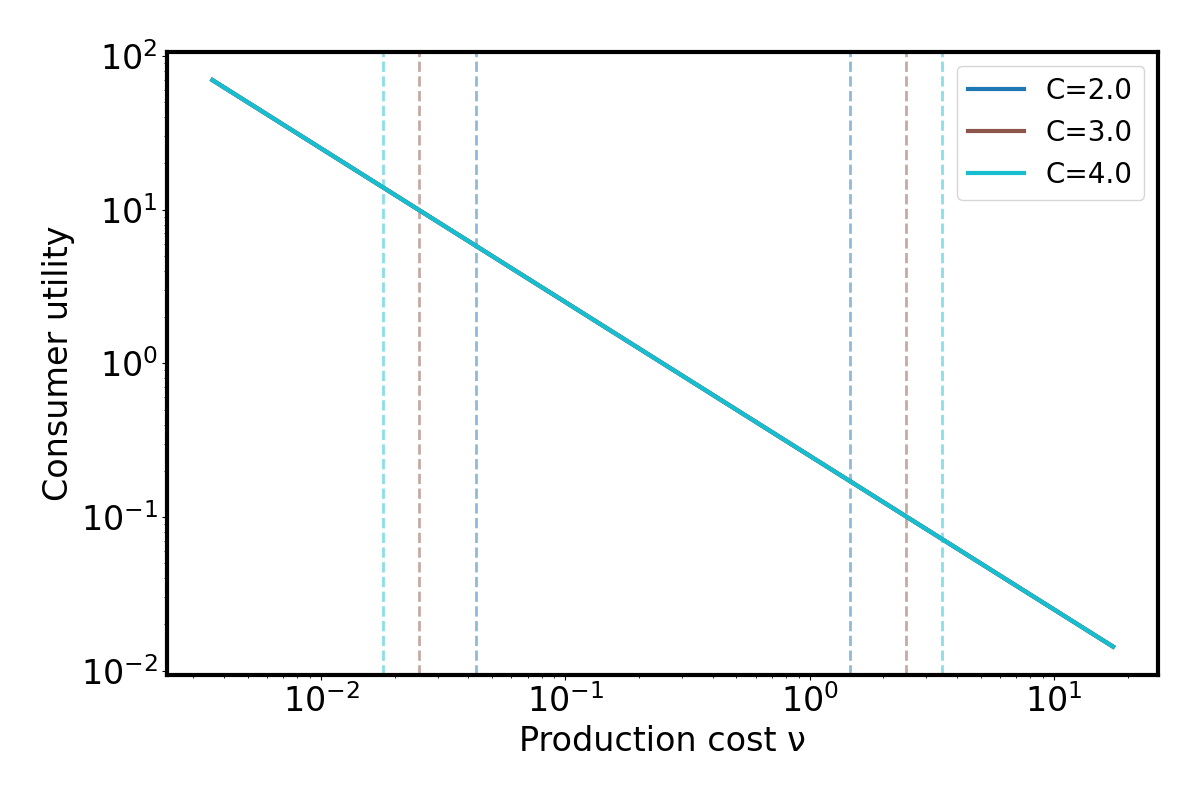}
        \caption{Impact of number of consumers $\NumConsumers$}
        \label{fig:consumer-C}
    \end{subfigure}
    \caption{Consumer utility at a pure strategy equilibrium as a function of production costs, for $g(\w) = \w^{2}$ (Theorem \ref{thm:consumerutility}). We vary the transfers $\alpha$ (left), and the number of consumers $\NumConsumers$ (right). Observe that the consumer utility is continuous, decreasing in production costs, and independent of $\NumConsumers$ and $\subfee$ (Corollary \ref{cor:consumerutility}).}.
    \label{fig:consumer}
\end{figure}

We next turn to consumer utility, which can be characterized at equilibrium in closed-form. 
\begin{theorem}
\label{thm:consumerutility}
Consider the setup of Theorem \ref{thm:sufficientcondition}. Let $\price = \min(\ic + \costadditional, \costmanual)$. At equilibrium, the utility of any consumer $j \in [\NumConsumers]$ is equal to $\max_{\w \ge 0} (\w - \price \cdot \TokFn(\w))$. 
\end{theorem}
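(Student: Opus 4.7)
The plan is to derive the consumer utility directly from the characterization of equilibrium content quality given by Proposition \ref{prop:contentquality}, by performing a case analysis over the three regimes identified in Theorem \ref{thm:sufficientcondition}. The main observation is that in the intermediation regime, the intermediary chooses quality precisely at the threshold of keeping consumers weakly preferring to consume the intermediary's content over producing directly, so the consumer's net utility is pinned down to be the same as under direct creation.

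More concretely, let $\price = \min(\ic + \costadditional, \costmanual)$. First I would recall from the equilibrium construction (Section \ref{subsec:equilibriumexistence}) that suppliers set their prices equal to $\ic$, so the effective per-unit cost of creating content (whether by a consumer choosing $\Direct$ or by the intermediary) is exactly $\price$.

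\textbf{Case 1 ($\price \notin [\LowerThreshold, \UpperThreshold]$, disintermediation).} By Theorem \ref{thm:sufficientcondition} and Proposition \ref{prop:contentquality}, each consumer $j$ selects $\action{j} = \Direct$ and produces content of quality $\wuser{j} = \argmax_{\w \ge 0}(\w - \price \TokFn(\w))$. The consumer pays no fee $\subfee$ and incurs production cost $\price \cdot \TokFn(\wuser{j})$, so their utility is $\wuser{j} - \price \TokFn(\wuser{j}) = \max_{\w \ge 0}(\w - \price \TokFn(\w))$, as desired.

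\textbf{Case 2 ($\price \in [\LowerThreshold, \UpperThreshold]$, intermediation).} Here Proposition \ref{prop:contentquality} gives $\wuser{j} = \subfee + \max_{\w \ge 0}(\w - \price \TokFn(\w))$, and each consumer pays the fee $\subfee$ to the intermediary. The consumer's utility is therefore
\[
\wuser{j} - \subfee \;=\; \bigl(\subfee + \max_{\w \ge 0}(\w - \price \TokFn(\w))\bigr) - \subfee \;=\; \max_{\w \ge 0}(\w - \price \TokFn(\w)),
\]
again matching the claim. Combining the two cases gives the theorem.

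I do not expect any genuine obstacle here: both the three-regime structure and the closed-form quality expressions are already established in Theorem \ref{thm:sufficientcondition} and Proposition \ref{prop:contentquality}, and this theorem is essentially a direct corollary obtained by subtracting the fee (in the intermediation regime) or the production cost (in the disintermediation regimes) from the quality consumed. The one conceptual point worth emphasizing in the write-up is \emph{why} the intermediation-regime quality takes the form $\subfee + \max_{\w}(\w - \price \TokFn(\w))$: it is precisely because the intermediary, facing a fixed per-consumer fee and no pricing power, optimally chooses the smallest quality that keeps consumers willing to subscribe, namely the one matching their outside option $\max_{\w}(\w - \price \TokFn(\w))$ plus the fee $\subfee$ they must pay. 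This pin-down explains why consumer utility coincides with the direct-creation utility across all three regimes, and hence why consumers are indifferent to the intermediary's presence, as foreshadowed in the introduction.
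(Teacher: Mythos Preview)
Your proposal is correct and follows essentially the same approach as the paper: a two-case split into the disintermediation and intermediation regimes, computing consumer utility in each from the equilibrium quality expressions. The only cosmetic difference is that the paper cites Lemma~\ref{lemma:equilibriumcharacterization} directly, whereas you route through Proposition~\ref{prop:contentquality} (which is itself derived from that lemma); the argument is the same.
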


Theorem \ref{thm:consumerutility} 
(Figure \ref{fig:consumer-utility-comparison})  illustrates how consumer utility is unaffected by the intermediary. Specifically, the consumer utility is the same as  in a hypothetical market where the intermediary does not exist. This means that consumer utility is independent of the fees $\subfee$ (Figure \ref{fig:consumer-alpha}) as well as the number of other consumers in the market (Figure \ref{fig:consumer-C}). The intuition is that the intermediary is a monopolist, and is able to extract all of the value from the economies of scale for themselves. Interestingly, this occurs even though the intermediary can't influence the price $\price$: instead the intermediary extracts all of the surplus through the choice of quality produced.

Using Theorem \ref{thm:consumerutility}, we show that consumer utility is decreasing is production costs, so consumers still do benefit from technological improvements.
\begin{corollary}
\label{cor:consumerutility}
 Consider the setup of Theorem \ref{thm:consumerutility}. As a function of $\price$, the the utility of each consumer $j \in [\NumConsumers]$ is continuous and decreasing.    
\end{corollary}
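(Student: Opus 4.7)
The plan is to reduce Corollary \ref{cor:consumerutility} to the structural properties of the scalar function
\[ U(\price) \;:=\; \max_{\w \ge 0}\bigl(\w - \price \cdot \TokFn(\w)\bigr), \]
since by Theorem \ref{thm:consumerutility} every consumer's equilibrium utility equals $U(\price)$ with $\price = \min(\ic + \costadditional, \costmanual) > 0$. The two claims to establish are that $U$ is continuous on $(0,\infty)$ and strictly decreasing there.

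For monotonicity, I would argue directly from the envelope-style inequality. Fix $0 < \price_1 < \price_2$, and let $\w_2^\star$ be a maximizer of $\w \mapsto \w - \price_2 \TokFn(\w)$; existence follows because $\TokFn$ is strictly convex with $\lim_{\w \to \infty} \TokFn'(\w) = \infty$, so the objective is eventually decreasing and attains its max. The key observation is that $\w_2^\star > 0$: the first-order derivative of the objective at $\w=0$ is $1 - \price_2 \TokFn'(0) = 1 > 0$ (using $\TokFn'(0)=0$), so $\w=0$ cannot be optimal. Hence $\TokFn(\w_2^\star) > 0$ (since $\TokFn$ is strictly increasing with $\TokFn(0)=0$), and
\[ U(\price_1) \;\ge\; \w_2^\star - \price_1 \TokFn(\w_2^\star) \;>\; \w_2^\star - \price_2 \TokFn(\w_2^\star) \;=\; U(\price_2), \]
which is the desired strict decrease.

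For continuity, the cleanest route is to note that $U$ is the pointwise supremum over $\w \ge 0$ of the affine functions $\price \mapsto \w - \price \TokFn(\w)$, each of which is continuous (in fact linear) in $\price$. Therefore $U$ is convex on $(0,\infty)$, and any convex function on an open interval of $\mathbb{R}$ is automatically continuous. Alternatively, one can invoke Berge's maximum theorem using that the objective is jointly continuous in $(\w, \price)$ and, by the coercivity of $\TokFn$, the argmax correspondence can be restricted to a compact set uniformly on compact neighborhoods of any $\price_0 > 0$.

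There is no real obstacle here; the only subtlety is confirming that the maximizer $\w_2^\star$ is strictly positive, which is precisely where the assumptions $\TokFn(0) = \TokFn'(0) = 0$ from Theorem \ref{thm:sufficientcondition} earn their keep. Everything else is a one-line convexity/envelope argument.
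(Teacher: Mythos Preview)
Your proof is correct and takes a slightly different route from the paper. The paper invokes its Lemma~\ref{lemma:basicenvelope} (the envelope computation showing $\frac{\partial}{\partial \price} U(\price) = -\TokFn(\argmax_{\w \ge 0}(\w - \price \TokFn(\w)))$) to conclude in one stroke that $U$ is differentiable, hence continuous, with strictly negative derivative, hence decreasing. You instead give a direct comparison argument for strict monotonicity (evaluating the $\price_1$-objective at the $\price_2$-maximizer) and a separate convexity argument (pointwise supremum of affine functions) for continuity. Your approach is more elementary and self-contained, requiring no envelope machinery; the paper's is shorter given that the envelope lemma is already in hand. Your care in verifying $\w_2^\star > 0$ via $\TokFn'(0)=0$ is exactly right and is the one place the boundary assumptions matter.
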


\subsection{Social welfare}\label{subsec:welfare}

Finally, we turn to social welfare. We first analyze the social welfare at equilibrium in closed-form.  
\begin{figure}
    \centering
    \begin{subfigure}[b]{0.49\textwidth}
        \centering
        \includegraphics[width=\textwidth]{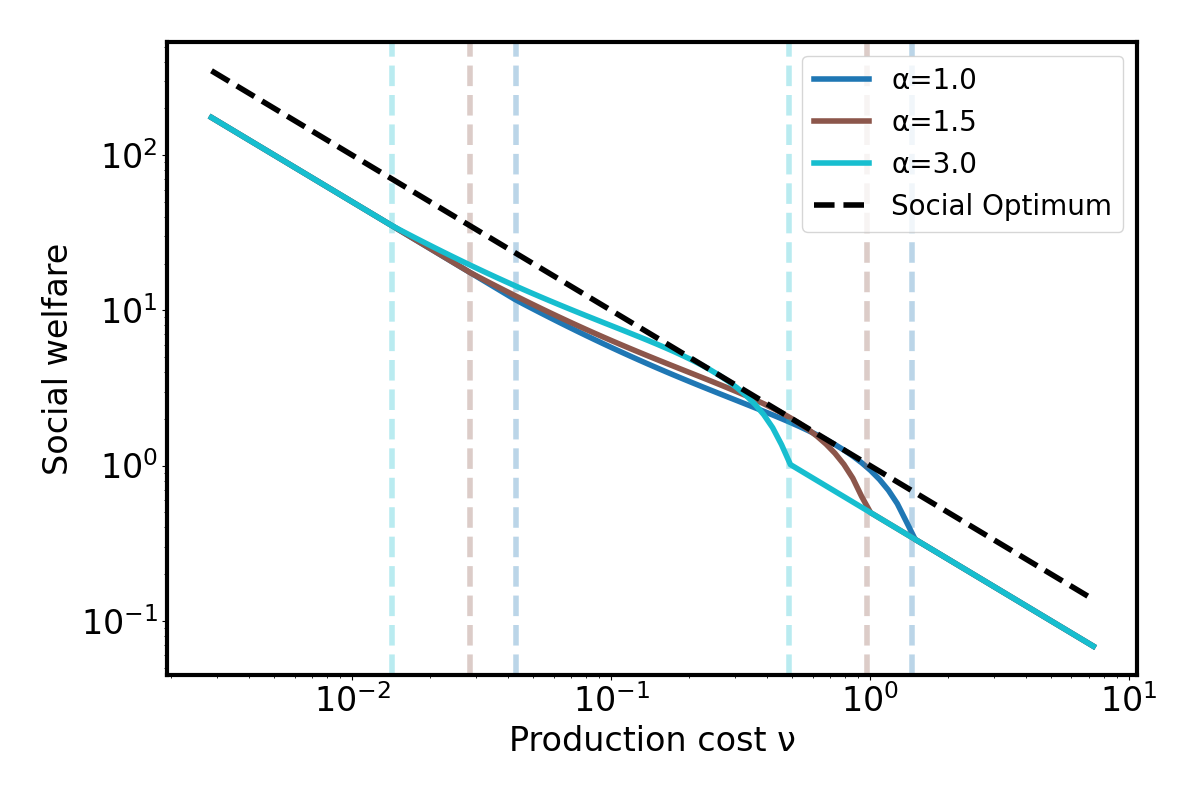}
        \caption{Impact of transfer $\subfee$}
        \label{fig:social-welfare-alpha}
    \end{subfigure}
    \hfill
    \begin{subfigure}[b]{0.49\textwidth}
        \centering
        \includegraphics[width=\textwidth]{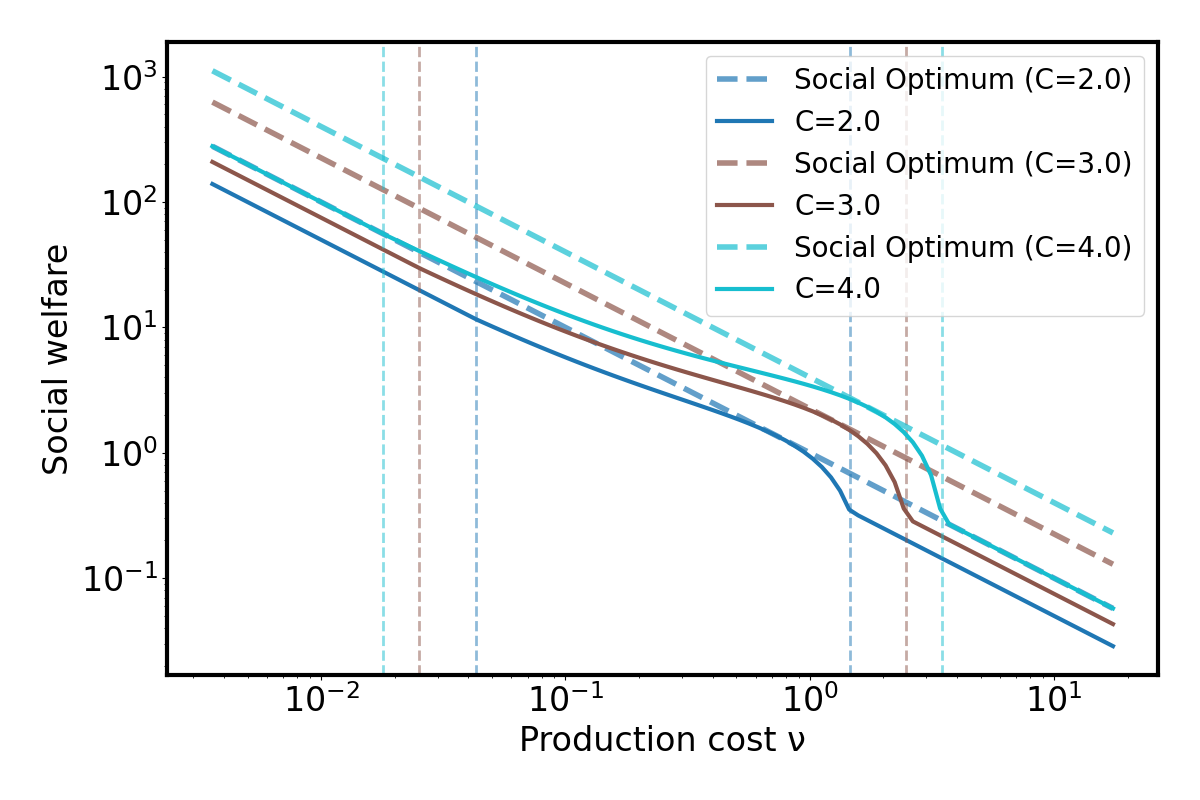}
        \caption{Impact of number of consumers $\NumConsumers$}
        \label{fig:social-welfare-C}
    \end{subfigure}
    \caption{Social welfare at a pure strategy equilibrium as a function of production costs, for $g(\w) = \w^{2}$ (Proposition \ref{prop:socialwelfare}) We vary the transfers $\alpha$ (left), and the number of consumers $\NumConsumers$ (right). The black line shows the social welfare of the optimal social planner solution. Observe that the social welfare utility is continuous, decreasing in production costs, and increasing in $\NumConsumers$ (Theorem \ref{thm:socialwelfare})}
    \label{fig:socialwelfare}
\end{figure}

\begin{proposition}
\label{prop:socialwelfare}
Consider the setup of Theorem \ref{thm:sufficientcondition}. Let $\price = \min(\ic + \costadditional, \costmanual)$. The equilibrium social welfare takes the form: 
\[
\begin{cases}
\NumConsumers \cdot \left(\max_{\w \ge 0}\left(w - \price \cdot \TokFn(\w) \right)\right) & \text{ if } \price < \LowerThreshold( \NumConsumers, \subfee, \TokFn)  \\
\NumConsumers \cdot \left(\subfee + \max_{\w \ge 0}\left(w - \price \cdot \TokFn(\w) \right)\right) -\price \TokFn\left(\subfee +  \max_{w \ge 0} \left( w - \price \cdot \TokFn(\w)\right) \right)  & \text{ if } \price \in [\LowerThreshold(\NumConsumers, \subfee, \TokFn), \UpperThreshold(\NumConsumers, \subfee,  \TokFn)]\\
\NumConsumers \cdot \left(\max_{\w \ge 0}\left(w - \price \cdot \TokFn(\w) \right)\right) & \text{ if } \price > \UpperThreshold( \NumConsumers,\subfee, \TokFn).
\end{cases}
\]
\end{proposition}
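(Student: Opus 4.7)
The plan is to compute social welfare simply by summing the equilibrium utilities of all agents (suppliers, intermediary, and all $\NumConsumers$ consumers), and then to note that most of the transfers between these agents cancel. Concretely, I would first invoke the equilibrium characterization from Section \ref{subsec:equilibriumexistence}: with at least $\NumProviders \ge 2$ homogeneous suppliers, Bertrand-style competition drives each supplier's chosen price down to $\price_i = \ic$, so whoever uses the technology pays exactly its supply-side cost. Consequently, every supplier earns $0$ profit at equilibrium, and the social welfare reduces to the sum of consumer utilities plus the intermediary's utility.

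Next, I would plug in the closed-form expressions already established earlier in the paper. By Theorem \ref{thm:consumerutility}, in all three regimes each consumer obtains utility $\max_{\w \ge 0}(\w - \price \cdot \TokFn(\w))$, contributing $\NumConsumers \cdot \max_{\w \ge 0}(\w - \price \cdot \TokFn(\w))$ in aggregate. By Proposition \ref{prop:middlemanutility}, the intermediary's utility is $0$ outside the interval $[\LowerThreshold(\NumConsumers,\subfee,\TokFn), \UpperThreshold(\NumConsumers,\subfee,\TokFn)]$, and equals $\subfee \NumConsumers - \price \TokFn\bigl(\subfee + \max_{\w \ge 0}(\w - \price \cdot \TokFn(\w))\bigr)$ in the middle regime. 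Adding the two quantities and grouping the $\subfee \NumConsumers$ and the $\NumConsumers \cdot \max_{\w \ge 0}(\w - \price \cdot \TokFn(\w))$ terms together in the middle regime yields exactly the piecewise expression stated in the proposition.

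The main thing to verify carefully, rather than a true obstacle, is that the production cost $\price$ appearing inside the intermediary's cost term in the middle regime is indeed $\price = \min(\ic + \costadditional, \costmanual)$ and matches the $\price$ appearing in the consumer expressions. This is exactly the content of Theorem \ref{thm:equilibriumuniqueness}, together with the fact that $\price_i = \ic$ at equilibrium: the intermediary also faces the effective production cost $\min(\ic + \costadditional, \costmanual)$ per unit of $\TokFn(\w)$. Once these substitutions are made, the computation is purely algebraic and no further analytic argument is needed. In short, the proof is an accounting exercise that directly combines Theorem \ref{thm:consumerutility}, Proposition \ref{prop:middlemanutility}, and the zero-profit property of the suppliers.
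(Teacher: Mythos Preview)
Your proposal is correct and follows essentially the same approach as the paper: sum the utilities of suppliers, intermediary, and consumers, observe that suppliers earn zero profit at equilibrium (the paper cites Lemma \ref{lemma:equilibriumcharacterization} rather than Theorem \ref{thm:equilibriumuniqueness}, but the content is the same), and then plug in Theorem \ref{thm:consumerutility} and Proposition \ref{prop:middlemanutility} to obtain the stated piecewise expression.
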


To interpret the social welfare achieved in this market, we consider a social planner whose goal is to maximize the total social welfare of the suppliers, intermediary, and consumers. We characterize the optimal social planner solution, both in the case where the intermediary exists and where the intermediary does not exist.
\begin{proposition}
\label{prop:socialplanner}
 Let $\price = \min(\ic + \costadditional, \costmanual)$. If the intermediary exists, then the social planner's solution achieves social welfare 
\[
\max_{\w \ge 0} \left(C \w - \price \cdot \TokFn(\w) \right).
\] 
If the intermediary does not exist, then the social planner's solution achieves social welfare 
\[
C \cdot \max_{\w \ge 0} \left(\w - \price \cdot \TokFn(\w) \right).
\]
\end{proposition}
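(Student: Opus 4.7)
\textbf{Proof plan for Proposition \ref{prop:socialplanner}.} The plan is to reduce the social planner's objective to a clean optimization by first exploiting the cancellation of monetary transfers. Summing the supplier, intermediary, and consumer utility functions from Section \ref{sec:model}, each monetary transfer appears with equal and opposite signs: the supplier's revenue $\price_i \TokFn(\w)$ cancels with the buyer's payment to the supplier, and the intermediary fee $\subfee$ cancels between intermediary revenue and consumer expenditure. What remains is the total quality of content consumed, minus the real resource cost of producing each piece of content. Since the planner can route production through whichever channel is cheapest, the real cost of producing one piece of content of quality $\w$ is exactly $\price \TokFn(\w) = \min(\ic + \costadditional,\costmanual) \TokFn(\w)$.

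For the no-intermediary case, each consumer can only consume what they themselves produce, so the problem separates into $\NumConsumers$ independent maximizations of $\w_j - \price \TokFn(\w_j)$, yielding welfare $\NumConsumers \cdot \max_{\w \ge 0}(\w - \price \TokFn(\w))$.

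For the intermediary case, the planner gains the option to have the intermediary produce one piece of content at some quality $\wmiddleman$ and serve it at no incremental cost to any subset $S$ of the consumers, while each $j \notin S$ produces directly for themselves. Total welfare is then
\[ |S| \wmiddleman + \sum_{j \notin S}(\w_j - \price \TokFn(\w_j)) - \price \TokFn(\wmiddleman) \cdot 1[\wmiddleman > 0]. \]
Since this expression is linear in $|S|$ once $\wmiddleman$ and $\{\w_j\}_{j \notin S}$ are fixed, the planner's optimum lies at $|S| \in \{0, \NumConsumers\}$. The $|S| = 0$ branch reduces to the no-intermediary value $\NumConsumers \max_{\w \ge 0}(\w - \price \TokFn(\w))$, while the $|S| = \NumConsumers$ branch evaluates to $\max_{\w \ge 0}(\NumConsumers \w - \price \TokFn(\w))$. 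I would then establish the economies-of-scale inequality $\max_{\w \ge 0}(\NumConsumers \w - \price \TokFn(\w)) \ge \NumConsumers \max_{\w \ge 0}(\w - \price \TokFn(\w))$ by plugging $\w_D := \argmax_{\w \ge 0}(\w - \price \TokFn(\w))$ into the left-hand side and using $\NumConsumers \ge 2$ together with $\price \TokFn(\w_D) \ge 0$. This shows the $|S| = \NumConsumers$ branch dominates, giving the claimed welfare $\max_{\w \ge 0}(\NumConsumers \w - \price \TokFn(\w))$.

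I do not anticipate a serious obstacle: once the transfer-cancellation observation and the economies-of-scale inequality are in hand, both closed forms fall out essentially by inspection. The only step requiring any care is ruling out mixed allocations with $0 < |S| < \NumConsumers$, which is handled by the linearity-in-$|S|$ argument above.
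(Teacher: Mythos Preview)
Your proposal is correct and reaches the same conclusion as the paper by the same underlying construction (route production through the cheapest channel, then have the intermediary produce once for all consumers). The paper's own proof is terser: it simply exhibits the planner's optimal allocation in each case without explicitly verifying optimality, whereas you go further by (i) making the transfer-cancellation reduction explicit, (ii) ruling out partial intermediation via the linearity-in-$|S|$ argument, and (iii) checking the economies-of-scale inequality $\max_{\w}(\NumConsumers \w - \price \TokFn(\w)) \ge \NumConsumers \max_{\w}(\w - \price \TokFn(\w))$. One small phrasing issue: saying the welfare is ``linear in $|S|$ once $\wmiddleman$ and $\{\w_j\}_{j\notin S}$ are fixed'' is slightly awkward since the index set $\{j \notin S\}$ itself changes with $|S|$; the clean way is to first optimize each direct producer to $\w_D$ so that the non-intermediary contribution becomes $(\NumConsumers-|S|)\max_{\w}(\w - \price \TokFn(\w))$, after which linearity in $|S|$ is immediate. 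This is cosmetic and does not affect the argument.
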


We now analyze how the social welfare achieved in the market changes with production costs and compares to the social planner's solutions. 
\begin{theorem}
\label{thm:socialwelfare}
Consider the setup of Proposition \ref{prop:socialwelfare}. The social welfare is continuous and decreasing in production costs. It is strictly below the social planner's optimal  except at at most one bliss point. Moreover, when $\price \in (\LowerThreshold(\NumConsumers, \subfee, \TokFn), \UpperThreshold(\NumConsumers, \subfee, \TokFn))$, the social welfare is strictly greater than the social planner's optimal without the intermediary (i.e., $\NumConsumers \cdot \left(\max_{\w \ge 0}\left(w - \price \cdot \TokFn(\w) \right)\right)$). 
\end{theorem}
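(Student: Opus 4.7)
The plan is to exploit the closed form of Proposition~\ref{prop:socialwelfare} and reduce each claim to properties of the two envelopes $u(\price) := \max_{\w \ge 0}(\w - \price \TokFn(\w))$ and $W^{\mathrm{SP}}(\price) := \max_{\w \ge 0}(\NumConsumers \w - \price \TokFn(\w))$, with argmaxes $\w^*(\price)$ (satisfying $\price \TokFn'(\w^*) = 1$) and $\w^{\mathrm{SP}}(\price)$ (satisfying $\price \TokFn'(\w^{\mathrm{SP}}) = \NumConsumers$). In the middle regime the intermediary produces quality $\wmiddleman(\price) = \subfee + u(\price)$. By the envelope theorem, $u'(\price) = -\TokFn(\w^*(\price)) < 0$, so $u$ is continuous and strictly decreasing; continuity of equilibrium welfare $W(\price)$ across the thresholds then follows because Theorem~\ref{thm:sufficientcondition} gives $\price \TokFn(\wmiddleman(\price)) = \subfee \NumConsumers$ at $\price \in \{\LowerThreshold, \UpperThreshold\}$, which collapses the middle-regime formula to $\NumConsumers u(\price)$. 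For strict monotonicity, the outer regimes inherit strict decrease from $u$. In the middle regime I would differentiate $W(\price) = \NumConsumers \wmiddleman - \price \TokFn(\wmiddleman)$ to get $W'(\price) = -\TokFn(\wmiddleman) - (\NumConsumers - \price \TokFn'(\wmiddleman)) \TokFn(\w^*)$: when $\wmiddleman \le \w^{\mathrm{SP}}$ both terms are nonpositive; when $\wmiddleman > \w^{\mathrm{SP}} > \w^*$, strict log-concavity of $\TokFn$ (equivalently, $\TokFn'/\TokFn$ strictly decreasing) gives $\price \TokFn'(\wmiddleman) \TokFn(\w^*) < \TokFn(\wmiddleman)$, which is enough to force $W'(\price) < 0$ regardless.

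The heart of the argument is the at-most-one-bliss-point claim. Strict concavity of $\NumConsumers \w - \price \TokFn(\w)$ in $\w$ gives $W(\price) \le W^{\mathrm{SP}}(\price)$, with equality iff the equilibrium quality equals $\w^{\mathrm{SP}}(\price)$. In the outer regimes the equilibrium quality is $\w^*$, and $W^{\mathrm{SP}}(\price) - \NumConsumers u(\price) \ge (\NumConsumers - 1)\price \TokFn(\w^*) > 0$ rules out bliss points there (using $\NumConsumers \ge 2$ and $\w^* > 0$, the latter guaranteed by $\TokFn'(0)=0$). Thus bliss points correspond to zeros of $h(\price) := \wmiddleman(\price) - \w^{\mathrm{SP}}(\price)$ on the open middle regime, and the main obstacle is showing $h$ is strictly monotone. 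Implicitly differentiating $\price \TokFn'(\w^{\mathrm{SP}}) = \NumConsumers$ yields $h'(\price) = -\TokFn(\w^*) + \NumConsumers/(\price^2 \TokFn''(\w^{\mathrm{SP}}))$, and I would chain two consequences of strict log-concavity: (a) $\price^2 \TokFn''(\w^{\mathrm{SP}}) < \NumConsumers^2/\TokFn(\w^{\mathrm{SP}})$, from $\TokFn'' \TokFn < (\TokFn')^2$ evaluated at $\w^{\mathrm{SP}}$; and (b) $\TokFn(\w^{\mathrm{SP}}) > \NumConsumers \TokFn(\w^*)$, from $\TokFn'/\TokFn$ strictly decreasing applied between $\w^* < \w^{\mathrm{SP}}$. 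Together these give $\NumConsumers/(\price^2 \TokFn''(\w^{\mathrm{SP}})) > \TokFn(\w^{\mathrm{SP}})/\NumConsumers > \TokFn(\w^*)$, so $h'(\price) > 0$ and $h$ has at most one zero.

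Finally, strict welfare improvement over the no-intermediary optimum on the open middle regime is essentially the threshold condition unpacked: rearranging gives $W(\price) - \NumConsumers u(\price) = \NumConsumers \subfee - \price \TokFn(\wmiddleman(\price))$, and Theorem~\ref{thm:sufficientcondition} tells us this quantity is strictly positive on $(\LowerThreshold, \UpperThreshold)$, with equality only at the endpoints. The proof thus decomposes into envelope calculations and the threshold equation for continuity and the last claim, plus the single delicate application of strict log-concavity that controls the sign of $h'$ (and, in a secondary role, the sign of $W'$ in the middle regime).
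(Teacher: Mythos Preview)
Your proposal is correct and follows essentially the same route as the paper: the envelope computation for $u'$, the threshold equation $\price\TokFn(\wmiddleman)=\subfee\NumConsumers$ for continuity and the final claim, and the strict log-concavity chain $(\TokFn')^2>\TokFn\TokFn''$ together with $\TokFn(\w^{\mathrm{SP}})>\NumConsumers\TokFn(\w^*)$ to force $h'>0$ are exactly the paper's ingredients. The only cosmetic difference is your case split for monotonicity in the middle regime (you split at $\wmiddleman$ versus $\w^{\mathrm{SP}}$, the paper splits at $\wmiddleman$ versus $\w^*$), which is a harmless variation.
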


Theorem \ref{thm:socialwelfare} (Figure \ref{fig:welfare-comparison}) shows that the intermediary is welfare-improving. Specifically, when the intermediary is present in the market, the social welfare at equilibrium is higher than the social planner solution in a hypothetical market where the intermediary does not exist. However, the social welfare almost always falls below the social planner solution which can take advantage of the intermediary, except at at most one bliss point. This bliss point always exists for costs of the form $g(\w) = w^{\beta}$ for $\beta > 1$ (Proposition \ref{prop:blisspoint}). The intuition is that the intermediary distorts content quality, producing too high-quality content when the production costs are above the bliss point and producing too low-quality content when the production costs are below the bliss point (Figure \ref{fig:quality-comparison}). Figure \ref{fig:socialwelfare} also suggests that the location of the bliss point appears to occur at higher production costs when the number of consumers is large (Figure \ref{fig:social-welfare-C}) and when the fees are smaller (Figure \ref{fig:social-welfare-alpha}).

Taken together with the earlier results in this section, this welfare analysis illustrates that while technology improvements lead to higher social welfare, the intermediary extracts all gains to social welfare. Specifically, any increase in social welfare over the hypothetical market without the intermediary is captured by the monopolist intermediary themselves. Interestingly, this occurs even though the intermediary controls only the quality $\wmiddleman$, not the price $\price$.

\section{Extensions}\label{sec:extensions}

\begin{figure}
    \centering
    \begin{subfigure}[b]{0.49\textwidth}
        \centering
        \includegraphics[width=\textwidth]{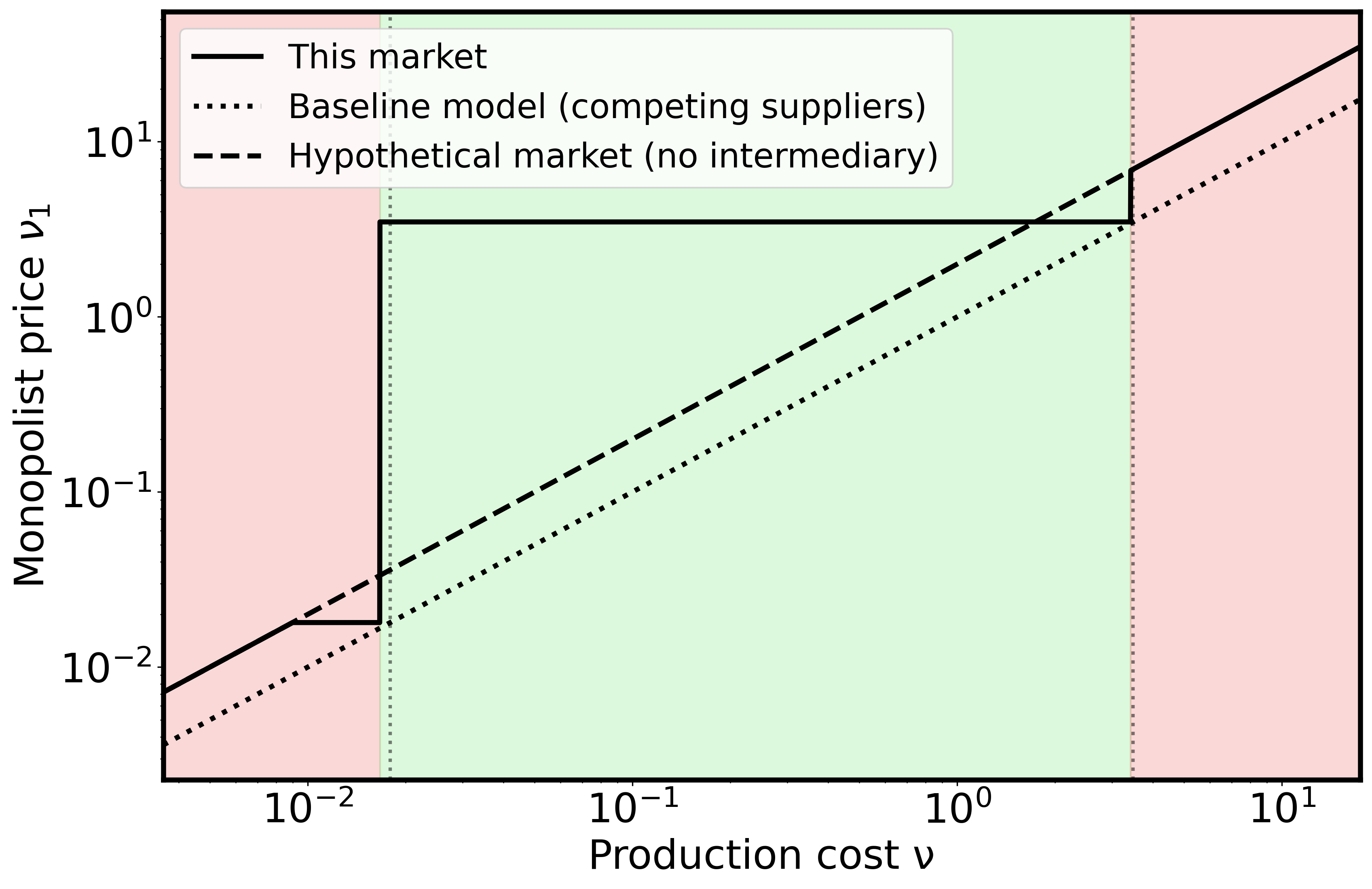}
        \caption{Monopolist supplier}
        \label{fig:monopolist}
    \end{subfigure}
    \hfill
    \begin{subfigure}[b]{0.49\textwidth}
        \centering
 \includegraphics[width=\textwidth]{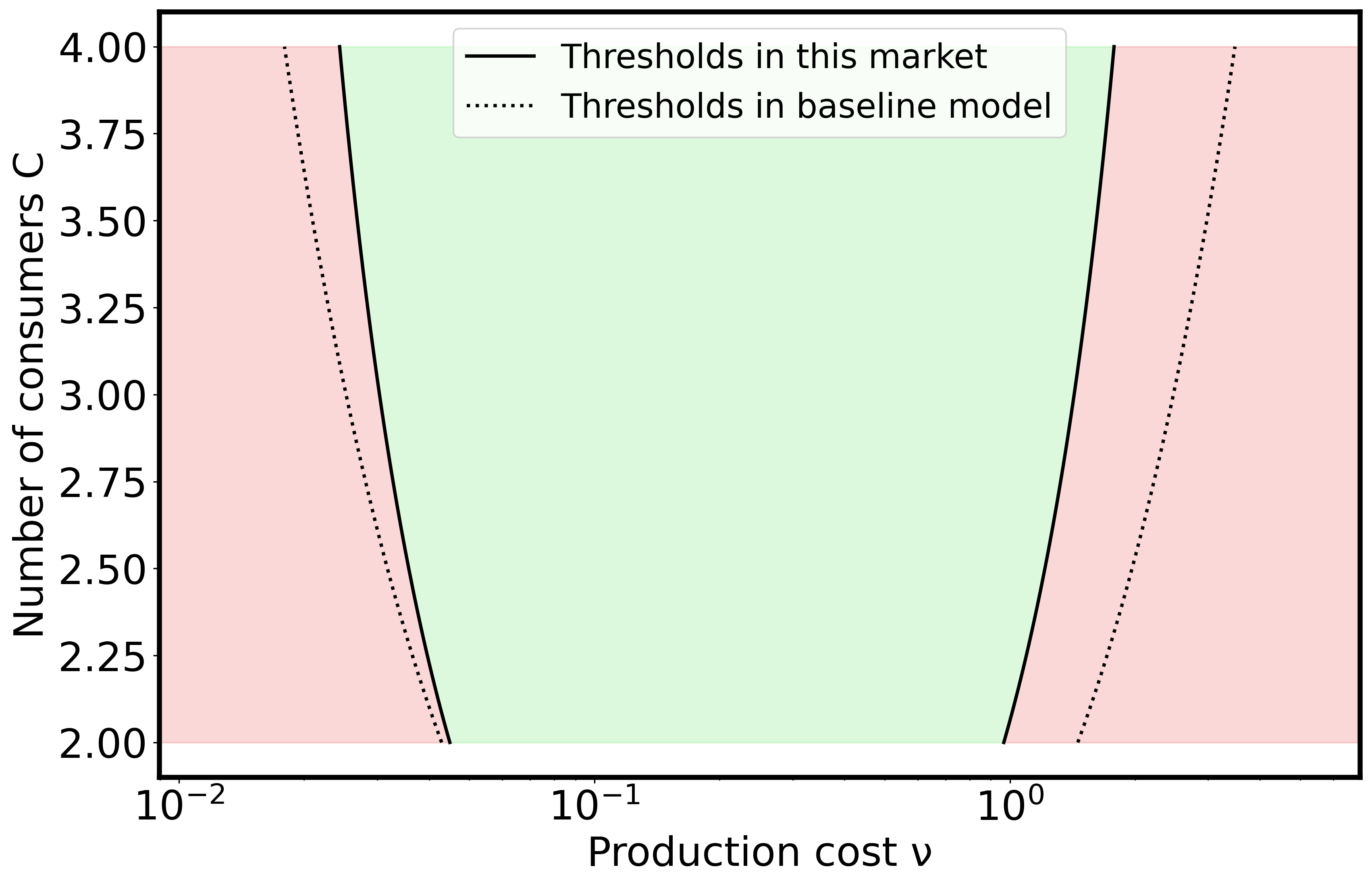}
        \caption{Nonzero marginal costs}
        \label{fig:marginalcosts}
    \end{subfigure}
    \caption{Production costs where disintermediation (red) vs. intermediation (green) for $\TokFn(\w) = w^2$. We consider extensions of the baseline model with a monopolist supplier (left; Theorem \ref{thm:extensionmonopolist}) and with nonzero marginal costs of production (right; Theorem \ref{thm:extensionmarginal}). In both cases, disintermediation still occurs when production costs are sufficiently low or sufficiently high. However, relative to our baseline model, the range of technology levels that support intermediation changes: the range shifts to be lower with a monopolist supplier (though by a small amount) and shrinks in width with nonzero marginal costs.} 
    \label{fig:extensions}
\end{figure}

To check the robustness of our findings  we consider extensions to our base model, focusing on the case $g(\w) = \w^{\TokExp}$ from Section \ref{subsec:specialcase} for simplicity. 
We find that our characterization of disintermediation from Section \ref{sec:disintermediation} readily generalizes to settings with a monopolist supplier (Figure \ref{fig:monopolist}; Section \ref{subsec:monopolist}) and where the intermediary faces nonzero marginal costs of production to serve each consumer (Figure \ref{fig:marginalcosts}; Section \ref{subsec:marginalcosts}): that is, disintermediation still occurs occurs at the extremes of production technology. However, we also show that disintermediation can be avoided entirely when users pay a fee that increases linearly with the quality of the content consumed: this demonstrates the importance of our modeling assumption in Section \ref{sec:disintermediation} that the intermediary does not have full control the fee structure offered to consumers in addition to content quality (Section \ref{subsec:transfers}).

\subsection{Monopolist supplier}\label{subsec:monopolist}

While Section \ref{sec:disintermediation} assumed competition between multiple suppliers, we now turn to the case of a single monopolist supplier. For simplicity, we focus on the case where manual production costs $\costmanual$ are infinite, meaning that it is cheaper to produce content using the technology rather than without the technology. The following result characterizes when disintermediation occurs in this setting. 
\begin{theorem}
\label{thm:extensionmonopolist}
Let $\TokFn(\w) = \w^{\TokExp}$ where $\TokExp > 1$.\footnote{For this result we assume that each consumer $j$ tiebreaks in favor of direct usage (i.e., $\action{j} = \Direct$) rather than in favor of the intermediary (i.e., $\action{j} = \Middleman$) when $\price < \UpperThreshold(\NumConsumers, \subfee, \TokExp)$, but in favor of the intermediary when $\price \ge \UpperThreshold(\NumConsumers, \subfee, \TokExp)$.} Fix $\costmanual = \infty$, $\subfee > 0$, and assume that $\NumConsumers > \frac{\TokExp}{\TokExp - 1}$.\footnote{We assume that the number of consumers is sufficiently large (i.e., $\NumConsumers > \frac{\TokExp}{\TokExp - 1}$) for technical convenience.} Suppose there is a monopolist supplier (i.e., $\NumProviders = 1$). There exist thresholds $0 < \LowerThresholdMon(\NumConsumers,\subfee,  \TokExp) < \UpperThresholdMon(\NumConsumers, \subfee, \TokExp) < \infty$ such that the intermediary usage at equilibrium satisfies:
\[ 
\sum_{j=1}^{\NumConsumers} \mathbb{E}[1[\action{j} = \Middleman]] =
\begin{cases}
0 & \text{ if } \min(\ic + \costadditional, \costmanual) \le \LowerThresholdMon( \NumConsumers,\subfee, \TokExp)  \\
\NumConsumers & \text{ if } \min(\ic + \costadditional, \costmanual) \in (\LowerThresholdMon(\NumConsumers, \subfee, \TokExp), \UpperThresholdMon( \NumConsumers, \subfee,\TokExp)]\\
0 & \text{ if }  \min(\ic + \costadditional, \costmanual) > \UpperThresholdMon( \NumConsumers, \subfee, \TokExp).
\end{cases}
\]
In comparison to the thresholds from Theorem \ref{thm:specialcasemiddlemanusageexposure}, these thresholds satisfy $\TokExp^{-1} \cdot \LowerThreshold(\NumConsumers, \subfee, \TokExp) < \LowerThresholdMon( \NumConsumers, \subfee, \TokExp) < \LowerThreshold(\NumConsumers, \subfee, \TokExp)$ and $\TokExp^{-1} \cdot \UpperThreshold(\NumConsumers, \subfee, \TokExp) < \UpperThresholdMon(\NumConsumers, \subfee, \TokFn) < \UpperThreshold( \NumConsumers, \subfee,\TokExp)$.
\end{theorem}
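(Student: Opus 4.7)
The plan is to analyze the monopolist's pricing problem as a function of the marginal cost $q^{*} := \ic + \costadditional$ and identify when the profit-maximizing choice induces intermediation. Let $q := \price + \costadditional \ge q^{*}$ denote the effective production cost the monopolist effectively offers. By Theorem~\ref{thm:specialcasemiddlemanusageexposure} together with the theorem's tiebreaking rule, intermediation occurs downstream precisely when $q \in (\LowerThreshold, \UpperThreshold]$, with the intermediary producing quality $w_{m}(q) = \subfee + \tfrac{\TokExp-1}{\TokExp}(q\TokExp)^{-1/(\TokExp-1)}$; otherwise each consumer directly produces quality $w_{D}(q) = (q\TokExp)^{-1/(\TokExp-1)}$. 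The monopolist's profit is therefore $\Pi_{M}(q;q^{*}) = (q-q^{*}) w_{m}(q)^{\TokExp}$ in the intermediation regime and $\Pi_{D}(q;q^{*}) = \NumConsumers(q-q^{*}) w_{D}(q)^{\TokExp}$ in the disintermediation regime, and the equilibrium regime is whichever branch the monopolist prefers.

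Next I would characterize each branch's maximum. A standard first-order condition on $\Pi_{D}$ under the power-law cost yields the classical monopolist markup: the unconstrained optimum sits at $q = \TokExp q^{*}$, giving $\Pi_{D}^{\star}(q^{*}) = \NumConsumers(\TokExp-1)\TokExp^{-2\TokExp/(\TokExp-1)} (q^{*})^{-1/(\TokExp-1)}$. Define $\Pi_{M}^{\star}(q^{*}) := \max_{q \in [\max(\LowerThreshold, q^{*}), \UpperThreshold]} \Pi_{M}(q;q^{*})$. Then $\Pi_{D}^{\star}(q^{*}) \to \infty$ as $q^{*} \to 0$ while $\Pi_{M}^{\star}$ stays bounded, and $\Pi_{M}^{\star}(\UpperThreshold) = 0$ while $\Pi_{D}^{\star}(\UpperThreshold) > 0$; so disintermediation strictly dominates at both extremes of $q^{*}$. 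A single-crossing argument, using that $\Pi_{D}^{\star}$ is a pure power function in $q^{*}$ while $\Pi_{M}^{\star}$ has a qualitatively different shape inherited from the additive $\subfee$ term in $w_{m}$, then shows that $\Pi_{M}^{\star} > \Pi_{D}^{\star}$ on exactly one subinterval of $q^{*}$, yielding thresholds $\LowerThresholdMon < \UpperThresholdMon$ and the three-regime structure.

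For the upper bounds $\LowerThresholdMon < \LowerThreshold$ and $\UpperThresholdMon < \UpperThreshold$, I would show that at $q^{*} = \LowerThreshold$ (resp.\ at $q^{*}$ just below $\UpperThreshold$) the monopolist can place $q$ strictly inside $(\LowerThreshold, \UpperThreshold]$ with a positive markup so that $\Pi_{M}^{\star}$ strictly exceeds the corresponding $\Pi_{D}^{\star}$; by continuity this inequality persists slightly below the competitive threshold. For the lower bounds $\TokExp^{-1}\LowerThreshold < \LowerThresholdMon$ and $\TokExp^{-1}\UpperThreshold < \UpperThresholdMon$, the key observation is that at $q^{*} = \TokExp^{-1}\LowerThreshold$ (resp.\ $\TokExp^{-1}\UpperThreshold$) the unconstrained disintermediation optimum $q = \TokExp q^{*}$ lies exactly at the competitive threshold, so $\Pi_{D}^{\star}$ attains its full unrestricted value. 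Substituting the defining identities $\LowerThreshold \cdot w_{m}(\LowerThreshold)^{\TokExp} = \subfee \NumConsumers = \UpperThreshold \cdot w_{m}(\UpperThreshold)^{\TokExp}$ then lets me express both value functions at these cost levels purely in terms of $\LowerThreshold$, $\UpperThreshold$, $\subfee$, $\NumConsumers$, and $\TokExp$, and the assumption $\NumConsumers > \TokExp/(\TokExp-1)$ reduces the comparison to a strict algebraic inequality in favor of disintermediation.

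The main obstacle will be the sharp profit comparisons at the scaled thresholds $\TokExp^{-1}\LowerThreshold$ and $\TokExp^{-1}\UpperThreshold$: verifying that disintermediation strictly dominates there requires delicate algebra that leverages both the power-law cost structure and the implicit defining equation of the competitive thresholds, and explains why the quantitative bound is exactly a factor of $\TokExp$ (the standard monopolist multiplicative markup for constant-elasticity demand). A secondary obstacle is the single-crossing property underlying existence of exactly two monopolist thresholds rather than more complex oscillatory behavior of $\Pi_{M}^{\star} - \Pi_{D}^{\star}$, for which I would exploit monotonicity of $\Pi_{D}^{\star}$ in $q^{*}$ together with the fact that $\Pi_{M}^{\star}$ is nonincreasing and vanishes at $\UpperThreshold$.
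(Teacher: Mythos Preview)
Your high-level strategy---compare the monopolist's maximal profit across the intermediation and disintermediation branches as a function of marginal cost $q^{*}$---matches the paper's, but you are missing the structural lemma that makes the comparison tractable. The paper proves (Lemma~\ref{lemma:maximumpricemiddleman}) that the monopolist's optimal price \emph{within the intermediation branch} is always the upper threshold $\UpperThreshold$, so that $\Pi_{M}^{\star}(q^{*}) = (1 - q^{*}/\UpperThreshold)\,\subfee\NumConsumers$ exactly. Without this closed form, your single-crossing argument stays heuristic: the claim that ``$\Pi_{D}^{\star}$ is a pure power function while $\Pi_{M}^{\star}$ has a qualitatively different shape'' does not by itself yield single-crossing, and the fallback you propose (both functions nonincreasing, one vanishing at $\UpperThreshold$) is insufficient since two decreasing functions can cross arbitrarily many times. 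The paper sidesteps this by doing an explicit case analysis on where $\TokExp q^{*}$ sits relative to $\LowerThreshold$ and $\UpperThreshold$, and within each case comparing the explicit profit formulas via monotone ratios.

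There is also a directional error in your threshold-bound arguments. To establish $\UpperThresholdMon < \UpperThreshold$ you must show \emph{disintermediation} is preferred at $q^{*}$ just below $\UpperThreshold$, i.e.\ $\Pi_{D}^{\star} > \Pi_{M}^{\star}$ there (indeed $\Pi_{M}^{\star}\to 0$ while $\Pi_{D}^{\star}$ at $q=\TokExp q^{*}>\UpperThreshold$ stays bounded away from zero); your paragraph asserts the reverse inequality. Similarly, for $\TokExp^{-1}\UpperThreshold < \UpperThresholdMon$ you need intermediation to win at $q^{*}=\TokExp^{-1}\UpperThreshold$, whereas you describe the comparison as ``in favor of disintermediation.'' Finally, for $\LowerThresholdMon < \LowerThreshold$, noting that the monopolist can earn positive intermediation profit at $q^{*}=\LowerThreshold$ is not enough; you must rule out the disintermediation option $q > \UpperThreshold$, which the paper does by showing prices $\UpperThreshold+\epsilon$ are dominated by $q=\UpperThreshold$ (using $\UpperThreshold\cdot g(w^{*}(\UpperThreshold))<\subfee$ from the U-shape analysis).
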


Theorem \ref{thm:extensionmonopolist} (Figure \ref{fig:monopolist}) shows that the insights from 
Theorem \ref{thm:specialcasemiddlemanusageexposure} readily generalize to the case of a monopolist supplier, albeit with the intermediation range shifted to lower production costs. Disintermediation still occurs whenever production costs are sufficiently low or sufficiently high. However, the upper and lower thresholds for intermediation in Theorem \ref{thm:extensionmonopolist} occur at lower production costs than the corresponding thresholds with competing suppliers.  The intuition is that suppliers set prices above the marginal production costs, so the consumers and intermediary face higher prices, which shifts the thresholds downwards. The intermediary can more easily survive when technology costs are lower, but is less likely to enter when technology costs are high. 

To prove Theorem \ref{thm:extensionmarginal}, a key step is to analyze how the monopolist supplier sets prices. Unlike for competing suppliers, the price $\price_1$ is no longer driven down to the marginal production cost $\ic$. The following lemma characterizes the optimal pricing decisions of the supplier. 
\begin{lemma}
\label{lemma:price}
Consider the setup of Theorem \ref{thm:extensionmonopolist}, and let $\LowerThreshold$ and $\UpperThreshold$ be defined according to Theorem \ref{thm:specialcasemiddlemanusageexposure}. Then the supplier's price $\price_1$ satisfies:
\[ 
\price_1 =
\begin{cases}
\TokExp \cdot \ic & \text{ if } \ic < \TokExp^{-1} \cdot \LowerThreshold( \NumConsumers, \subfee, \TokExp)  \\
\LowerThreshold(\NumConsumers,\subfee,  \TokExp) & \text{ if } \ic \ge \TokExp^{-1} \cdot \LowerThreshold(\NumConsumers, \subfee, \TokExp) \text{ and }
\ic \le \LowerThresholdMon(\NumConsumers, \subfee, \TokExp),   \\
\UpperThreshold(\NumConsumers,\subfee,  \TokExp) & \text{ if } \ic \in (\LowerThresholdMon(\NumConsumers, \subfee, \TokExp), \UpperThresholdMon( \NumConsumers, \subfee,\TokExp)]\\
\TokExp \cdot \ic & \text{ if } \ic > \UpperThresholdMon(\NumConsumers,\subfee,  \TokExp).
\end{cases}
\] 
\end{lemma}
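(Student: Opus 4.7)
The plan is to reduce the supplier's problem to maximizing a monopoly profit against the downstream equilibrium induced by each candidate $\price_1$. Under the tiebreaking convention in the footnote, applying Theorem~\ref{thm:specialcasemiddlemanusageexposure} with $\price_1$ in the role of the effective downstream cost pins down the outcomes: disintermediation for $\price_1 \le \LowerThreshold$ and for $\price_1 > \UpperThreshold$ (the tie at $\LowerThreshold$ falling to direct use), intermediation only at $\price_1 = \UpperThreshold$ (where consumers break toward the intermediary, who enters at zero profit with $\wmiddleman = \subfee + \max_{\w \ge 0}(\w - \UpperThreshold \cdot g(\w))$), and no pure-strategy equilibrium on $(\LowerThreshold, \UpperThreshold)$, so the supplier does not select any such price. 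Writing $\Pi_D(\price_1)$ for the disintermediation payoff and $\Pi_M$ for the intermediation payoff at $\UpperThreshold$, the consumer's direct optimum $\w = (\price_1 \TokExp)^{-1/(\TokExp-1)}$ together with the boundary identity $\UpperThreshold \cdot g(\wmiddleman) = \subfee \NumConsumers$ give
\[
\Pi_D(\price_1) = \NumConsumers (\price_1 - \ic)(\price_1 \TokExp)^{-\TokExp/(\TokExp-1)}
\quad\text{and}\quad
\Pi_M = \left(1 - \ic/\UpperThreshold\right) \subfee \NumConsumers.
\]

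Next I would optimize $\Pi_D$ over its feasible set $[\ic, \LowerThreshold] \cup (\UpperThreshold, \infty)$. A first-order condition identifies $\price_1 = \TokExp \ic$ as the unique unconstrained maximizer (the classical constant-elasticity monopoly markup), and $\Pi_D$ is strictly increasing before it and strictly decreasing after. Only three candidate payoffs can therefore ever be supplier-optimal: $\Pi_D(\TokExp \ic)$ (when $\TokExp\ic$ lies in a disintermediation interval, i.e., $\TokExp\ic \le \LowerThreshold$ or $\TokExp\ic > \UpperThreshold$), $\Pi_D(\LowerThreshold)$ (the best value in $[\ic, \LowerThreshold]$ when $\TokExp\ic > \LowerThreshold$), and $\Pi_M$. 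The four cases in the lemma correspond precisely to the four regions of $\ic$ in which a single one of these payoffs dominates.

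All three candidate payoffs are continuous and strictly decreasing in $\ic$: explicitly $\Pi_D(\TokExp\ic) = \NumConsumers(\TokExp-1)\TokExp^{-2\TokExp/(\TokExp-1)}\ic^{-1/(\TokExp-1)}$ is convex and tends to $\infty$ as $\ic \downarrow 0$, while $\Pi_D(\LowerThreshold)$ and $\Pi_M$ are linear in $\ic$ and vanish at $\ic = \LowerThreshold$ and $\ic = \UpperThreshold$ respectively. These monotonicity and limit properties together with the intermediate value theorem yield unique roots of the equations $\Pi_D(\LowerThreshold) = \Pi_M$ and $\Pi_M = \Pi_D(\TokExp\ic)$ (with $\TokExp\ic > \UpperThreshold$ in the second), which I would take to define $\LowerThresholdMon$ and $\UpperThresholdMon$. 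Matching the four cases then reduces to checking which candidate is largest in each region separated by $\TokExp^{-1}\LowerThreshold$, $\LowerThresholdMon$, and $\UpperThresholdMon$.

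The main obstacle is establishing the sandwich bounds $\TokExp^{-1}\LowerThreshold < \LowerThresholdMon < \LowerThreshold$ and $\TokExp^{-1}\UpperThreshold < \UpperThresholdMon < \UpperThreshold$. The upper endpoints are immediate, since $\Pi_D(\LowerThreshold)$ and $\Pi_M$ vanish at $\ic = \LowerThreshold$ and $\ic = \UpperThreshold$ respectively while the competing payoff remains strictly positive. For the lower endpoints I would evaluate at $\ic = \TokExp^{-1}\LowerThreshold$, where $\TokExp\ic = \LowerThreshold$ forces $\Pi_D(\LowerThreshold) = \Pi_D(\TokExp\ic)$, and use the threshold identity $\LowerThreshold \cdot g(\subfee + \max_{\w \ge 0}(\w - \LowerThreshold g(\w))) = \subfee \NumConsumers$ from Theorem~\ref{thm:specialcasemiddlemanusageexposure} to substitute for $\subfee\NumConsumers$ in $\Pi_M$ and derive the strict inequality $\Pi_D(\LowerThreshold) > \Pi_M$; an analogous argument at $\TokExp^{-1}\UpperThreshold$ uses the threshold identity at $\UpperThreshold$ together with the assumption $\NumConsumers > \TokExp/(\TokExp-1)$ to secure the reverse direction there.
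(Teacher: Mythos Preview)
Your reduction to three candidate prices ($\TokExp\ic$, $\LowerThreshold$, $\UpperThreshold$) and the crossover-threshold definitions of $\LowerThresholdMon$, $\UpperThresholdMon$ match the skeleton of the paper's proof, but there are two genuine gaps in how you arrive at those candidates.

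First, the claim that the subgame at $\price_1 \in (\LowerThreshold, \UpperThreshold)$ has ``no pure-strategy equilibrium, so the supplier does not select any such price'' is a non-sequitur for subgame perfection: SPE requires equilibrium play at \emph{every} subgame, so if those subgames truly had no equilibrium, the full game would have no pure-strategy SPE and the lemma would be vacuous. The paper instead posits (in its monopolist-subgame lemma) that intermediation occurs throughout $(\LowerThreshold, \UpperThreshold]$, with the intermediary producing $\wmiddleman=\subfee+\max_{\w}(\w-\price_1 g(\w))$. You then need the paper's Lemma~\ref{lemma:maximumpricemiddleman} to conclude that $\UpperThreshold$ is the supplier's best intermediation price: write the profit as $(1-\ic/\price_1)\cdot \price_1 g\bigl(\subfee+\max_{\w}(\w-\price_1 g(\w))\bigr)$, and note that both factors are increasing on $(\LowerThreshold,\UpperThreshold]$ (the second by the U-shape of Lemma~\ref{lemma:optima}), so the product is uniquely maximized at $\UpperThreshold$ with value $\Pi_M$.

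Second, your list of candidates is incomplete when $\TokExp\ic\in(\LowerThreshold,\UpperThreshold]$. On the disintermediation piece $(\UpperThreshold,\infty)$, $\Pi_D$ is strictly decreasing (since its unconstrained maximizer $\TokExp\ic$ lies below $\UpperThreshold$), so its supremum there is $\Pi_D(\UpperThreshold^{+})=\NumConsumers(\UpperThreshold-\ic)\,g(w^*(\UpperThreshold))$, which is not attained but must still be compared with $\Pi_M$. The paper handles this explicitly: since $\UpperThreshold$ sits on the increasing branch of the U-shaped map $\price\mapsto \price\, g(\subfee+\max_{\w}(\w-\price g(\w)))$, Lemma~\ref{lemma:derivative} gives $\UpperThreshold\, g(w^*(\UpperThreshold))<\subfee$, whence
\[
\Pi_D(\UpperThreshold^{+})=\NumConsumers\Bigl(1-\tfrac{\ic}{\UpperThreshold}\Bigr)\,\UpperThreshold\, g(w^*(\UpperThreshold))<\NumConsumers\Bigl(1-\tfrac{\ic}{\UpperThreshold}\Bigr)\subfee=\Pi_M.
\]
Without this step, you have not ruled out the supplier preferring a price just above $\UpperThreshold$, and the four-region comparison does not close.
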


To interpret Lemma \ref{lemma:price} (Figure \ref{fig:monopolist}), consider a hypothetical market where the intermediary does not exist.  In this hypothetical the monopolist provider would set $\price_1 = \ic \cdot \TokExp$ (Lemma \ref{lemma:maximumpriceconsumer}). By Lemma \ref{lemma:price}, when production costs are sufficiently low (or high), the monopolist sets the prices exactly as they would if the intermediary did not exist. However, for intermediate production costs, the monopolist supplier distorts prices to influence whether the intermediary survives in the market or not. To see why, first consider production costs at the lower end of this range, near (but higher than) the lower threshold $\LowerThreshold(\NumConsumers, \subfee, \TokExp)$ (where intermediation would start to occur if the price were equal to marginal production costs $\ic$). In this case, the monopolist supplier holds prices at $\LowerThreshold(\NumConsumers, \subfee, \TokExp)$ in order to \emph{avoid intermediation}.  That is, the supplier suppresses their price to make direct usage by consumers more attractive and prevent the intermediary from entering the market.  But if we grow the supplier's marginal production costs, these costs become sufficiently close to the supplier's price $\LowerThreshold(\NumConsumers, \subfee, \TokExp)$ that the supplier's profit becomes too low. At this point, the supplier allows the intermediary to enter and discontinuously shifts to the maximal price $\UpperThreshold(\NumConsumers, \subfee, \TokExp)$  that keeps the intermediary in the market. If we continue to increase costs, eventually the supplier's marginal production costs become sufficiently close to the price $\UpperThreshold(\NumConsumers, \subfee, \TokExp)$, at which point they drive the intermediary out of the market and once again set prices as they would if the intermediary did not exist.

\subsection{Marginal costs}\label{subsec:marginalcosts}

In Section \ref{sec:disintermediation}, we assumed that the intermediary faces no marginal costs for distributing content to additional consumers. We now relax this assumption and consider scenarios where the intermediary not only pays a fixed cost of $\price \cdot \TokFn(\w)$ to produce content $\w$, but also pays a small additional cost of $\marg \cdot \price \cdot \TokFn(\w)$ for every consumer to whom they serve the content. Here, we assume that $\marg < 1$. The consumer likewise faces the same cost structure: they pay a total cost of $\price (1 + \marg) \cdot \TokFn(\w)$ to produce content $\w$ for themselves. The following result characterizes when disintermediation occurs. 
\begin{theorem}
\label{thm:extensionmarginal}
Let $\TokFn(\w) = \w^{\TokExp}$ where $\TokExp > 1$. Fix $\NumConsumers > 1$, and fix $\marg < 1$. Suppose there are $\NumProviders > 1$ providers. Let $\NumConsumers' = \frac{\NumConsumers (1 + \marg)}{1 + \marg \cdot \NumConsumers}$. 
There exist thresholds $0 < \LowerThresholdMarg(C, \alpha, \beta, \marg) < \UpperThreshold(C, \alpha, \beta, \marg) < \infty$ such that the intermediary usage at equilibrium satisfies:
\[ 
\sum_{j=1}^{\NumConsumers} \mathbb{E}[1[\action{j} = \Middleman]] =
\begin{cases}
0 & \text{ if } \min(\ic + \costadditional, \costmanual) < \LowerThresholdMarg(\NumConsumers, \subfee, \TokExp, \marg) \\
\NumConsumers & \text{ if } \min(\ic + \costadditional, \costmanual) \in [\LowerThresholdMarg(\NumConsumers, \subfee, \TokExp, \marg), \UpperThresholdMarg(\NumConsumers, \subfee, \TokExp, \marg)]\\
0 & \text{ if } \min(\ic + \costadditional, \costmanual) > \UpperThresholdMarg(\NumConsumers, \subfee, \TokExp, \marg) \\
\end{cases}
\]
In fact, the thresholds are related to the thresholds from Theorem \ref{thm:specialcasemiddlemanusageexposure} as follows: $\LowerThresholdMarg(\NumConsumers, \subfee, \TokExp, \marg) = (1 + \marg)^{-1} \cdot \LowerThreshold(\NumConsumers', \subfee, \TokExp)$ and $\UpperThresholdMarg(\NumConsumers, \subfee, \TokExp, \marg) = (1 + \marg)^{-1} \cdot \UpperThreshold(\NumConsumers', \subfee, \TokExp)$.
\end{theorem}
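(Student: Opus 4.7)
The plan is to reduce the marginal-cost setting to the base setting of Theorem \ref{thm:specialcasemiddlemanusageexposure} via an algebraic change of variables. First, I would generalize the disintermediation characterization (the analog of Lemma \ref{lemma:equilibriumcharacterization}) to this setting. With marginal costs, a consumer who opts for direct creation attains the outside-option utility $\max_{\w \ge 0}(\w - \price(1 + \marg)\TokFn(\w))$, since they bear both the fixed and the (single) marginal component of the cost. To keep all $\NumConsumers$ consumers, the intermediary must produce quality at least $\w^{*} := \subfee + \max_{\w \ge 0}(\w - \price(1 + \marg)\TokFn(\w))$, which the intermediary delivers at total cost $\price(1 + \marg \NumConsumers)\TokFn(\w^{*})$ (the fixed cost plus a marginal cost for each of the $\NumConsumers$ consumers served), against revenue $\subfee \NumConsumers$. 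Disintermediation thus occurs precisely when $\price(1 + \marg \NumConsumers)\TokFn(\w^{*}) > \subfee \NumConsumers$.

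Second, I would perform the substitution $\price' := \price(1 + \marg)$. Dividing both sides of the disintermediation inequality by $1 + \marg$ and rearranging, the condition becomes
\[
\price' \cdot \TokFn\!\left(\subfee + \max_{\w \ge 0}(\w - \price'\TokFn(\w))\right) \;>\; \subfee \cdot \frac{\NumConsumers(1 + \marg)}{1 + \marg \NumConsumers} \;=\; \subfee \NumConsumers'.
\]
This is exactly the disintermediation condition of the base model with effective price $\price'$ and an effective consumer count $\NumConsumers'$. Since $\NumConsumers > 1$ implies $\NumConsumers' > 1$ (direct algebra), I can invoke Theorem \ref{thm:specialcasemiddlemanusageexposure} with parameters $(\NumConsumers',\subfee,\TokExp)$ to conclude that the disintermediation condition on $\price'$ fails precisely when $\price' \in [\LowerThreshold(\NumConsumers',\subfee,\TokExp), \UpperThreshold(\NumConsumers',\subfee,\TokExp)]$. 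Translating back via $\price = \price'/(1 + \marg)$ yields the stated thresholds $\LowerThresholdMarg(\NumConsumers,\subfee,\TokExp,\marg) = (1+\marg)^{-1}\LowerThreshold(\NumConsumers',\subfee,\TokExp)$ and $\UpperThresholdMarg(\NumConsumers,\subfee,\TokExp,\marg) = (1+\marg)^{-1}\UpperThreshold(\NumConsumers',\subfee,\TokExp)$.

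The main obstacle is justifying that the equilibrium structure still reduces cleanly to this single inequality, so that the analog of Lemma \ref{lemma:equilibriumcharacterization} goes through. Specifically, I need to verify that in the marginal-cost game (i) competition among $\NumProviders > 1$ suppliers still drives $\price_i = \ic$ at equilibrium (which follows from Bertrand-style undercutting, noting that neither the consumer nor the intermediary's best response depends on the supplier's identity once prices are equal), (ii) a consumer who chooses $\Direct$ optimally picks $\w = \argmax_{\w \ge 0}(\w - \price(1+\marg)\TokFn(\w))$, and (iii) when the disintermediation condition fails, the intermediary's best-response quality is exactly $\w^{*}$ above and extracts all $\NumConsumers$ consumers, while when it holds the intermediary cannot profitably attract any consumer. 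Each of these verifications mirrors the baseline argument with the only change being that the effective per-unit cost is scaled by $(1+\marg \NumConsumers)$ for the intermediary and by $(1+\marg)$ for a direct-producing consumer—precisely the asymmetry that produces the factor $\NumConsumers'$ rather than $\NumConsumers$ in the reduced condition.
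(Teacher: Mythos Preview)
Your proposal is correct and follows essentially the same approach as the paper: the paper first establishes the marginal-cost analogs of the subgame and equilibrium characterizations (its Lemmas \ref{lemma:middlemansubgamemarginal} and \ref{lemma:equilibriumcharacterizationmarginal}, corresponding exactly to your verifications (i)--(iii)), then performs the same substitution $\price' = \price(1+\marg)$ and $\NumConsumers' = \NumConsumers(1+\marg)/(1+\marg\NumConsumers)$ to reduce to the base disintermediation condition and invoke Theorem \ref{thm:specialcasemiddlemanusageexposure}. Your explicit check that $\NumConsumers > 1$ implies $\NumConsumers' > 1$ is a nice touch that the paper leaves implicit.
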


Theorem \ref{thm:extensionmarginal} (Figure \ref{fig:marginalcosts}) shows that the insights from Theorem \ref{thm:specialcasemiddlemanusageexposure} generalize to the case where the intermediary faces marginal costs, albeit with the intermediation range reduced in width. Intuitively, Theorem \ref{thm:extensionmarginal} captures how the market with marginal costs behaves the same as a market without marginal costs but with a smaller ``effective'' number of consumers
$\NumConsumers' = \frac{\NumConsumers (1 + \marg)}{1 + \marg \NumConsumers} < \NumConsumers$ and also with a multiplicative reduction factor $(1+\marg)^{-1}$. The effective number of consumers, which is decreasing in marginal costs, captures the extent to which the intermediary still enjoys economies of scale given that they face marginal costs for distribution; the multiplicative reduction factor captures how consumers also face higher costs of production relative to our baseline model. Given our prior finding that the range of thresholds supporting the intermediary shrinks in width as the number of consumers decreases in our baseline model (Theorem \ref{thm:specialcasemiddlemanusageexposure}), this implies that marginal costs lead the intermediary to be supported on a more narrow range of technology levels.

\subsection{Other fee structures}\label{subsec:transfers}

A key assumption in our baseline model is that the intermediary has no control over the fee structure: 
their marginal fee is $\subfee$ regardless of production costs.  We verify the importance of this assumption by considering a model where the intermediary charges a fee that increases linearly with the quality of the content consumed. Specifically, instead of the consumer paying a fixed fee $\subfee$ to the intermediary, the consumer pays a \textit{linear fee} $\subfee \cdot \wmiddleman$, which scales with the quality $\wmiddleman$ of the content that the intermediary produces. In this setup, the value $\subfee$ captures the \textit{fee scaling} rather than the fee itself. The following result characterizes when disintermediation occurs. 
\begin{theorem}
\label{thm:extensionfees}
Let $\TokFn(\w) = \w^{\TokExp}$ where $\TokExp > 1$. Fix $\NumConsumers > 1$ and $\subfee \in (0,1)$. Suppose there are $\NumProviders > 1$ providers, and suppose that fees are linear. Then the intermediary usage $\sum_{j=1}^{\NumConsumers} \mathbb{E}[1[\action{j} = \Middleman]]$ is independent of the production cost parameters $\ic, \costadditional, \costmanual$. If the number of consumers $\NumConsumers$ is sufficiently high, the intermediary usage at equilibrium is $\sum_{j=1}^{\NumConsumers} \mathbb{E}[1[\action{j} = \Middleman]] = \NumConsumers$; if the number of consumers is sufficiently low, the intermediary usage is $\sum_{j=1}^{\NumConsumers} \mathbb{E}[1[\action{j} = \Middleman]] = 0$. 
\end{theorem}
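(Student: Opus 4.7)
The plan is to reduce to an effective marginal cost via supplier competition and then exploit the fact that with a linear fee both the intermediary's unconstrained quality-optimum and the consumer's incentive-compatibility threshold scale identically with the cost, so that whether intermediation is profitable depends only on the dimensionless parameters $\NumConsumers$, $\subfee$, $\TokExp$. As in Theorems~\ref{thm:equilibriumconstruction}--\ref{thm:equilibriumuniqueness}, with $\NumProviders \ge 2$ homogeneous suppliers Bertrand competition forces the common price down to $\ic$, so I fix an effective per-quality cost $\price := \min(\ic + \costadditional, \costmanual)$ faced by both the intermediary and any directly-producing consumer.

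First I would compute the consumer and intermediary best responses. A consumer producing directly maximizes $\w - \price\,\w^{\TokExp}$, attaining $\w^{*} = (\TokExp\price)^{-1/(\TokExp-1)}$ with reservation utility $U_D = \tfrac{\TokExp-1}{\TokExp}(\TokExp\price)^{-1/(\TokExp-1)}$. A consumer who instead consumes intermediary content of quality $\wmiddleman$ receives $(1-\subfee)\wmiddleman$ (positive since $\subfee < 1$), so the intermediary retains the audience only when $\wmiddleman \ge T(\price) := B\,(\TokExp\price)^{-1/(\TokExp-1)}$ where $B := (\TokExp-1)/(\TokExp(1-\subfee))$. Conditional on retaining the full audience, the intermediary's profit equals $\NumConsumers\subfee\,\wmiddleman - \price\,\wmiddleman^{\TokExp}$; the fixed-cost structure makes profit monotone in audience size, so serving a strict subset is never optimal. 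The unconstrained maximizer is $\wmiddleman^{\mathrm{opt}} = (\NumConsumers\subfee/(\TokExp\price))^{1/(\TokExp-1)}$.

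The crucial observation is that both $\wmiddleman^{\mathrm{opt}}$ and $T(\price)$ carry the same factor $(\TokExp\price)^{-1/(\TokExp-1)}$, so their ratio is $\price$-independent. Substituting $\wmiddleman = T(\price)$ into the profit expression, and using the identity $\price\,T(\price)^{\TokExp} = B^{\TokExp-1}T(\price)/\TokExp$, gives intermediary profit $T(\price)\bigl(\NumConsumers\subfee - B^{\TokExp-1}/\TokExp\bigr)$, whose sign also depends only on $\NumConsumers$, $\subfee$, $\TokExp$. Hence the intermediary enters whenever $\NumConsumers\subfee \ge B^{\TokExp-1}/\TokExp$ -- at $\wmiddleman^{\mathrm{opt}}$ if $\NumConsumers\subfee \ge B^{\TokExp-1}$, otherwise at the binding threshold $T(\price)$, with full usage $\NumConsumers$ in either case -- and stays out otherwise. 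Both cutoffs are $\price$-free and monotone in $\NumConsumers$, which establishes both the $\price$-independence claim and the sharp two-regime behavior as $\NumConsumers$ varies.

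The main obstacle, I expect, is the bookkeeping in the constraint-binding case $\NumConsumers\subfee \in [B^{\TokExp-1}/\TokExp,\,B^{\TokExp-1})$, where the intermediary is not at its unconstrained favorite quality. One must verify both that $\wmiddleman = T(\price)$ maximizes profit over the feasible region $\wmiddleman \ge T(\price)$ (which follows since $T(\price) > \wmiddleman^{\mathrm{opt}}$ in this regime, making $T(\price)$ the corner solution of a concave program), and that under the tiebreaking assumption favoring the intermediary the consumers in fact consume at the indifference point $\wmiddleman = T(\price)$ rather than defect to direct production.
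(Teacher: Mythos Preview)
Your proposal is correct and follows essentially the same route as the paper: after reducing to the effective cost $\price$ via Bertrand competition among suppliers, both you and the paper observe that the consumer's IC threshold and the intermediary's break-even quality scale identically as $\price^{-1/(\TokExp-1)}$, so the entry condition $\NumConsumers\subfee \ge B^{\TokExp-1}/\TokExp$ (equivalently the inequality in the paper's Lemma~\ref{lemma:middlemansubgamefees}) is $\price$-free and monotone in $\NumConsumers$. Your explicit case split on whether the IC constraint binds at the intermediary's unconstrained optimum is a minor refinement the paper omits---its Lemma~\ref{lemma:middlemansubgamefees} simply asserts $\wmiddleman = T(\price)$ whenever intermediation occurs---but this distinction does not affect the usage claim.
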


Theorem \ref{thm:extensionmarginal} shows that linear fees fundamentally change the nature of disintermediation: the intermediary usage at equilibrium is \textit{independent} of the production costs $\price = \min(\ic + \costadditional, \costmanual)$.  This means that the intermediary always survives in the market when the number of consumers is sufficiently high, and never survives in the market when the number of consumers is sufficiently low. This result highlights the importance of our modeling assumption that the intermediary does not have full control over both the content quality and fee structure. We conclude that the this assumption---which is motivated by common practices in digital content recommendation (Example \ref{example:genai})---has a substantial impact on the market's sensitivity to technology improvements.\footnote{Specifically, digital content distribution platforms usually typically do not allow individual creators to individually set prices for their content. That being said, some platforms do reward creators based on engagement: when engagement is correlated with content quality (rather than other metrics, such as the length of videos), linear fees would capture online platforms that reward creators based on engagement rather than exposure.}

\section{Discussion}

In this work, we investigate the relationship between production technology improvements and disintermediation. We focus on markets where the technology is available to both the intermediary and consumers, and where the intermediary's strategic choice is restricted to the level of production quality. We find that reduced production costs eventually drive the intermediary out of the market entirely. We also show that even at production cost levels where the intermediary does survive, the threat of disintermediation leads to striking implications for welfare and content quality. While the intermediary is welfare-improving, the intermediary extracts all gains to social welfare for themselves. Furthermore, the intermediary's utility is inverse U-shaped in production costs, and the presence of the intermediary can raise or lower content quality. 

Our model and results open the door to several interesting avenues for future work. While our model focuses on a a single creator and a target audience of homogeneous consumers, it would be interesting to endogenize the audience-formation process and investigate differential impacts on different types of consumers and creators.  For instance, one could consider multiple intermediaries who can differentiate horizontally as well as vertically.  These could compete for heterogeneous consumers who might differ in horizontal taste, sensitivity to quality, and/or proclivity for niche content.  In such an environment, which types of creators face the heaviest threat of disintermediation?  Which types of consumers are better off in the world with new and improved production tools, and which (if any) are disadvantaged?  How does the \emph{type} of content available in the market vary as production technology improves?

\section{Acknowledgments}

We would like to thank Jon Kleinberg, Clayton Thomas, Rakesh Vohra, and Ruqing Xu for useful feedback. This research was supported in part by Microsoft's AI, Cognition, and the Economy (AICE) program, and by an Open Philanthropy AI fellowship. 

\newpage 

\bibliographystyle{plainnat}
\bibliography{ref.bib}

\appendix

\section{Useful lemmas}\label{appendix:lemmas}

\begin{lemma}
\label{lemma:min}
Suppose that $g$ is continuously differentiable, strictly convex, and satisfies $g(0) = g'(0) = 0$. Then it holds that:
\[\lim_{\w \rightarrow^+ 0} \frac{g(\w)}{g'(\w)} = 0. \]
\end{lemma}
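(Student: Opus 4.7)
The plan is to apply the mean value theorem and use monotonicity of $g'$ to squeeze the ratio. First, I would observe that strict convexity of $g$ together with continuous differentiability implies $g'$ is strictly increasing on $[0,\infty)$. Since $g'(0)=0$, this gives $g'(w)>0$ for every $w>0$, so the ratio $g(w)/g'(w)$ is well-defined on $(0,\infty)$ and nonnegative.

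Next, for each $w>0$ I would apply the mean value theorem to $g$ on $[0,w]$: there exists $\xi=\xi(w)\in(0,w)$ with
\[
g(w)-g(0)=g'(\xi)\cdot w.
\]
Using $g(0)=0$, this rearranges to $g(w)=g'(\xi)\,w$. Dividing by $g'(w)$ yields
\[
0 \;\le\; \frac{g(w)}{g'(w)} \;=\; \frac{g'(\xi)}{g'(w)}\cdot w \;\le\; w,
\]
where the final inequality uses that $g'$ is increasing and $\xi<w$, so $g'(\xi)\le g'(w)$.

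Finally, taking $w\to 0^+$ and applying the squeeze theorem gives $\lim_{w\to 0^+} g(w)/g'(w)=0$, as desired.

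I don't anticipate a real obstacle here; the only subtle point is making sure the denominator is strictly positive for $w>0$, which follows cleanly from strict convexity together with $g'(0)=0$. Everything else is a one-line MVT bound.
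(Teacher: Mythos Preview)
Your proof is correct and essentially the same as the paper's: both establish the sandwich $0 \le g(w)/g'(w) \le w$ and squeeze. The only cosmetic difference is that the paper invokes the tangent-line inequality of convexity, $g(0) \ge g(w) + g'(w)(0-w)$, to get $g(w) \le w\,g'(w)$ directly, whereas you obtain the same bound via the mean value theorem plus monotonicity of $g'$.
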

\begin{proof}
Since $g(\w)$ and $g'(\w)$ are both positive for $\w > 0$, it suffices to show that:
\[\lim_{\w \rightarrow^+ 0} \frac{g(\w)}{g'(\w)} \le 0. \]
Using convexity, we know that:
\[ 0 = g(0) \ge g(w) + g'(w) (0 - w) = g(w) - w g'(w),\]
which means that $g(w) \le w \cdot g'(w)$. This means that:
\[\lim_{\w \rightarrow^+ 0} \frac{g(\w)}{g'(\w)} \le \lim_{\w \rightarrow^+ 0} w = 0 \]
as desired. 
\end{proof}

\subsection{Properties of $\max_{w \ge 0} (w - \price g(w)))$}
\begin{lemma}
\label{lemma:uniqueoptima}
Suppose that $g$ is continuously differentiable, strictly convex, satisfies $g(0) = g'(0) = 0$, and satisfies $\lim_{w \rightarrow \infty} g(w) = \lim_{w \rightarrow \infty} g'(w) = \infty$. For any $\price > 0$, then $\max_{w \ge 0} (w - \price g(w))$ has a unique optima, which is in $(0, \infty)$. 
\end{lemma}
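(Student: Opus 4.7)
The plan is to set $h(w) := w - \nu g(w)$ and show that (i) $h$ is strictly concave on $[0,\infty)$, (ii) $h$ attains its maximum at an interior point, and (iii) there is a unique critical point of $h$ in $(0,\infty)$. Strict concavity together with existence of an interior critical point then yields that the critical point is the unique global maximizer.

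For (i), since $g$ is strictly convex, $-\nu g$ is strictly concave (as $\nu>0$), and adding the linear function $w$ preserves strict concavity, so $h$ is strictly concave. For (iii), since $g$ is continuously differentiable and strictly convex, $g'$ is continuous and strictly increasing; combined with the assumptions $g'(0)=0$ and $\lim_{w\to\infty}g'(w)=\infty$, the intermediate value theorem gives a unique $w^\star\in(0,\infty)$ with $g'(w^\star)=1/\nu$, i.e.\ $h'(w^\star)=0$. For (ii), note that $h'(0)=1>0$, so $w=0$ is not a maximizer (moving slightly to the right strictly increases $h$). To rule out escape to infinity, I would show $\lim_{w\to\infty} h(w)=-\infty$: since $g'(w)\to\infty$, for any $M>0$ there is $w_0$ with $g'(w)\ge 2M/\nu$ for $w\ge w_0$, and integrating gives $g(w)\ge g(w_0) + (2M/\nu)(w-w_0)$ for $w\ge w_0$, whence $h(w)\le w - 2M(w-w_0) - \nu g(w_0) \to -\infty$. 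Thus $h$ is continuous and coercive in the sense that $h(w)\to-\infty$ as $w\to\infty$, so by extreme value theorem applied on a sufficiently large compact interval $[0,W]$ (with $W$ chosen so $h(W) < h(1)$, say), the supremum is attained in $(0,\infty)$.

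Finally, any interior maximizer is a critical point, and by (iii) the unique critical point is $w^\star$; by strict concavity from (i), $w^\star$ is the unique global maximizer.

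I do not anticipate a substantive obstacle here: the only mildly nontrivial step is verifying coercivity of $-h$, which I would handle by the integration-of-$g'$ argument above; all other steps are direct consequences of the hypotheses.
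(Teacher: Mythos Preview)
Your proof is correct and follows essentially the same approach as the paper: both rest on the first-order condition $g'(w^\star)=1/\nu$ together with the fact that $g'$ is continuous, strictly increasing, with $g'(0)=0$ and $g'(w)\to\infty$, so this equation has a unique solution in $(0,\infty)$. The paper's version is considerably terser (it simply asserts that convexity makes the first-order condition necessary and sufficient and that the other hypotheses pin down a unique interior solution), whereas you explicitly verify strict concavity, rule out the boundary via $h'(0)=1>0$, and establish coercivity to guarantee the maximum is attained; these are exactly the details the paper is implicitly relying on.
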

\begin{proof}
Using that $g$ is convex, we see that $w^*$ is a maximum of $\max_{w \ge 0} (w - \price g(w))$ if and only if:
\[g'(w) = \frac{1}{\price}. \]
Using the other conditions on $g$, we see that this occurs at a unique value $w^* \in (0, \infty)$. 
\end{proof}

\begin{lemma}
\label{lemma:optimacomparativestatic}
Suppose that $g$ is continuously differentiable, strictly convex, satisfies $g(0) = g'(0) = 0$, and satisfies $\lim_{w \rightarrow \infty} g(w) = \lim_{w \rightarrow \infty} g'(w) = \infty$. Let $\w^*(\price)$ be an optima of $\max_{\w \ge 0} (w - \price \TokFn(\w))$. Then, it holds that:
\[\frac{\partial{\w^*(\price)}}{\partial{\price}} < 0. \]
Moreover, for any $\w \in (0, \infty)$, there exists a unique value $\price > 0$ such that $\w^*(\price) = \w$. 
\end{lemma}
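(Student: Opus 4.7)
The plan is to reduce both claims to the first-order characterization already established in Lemma \ref{lemma:uniqueoptima} and then invert. By that lemma, for every $\price > 0$ the optimizer $\w^*(\price) \in (0,\infty)$ is the unique solution of
\[
g'(\w^*(\price)) \;=\; \frac{1}{\price}.
\]
I will first build an explicit expression for $\w^*(\price)$ from this identity, derive strict monotonicity, and then read off the sign of the derivative and the existence/uniqueness statement.

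\textbf{Step 1: express $\w^*$ as an inverse.}  Strict convexity plus continuous differentiability makes $g'$ continuous and strictly increasing on $[0,\infty)$.  Combined with $g'(0)=0$ and $\lim_{\w\to\infty}g'(\w)=\infty$, this gives a continuous strictly increasing bijection $g'\colon[0,\infty)\to[0,\infty)$, so the inverse $(g')^{-1}$ exists, is continuous, and is strictly increasing.  The displayed identity then rewrites as $\w^*(\price)=(g')^{-1}(1/\price)$.

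\textbf{Step 2: monotonicity and the derivative sign.}  Since $\price\mapsto 1/\price$ is strictly decreasing on $(0,\infty)$ and $(g')^{-1}$ is strictly increasing, the composition $\w^*(\price)=(g')^{-1}(1/\price)$ is strictly decreasing in $\price$.  To get the claimed strict inequality $\partial \w^*/\partial \price < 0$, I will differentiate the identity $g'(\w^*(\price))=1/\price$ via the implicit function theorem, which at any point of differentiability gives
\[
\frac{\partial \w^*(\price)}{\partial \price} \;=\; -\,\frac{1}{\price^{2}\, g''(\w^*(\price))}.
\]
Since strict convexity forces $g''>0$ (understood in the a.e.\ sense under the paper's $C^1$ hypothesis, and the denominator is strictly positive whenever $g'$ is strictly increasing in a neighborhood of $\w^*(\price)$, which it is throughout $(0,\infty)$), this ratio is strictly negative.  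The strictly decreasing nature of $\w^*$ from Step 1 handles any residual concern when $g$ is only $C^1$.

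\textbf{Step 3: bijectivity onto $(0,\infty)$.}  For the final claim, fix $\w\in(0,\infty)$.  Because $g'$ is strictly increasing with $g'(0)=0$, we have $g'(\w)>0$, so I can set $\price:=1/g'(\w)\in(0,\infty)$.  Plugging this $\price$ into the first-order characterization gives $g'(\w^*(\price))=1/\price=g'(\w)$, and injectivity of $g'$ yields $\w^*(\price)=\w$.  Uniqueness of such a $\price$ is immediate from the strict monotonicity proved in Step 2.

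\textbf{Main obstacle.}  The only subtle point is the derivative claim itself: under just ``continuously differentiable and strictly convex,'' $g''$ need not exist everywhere, so the implicit-function step is mildly delicate.  However, strict monotonicity of $\w^*$ (Step 1) already gives the sign at every point of differentiability, so either reading of the conclusion -- ``strictly decreasing'' or ``derivative is strictly negative wherever defined'' -- follows without additional smoothness assumptions.
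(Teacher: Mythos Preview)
Your proof is correct and follows the same route as the paper: both reduce to the first-order condition $g'(w^*(\nu))=1/\nu$ from Lemma~\ref{lemma:uniqueoptima} and then invert, with the paper simply asserting that ``this, coupled with the other conditions on $g$, give us the desired result'' while you fill in the details. Your explicit treatment of the $C^1$ versus $C^2$ subtlety is a nice addition that the paper glosses over.
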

\begin{proof}
By Lemma \ref{lemma:uniqueoptima}, we know that $\max_{w \ge 0} (w - \price g(w))$ has a unique optima, so $\w^*(\price)$ is uniquely defined. Using that $g$ is convex, we see that: 
\[\TokFn'(\w^*(\price)) = \frac{1}{\price}. \]
This, coupled with the other conditions on $\TokFn$, give us the desired result. 
\end{proof}

\begin{lemma}
\label{lemma:basicenvelope}
Suppose that $g$ is continuously differentiable, strictly convex, satisfies $g(0) = g'(0) = 0$, and satisfies $\lim_{w \rightarrow \infty} g(w) = \lim_{w \rightarrow \infty} g'(w) = \infty$. The derivative of $\max_{w \ge 0} (w - \price g(w)))$ with respect to $\price$ is equal to $-g(\argmax_{w \ge 0} (w - \price g(w)))$. 
\end{lemma}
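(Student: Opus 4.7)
The plan is to apply the envelope theorem to $V(\price) := \max_{w \ge 0}(w - \price g(w))$. By Lemma \ref{lemma:uniqueoptima}, the maximizer $w^*(\price)$ exists, is unique, and lies in the interior $(0,\infty)$, so the first-order condition $g'(w^*(\price)) = 1/\price$ holds. Because $g'$ is continuous and strictly increasing on $(0,\infty)$ by strict convexity of $g$, the relation $w^*(\price) = (g')^{-1}(1/\price)$ shows that $w^*$ is continuous in $\price$.

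To compute $V'(\price)$, I would avoid differentiating $w^*(\price)$ directly (which need not even be $C^1$ without further assumptions on $g''$) and instead sandwich the difference quotient using suboptimality inequalities in both directions. For any $\price, \price' > 0$, plugging the suboptimal argument $w^*(\price)$ into $V(\price')$ yields $V(\price') \ge w^*(\price) - \price' g(w^*(\price)) = V(\price) - (\price' - \price)\, g(w^*(\price))$, and symmetrically $V(\price) \ge V(\price') + (\price' - \price)\, g(w^*(\price'))$. For $\price' > \price$ these combine to
\[-g(w^*(\price')) \;\le\; \frac{V(\price') - V(\price)}{\price' - \price} \;\le\; -g(w^*(\price)),\]
and letting $\price' \to \price$, continuity of $w^*$ and of $g$ forces both bounds to $-g(w^*(\price))$. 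The case $\price' < \price$ is handled identically with inequalities reversed, yielding $V'(\price) = -g(w^*(\price))$.

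I do not anticipate any real obstacle. The envelope inequalities are immediate from the definition of the maximum, and the only ingredient beyond Lemma \ref{lemma:uniqueoptima} is continuity of $w^*$, which follows from the first-order condition together with strict convexity of $g$. The whole argument is a textbook instance of the envelope/Danskin theorem, specialized to the one-dimensional parameter $\price$.
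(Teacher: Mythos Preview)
Your argument is correct and follows the same idea as the paper: both invoke the envelope theorem after using Lemma~\ref{lemma:uniqueoptima} to get a unique interior maximizer, with the only difference being that you spell out the sandwiching argument explicitly while the paper simply cites the envelope theorem. One cosmetic slip: in your displayed chain the two bounds are swapped (for $\price'>\price$ one has $-g(w^*(\price))\le \frac{V(\price')-V(\price)}{\price'-\price}\le -g(w^*(\price'))$, since $w^*$ is decreasing), but this does not affect the conclusion since both sides tend to $-g(w^*(\price))$.
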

\begin{proof}
We apply Lemma \ref{lemma:uniqueoptima} and let $w^*(\price)$ be the unique maximizer of $\max_{w \ge 0} (w - \price g(w))$. By the envelope theorem, we see that 
\[\frac{\partial{}}{\partial{\price}} \left(\max_{w \ge 0} (w - \price g(w)) \right) = -g(w^*),\]
as desired. 
\end{proof}

\subsection{Properties of $\price \cdot g(\alpha + \max_{w \ge 0} (w - \price g(w)))$}

\begin{lemma}
\label{lemma:derivative}
Suppose that $g$ is continuously differentiable, strictly convex, satisfies $g(0) = g'(0) = 0$, and satisfies $\lim_{w \rightarrow \infty} g(w) = \lim_{w \rightarrow \infty} g'(w) = \infty$. The derivative of $\price \cdot g(\alpha + \max_{w \ge 0} (w - \price g(w)))$ with respect to $\price$ is equal to:
\[g'\left(\alpha + \max_{w \ge 0} (w - \price g(w)) \right) \left( \frac{g\left(\alpha + \max_{w \ge 0} (w - \price g(w)) \right)}{g'\left(\alpha + \max_{w \ge 0} (w - \price g(w)) \right)} - \frac{g(\argmax_{w \ge 0} (w - \price g(w)))}{g'(\argmax_{w \ge 0} (w - \price g(w)))}\right).\] Moreover, the sign of the derivative is:
\[ 
\begin{cases}
0 &\text{ if }\;\;\;  \frac{g(w^*)}{g'(w^*)} = \alpha \\
    \text{positive} &\text{ if  } \;\;\; \frac{g(w^*)}{g'(w^*)} < \alpha \\
      \text{negative} &\text{ if  }\;\;\;  \frac{g(w^*)}{g'(w^*)} > \alpha ,    
\end{cases}
\]
where $w^*$ is the maximizer of  $\max_{w \ge 0} (w - \price g(w))) $. 
\end{lemma}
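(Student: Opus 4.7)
The plan is to compute the derivative directly using the product and chain rules, invoking the envelope theorem (Lemma~\ref{lemma:basicenvelope}) for the inner maximum, and then to translate the resulting expression into the sign criterion via the first-order condition at $w^*$.

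First I would set $V(\price) := \max_{w \ge 0}(w - \price \, g(w))$ and $M(\price) := \alpha + V(\price)$. By Lemma~\ref{lemma:uniqueoptima} there is a unique maximizer $w^* = w^*(\price) \in (0, \infty)$ characterized by $g'(w^*) = 1/\price$, and by Lemma~\ref{lemma:basicenvelope} we have $V'(\price) = -g(w^*)$. The product and chain rules then give
\[
\frac{d}{d\price}\bigl[\price \cdot g(M(\price))\bigr] \;=\; g(M) \,+\, \price \, g'(M) \, V'(\price) \;=\; g(M) \,-\, \price \, g'(M) \, g(w^*).
\]
Substituting $\price = 1/g'(w^*)$ and factoring out $g'(M)$ then produces the claimed closed form for the derivative.

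For the sign statement, the idea is to use the first-order condition to reexpress $M - w^*$ cleanly in terms of $\alpha$. At the optimum, $V(\price) = w^* - \price \, g(w^*) = w^* - g(w^*)/g'(w^*)$, so
\[
M - w^* \;=\; \alpha \,-\, \frac{g(w^*)}{g'(w^*)}.
\]
Since $g'(M) > 0$, the sign of the derivative equals the sign of $h(M) - h(w^*)$ where $h(w) := g(w)/g'(w)$, so it suffices to show that $h(M) - h(w^*)$ carries the same sign as $M - w^*$; the three-case trichotomy then reads off directly from the equivalence displayed above.

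The main obstacle is precisely this last monotonicity step: a direct computation gives $h'(w) = 1 - g(w) g''(w)/g'(w)^2$, which is strictly positive iff $g g'' < (g')^2$, i.e.\ iff $g$ is strictly log-concave. Log-concavity is not listed among the hypotheses of the lemma statement itself, but it is precisely the additional structural assumption imposed in Section~\ref{subsec:generalcase} and is the regime in which the lemma is actually applied; without it, the derivative formula still holds but the sign claim can genuinely fail because $h$ need not be monotone. Granting strict log-concavity, $h$ is strictly increasing, so $\mathrm{sign}(h(M) - h(w^*)) = \mathrm{sign}(M - w^*) = \mathrm{sign}(\alpha - g(w^*)/g'(w^*))$, and the three cases follow at once.
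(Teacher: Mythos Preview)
Your proposal is correct and follows essentially the same route as the paper: product/chain rule plus the envelope derivative for $V(\price)$, then substitute $\price = 1/g'(w^*)$ and factor out $g'(M)$, and finally use monotonicity of $g/g'$ to read off the sign from $M - w^* = \alpha - g(w^*)/g'(w^*)$. You are also right to flag the log-concavity issue: the paper's own proof invokes strict log-concavity (``Since $g$ is strictly log-concave, we know that $\frac{g(w)}{g'(w)}$ is strictly increasing'') even though that hypothesis is absent from the lemma statement, so your caveat about the sign claim requiring this extra assumption is exactly the situation in the paper as well.
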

\begin{proof}
We apply Lemma \ref{lemma:uniqueoptima} and let $w^*$ be the unique maximizer of $\max_{w \ge 0} (w - \price g(w))) $ (note that this depends on $\price$). Using the first-order condition, we observe that:
\[g'(w^*) = \frac{1}{\price}. \]
By Lemma \ref{lemma:basicenvelope}, we observe that the derivative of  $\max_{w \ge 0} (w - \price g(w))) $  with respect to $\price$ is $g(w^*)$. 

Now, we can take a derivative of $\price \cdot g(\alpha + \max_{w \ge 0} (w - \price g(w)))$ to obtain:
\begin{align*}
 &\frac{\partial{}}{\partial{\price}} \left(\price \cdot g(\alpha + \max_{w \ge 0} (w - \price g(w)))\right)  \\
 &= g\left(\alpha + \max_{w \ge 0} (w - \price g(w))\right) - \price g'\left(\alpha + \max_{w \ge 0} (w - \price g(w))\right) g(w^*). \\
&=  g\left(\alpha + \max_{w \ge 0} (w - \price g(w))\right) - \frac{ g'\left(\alpha + \max_{w \ge 0} (w - \price g(w))\right) g(w^*)}{g'(w^*)} \\
&= g'\left(\alpha + \max_{w \ge 0} (w - \price g(w))\right) \left( \frac{g\left(\alpha + \max_{w \ge 0} (w - \price g(w))\right)}{g'\left(\alpha + \max_{w \ge 0} (w - \price g(w))\right)} - \frac{g(w^*)}{g'(w^*)}\right) \\
&= g'\left(\alpha + (w^* - \price g(w^*))\right) \left( \frac{g\left(\alpha + (w^* - \price g(w^*))\right)}{g'\left(\alpha + (w^* - \price g(w^*))\right)} - \frac{g(w^*)}{g'(w^*)}\right)
 \end{align*}  

Observe that the derivative has the same sign as 
\[\frac{g\left(\alpha + (w^* - \price g(w^*))\right)}{g'\left(\alpha + (w^* - \price g(w^*))\right)} - \frac{g(w^*)}{g'(w^*)}.\]
Since $g$ is strictly log-concave, we know that $\frac{g(w)}{g'(w)}$ is strictly increasing in $\w$. This means that the sign of the derivative is the same as the sign of 
\[\alpha + (w^* - \price g(w^*)) -  w^* = \alpha - \price \cdot g(w^*).\]
Using the first-order condition, this is equal to: 
\[\alpha - \frac{g(w^*)}{g'(w^*)}. \]
This proves the desired statement. 
\end{proof}

\begin{lemma}
\label{lemma:optima}
Consider the setup of Theorem \ref{thm:sufficientcondition}. Then, it holds that $\price \cdot g(\alpha + \max_{w \ge 0} (w - \price g(w)))$ is U-shaped in $\price$. Moreover, it holds that
\[\min_{\price \ge 0} \left(\price \cdot g(\alpha + \max_{w \ge 0} (w - \price g(w))) \right) = \alpha.\]
Furthermore, there is a unique global optimum $\price \in [0, \infty)$, and this value is the unique solution to 
\[\frac{g(\argmax(w - \nu g(w)))}{g'(\argmax(w - \nu g(w)))} = \alpha.\] 
\end{lemma}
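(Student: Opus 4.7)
The plan is to reduce the claim to a monotonicity analysis of the ratio $h(\nu) := g(w^*(\nu))/g'(w^*(\nu))$, where $w^*(\nu) := \argmax_{w \ge 0}(w - \nu g(w))$. By Lemma \ref{lemma:uniqueoptima}, $w^*(\nu)$ is uniquely defined in $(0,\infty)$ and satisfies the first-order condition $\nu g'(w^*(\nu)) = 1$. By Lemma \ref{lemma:derivative}, if we write $F(\nu) := \nu g(\alpha + \max_{w \ge 0}(w - \nu g(w)))$, then the sign of $F'(\nu)$ agrees with the sign of $\alpha - h(\nu)$. Thus the U-shape and uniqueness of the minimizer reduce to showing that $h$ is strictly decreasing in $\nu$, ranges over all of $(0,\infty)$, and hence crosses $\alpha$ exactly once.

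First I would establish strict monotonicity of $h$. Strict log-concavity of $g$ gives that $g'/g$ is strictly decreasing in $w$, so $g(w)/g'(w)$ is strictly increasing in $w$. Combining this with Lemma \ref{lemma:optimacomparativestatic}, which says $w^*(\nu)$ is strictly decreasing in $\nu$, gives that $h(\nu)$ is strictly decreasing in $\nu$. Next I would pin down the limits. As $\nu \to \infty$, the first-order condition forces $w^*(\nu) \to 0$, and Lemma \ref{lemma:min} yields $h(\nu) \to 0$. As $\nu \to 0^+$, the first-order condition forces $w^*(\nu) \to \infty$; since $g$ is increasing, $g(w)/g'(w) \ge g(w - g(w)/g'(w))/g'(w)$, and the extra assumption in Theorem \ref{thm:sufficientcondition} sends the right-hand side to $\infty$ as $w \to \infty$, so $h(\nu) \to \infty$. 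By the intermediate value theorem and strict monotonicity, there is a unique $\nu^* \in (0,\infty)$ with $h(\nu^*) = \alpha$, and $F'$ is negative to the left of $\nu^*$ and positive to the right; boundary behavior $F(\nu) \to \infty$ at both $\nu \to 0^+$ (via $F(\nu) = g(\alpha + w^*(\nu) - g(w^*(\nu))/g'(w^*(\nu)))/g'(w^*(\nu))$ and the same Theorem \ref{thm:sufficientcondition} condition) and $\nu \to \infty$ (via $F(\nu) \ge \nu g(\alpha)$) confirms that $\nu^*$ is the unique global minimizer on $[0,\infty)$.

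Finally I would compute the minimum value. At $\nu = \nu^*$, the first-order condition gives $\nu^* = 1/g'(w^*(\nu^*))$, so $\nu^* g(w^*(\nu^*)) = g(w^*(\nu^*))/g'(w^*(\nu^*)) = \alpha$. Therefore $\max_{w \ge 0}(w - \nu^* g(w)) = w^*(\nu^*) - \alpha$, and
\[F(\nu^*) = \nu^* \cdot g\bigl(\alpha + w^*(\nu^*) - \alpha\bigr) = \nu^* g(w^*(\nu^*)) = \alpha,\]
as claimed. The main obstacle I anticipate is verifying the limit $h(\nu) \to \infty$ as $\nu \to 0^+$ rigorously from the hypothesis of Theorem \ref{thm:sufficientcondition}; the sandwich inequality $g(w)/g'(w) \ge g(w - g(w)/g'(w))/g'(w)$ (which uses monotonicity of $g$ together with the fact that $w - g(w)/g'(w) \ge 0$, itself a consequence of convexity and $g(0) = 0$) is the key point that makes the stated assumption sufficient.
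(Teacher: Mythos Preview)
Your proposal is correct and follows essentially the same route as the paper's proof: both use Lemma~\ref{lemma:derivative} to tie the sign of $F'(\nu)$ to $\alpha - g(w^*(\nu))/g'(w^*(\nu))$, invoke log-concavity together with Lemma~\ref{lemma:optimacomparativestatic} to get strict monotonicity of this ratio in $\nu$, apply Lemma~\ref{lemma:min} and the hypothesis of Theorem~\ref{thm:sufficientcondition} (via the sandwich $g(w)/g'(w)\ge g(w-g(w)/g'(w))/g'(w)$) for the two limits, and then compute $F(\nu^*)=\alpha$ from $\nu^* g(w^*(\nu^*))=\alpha$. Your added check of the boundary behavior of $F$ is not needed for the lemma as stated but is harmless and in fact anticipates what the paper proves separately in Lemma~\ref{lemma:twosolutions}.
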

\begin{proof}
We apply Lemma \ref{lemma:uniqueoptima} and let $w^*$ be the unique maximizer of $\max_{w \ge 0} (w - \price g(w))) $ (note that this depends on $\price$). We use Lemma \ref{lemma:derivative} to see that the sign of the derivative of $\price \cdot g(\alpha + \max_{w \ge 0} (w - \price g(w)))$ with respect to $\price$ is:
\[ 
\begin{cases}
0 &\text{ if }\;\;\;  \frac{g(w^*)}{g'(w^*)} = \alpha \\
    \text{positive} &\text{ if  } \;\;\; \frac{g(w^*)}{g'(w^*)} < \alpha \\
      \text{negative} &\text{ if  }\;\;\;  \frac{g(w^*)}{g'(w^*)} > \alpha ,    
\end{cases}
\]

We next show there is a unique value of $\price$ such that $\frac{g(w^*)}{g'(w^*)} = \alpha$. Since $g$ is strictly log-concave, we know that $\frac{g(w)}{g'(w)}$ is strictly increasing in $\w$. By Lemma \ref{lemma:min}, we know that $\lim_{w \rightarrow 0} \frac{g(w)}{g'(w)} = 0$ and by the assumed condition in the theorem statement, we know that:
\[ \lim_{w \rightarrow \infty} \frac{g(w)}{g'(w)} \ge \lim_{w \rightarrow \infty} \frac{g\left(w - \frac{g(w)}{g'(w)} \right)}{g'(w)} = \infty. \] Since $\frac{g(w)}{g'(w)}$ is continuous, this means that there exists $w > 0$ such that $\frac{g(w)}{g'(w)} = \alpha$. Using Lemma \ref{lemma:optimacomparativestatic}, this means that there exists a unique value of $\price > 0$ such that $w^* = w$. 

Next, we show that $\price \cdot g(\alpha + \max_{w \ge 0} (w - \price g(w))))$ is U-shaped with a unique global minimum. Let $w^*(\price')$ be the unique optimum of $\max_{w \ge 0} (w - \price' g(w))$ (Lemma \ref{lemma:uniqueoptima}). 
\begin{itemize}
    \item For prices $\price' < \price$ below this threshold, by Lemma \ref{lemma:optimacomparativestatic}, we observe that $w^*(\price') > w^*(\price)$. Using log-concavity of $g$, this means that:
\[\frac{g(w^*(\price'))}{g'(w^*(\price'))} > \frac{g(w^*)}{g'(w^*)}  = \alpha, \]
which means that the derivative  is negative. Applying this for every $\price' < \price$ means that 
\[\left(\price' \cdot g(\alpha + \max_{w \ge 0} (w - \price' g(w))) \right) > \alpha.\] 
\item For prices $\price' > \price$ below this threshold, by Lemma \ref{lemma:optimacomparativestatic}, we observe that $w^*(\price') < w^*(\price)$. Using log-concavity of $g$, this means that:
\[\frac{g(w^*(\price'))}{g'(w^*(\price'))} < \frac{g(w^*)}{g'(w^*)}  = \alpha, \]
which means that the derivative  is positive. Applying this for every $\price' < \price$ means that \[ \left(\price' \cdot g(\alpha + \max_{w \ge 0} (w - \price' g(w))) \right) > \alpha.\] 
\end{itemize}

Finally,  at the value of $\price$ such that $\frac{g(w^*)}{g'(w^*)} = \alpha$, it holds that:
\begin{align*}
 \price \cdot g(\alpha + \max_{w \ge 0} (w - \price g(w))) &= \price \cdot g(\alpha + w^* - \price g(w^*))) \\
 &= \price \cdot g(w^*) \\
 &= \frac{g(w^*)}{g'(w^*)} \\
 &= \alpha.
\end{align*}
\end{proof}

\section{Proofs for Section \ref{sec:model}}\label{appendix:proofsequilibria}

The main lemma is the following characterization of the equilibria in the subgame between the intermediary and consumers. 
\begin{lemma}
\label{lemma:middlemansubgame}
Suppose that suppliers choose prices $\price_1, \ldots, \price_P$ and consider the subgame between the intermediary and consumers (Stages 2-3). Under the tiebreaking assumptions discussed in Section \ref{subsec:equilibriumexistence}, there exists a unique pure strategy equilibrium in this subgame which takes the following form. Let $\price = \min(\costadditional + \min_{i \in [\NumProviders]} \price_i, \costmanual)$, and consider the condition 
\begin{equation}
\label{eq:condition}
 \price \cdot \TokFn\left(\subfee + \max_{w \ge 0} \left(\w -\price \TokFn(w)\right) \right) > \subfee \NumConsumers.   
\end{equation}
\begin{itemize}
    \item If \eqref{eq:condition} holds, then $\wmiddleman = 0$. Moreover,  for all $j \in [\NumConsumers]$, it holds that $\action{j} = \Direct$, $\wuser{j} = \argmax_{w \ge 0} \left(\w -\price \TokFn(w)\right)$. Moreover, if $\costmanual < \costadditional + \min_{i \in [\NumProviders]} \price_i$, then the consumer chooses $\providerchoice{j} = 0$. Otherwise, the consumer chooses $\providerchoice{j} = \text{argmin}_{i \in [\NumProviders]} \price_i$ (tie-breaking in favor of suppliers with a lower index). 
    \item If \eqref{eq:condition} does not hold, then $\wmiddleman = \wuser{j} = \subfee + \max_{w \ge 0} \left(\w -\price \TokFn(w)\right)$. Moreover, if $\costmanual < \costadditional + \min_{i \in [\NumProviders]} \price_i$, then the intermediary chooses $\providerchoicemiddleman = 0$; otherwise, the intermediary chooses $\providerchoicemiddleman = \text{argmin}_{i \in [\NumProviders]} \price_i$ (tie-breaking in favor of suppliers with a lower index). Finally, it holds that $\action{j} = \Middleman$ and $\wuser{j} = \wmiddleman$ for all $j \in [\NumConsumers]$. 
\end{itemize}
\end{lemma}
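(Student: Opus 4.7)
The plan is backward induction through the two remaining stages, taking supplier prices $\price_1, \ldots, \price_P$ as fixed. First I would analyze Stage~3. A consumer choosing $\Direct$ solves a cost-minimization subproblem over the supplier/manual options; the stated tie-breaking rule selects the cheapest option, yielding effective per-quality cost $\price = \min(\costadditional + \min_i \price_i, \costmanual)$, a unique lowest-index supplier (or manual production, whichever is strictly cheaper), and a quality subproblem $\max_{w \ge 0}(w - \price \TokFn(w))$ whose unique maximizer $w^* = \argmax_{w \ge 0}(w - \price \TokFn(w))$ and value $U_{\Direct}$ are guaranteed by Lemma~\ref{lemma:uniqueoptima}. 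The $\Middleman$ option gives the consumer $\wmiddleman - \subfee$. Consumer tie-breaking toward $\Middleman$ therefore yields: $\action{j} = \Middleman$ iff $\wmiddleman \ge \hat{w} := \subfee + U_{\Direct}$, and $\action{j} = \Direct$ with $\wuser{j} = w^*$ otherwise.

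Next I would analyze Stage~2. Given the consumer best response, the intermediary's payoff is $-\price \TokFn(\wmiddleman)$ for $\wmiddleman \in [0, \hat{w})$ (no consumers) and $\subfee \NumConsumers - \price \TokFn(\wmiddleman)$ for $\wmiddleman \ge \hat{w}$ (all consumers), where $\price$ again arises by cost-minimization with tie-breaking. Strict monotonicity of $\TokFn$ with $\TokFn(0) = 0$ implies that the first branch is uniquely maximized at $\wmiddleman = 0$ (value $0$) and the second branch is uniquely maximized at $\wmiddleman = \hat{w}$ (value $\subfee \NumConsumers - \price \TokFn(\hat{w})$). Hence the intermediary's problem reduces to a binary comparison, which condition~\eqref{eq:condition} resolves directly: when~\eqref{eq:condition} holds strictly, $\wmiddleman = 0$ is uniquely optimal; when it fails, $\wmiddleman = \hat{w}$ yields weakly higher payoff, and tie-breaking toward higher-quality content selects $\wmiddleman = \hat{w}$ in the boundary case of equality. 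This gives the two cases in the lemma statement, and the intermediary's supplier/manual choice is then determined by the same cost-minimization and tie-breaking rules.

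Uniqueness of the pure-strategy equilibrium follows from uniqueness of $w^*$, strict monotonicity of $\price \TokFn(\cdot)$ in each branch of the intermediary's payoff (ruling out any other optimal $\wmiddleman$), and the tie-breaking rules resolving all remaining indifferences on the supplier/manual axis. The main obstacle is verifying that no intermediate $\wmiddleman \in (0, \hat{w})$ is equilibrium-optimal --- which is precluded because every such point is strictly dominated by $\wmiddleman = 0$ on the ``no-consumer'' branch --- and that the tie-break at $\wmiddleman = \hat{w}$ is internally consistent: if consumers were to break the other way, the intermediary would want to deviate to $\wmiddleman = \hat{w} + \varepsilon$ and no best response would exist, which the consumer tie-breaking assumption rules out by making $\wmiddleman = \hat{w}$ itself sufficient to attract all consumers.
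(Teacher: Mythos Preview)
Your proposal is correct and follows essentially the same backward-induction argument as the paper: reduce the production choice to an effective cost $\price$, compute the consumer's outside option $U_{\Direct}$, observe that the intermediary's problem collapses to a binary choice between $\wmiddleman = 0$ and $\wmiddleman = \hat w = \subfee + U_{\Direct}$, and resolve ties via the stated rules. You are slightly more explicit than the paper about why intermediate qualities are dominated and about the internal consistency of the boundary tie-break, but the substance is the same.
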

\begin{proof}
Recall that when consumers or the intermediary produce content, they choose the option that minimizes their production costs. If 
$\costadditional + \min_{i \in [\NumProviders]} \price_i < \costmanual$, they leverage the technology of the supplier who offers the lowest price, and otherwise, they produce content without using the technology. This means that they face production costs $\price = \min(\costadditional + \min_{i \in [\NumProviders]} \price_i, \costmanual)$. 

When consumer $j$ chooses $\action{j} = \Direct$, then they maximize their utility and thus produce content $w^*(\price) = \argmax(w - \price g(w))$ and achieve utility $\max(w - \price g(w))$. Since the consumer pays the intermediary a fee of $\subfee$, the intermediary must produce content satisfying $w' \ge \subfee + \max_{w \ge 0}(w - \price g(w))$ to incentivize the consumer to choose $\action{j} = \Middleman$. Producing content $w' \ge \subfee + \max_{w \ge 0}(w - \price g(w))$ would incentivize all of the consumers to choose the intermediary, so the intermediary would earn utility
\[ \subfee \cdot \NumConsumers - \price \cdot g(w').\]
This also means that the intermediary prefers producing content $\subfee + \max_{w \ge 0}(w - \price g(w))$ over any $w' > \subfee + \max_{w \ge 0}(w - \price g(w))$ in order to minimize costs. The intermediary prefers producing this content over producing content $w = 0$ which would not attract any consumers if and only if:
\[ \subfee \cdot \NumConsumers - \price \cdot g(\subfee + \max_{w \ge 0}(w - \price g(w))) \ge 0.\]
This, coupled with the tiebreaking rules, proves the desired statement.
\end{proof}

Using this lemma, we can characterize pure strategy equilibria in our game.  
\begin{lemma}
\label{lemma:equilibriumcharacterization} 
Under the tiebreaking assumptions in Section \ref{subsec:equilibriumexistence}, there exists a pure strategy equilibrium which takes the following form. All suppliers choose the price $\price_i = \ic$ for $i \in [\NumProviders]$, and the intermediary and consumers choose actions according to the subgame equilibrium constructed in Lemma \ref{lemma:middlemansubgame}.
\end{lemma}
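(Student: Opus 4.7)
The plan is to verify that the proposed strategy profile is a subgame perfect equilibrium by combining the already-established subgame analysis with a Bertrand-style argument at the supplier stage. By Lemma~\ref{lemma:middlemansubgame}, for any price vector $(\price_1,\dots,\price_{\NumProviders})$ chosen by the suppliers, the continuation among the intermediary and consumers has a unique pure-strategy equilibrium, and this equilibrium depends on the prices only through the effective production cost $\price = \min(\costadditional + \min_i \price_i,\, \costmanual)$ and the identity of the lowest-priced supplier (used for tiebreaking). Hence the only thing left to check is that, when every other supplier sets $\price_{i'} = \ic$, no deviation by supplier $i$ is profitable.

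Fix supplier $i$ and suppose $\price_{i'} = \ic$ for all $i' \neq i$. I would split the analysis into three cases. First, for the status quo $\price_i = \ic$, supplier $i$'s per-unit margin is $\price_i - \ic = 0$, so their total profit is $0$ regardless of whether they receive any usage under the tiebreaking rule. Second, for any upward deviation $\price_i > \ic$, the minimum supplier price remains $\ic$, attained by some other supplier; since the intermediary and consumers tiebreak in favor of the lowest-indexed supplier (who is still at $\ic$), supplier $i$ receives no usage from either, so their profit stays at $0$.

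Third, for a downward deviation $\price_i < \ic$, supplier $i$ becomes the strictly lowest-priced supplier and absorbs all technology usage in the continuation. Their profit equals $(\price_i - \ic)\cdot\bigl(\mathbf{1}[\providerchoicemiddleman = i]\,\TokFn(\wmiddleman) + \sum_{j}\mathbf{1}[\action{j}=\Direct,\,\providerchoice{j}=i]\,\TokFn(\wuser{j})\bigr)$, which is nonpositive because the bracketed usage is nonnegative while the margin $\price_i - \ic$ is strictly negative. In all three cases, the deviation yields at most the equilibrium payoff of $0$, so $\price_i = \ic$ is a best response.

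The only mildly delicate point is the downward deviation: one must confirm that the continuation subgame is still well-defined with $\price_i < \ic$ and that the supplier's profit expression really is $(\price_i - \ic)$ times a nonnegative quantity. Both facts follow immediately from the structure of the subgame equilibrium in Lemma~\ref{lemma:middlemansubgame} (which characterizes the continuation for any $\price$) and from the supplier's utility function given in Section~\ref{sec:model}. Combining this with Lemma~\ref{lemma:middlemansubgame} for the continuation yields the claimed equilibrium.
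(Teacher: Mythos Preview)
Your proof is correct and follows essentially the same Bertrand-style argument as the paper: zero profit at $\price_i=\ic$, zero profit on an upward deviation (another supplier remains the minimizer), and nonpositive profit on a downward deviation (negative margin times nonnegative usage), combined with Lemma~\ref{lemma:middlemansubgame} for the continuation. The only cosmetic difference is that the paper first splits on whether $\costmanual<\ic+\costadditional$, whereas you fold that into the ``usage is nonnegative'' observation; both organizations yield the same conclusion.
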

\begin{proof}
If $\costmanual < \ic + \costadditional$, then by Lemma \ref{lemma:middlemansubgame}, then consumers and the intermediary produce content without the technology, so suppliers all have zero utility regardless of what price they choose. 

If $\costmanual \ge \ic + \costadditional$, then consumers and the intermediary choose manual production if $\min_{i \in [\NumProviders]} \price_i > \costmanual$ and otherwise choose supplier $\text{argmin}_{i \in [\NumProviders]} \price_i$. We show that $\price_i = \ic$ for $i \in [\NumProviders]$ is an equilibrium. At this equilibrium, note that all of the suppliers earn zero utility. If a supplier deviates to $\price_i < \ic$, then by Lemma \ref{lemma:middlemansubgame}, production would be done through the supplier which would result in negative utility. Deviating to $\price > \ic$ would result in zero utility. Thus, there are no profitable deviations for the suppliers. 
\end{proof}

Lemma \ref{lemma:equilibriumcharacterization} implies both Theorem \ref{thm:equilibriumconstruction} and Theorem \ref{thm:equilibriumuniqueness}.
\begin{proof}[Proof of Theorem \ref{thm:equilibriumconstruction}]
This follows from Lemma \ref{lemma:equilibriumcharacterization}.
\end{proof}

\begin{proof}[Proof of Theorem \ref{thm:equilibriumuniqueness}]
We show that the actions of the intermediary and consumers, as well as the production cost $\min(\costadditional + \min_{i \in [\NumProviders] \price_i}, \costmanual)$ is the same at every pure strategy equilibrium. To do this, we show that these values are the same as at the pure strategy equilibrium constructed in Lemma \ref{lemma:equilibriumcharacterization}. 

Suppose that $\costmanual < \ic + \costadditional$. If 
$\min_{i \in [\NumProviders] \price_i} > \costmanual$, then production is done without the technology, and the intermediary and consumers choose the same actions as in the equilibrium in Lemma \ref{lemma:equilibriumcharacterization}. If $\min_{i \in [\NumProviders] \price_i} \le \costmanual$, then by Lemma \ref{lemma:equilibriumcharacterization}, production would done through supplier 
$\text{argmin}_{i \in [\NumProviders] \price_i} \le \costmanual$, and that supplier would earn negative utility. This is not possible because the supplier could deviate to $\price_i = \ic$ and earn zero utility. 

Now, suppose that $\costmanual \ge \ic + \costadditional$. In this case, assume for sake of contradiction that $\min_{i \in [\NumProviders] \price_i} \neq \ic$. If $\min_{i \in [\NumProviders] \price_i} > \ic$, then using Lemma \ref{lemma:middlemansubgame}, a supplier could earn higher profit by choosing $\price = \min(\min_{i \in [\NumProviders] \price_i}, \costmanual) - \epsilon$ for sufficiently small $\epsilon$, which is a contradiction.  If $\min_{i \in [\NumProviders] \price_i} < \ic$, then using Lemma \ref{lemma:middlemansubgame}, the supplier $\text{argmin}_{i \in [\NumProviders] \price_i} < \ic$ with lowest index could earn higher utility by instead choosing $\price = \ic$, which is a contradiction. This means that  $\min_{i \in [\NumProviders] \price_i} = \ic$, so by Lemma \ref{lemma:middlemansubgame} the intermediary and the consumers take the same actions as in the equilibrium in Lemma \ref{lemma:equilibriumcharacterization}. 
\end{proof}




\section{Proofs for Section \ref{sec:disintermediation}}\label{appendix:proofsdintermediation}

\subsection{Analysis of specific cost function families}

We first analyze the derivative and log-derivative of several families of cost functions.
\begin{lemma}
\label{lemma:costderivatives}
The following statements hold:
\begin{enumerate}
    \item For $\TokFn(\w) = \w^{\TokExp}$ for $\TokExp > 1$, the derivative is $\TokFn'(\w) = \TokExp \cdot \w^{\TokExp-1}$, and the log-derivative is $\frac{\TokFn'(\w)}{\TokFn(\w)} = \frac{\beta}{\w}$ 
    \item For $\TokFn(\w) = \w^{\TokExp} \cdot e^{\sqrt{\w}}$ for $\TokExp \ge 1$, the derivative is \[\TokFn'(\w) = \TokExp \cdot \w^{\TokExp-1} \cdot e^{\sqrt{\w}} + 0.5 \cdot \w^{\TokExp - \frac{1}{2}} \cdot e^{\sqrt{\w}}\] and the log-derivative is:
    \[ \frac{\TokFn'(\w)}{\TokFn(\w)} = \frac{\TokExp + 0.5 \cdot \w^{\frac{1}{2}}}{\w} = \frac{\TokExp}{\w} + \frac{0.5}{w^{\frac{1}{2}}}.  \]
    \item For $\TokFn(\w) = \w^{\TokExp} \cdot (\log(\w+1)^{\TokExpSecond})$ for any $\TokExp, \TokExpSecond > 1$, the derivative is:
    \[\TokFn'(\w) = \TokExp \cdot \w^{\TokExp-1} \cdot (\log(\w+1)^{\TokExpSecond}) + \frac{\w^{\TokExp}}{\w + 1} \cdot \TokExpSecond (\log(\w+1)^{\TokExpSecond - 1})\]
    and the log-derivative is:
    \[ \frac{\TokFn'(\w)}{\TokFn(\w)} =  \frac{\TokExp \cdot \log(\w+1) + \frac{\w}{\w + 1} \cdot \TokExpSecond}{\w \cdot \log(\w+1)} = \frac{\TokExp}{\w} + \frac{\TokExpSecond}{(w+1)(\log(w+1))}\]
    \item For $\TokFn(\w) = \w^{\TokExp} \cdot e^{\w}$ for $\TokExp \ge 1$, the derivative is $\TokFn'(\w) = \TokExp \cdot \w^{\TokExp-1} \cdot e^{\w} + \w^{\TokExp} \cdot e^{\w}$ and the log-derivative is:
    \[ \frac{\TokFn'(\w)}{\TokFn(\w)} = 1 + \frac{\TokExp}{\w}. \]
    
\end{enumerate}
\end{lemma}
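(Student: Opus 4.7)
The plan is to treat this as a direct exercise in elementary differentiation: for each of the four cost families, I would compute $g'(w)$ by the product rule (with the chain rule when needed) and then form $g'(w)/g(w)$ and simplify algebraically. There is no deep obstacle here; the only thing to be careful about is bookkeeping of exponents and common factors so that the ratio $g'/g$ simplifies to the advertised closed form.

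Concretely, for (1) I would just use the power rule to get $g'(w)=\beta w^{\beta-1}$ and divide. For (2), writing $g=u\cdot v$ with $u=w^\beta$ and $v=e^{\sqrt{w}}$ gives $u'=\beta w^{\beta-1}$ and $v'=\tfrac{1}{2}w^{-1/2}e^{\sqrt{w}}$ by the chain rule applied to $\sqrt{w}$; then $g'=\beta w^{\beta-1}e^{\sqrt{w}}+\tfrac{1}{2}w^{\beta-1/2}e^{\sqrt{w}}$, and dividing by $g=w^\beta e^{\sqrt{w}}$ cancels the exponential and one factor of $w^\beta$ to produce the stated $\beta/w+(1/2)w^{-1/2}$. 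For (3), I set $u=w^\beta$ and $v=(\log(w+1))^\gamma$, compute $v'=\gamma(\log(w+1))^{\gamma-1}/(w+1)$ by the chain rule, and expand the product rule; after dividing by $g$ the powers of $\log(w+1)$ reduce by one in the second term, giving $\beta/w+\gamma/[(w+1)\log(w+1)]$. For (4), the product rule on $w^\beta\cdot e^w$ gives $g'=\beta w^{\beta-1}e^w+w^\beta e^w$, and dividing by $g$ yields $1+\beta/w$.

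The main (and only) step that demands a little care is the simplification of $g'/g$ in cases (2) and (3), where one must track the cancellation of the non-polynomial factor (either $e^{\sqrt{w}}$ or $(\log(w+1))^\gamma$) correctly. The hard part, to the extent there is one, is keeping the fractional exponent in (2) and the mixed $w/(w+1)$ factor in (3) correct; everything else is immediate. Since all four computations are self-contained and do not rely on any of the structural results from Sections~\ref{sec:disintermediation}--\ref{sec:extensions}, this lemma can be stated and proved as a standalone calculus fact, and the argument can be written as four short parallel derivations.
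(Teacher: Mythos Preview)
Your proposal is correct and matches the paper's own proof essentially line for line: the paper also handles each family separately, applies the product and chain rules to compute $g'(w)$, and then divides by $g(w)$ to obtain the stated log-derivative. There is no additional idea or technique in the paper's argument beyond what you describe.
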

\begin{proof}
We analyze each family of cost functions separately.  

\paragraph{Family 1: $\TokFn(\w) = \w^{\TokExp}$ for $\TokExp > 1$.} The derivative is $\TokFn'(\w) = \TokExp \cdot \w^{\TokExp-1}$. The log derivative is:
\[ \frac{\TokFn'(\w)}{\TokFn(\w)} = \frac{\beta}{\w}. \]

\paragraph{Family 2: $\TokFn(\w) = \w^{\TokExp} \cdot e^{\sqrt{\w}}$.}  The derivative is 
\[\TokFn'(\w) = \TokExp \cdot \w^{\TokExp-1} \cdot e^{\sqrt{\w}} + \w^{\TokExp} \cdot e^{\sqrt{\w}} \cdot 0.5 \cdot \w^{-\frac{1}{2}} = \TokExp \cdot \w^{\TokExp-1} \cdot e^{\sqrt{\w}} + 0.5 \cdot \w^{\TokExp - \frac{1}{2}} \cdot e^{\sqrt{\w}}.\] The log-derivative is:
\[ \frac{\TokFn'(\w)}{\TokFn(\w)} = \frac{\TokExp \cdot \w^{\TokExp-1} \cdot e^{\sqrt{\w}} + 0.5 \cdot \w^{\TokExp - \frac{1}{2}} \cdot e^{\sqrt{\w}}}{\w^{\TokExp} \cdot e^{\sqrt{\w}}} = \frac{\TokExp + 0.5 \cdot \w^{\frac{1}{2}}}{\w} = \frac{\TokExp}{\w} + \frac{0.5}{w^{\frac{1}{2}}}.  \]

\paragraph{Family 3: $\TokFn(\w) = \w^{\TokExp} \cdot (\log(\w+1)^{\TokExpSecond})$ for any $\TokExp, \TokExpSecond > 1$.}  The derivative is 
\[\TokFn'(\w) = \TokExp \cdot \w^{\TokExp-1} \cdot (\log(\w+1)^{\TokExpSecond}) + \frac{\w^{\TokExp}}{\w + 1} \cdot \TokExpSecond (\log(\w+1)^{\TokExpSecond - 1}).\]
The log-derivative is: 
\begin{align*}
 \frac{\TokFn'(\w)}{\TokFn(\w)} &= \frac{\TokExp \cdot \w^{\TokExp-1} \cdot (\log(\w+1)^{\TokExpSecond}) + \frac{\w^{\TokExp}}{\w + 1} \cdot \TokExpSecond (\log(\w+1)^{\TokExpSecond - 1})}{\w^{\TokExp} \cdot (\log(\w+1)^{\TokExpSecond})}  \\
 &= \frac{\TokExp \cdot \log(\w+1) + \frac{\w}{\w + 1} \cdot \TokExpSecond}{\w \cdot \log(\w+1)} = \frac{\TokExp}{\w} + \frac{\TokExpSecond}{(w+1)(\log(w+1))}.
\end{align*}

\paragraph{Family 4: $\TokFn(\w) = \w^{\TokExp} \cdot e^{\w}$.} The derivative is 
\[\TokFn'(\w) = \TokExp \cdot \w^{\TokExp-1} \cdot e^{\w} + \w^{\TokExp} \cdot e^{\w}.\]
The log-derivative is:
\[ \frac{\TokFn'(\w)}{\TokFn(\w)} = \frac{\TokExp \cdot \w^{\TokExp-1} \cdot e^{\w} + \w^{\TokExp} \cdot e^{\w}}{\w^{\TokExp} \cdot e^{\w}} = \frac{\TokExp + \w}{\w} = 1 + \frac{\TokExp}{\w}. \]

\end{proof}

Using Lemma \ref{lemma:costderivatives}, we  prove that several families of cost functions satisfy the assumptions for Theorem \ref{thm:middlemanusageexposure}. 
\begin{proposition}
\label{prop:costs}
The following cost functions satisfy the assumptions of Theorem \ref{thm:middlemanusageexposure}: (1) $\TokFn(\w) = \w^{\TokExp}$ for $\TokExp > 1$, (2) $\TokFn(\w) = \w^{\TokExp} \cdot e^{\sqrt{\w}}$ for $\TokExp \ge 1$, (3) $\TokFn(\w) = \w^{\TokExp} \cdot (\log(\w+1)^{\TokExpSecond})$ for any $\TokExp, \TokExpSecond > 1$, and (4) $\TokFn(\w) = \w^{\TokExp} \cdot e^{\w}$ for $\TokExp \ge 1$. 
\end{proposition}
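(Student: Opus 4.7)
The plan is to verify, for each of the four listed cost families, the six defining properties required by Theorem \ref{thm:middlemanusageexposure}: strict monotonicity on $(0,\infty)$, continuous differentiability, the endpoint conditions $g(0) = g'(0) = 0$ and $\lim_{w \to \infty} g(w) = \lim_{w \to \infty} g'(w) = \infty$, strict convexity, and strict log-concavity. The explicit derivative and log-derivative formulas already collected in Lemma \ref{lemma:costderivatives} will do most of the work.

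I would first dispatch the elementary properties uniformly. Continuous differentiability and strict positivity of $g$ and $g'$ on $(0,\infty)$ are immediate by inspection of each formula as a product of smooth, strictly positive, strictly increasing factors. The vanishing $g(0) = 0$ follows from the explicit $w^{\beta}$ factor. Using the formulas for $g'$ from Lemma \ref{lemma:costderivatives}, I would then check that each summand in $g'(w)$ carries a strictly positive power of $w$ (after absorbing auxiliary factors that remain bounded near $0$), yielding $g'(0) = 0$ in the regime where the exponent on $w^{\beta}$ is strictly greater than $1$. The limits $g(w), g'(w) \to \infty$ as $w \to \infty$ are immediate from the dominant $w^{\beta}$ growth (or $e^{w}$ growth in Family 4) together with the fact that the auxiliary factors are bounded below by a positive constant for large $w$.

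Next, I would verify strict log-concavity, which is equivalent to showing that the log-derivative $g'/g$ is strictly decreasing on $(0,\infty)$. Here Lemma \ref{lemma:costderivatives} does essentially everything: each log-derivative is a finite sum of terms of the form $c w^{-p}$ with $c, p > 0$, plus possibly a constant, plus (in Family 3 only) the term $\gamma/((w+1)\log(w+1))$. The power-of-$w$ terms are manifestly strictly decreasing; for the remaining term I would differentiate once to obtain $-\gamma(\log(w+1)+1)/((w+1)\log(w+1))^{2}$, which is strictly negative on $(0,\infty)$. Summing yields strict monotone decrease of $g'/g$.

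Finally, I would verify strict convexity. For Family 1 this is immediate since $g''(w) = \beta(\beta-1) w^{\beta-2} > 0$. For Families 2 and 4 I would exploit the identity
\[
\frac{g''}{g} = (\log g)'' + \big((\log g)'\big)^{2},
\]
valid since $g > 0$, and substitute the $g'/g$ expression from Lemma \ref{lemma:costderivatives} together with its derivative; elementary algebra shows that the quadratic $((\log g)')^{2}$ dominates the negative $(\log g)''$ for $\beta \ge 1$. The main obstacle will be Family 3: the factor $\phi(w) := (\log(w+1))^{\gamma}$ is not globally convex, so the product rule alone does not suffice. I would expand
\[
g''(w) \;=\; w^{\beta-2}\Big[\beta(\beta-1)\phi(w) + 2\beta w\, \phi'(w) + w^{2}\phi''(w)\Big],
\]
substitute $\phi, \phi', \phi''$, and rewrite the bracket as a quadratic in $L := \log(w+1)$ with coefficients depending on $w/(w+1) \in (0,1)$. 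Collecting yields a leading coefficient $\beta(\beta-1) > 0$, a middle coefficient $\tfrac{\gamma w}{w+1}(2\beta - \tfrac{w}{w+1}) > 0$, and a nonnegative constant term $\gamma(\gamma-1)w^{2}/(w+1)^{2}$, so the bracket is strictly positive, using $\beta, \gamma > 1$. This quadratic-in-$L$ argument is the computational heart of the proof.
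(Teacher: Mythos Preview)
Your proposal is correct and largely mirrors the paper's proof: both use the explicit derivatives from Lemma \ref{lemma:costderivatives} to verify the boundary and limit conditions, and both establish strict log-concavity by showing the log-derivative is a strictly decreasing sum of elementary terms. The divergence is in the convexity check. Where you invoke the identity $g''/g = (\log g)'' + ((\log g)')^{2}$ for Families 2 and 4 and carry out a full second-derivative expansion with a quadratic-in-$L$ argument for Family 3, the paper simply observes that each summand in the expression for $g'$ from Lemma \ref{lemma:costderivatives} is a product of positive increasing factors (for Family 3, the key point being that $w^{\beta}/(w+1)$ and $(\log(w+1))^{\gamma-1}$ are each increasing when $\beta,\gamma > 1$), so $g'$ itself is increasing and $g$ is strictly convex. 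Your route works but is computationally heavier; the paper's product-of-increasing-factors observation avoids second derivatives altogether and makes Family 3 no harder than the others.
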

\begin{proof}
We analyze each family of cost functions separately. It suffices to prove that these functions are strictly increasing, continuously differentiable, strictly convex, satisfy $\TokFn(0) = \TokFn'(0) = 0$ and $\lim_{w \rightarrow \infty} \TokFn(\w) = \lim_{w \rightarrow \infty} \TokFn'(w) = \infty$, and strictly log-concave.

\paragraph{Family 1: $\TokFn(\w) = \w^{\TokExp}$ for $\TokExp > 1$.} By Lemma \ref{lemma:costderivatives}, the derivative is $\TokFn'(\w) = \TokExp \cdot \w^{\TokExp-1}$. This means that $\TokFn$ is strictly increasing and continuously differentiable. Moreover, the derivative is increasing, so the function is strictly convex. We also see that $\TokFn(0) = \TokFn'(0) = 0$ and $\lim_{w \rightarrow \infty} \TokFn(\w) = \lim_{w \rightarrow \infty} \TokFn'(w) = \infty$. To show that $\TokFn$ is strictly log-concave, it suffices to show that the log-derivative is strictly decreasing. By Lemma \ref{lemma:costderivatives}, the log-derivative is
\[ \frac{\TokFn'(\w)}{\TokFn(\w)} = \frac{\beta}{\w},\]
which is strictly decreasing as desired.

\paragraph{Family 2: $\TokFn(\w) = \w^{\TokExp} \cdot e^{\sqrt{\w}}$.} By Lemma \ref{lemma:costderivatives}, the derivative is 
\[\TokFn'(\w) = \TokExp \cdot \w^{\TokExp-1} \cdot e^{\sqrt{\w}} + 0.5 \cdot \w^{\TokExp - \frac{1}{2}} \cdot e^{\sqrt{\w}}.\] This means that $\TokFn$ is strictly increasing and continuously differentiable. Moreover, the derivative is increasing, so the function is strictly convex. We also see that $\TokFn(0) = \TokFn'(0) = 0$ and $\lim_{w \rightarrow \infty} \TokFn(\w) = \lim_{w \rightarrow \infty} \TokFn'(w) = \infty$. To show that $\TokFn$ is strictly log-concave, it suffices to show that the log-derivative is strictly decreasing. By Lemma \ref{lemma:costderivatives}, the log-derivative is
\[ \frac{\TokFn'(\w)}{\TokFn(\w)} = \frac{\TokExp + 0.5 \cdot \w^{\frac{1}{2}}}{\w} = \frac{\TokExp}{\w} + \frac{0.5}{w^{\frac{1}{2}}}  \]
which is strictly decreasing as desired. 

\paragraph{Family 3: $\TokFn(\w) = \w^{\TokExp} \cdot (\log(\w+1)^{\TokExpSecond})$ for any $\TokExp, \TokExpSecond > 1$.}  By Lemma \ref{lemma:costderivatives}, the derivative is

    \[\TokFn'(\w) = \TokExp \cdot \w^{\TokExp-1} \cdot (\log(\w+1)^{\TokExpSecond}) + \frac{\w^{\TokExp}}{\w + 1} \cdot \TokExpSecond (\log(\w+1)^{\TokExpSecond - 1}).\]
This means that $\TokFn$ is strictly increasing and continuously differentiable. Since $\frac{\w^{\TokExp}}{\w + 1}$ is increasing, the derivative is increasing, so the function is strictly convex. We also see that $\TokFn(0) = \TokFn'(0) = 0$ and $\lim_{w \rightarrow \infty} \TokFn(\w) = \lim_{w \rightarrow \infty} \TokFn'(w) = \infty$. To show that $\TokFn$ is strictly log-concave, it suffices to show that the log-derivative is strictly decreasing. By Lemma \ref{lemma:costderivatives}, the log-derivative is
 and the log-derivative is:
    \[ \frac{\TokFn'(\w)}{\TokFn(\w)} =  \frac{\TokExp \cdot \log(\w+1) + \frac{\w}{\w + 1} \cdot \TokExpSecond}{\w \cdot \log(\w+1)} = \frac{\TokExp}{\w} + \frac{\TokExpSecond}{(w+1)(\log(w+1))}\]
which is strictly decreasing as desired.

\paragraph{Family 4: $\TokFn(\w) = \w^{\TokExp} \cdot e^{\w}$.} 
 By Lemma \ref{lemma:costderivatives}, the derivative is 
\[\TokFn'(\w) = \TokExp \cdot \w^{\TokExp-1} \cdot e^{\w} + \w^{\TokExp} \cdot e^{\w}.\] This means that $\TokFn$ is strictly increasing and continuously differentiable. Moreover, the derivative is increasing, so the function is strictly convex. We also see that $\TokFn(0) = \TokFn'(0) = 0$ and $\lim_{w \rightarrow \infty} \TokFn(\w) = \lim_{w \rightarrow \infty} \TokFn'(w) = \infty$. To show that $\TokFn$ is strictly log-concave, it suffices to show that the log-derivative is strictly decreasing. By Lemma \ref{lemma:costderivatives}, the log-derivative is
\[ \frac{\TokFn'(\w)}{\TokFn(\w)} = \frac{\TokExp + \w}{\w} = 1 + \frac{\TokExp}{\w},   \]
which is strictly decreasing as desired. 
\end{proof}

We next identify several cost functions which satisfy the assumptions for Theorem \ref{thm:sufficientcondition}.

\begin{proposition}
\label{prop:costsstronger}
The following cost functions satisfy the assumptions of Theorem \ref{thm:sufficientcondition}: (1) $\TokFn(\w) = \w^{\TokExp}$ for $\TokExp > 1$, (2) $\TokFn(\w) = \w^{\TokExp} \cdot e^{\sqrt{\w}}$ for $\TokExp \ge 1$, and (3) $\TokFn(\w) = \w^{\TokExp} \cdot (\log(\w+1)^{\TokExpSecond})$ for any $\TokExp, \TokExpSecond > 1$. 
\end{proposition}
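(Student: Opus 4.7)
The plan is to verify the additional limit condition
\[\lim_{\w \rightarrow \infty} \frac{\TokFn\left(\w - \frac{\TokFn(\w)}{\TokFn'(\w)} \right)}{\TokFn'(\w)} = \infty\]
required by Theorem \ref{thm:sufficientcondition} for each of the three families. Since Proposition \ref{prop:costs} already establishes that the weaker hypotheses of Theorem \ref{thm:middlemanusageexposure} (strict convexity, log-concavity, the boundary and growth conditions) hold for each family, the only remaining task is the asymptotic condition, which I would check case by case using the log-derivative expressions from Lemma \ref{lemma:costderivatives} to get closed-form expressions for $\TokFn(\w)/\TokFn'(\w)$.

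For family (1), $\TokFn(\w) = \w^{\TokExp}$, Lemma \ref{lemma:costderivatives} gives $\TokFn(\w)/\TokFn'(\w) = \w/\TokExp$, so $\w - \TokFn(\w)/\TokFn'(\w) = \w(\TokExp - 1)/\TokExp$. Substituting and dividing by $\TokFn'(\w) = \TokExp \w^{\TokExp-1}$ yields $((\TokExp-1)/\TokExp)^{\TokExp} \cdot \w/\TokExp$, which tends to $\infty$. For family (3), $\TokFn(\w) = \w^{\TokExp}(\log(\w+1))^{\TokExpSecond}$, the log-derivative $\TokExp/\w + \TokExpSecond/((\w+1)\log(\w+1))$ is dominated asymptotically by the $\TokExp/\w$ term, so $\TokFn(\w)/\TokFn'(\w) = \w/\TokExp + O(\w/\log \w)$ and $\w - \TokFn(\w)/\TokFn'(\w) \sim \w(\TokExp-1)/\TokExp$. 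In the ratio, the $(\log \w)^{\TokExpSecond}$ factors cancel (since $\log(c\w)/\log(\w) \to 1$), leaving a quantity of order $\w$.

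Family (2), $\TokFn(\w) = \w^{\TokExp} e^{\sqrt{\w}}$, is the subtle case. Here $\TokFn(\w)/\TokFn'(\w) = \w/(\TokExp + 0.5\sqrt{\w}) \sim 2\sqrt{\w}$, so setting $y := \w - \TokFn(\w)/\TokFn'(\w)$ gives $y = \w - 2\sqrt{\w} + o(\sqrt{\w})$. I would use the expansion
\[\sqrt{y} = \sqrt{\w}\sqrt{1 - 2/\sqrt{\w} + o(1/\sqrt{\w})} = \sqrt{\w} - 1 + o(1)\]
to obtain $e^{\sqrt{y}} = e^{-1}(1+o(1)) \, e^{\sqrt{\w}}$. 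Combined with $y^{\TokExp} \sim \w^{\TokExp}$, this gives $\TokFn(y) \sim e^{-1}\, \TokFn(\w)$. Dividing by $\TokFn'(\w) \sim 0.5 \, \TokFn(\w)/\sqrt{\w}$ (from Lemma \ref{lemma:costderivatives}) produces a ratio of order $\sqrt{\w}$, which diverges.

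The main obstacle is precisely this family: a naive bound $y \le \w$ only yields $e^{\sqrt{y}} \le e^{\sqrt{\w}}$, which destroys the growth we need. What makes the argument work is the stronger quantitative statement that $\sqrt{\w} - \sqrt{y} = 1 + o(1)$ (not larger), which requires expanding the square root to the first nontrivial order. Once that expansion is in hand, the exponential contributes only a bounded multiplicative constant and the polynomial prefactor $1/\sqrt{\w}$ in $\TokFn'(\w)$ provides the divergence.
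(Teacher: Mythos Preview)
Your proposal is correct and follows essentially the same strategy as the paper: reduce to the limit condition via Proposition~\ref{prop:costs}, then handle each family using the log-derivative formulas from Lemma~\ref{lemma:costderivatives} and the factorization $\TokFn(y)/\TokFn'(\w) = [\TokFn(y)/\TokFn(\w)]\cdot[\TokFn(\w)/\TokFn'(\w)]$. The only notable difference is in the execution of family~(2): the paper establishes $\sqrt{y}-\sqrt{\w}\to -1$ by rewriting it as a $0/0$ quotient and applying L'H\^opital's rule, whereas you obtain the same conclusion via the first-order expansion $\sqrt{1-x}=1-x/2+o(x)$; both routes are valid and yield the same constant $e^{-1}$ in the exponential factor.
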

\begin{proof}
By Proposition \ref{prop:costs}, all three of these families satisfy the assumptions of Theorem \ref{thm:middlemanusageexposure}. Thus, it suffices to show that 
\[\lim_{\w \rightarrow \infty} \frac{\TokFn\left(\w - \frac{\TokFn(\w)}{\TokFn'(\w)} \right)}{\TokFn'(\w)} = \infty.\]

\paragraph{Family 1: $\TokFn(\w) = \w^{\TokExp}$ for $\TokExp > 1$.} By Lemma \ref{lemma:costderivatives}, the derivative is $\TokFn'(\w) = \TokExp \cdot \w^{\TokExp-1}$ and the log-derivative is
\[ \frac{\TokFn'(\w)}{\TokFn(\w)} = \frac{\beta}{\w}.\]
This means that:
\begin{align*}
  \lim_{\w \rightarrow \infty} \frac{\TokFn\left(\w - \frac{\TokFn(\w)}{\TokFn'(\w)} \right)}{\TokFn'(\w)}   &= \lim_{\w \rightarrow \infty} \frac{\left( \w - \frac{\w}{\beta} \right)^{\TokExp}}{\TokExp \cdot w^{\TokExp-1}} 
  &= (1 - \frac{1}{\beta})^\beta \cdot \lim_{\w \rightarrow \infty} \frac{w^{\TokExp}}{\TokExp \cdot w^{\TokExp-1}} 
  &= (1 - \frac{1}{\beta})^{\beta} \cdot \lim_{\w \rightarrow \infty} \frac{w}{\TokExp}
  &= \infty,
\end{align*}
as desired.

\paragraph{Family 2: $\TokFn(\w) = \w^{\TokExp} \cdot e^{\sqrt{\w}}$.} By Lemma \ref{lemma:costderivatives}, the derivative is 
\[\TokFn'(\w) = \TokExp \cdot \w^{\TokExp-1} \cdot e^{\sqrt{\w}} + 0.5 \cdot \w^{\TokExp - \frac{1}{2}} \cdot e^{\sqrt{\w}}\] and the log-derivative is
\[ \frac{\TokFn'(\w)}{\TokFn(\w)} = \frac{\TokExp + 0.5 \cdot \w^{\frac{1}{2}}}{\w} = \frac{\TokExp}{\w} + \frac{0.5}{w^{\frac{1}{2}}}.  \]
This means that:
\[ \frac{\TokFn(\w)}{\TokFn'(\w)} = \frac{1}{\frac{\TokExp}{\w} + \frac{0.5}{w^{\frac{1}{2}}}} =  \frac{\w}{\beta + 0.5 \sqrt{w}}\]
This means that:
\begin{align*}
  &\lim_{\w \rightarrow \infty} \frac{\TokFn\left(\w - \frac{\TokFn(\w)}{\TokFn'(\w)} \right)}{\TokFn'(\w)}  \\
  &= 
   \lim_{\w \rightarrow \infty} \frac{\TokFn\left(\w \right)}{\TokFn'(\w)} \cdot \frac{\TokFn\left(\w - \frac{\TokFn(\w)}{\TokFn'(\w)} \right)}{\TokFn(\w)}
   \\ 
   &=  \lim_{\w \rightarrow \infty}  \frac{\w}{\beta + 0.5 \sqrt{w}}\cdot \frac{\left(\w - \frac{\TokFn(\w)}{\TokFn'(\w)} \right)^{\TokExp} \cdot e^{\sqrt{\w - \frac{\TokFn(\w)}{\TokFn'(\w)}}}}{w^{\beta} e^{\sqrt{\w}}} \\
   &= \lim_{\w \rightarrow \infty}  \frac{\w}{\beta + 0.5 \sqrt{w}} \cdot \left(\frac{\w - \frac{\TokFn(\w)}{\TokFn'(\w)}}{\w}\right)^{\TokExp} \cdot e^{\sqrt{\w - \frac{\TokFn(\w)}{\TokFn'(\w)}} - \sqrt{\w}} \\
   &= \lim_{\w \rightarrow \infty}  \frac{\w}{\beta + 0.5 \sqrt{w}} \cdot \left(1 - \frac{1}{\beta + 0.5 \sqrt{\w}} \right)^{\TokExp} \cdot e^{\left(\sqrt{w - \frac{w}{\beta + 0.5 \sqrt{\w}}} - \sqrt{w} \right)}.
\end{align*}
Since 
$\lim_{\w \rightarrow \infty}  \frac{\w}{\beta + 0.5 \sqrt{w}} = \infty$ and $\lim_{\w \rightarrow \infty} \left(1 - \frac{1}{\beta + 0.5 \sqrt{\w}} \right)^{\TokExp} = 1$, it suffices to show that 
\[\lim_{\w \rightarrow \infty} e^{\left(\sqrt{w - \frac{w}{\beta + 0.5 \sqrt{\w}}} - \sqrt{w} \right)} = e^{-1}.\] It suffices to show that
\[\lim_{\w \rightarrow \infty} \left(\sqrt{w - \frac{w}{\beta + 0.5 \sqrt{\w}}} - \sqrt{w} \right) = -1, \]
which can be rewritten as:
\[\lim_{\w \rightarrow \infty}\frac{\sqrt{1 - \frac{1}{\beta + 0.5 \sqrt{\w}}} - 1}{w^{-1/2}} = -1. \] 
Using L'Hôpital's rule, we see that this is equal to:
\begin{align*}
\lim_{\w \rightarrow \infty}\frac{\sqrt{1 - \frac{1}{\beta + 0.5 \sqrt{\w}}} - 1}{w^{-1/2}}  &= \lim_{\w \rightarrow \infty} \frac{\frac{0.125}{\sqrt{\w} \sqrt{1 - \frac{1}{c + 0.5 \sqrt{w}}} \cdot (\beta + 0.5 \sqrt{\w})^2}}{-0.5 \cdot w^{-3/2}} \\
&= \lim_{\w \rightarrow \infty} -\frac{0.25 w}{\sqrt{1 - \frac{1}{c + 0.5 \sqrt{w}}} \cdot (\beta + 0.5 \sqrt{\w})^2} \\
&= \lim_{\w \rightarrow \infty} -\frac{0.25 w}{ \beta^2 + 0.25 \w + \beta \sqrt{\w}} \\
&= -1.
\end{align*}

\paragraph{Family 3: $\TokFn(\w) = \w^{\TokExp} \cdot (\log(\w+1)^{\TokExpSecond})$ for any $\TokExp, \TokExpSecond > 1$.}  By Lemma \ref{lemma:costderivatives}, the derivative is
 \[\TokFn'(\w) = \TokExp \cdot \w^{\TokExp-1} \cdot (\log(\w+1)^{\TokExpSecond}) + \frac{\w^{\TokExp}}{\w + 1} \cdot \TokExpSecond (\log(\w+1)^{\TokExpSecond - 1})\] and the log-derivative is:
    \[ \frac{\TokFn'(\w)}{\TokFn(\w)} =  \frac{\TokExp \cdot \log(\w+1) + \frac{\w}{\w + 1} \cdot \TokExpSecond}{\w \cdot \log(\w+1)} = \frac{\TokExp}{\w} + \frac{\TokExpSecond}{(w+1)(\log(w+1))} = \frac{1}{\w} \cdot \left(\TokExp + \frac{\TokExpSecond \cdot w}{(w+1)(\log(w+1))} \right).\]
    This means that:
    \[ \frac{\TokFn(\w)}{\TokFn'(\w)} =  \frac{w}{\TokExp + \frac{\TokExpSecond \cdot w}{(w+1)(\log(w+1))}}.\] 
This means that:
\begin{align*}
  &\lim_{\w \rightarrow \infty} \frac{\TokFn\left(\w - \frac{\TokFn(\w)}{\TokFn'(\w)} \right)}{\TokFn'(\w)}  \\
  &= 
   \lim_{\w \rightarrow \infty} \frac{\TokFn\left(\w \right)}{\TokFn'(\w)} \cdot \frac{\TokFn\left(\w - \frac{\TokFn(\w)}{\TokFn'(\w)} \right)}{\TokFn(\w)}
   \\ 
   &= \lim_{\w \rightarrow \infty} \frac{w}{\TokExp + \frac{\TokExpSecond \cdot w}{(w+1)(\log(w+1))}} \cdot \frac{g\left(w - \frac{w}{\TokExp + \frac{\TokExpSecond \cdot w}{(w+1)(\log(w+1))}} \right)}{g(w)} \\
 &= \lim_{\w \rightarrow \infty} \frac{w}{\TokExp + \frac{\TokExpSecond \cdot w}{(w+1)(\log(w+1))}} \cdot \left(\frac{w - \frac{w}{\TokExp + \frac{\TokExpSecond \cdot w}{(w+1)(\log(w+1))}} }{w}\right)^{\TokExp} \cdot \left(\frac{\log\left(1 + w - \frac{w}{\TokExp + \frac{\TokExpSecond \cdot w}{(w+1)(\log(w+1))}} \right)}{\log(1 + w)}\right)^{\TokExpSecond}.
\end{align*}
We analyze each term separately. Note that:
\[\lim_{\w \rightarrow \infty} \frac{w}{\TokExp + \frac{\TokExpSecond \cdot w}{(w+1)(\log(w+1))}}  = \infty \]
and 
\[\lim_{\w \rightarrow \infty} \left(\frac{w - \frac{w}{\TokExp + \frac{\TokExpSecond \cdot w}{(w+1)(\log(w+1))}} }{w}\right)^{\TokExp} = \lim_{\w \rightarrow \infty} \left(1 - \frac{1}{\TokExp + \frac{\TokExpSecond \cdot w}{(w+1)(\log(w+1))}}\right)^{\TokExp} = \lim_{\w \rightarrow \infty} \left(1 - \frac{1}{\TokExp}\right)^{\TokExp} \]
and 
\begin{align*}
    \lim_{\w \rightarrow \infty}  \left(\frac{\log\left(1 + w - \frac{w}{\TokExp + \frac{\TokExpSecond \cdot w}{(w+1)(\log(w+1))}} \right)}{\log(1 + w)}\right)^{\TokExpSecond} &\ge  \lim_{\w \rightarrow \infty}  \left(\frac{\log\left(1 + w - \frac{w}{\TokExp} \right)}{\log(1 + w)}\right)^{\TokExpSecond} \\
    &\ge \lim_{\w \rightarrow \infty}  \left(\frac{\log\left(1 - \frac{1}{\beta} + w - \frac{w}{\TokExp} \right)}{\log(1 + w)}\right)^{\TokExpSecond} \\
     &\ge \lim_{\w \rightarrow \infty}  \left(\frac{\log\left(1 - \frac{1}{\beta}\right) + \log\left(1 + w \right)}{\log(1 + w)}\right)^{\TokExpSecond} \\
     &= 1.
\end{align*}
This proves the desired statement.
\end{proof}

\subsection{Proof of Theorem \ref{thm:specialcasemiddlemanusageexposure}}

We prove Theorem \ref{thm:specialcasemiddlemanusageexposure} as a corollary of Theorem \ref{thm:sufficientcondition}.
\begin{proof}
By Proposition \ref{prop:costsstronger}, we know that $g(\w) = w^{\TokExp}$ for $\TokExp > 1$ satisfies the conditions of Theorem \ref{thm:sufficientcondition}. This implies the existence of thresholds $0< \LowerThreshold(\NumConsumers, \subfee, \TokExp) < \UpperThreshold(\NumConsumers, \subfee, \TokExp) < \infty$ such that the intermediary usage satisfies
\[ 
\sum_{j=1}^{\NumConsumers} \mathbb{E}[1[\action{j} = \Middleman]] =
\begin{cases}
0 & \text{ if } \min(\ic + \costadditional, \costmanual) < \LowerThreshold(\NumConsumers, \subfee, \TokExp) \\
\NumConsumers & \text{ if } \min(\ic + \costadditional, \costmanual) \in [\LowerThreshold(\NumConsumers, \subfee, \TokExp), \UpperThreshold(\NumConsumers, \subfee, \TokExp)]\\
0 & \text{ if } \min(\ic + \costadditional, \costmanual) > \UpperThreshold(\NumConsumers, \subfee, \TokExp) \\
\end{cases},
\]
where
$\LowerThreshold(\NumConsumers, \subfee, \TokExp)$ and $\UpperThreshold(\NumConsumers, \subfee, \TokExp)$ are the two unique solutions to 
\[\price \cdot \TokFn(\subfee + \max_{w} (w - \price g(w)) - \subfee \NumConsumers = 0.\] 
Using the structure of $g(\w) = w^{\TokExp}$, we see that 
\[\max_{w} (w - \price g(w)) = \price^{-\frac{1}{\beta - 1}} \left(\beta^{-\frac{1}{\beta-1}} - \beta^{-\frac{\beta}{\beta-1}} \right).\]
When we plug this into the above expression, we obtain:
\[\price \cdot \left(\subfee + \price^{-\frac{1}{\beta - 1}} \left(\beta^{-\frac{1}{\beta-1}} - \beta^{-\frac{\beta}{\beta-1}} \right) \right)^{\beta} - \subfee \NumConsumers = 0. \]
This can be rewritten as:
\[\price^{\frac{1}{\beta}} \cdot \left(\subfee + \price^{-\frac{1}{\beta - 1}} \left(\beta^{-\frac{1}{\beta-1}} - \beta^{-\frac{\beta}{\beta-1}} \right) \right) - \subfee^{\frac{1}{\beta}} \NumConsumers^{\frac{1}{\beta}} = 0. \]
This can be rewritten as:
\[\price^{\frac{1}{\beta}} \subfee^{\frac{\beta-1}{\beta}} + \price^{-\frac{1}{\beta(\beta - 1)}} \left(\beta^{-\frac{1}{\beta-1}} - \beta^{-\frac{\beta}{\beta-1}} \right) \subfee^{-\frac{1}{\beta}}   -  \NumConsumers^{\frac{1}{\beta}} = 0,\]
as desired. 
\end{proof}

\subsection{Proof of Theorem \ref{thm:middlemanusageexposure}}

First, we prove the following lemma which characterizes the sign of the derivative of $\price \cdot g(\alpha + \max_{w \ge 0} (w - \price g(w)))$.
\begin{lemma}
\label{lemma:signswitch}
Consider the setup of Theorem \ref{thm:middlemanusageexposure}.
Then, there exist a (possibly infinite or negative) threshold $\price^T$ such that the sign of the derivative of $\price \cdot g(\alpha + \max_{w \ge 0} (w - \price g(w)))$ satisfies:
\[ 
\begin{cases}
0 &\text{ if }\;\;\;  \price = \price^T \\
    \text{positive} &\text{ if  } \;\;\; \price > \price^T \\
      \text{negative} &\text{ if  }\;\;\; \price < \price^T.
\end{cases}
\]
\end{lemma}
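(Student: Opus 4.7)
The plan is to reduce the entire lemma to monotonicity of a single scalar quantity. By Lemma \ref{lemma:derivative}, if we let $w^*(\price) := \argmax_{w \ge 0}(w - \price \, g(w))$ and define
\[
h(\price) := \frac{g(w^*(\price))}{g'(w^*(\price))},
\]
then the derivative of $\price \cdot g(\alpha + \max_{w \ge 0}(w - \price g(w)))$ is zero, positive, or negative according to whether $\alpha - h(\price)$ is zero, positive, or negative, respectively. So the lemma follows immediately once I show that $h$ is strictly decreasing and continuous on $(0, \infty)$.

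For the monotonicity of $h$ I would combine two facts from the lemmas in Appendix~\ref{appendix:lemmas}. First, Lemma \ref{lemma:optimacomparativestatic} gives that $w^*(\price)$ is strictly decreasing in $\price$ (with the standing assumptions on $g$ coming from Theorem \ref{thm:middlemanusageexposure}). Second, strict log-concavity of $g$ is equivalent to $g'/g$ being strictly decreasing in $w$, hence $g/g'$ is strictly increasing in $w$. Composing these two monotonicities, $h(\price) = (g/g')(w^*(\price))$ is strictly decreasing. Continuity of $h$ follows from continuous differentiability of $g$ together with the continuity of $w^*(\cdot)$, which one reads off the first-order condition $g'(w^*(\price)) = 1/\price$ from Lemma \ref{lemma:uniqueoptima}.

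With $h$ continuous and strictly decreasing on $(0, \infty)$, the equation $h(\price) = \alpha$ has at most one solution. If it has a solution $\price^T \in (0, \infty)$, then strict monotonicity of $h$ gives $h(\price) > \alpha$ for $\price < \price^T$ (derivative negative) and $h(\price) < \alpha$ for $\price > \price^T$ (derivative positive), which is precisely the claim. If it has no solution, then $h - \alpha$ has constant sign on $(0,\infty)$: if $h > \alpha$ throughout I set $\price^T = +\infty$, and if $h < \alpha$ throughout I set $\price^T$ to be any negative value (say $-\infty$), matching the "possibly infinite or negative" allowance in the statement. There is no serious obstacle here; the only subtle point is making sure strict log-concavity is invoked in the correct direction (to get $g/g'$ increasing rather than decreasing), which is exactly the reason this assumption was placed on $g$.
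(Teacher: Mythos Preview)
Your proposal is correct and follows essentially the same approach as the paper: invoke Lemma~\ref{lemma:derivative} to reduce the sign of the derivative to the sign of $\alpha - g(w^*(\price))/g'(w^*(\price))$, then combine strict log-concavity of $g$ (so $g/g'$ is increasing in $w$) with Lemma~\ref{lemma:optimacomparativestatic} (so $w^*(\price)$ is decreasing) to obtain the single-crossing threshold. Your treatment of the boundary cases and continuity is slightly more explicit than the paper's, but the argument is the same.
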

\begin{proof}
We apply Lemma \ref{lemma:uniqueoptima} and let $w^*(\price)$ be the unique maximizer of $\max_{w \ge 0} (w - \price g(w))) $. We use Lemma \ref{lemma:derivative} to see that the sign of the derivative of $\price \cdot g(\alpha + \max_{w \ge 0} (w - \price g(w)))$ with respect to $\price$ is:
\[ 
\begin{cases}
0 &\text{ if }\;\;\;  \frac{g(w^*(\price))}{g'(w^*(\price))} = \alpha \\
    \text{positive} &\text{ if  } \;\;\; \frac{g(w^*(\price))}{g'(w^*(\price))} < \alpha \\
      \text{negative} &\text{ if  }\;\;\;  \frac{g(w^*(\price))}{g'(w^*(\price))} > \alpha ,    
\end{cases}
\]

Since $g$ is strictly log-concave, we know that $\frac{g(w)}{g'(w)}$ is strictly increasing in $\w$. Using Lemma \ref{lemma:optimacomparativestatic}, we see that $\frac{g(w^*(\price))}{g'(w^*(\price))}$ is strictly decreasing in $\price$. This guarantees that there exists a (possibly infinite or negative) threshold $\price^T$ such that the sign of the derivative is:
\[ 
\begin{cases}
0 &\text{ if }\;\;\;  \price = \price^T \\
    \text{positive} &\text{ if  } \;\;\; \price > \price^T \\
      \text{negative} &\text{ if  }\;\;\; \price < \price^T
\end{cases}
\]
as desired. 
\end{proof}

We now prove Theorem \ref{thm:middlemanusageexposure}.
\begin{proof}[Proof of Theorem \ref{thm:middlemanusageexposure}]
By Lemma \ref{lemma:equilibriumcharacterization}, disintermediation occurs if and only if $\price \cdot g(\alpha + \max_{w \ge 0} (w - \price g(w))) > \subfee \NumConsumers$. By Lemma \ref{lemma:signswitch}, there exist a (possibly infinite or negative) threshold $\price^T$ such that the sign of the derivative of $\price \cdot g(\alpha + \max_{w \ge 0} (w - \price g(w)))$ satisfies:
\[ 
\begin{cases}
0 &\text{ if }\;\;\;  \price = \price^T \\
    \text{positive} &\text{ if  } \;\;\; \price > \price^T \\
      \text{negative} &\text{ if  }\;\;\; \price < \price^T
\end{cases}
\]
This implies that there exist (possibly infinite or negative) thresholds $\LowerThreshold( \NumConsumers, \subfee, \TokFn)$ and $\UpperThreshold( \NumConsumers, \subfee, \TokFn)$ such that 
\[ 
\sum_{j=1}^{\NumConsumers} \mathbb{E}[1[\action{j} = \Middleman]] =
\begin{cases}
0 & \text{ if } \min(\ic + \costadditional, \costmanual) < \LowerThreshold(\NumConsumers, \subfee, \TokExp) \\
\NumConsumers & \text{ if } \min(\ic + \costadditional, \costmanual) \in [\LowerThreshold(\NumConsumers, \subfee, \TokExp), \UpperThreshold(\NumConsumers, \subfee, \TokExp)]\\
0 & \text{ if } \min(\ic + \costadditional, \costmanual) > \UpperThreshold(\NumConsumers, \subfee, \TokExp) \\
\end{cases}
\]
as desired. 
\end{proof}

\subsection{Proof of Theorem \ref{thm:sufficientcondition}}

First, we prove the following lemma that shows that the disintermediation boundary has exactly two solutions. 
\begin{lemma}
\label{lemma:twosolutions}
Consider the setup of Theorem \ref{thm:sufficientcondition}.
The equation 
\[ \price \cdot \TokFn(\subfee + \max_{w \ge 0}(w - \price \TokFn(w))) = \subfee \NumConsumers \]
has exactly two solutions. 
\end{lemma}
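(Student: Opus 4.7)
Let $F(\price) := \price \cdot g(\subfee + \max_{w \ge 0}(w - \price g(w)))$. The plan is to show that $F$ is continuous and strictly U-shaped on $(0,\infty)$, attains a unique minimum value of $\subfee$, and blows up to $\infty$ at both ends. Since $\NumConsumers > 1$, the target value $\subfee \NumConsumers$ strictly exceeds the minimum, so by continuity and strict monotonicity on each side of the minimum, the horizontal line at height $\subfee \NumConsumers$ is crossed exactly twice.

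The U-shape and the identity $\min_\price F(\price) = \subfee$ follow directly from Lemma \ref{lemma:optima}, which applies since the hypotheses of Theorem \ref{thm:sufficientcondition} subsume those of Lemma \ref{lemma:optima}. It remains to verify the two limits. Throughout, let $w^*(\price)$ denote the unique maximizer from Lemma \ref{lemma:uniqueoptima}, which satisfies $g'(w^*(\price)) = 1/\price$, so that $\price \cdot g(w^*(\price)) = g(w^*(\price))/g'(w^*(\price))$ and
\[
F(\price) \;=\; \frac{g\!\left(\subfee + w^*(\price) - \tfrac{g(w^*(\price))}{g'(w^*(\price))}\right)}{g'(w^*(\price))}.
\]

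For the limit as $\price \to \infty$: the first-order condition $g'(w^*(\price)) = 1/\price$ combined with $g'(0) = 0$ and continuity of $g'$ forces $w^*(\price) \to 0$. Then by Lemma \ref{lemma:min}, $g(w^*(\price))/g'(w^*(\price)) \to 0$, so the argument of $g$ in $F(\price)$ tends to $\subfee$, giving $F(\price) \sim \price \cdot g(\subfee) \to \infty$. For the limit as $\price \to 0^+$: now $w^*(\price) \to \infty$, and since $g$ is increasing,
\[
F(\price) \;\ge\; \frac{g\!\left(w^*(\price) - \tfrac{g(w^*(\price))}{g'(w^*(\price))}\right)}{g'(w^*(\price))},
\]
which tends to $\infty$ by the extra hypothesis of Theorem \ref{thm:sufficientcondition}.

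The main obstacle is really just the $\price \to 0^+$ limit: the U-shape is already packaged in Lemma \ref{lemma:optima}, and the $\price \to \infty$ limit is a straightforward consequence of Lemma \ref{lemma:min}, but the behavior near zero is precisely what the strengthened hypothesis in Theorem \ref{thm:sufficientcondition} is designed to control, and one must be careful to exploit it after discarding the additive $\subfee$ inside $g$ by monotonicity. Once both limits and the strict U-shape are in hand, the ``exactly two solutions'' conclusion follows from the intermediate value theorem applied separately on each of the two monotone branches, together with injectivity on each branch.
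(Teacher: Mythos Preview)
Your proposal is correct and follows essentially the same route as the paper: invoke Lemma \ref{lemma:optima} for the strict U-shape with minimum value $\subfee < \subfee \NumConsumers$, and then verify that $F(\price)\to\infty$ at both ends so the intermediate value theorem yields exactly two crossings. The only cosmetic difference is at $\price\to\infty$: the paper takes the one-line lower bound $F(\price)\ge \price\cdot g(\subfee)$ (using $\max_w(w-\price g(w))\ge 0$), whereas you track the argument of $g$ via Lemma \ref{lemma:min} to get the sharper asymptotic $F(\price)\sim \price\cdot g(\subfee)$; both arrive at the same conclusion.
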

\begin{proof}
We apply Lemma \ref{lemma:uniqueoptima} and let $w^*(\price)$ be the unique maximizer of $\max_{w \ge 0} (w - \price g(w))) $. We use Lemma \ref{lemma:optima} to see that $\price \cdot \TokFn(\subfee + \max_{w \ge 0}(w - \price \TokFn(w)))$ is U-shaped and has global minimum 
\[\min_{\price > 0} (\price \cdot \TokFn(\subfee + \max_{w \ge 0}(w - \price \TokFn(w)))) = \alpha < \alpha \NumConsumers.\]
To show that $\price \cdot \TokFn(\subfee + \max_{w \ge 0}(w - \price \TokFn(w))) = \subfee \NumConsumers$ has exactly two solutions, it suffices to show that 
\[\lim_{\price \rightarrow \infty} (\price \cdot \TokFn(\subfee + \max_{w \ge 0}(w - \price \TokFn(w)))) = \infty = \lim_{\price \rightarrow 0} (\price \cdot \TokFn(\subfee + \max_{w \ge 0}(w - \price \TokFn(w)))).\]

First, we take a limit as $\nu \rightarrow \infty$. Observe that:
\[\lim_{\price \rightarrow \infty} (\price \cdot \TokFn(\subfee + \max_{w \ge 0}(w - \price \TokFn(w)))) \ge \lim_{\price \rightarrow \infty} (\price \cdot \TokFn(\subfee)) = \infty.  \]

Next, we take a limit as $\price \rightarrow 0$. Observe that 
\[\lim_{\price \rightarrow 0} (\price \cdot \TokFn(\subfee + \max_{w \ge 0}(w - \price \TokFn(w)))) \ge \lim_{\price \rightarrow 0} (\price \cdot \TokFn(\max_{w \ge 0}(w - \price \TokFn(w)))). \]
Using the first-order condition for $\max_{w \ge 0}(w - \price \TokFn(w))$, we see this is equal to:
\[ \lim_{\price \rightarrow 0} (\price \cdot \TokFn(\max_{w \ge 0}(w - \price \TokFn(w)))) = \lim_{\price \rightarrow 0} \frac{\TokFn(w^*(\price) - \price \cdot g(w^*(\price)))}{\TokFn'(w^*(\price))} = \lim_{\price \rightarrow 0} \frac{\TokFn(w^*(\price) - \frac{g(w^*(\price)))}{g'(w^*(\price)))}}{\TokFn'(w^*(\price))}. \]
Using Lemma \ref{lemma:optimacomparativestatic}, we can reparameterize and see that this is equal to: 
\[ \lim_{w \rightarrow \infty} \frac{\TokFn\left(w - \frac{g(w)}{g'(w)}\right)}{\TokFn'(w)}. \]
This is equal to $\infty$ by the assumption in the theorem statement. 
\end{proof}

Now we prove Theorem \ref{thm:sufficientcondition}
\begin{proof}[Proof of Theorem \ref{thm:sufficientcondition}]
By Lemma \ref{lemma:equilibriumcharacterization}, disintermediation occurs if and only if $\price \cdot g(\subfee + \max_{w \ge 0}(w - \price g(\w)) > \subfee C$. By Lemma \ref{lemma:optima}, we know that $\price \cdot g(\subfee + \max_{w \ge 0}(w - \price g(\w))$ is U-shaped and by Lemma \ref{lemma:twosolutions} we know that $\price \cdot g(\subfee + \max_{w \ge 0}(w - \price g(\w)) = \subfee C$ has exactly two solutions. This means that $0 < \LowerThreshold(\NumConsumers, \subfee, \TokFn)< \UpperThreshold(\NumConsumers, \subfee, \TokFn) < \infty$ can be taken to be equal to these two solutions. This also means that the lower threshold is decreasing as a function of the number of consumers $\NumConsumers$, and the upper threshold is increasing as a function of $\NumConsumers$, as desired.     
\end{proof}

\subsection{Proof of Proposition \ref{prop:counterexample}}

We prove Proposition \ref{prop:counterexample}.
\begin{proof}
Using Lemma \ref{lemma:equilibriumcharacterization}, it suffices to show that 
\[\lim_{\price \rightarrow 0} (\price \cdot g(\subfee + \max_{w \ge 0} (w - \price g(\w))) < \alpha C.\]
Using the first-order condition for $\max_{w \ge 0}(w - \price \TokFn(w))$, we see this is equal to:
\[ \lim_{\price \rightarrow 0} (\price \cdot \TokFn(\subfee + \max_{w \ge 0}(w - \price \TokFn(w)))) = \lim_{\price \rightarrow 0} \frac{\TokFn(\subfee + w^*(\price) - \price \cdot g(w^*(\price)))}{\TokFn'(w^*(\price))} = \lim_{\price \rightarrow 0} \frac{\TokFn(\subfee +  w^*(\price) - \frac{g(w^*(\price)))}{g'(w^*(\price)))}}{\TokFn'(w^*(\price))}. \]
Using Lemma \ref{lemma:optimacomparativestatic}, we can reparameterize and see that this is equal to: 
\[ \lim_{w \rightarrow \infty} \frac{\TokFn\left(\subfee +  w - \frac{g(w)}{g'(w)}\right)}{\TokFn'(w)}. \]
Using Lemma \ref{lemma:costderivatives}, we see that this is equal to:
\begin{align*}
 \lim_{w \rightarrow \infty} \frac{\TokFn\left(\subfee +  w - \frac{g(w)}{g'(w)}\right)}{\TokFn'(w)} &=  \lim_{w \rightarrow \infty} \frac{\TokFn\left(\subfee + w - \frac{g(w)}{g'(w)}\right)}{\TokFn\left(w\right)} \frac{\TokFn\left(w\right)}{\TokFn'(w)} \\
 &= \lim_{w \rightarrow \infty} \frac{\TokFn\left(\subfee + w - \frac{1}{1 + \frac{\beta}{\w}} \right)}{\TokFn\left(w\right)} \frac{1}{1 + \frac{\beta}{\w}} \\
 &\le \lim_{w \rightarrow \infty} \frac{\TokFn\left(\subfee + w \right)} {\TokFn\left(w\right)} \\
 &= e^{\subfee} \cdot \lim_{w\rightarrow \infty} \frac{(\subfee + \w)^{\beta}}{w^{\beta}} \\
 &= e^{\subfee}.
\end{align*}
Since $e^{\subfee } < \subfee \cdot \NumConsumers$, this proves the desired statement.
\end{proof}

\section{Proofs for Section \ref{sec:utility}}\label{appendix:proofsutility}

\subsection{Proof of Proposition \ref{prop:contentquality}}

We prove Proposition \ref{prop:contentquality}. This result follows easily from Lemma \ref{lemma:equilibriumcharacterization}. 
\begin{proof}[Proof of Proposition \ref{prop:contentquality}]
We split into two cases: $\price < \LowerThreshold(\NumConsumers, \subfee, \TokFn)$ or $\price > \UpperThreshold(\NumConsumers, \subfee, \TokFn)$, and $\price \in [\LowerThreshold(\NumConsumers, \subfee, \TokFn), \UpperThreshold(\NumConsumers, \subfee, \TokFn)]$.

\paragraph{Case 1: $\price < \LowerThreshold(\NumConsumers, \subfee, \TokFn)$ or $\price > \UpperThreshold(\NumConsumers, \subfee, \TokFn)$.} In this case, disintermediation occurs (Theorem \ref{thm:middlemanusageexposure}). By Lemma \ref{lemma:equilibriumcharacterization}, the consumer creates the content $\argmax_{\w \ge 0}(\w - \price \cdot \TokFn(\w))$ that maximizes their utility. 

\paragraph{Case 2: $\price \in [\LowerThreshold(\NumConsumers, \subfee, \TokFn), \UpperThreshold(\NumConsumers, \subfee, \TokFn)]$.} In this case, the intermediary survives (Theorem \ref{thm:middlemanusageexposure}). By Lemma \ref{lemma:equilibriumcharacterization}, the intermediary produces content 
\[\wmiddleman = \subfee + \max_{\w \ge 0} (\w - \price \TokFn(\w)).\] 
\end{proof}

\subsection{Proof of Theorem \ref{thm:contentquality}}

We prove Theorem \ref{thm:contentquality}.
\begin{proof}[Proof of Theorem \ref{thm:contentquality}]

First, we show that the quality is continuous in $\price$ when $\price \neq \LowerThreshold(\NumConsumers, \subfee, g)$ and $\price \neq \UpperThreshold(\NumConsumers, \subfee, g)$. This follows from the functional forms from Proposition \ref{prop:contentquality}.

Next, we show that the quality is decreasing in $\price$ when $\price \neq \LowerThreshold(\NumConsumers, \subfee, g)$ and $\price \neq \UpperThreshold(\NumConsumers, \subfee, g)$. We again use Proposition \ref{prop:contentquality}. For $\price < \LowerThreshold(\NumConsumers, \subfee, g)$ or $\price > \UpperThreshold(\NumConsumers, \subfee, g)$, the content quality is $\argmax_{w \ge 0}(w - \price g(w))$. This is equal to $w^*(\price)$ such that $g'(w^*(\price)) = 1/\price$. Since $g'$ is increasing in its argument, this is  decreasing in $\rho$. For $\price \in (\LowerThreshold(\NumConsumers, \subfee, g), \UpperThreshold(\NumConsumers, \subfee, g))$, we the content quality is $\subfee + \max_{w \ge 0}(w - \price g(w))$. By Lemma \ref{lemma:basicenvelope}, this is decreasing in $\price$. 
 
Next, we analyze the content quality at the thresholds $\price = \LowerThreshold(\NumConsumers, \subfee, g)$ and $\price = \UpperThreshold(\NumConsumers, \subfee, g)$. We again use Proposition \ref{prop:contentquality}. It suffices to show that:
\[\lim_{\price \rightarrow^- \LowerThreshold(\NumConsumers, \subfee, g)} \argmax_{w \ge 0}(w - \price g(w))  > \subfee + \max_{w \ge 0}(w - (\LowerThreshold(\NumConsumers, \subfee, g)) g(w)) \]
and 
\[\lim_{\price \rightarrow^+ \UpperThreshold(\NumConsumers, \subfee, g)} \argmax_{w \ge 0}(w - \price g(w))  < \subfee + \max_{w \ge 0}(w - (\UpperThreshold(\NumConsumers, \subfee, g)) \cdot g(w)). \]

For the first limit, we can rewrite the desired inequality as $\subfee < \LowerThreshold(\NumConsumers, \subfee, g) \cdot g(w^*(\LowerThreshold(\NumConsumers, \subfee, g)))$, where $w^*(\price) = \argmax_{w}(w - \price g(w))$. This holds because by Lemma \ref{lemma:optima} we know that at $\price = \LowerThreshold(\NumConsumers, \subfee, g)$, the sign of the derivative of $\price g(\subfee + \max_w(w - \price g(w))$ is negative, and by Lemma \ref{lemma:derivative}, we know that this means that $g(w^*(\price)) > \subfee g'(w^*(\price))$, which means that $\price g(w^*(\price)) > \subfee$ as desired. 

For the second limit, we can rewrite the desired inequality as $\subfee > \UpperThreshold(\NumConsumers, \subfee, g) \cdot g(w^*(\UpperThreshold(\NumConsumers, \subfee, g)))$, where $w^*(\price) = \argmax_{w}(w - \price g(w))$. This holds because by Lemma \ref{lemma:optima} we know at $\price = \UpperThreshold(\NumConsumers, \subfee, g)$, that the sign of the derivative of $\price g(\subfee + \max_w(w - \price g(w))$ is positive, and by Lemma \ref{lemma:derivative}, we know that this means that $g(w^*(\price)) < \subfee g'(w^*(\price))$, which means that $\price g(w^*(\price)) < \subfee$ as desired. 

\end{proof}

\subsection{Proof of Proposition \ref{prop:middlemanutility}}

We prove Proposition \ref{prop:middlemanutility}. This result follows from easily Lemma \ref{lemma:equilibriumcharacterization}.

\begin{proof}[Proof of Proposition \ref{prop:middlemanutility}]

We split into two cases: $\price < \LowerThreshold(\NumConsumers, \subfee, \TokFn)$ or $\price > \UpperThreshold(\NumConsumers, \subfee, \TokFn)$, and $\price \in [\LowerThreshold(\NumConsumers, \subfee, \TokFn), \UpperThreshold(\NumConsumers, \subfee, \TokFn)]$.

\paragraph{Case 1: $\price < \LowerThreshold(\NumConsumers, \subfee, \TokFn)$ or $\price > \UpperThreshold(\NumConsumers, \subfee, \TokFn)$.} In this case, disintermediation occurs (Theorem \ref{thm:middlemanusageexposure}). This means that the  intermediary has utility zero. 

\paragraph{Case 2: $\price \in [\LowerThreshold(\NumConsumers, \subfee, \TokFn), \UpperThreshold(\NumConsumers, \subfee, \TokFn)]$.}  In this case, the intermediary survives (Theorem \ref{thm:middlemanusageexposure}). By Lemma \ref{lemma:equilibriumcharacterization}, the intermediary produces content  at equilibrium 
\[\wmiddleman = \subfee + \max_{\w \ge 0} (\w - \price \TokFn(\w)),\] and their utility is their revenue $\subfee \cdot \NumConsumers$ minus their costs $\price \TokFn\left(\subfee +  \max_{w \ge 0} \left( w - \price \cdot \TokFn(\w)\right) \right)$.
\end{proof}

\subsection{Proof of Theorem \ref{thm:middlemanutility}}

We prove Theorem \ref{thm:middlemanutility}.
\begin{proof}[Proof of Theorem \ref{thm:middlemanutility}]

First, we show that the intermediary utility is inverse U-shaped. By Proposition \ref{prop:middlemanutility}, it suffices to show that $\price \cdot \TokFn(\subfee + \max_{\w \ge 0}(\w - \price \cdot \TokFn(\w)))$ is U-shaped as a function of $\price$. This follows from Lemma \ref{lemma:optima}.

Next, we compute the maximum intermediary utility. By Proposition \ref{prop:middlemanutility}, it suffices to find the minimum value of $\price \cdot \TokFn(\subfee + \max_{\w \ge 0}(\w - \price \cdot \TokFn(\w)))$. We again apply Lemma \ref{lemma:optima} to see that this is equal to $\alpha$, which means that the maximum intermediary utility is equal to $\alpha(\NumConsumers - 1)$ Since $\price \cdot \TokFn(\subfee + \max_{\w \ge 0}(\w - \price \cdot \TokFn(\w)))$ is U-shaped and using Theorem \ref{thm:middlemanusageexposure}, we also know that this optima is attained for $\price$ in the range where intermediation occurs.

We now turn to content $\wmiddleman$ produced at this optima. Using Lemma \ref{lemma:optima} again, we also see that the optima is attained at $\price$ such that $g(\argmax(w - \price g(w)) = \alpha \cdot g'(\argmax(w - \price g(w))$. Using that $g$ is convex, we know that $g'(\argmax(w - \price g(w)) = \frac{1}{\price}$, so this implies that:
\[\price \cdot g(\argmax(w - \price g(w)) = \alpha.\]
This means that:
\[\wmiddleman = \subfee + \max_{\w \ge 0}(\w - \price \cdot \TokFn(\w)) = \subfee + \argmax(w - \price g(w)) - \price \TokFn(\argmax(w - \price g(w))) = \argmax(w - \price g(w)) .\]
as desired. 
\end{proof}

\subsection{Proof of Theorem \ref{thm:consumerutility}}

We prove Theorem \ref{thm:consumerutility}. This follows easily from Lemma \ref{lemma:equilibriumcharacterization}.
\begin{proof}[Proof of Theorem \ref{thm:consumerutility}]
We split into two cases: $\price < \LowerThreshold(\NumConsumers, \subfee, \TokFn)$ or $\price > \UpperThreshold(\NumConsumers, \subfee, \TokFn)$, and $\price \in [\LowerThreshold(\NumConsumers, \subfee, \TokFn), \UpperThreshold(\NumConsumers, \subfee, \TokFn)]$.

\paragraph{Case 1: $\price < \LowerThreshold(\NumConsumers, \subfee, \TokFn)$ or $\price > \UpperThreshold(\NumConsumers, \subfee, \TokFn)$.} In this case, disintermediation occurs (Theorem \ref{thm:middlemanusageexposure}). By Lemma \ref{lemma:equilibriumcharacterization}, the consumer produces content $\argmax_{w \ge 0}(w - \price g(\w))$ and their utility is thus  $\text{max}_{w \ge 0}(w - \price g(\w))$. 

\paragraph{Case 2: $\price \in [\LowerThreshold(\NumConsumers, \subfee, \TokFn), \UpperThreshold(\NumConsumers, \subfee, \TokFn)]$.}  In this case, the intermediary survives (Theorem \ref{thm:middlemanusageexposure}). By Lemma \ref{lemma:equilibriumcharacterization}, the intermediary produces content  at equilibrium 
\[\wmiddleman = \subfee + \max_{\w \ge 0} (\w - \price \TokFn(\w)).\] The consumer utility is thus $\max_{\w \ge 0} (\w - \price \TokFn(\w))$. 
\end{proof}

\subsection{Proof of Corollary \ref{cor:consumerutility}}

We prove Corollary \ref{cor:consumerutility}.
\begin{proof}[Proof of Corollary \ref{cor:consumerutility}]
We apply Theorem \ref{thm:consumerutility} to see that the consumer utility is $\max_{\w \ge 0} (\w - \price \TokFn(\w))$. We apply Lemma \ref{lemma:basicenvelope} to see that the derivative of $\max_{\w \ge 0} (\w - \price \TokFn(\w))$ with respect to $\price$ is negative. This proves that $\max_{\w \ge 0} (\w - \price \TokFn(\w))$ is continuous and decreasing in $\price$.  
\end{proof}

\subsection{Proof of Proposition \ref{prop:socialwelfare}}

We prove Proposition \ref{prop:socialwelfare}.
\begin{proof}[Proof of Proposition \ref{prop:socialwelfare}]
We add up the utility of consumers, suppliers, and the intermediary. By Theorem \ref{thm:consumerutility}, the total utility of consumers is equal to  $\NumConsumers \cdot \left(\max_{\w \ge 0}\left(w - \price \cdot \TokFn(\w) \right)\right)$ regardless of whether disintermediation occurs (Theorem \ref{thm:consumerutility}). By Lemma \ref{lemma:equilibriumcharacterization}, the suppliers choose $\price = \ic$ and thus have have zero profit. 

We split into two cases: $\price < \LowerThreshold(\NumConsumers, \subfee, \TokFn)$ or $\price > \UpperThreshold(\NumConsumers, \subfee, \TokFn)$, and $\price \in [\LowerThreshold(\NumConsumers, \subfee, \TokFn), \UpperThreshold(\NumConsumers, \subfee, \TokFn)]$.

\paragraph{Case 1: $\price < \LowerThreshold(\NumConsumers, \subfee, \TokFn)$ or $\price > \UpperThreshold(\NumConsumers, \subfee, \TokFn)$.} In this case, disintermediation occurs (Theorem \ref{thm:middlemanusageexposure}). When disintermediation occurs, the social welfare is thus equal to the total utility of consumers, which is   $\NumConsumers \cdot \left(\max_{\w \ge 0}\left(w - \price \cdot \TokFn(\w) \right)\right)$. 

\paragraph{Case 2: $\price \in [\LowerThreshold(\NumConsumers, \subfee, \TokFn), \UpperThreshold(\NumConsumers, \subfee, \TokFn)]$.}  In this case, the intermediary survives (Theorem \ref{thm:middlemanusageexposure}). When the intermediary survives, the social welfare is equal to the intermediary's utility plus the total consumer utility. By Proposition \ref{prop:middlemanutility}, the intermediary's utility is $\subfee \NumConsumers - \price g(\subfee + \max_{w \ge 0}(w - \price g(w)))$. This means that the social welfare is equal to $\subfee \NumConsumers - \price g(\subfee + \max_{w \ge 0}(w - \price g(w))) + \NumConsumers \cdot \max_{w \ge 0}(w - \price g(w))$ as desired. 
    
\end{proof}

\subsection{Proof of Proposition \ref{prop:socialplanner}}

We prove Proposition \ref{prop:socialplanner}.
\begin{proof}[Proof of Proposition \ref{prop:socialplanner}]
Let $\price = \min(\ic + \costadditional, \costmanual)$. Production is done through the suppliers if $\ic + \costadditional \le \costmanual$ and using manual content creation if $\ic + \costadditional >\costmanual$. 

If the intermediary does not exist, then the market outcome that maximizes the social welfare is that each consumer produces content $\argmax_{w \ge 0}(w - \price g(w))$. The social welfare is $\NumConsumers \cdot \max_{w \ge 0}(w - \price g(w))$. 

For the case where the intermediary exists, we construct a market outcome that maximizes the social welfare: the suppliers set prices $\nu_1 = \ldots \nu_P = \ic$ equal to the supply-side costs, the intermediary produces content \[\wmiddleman = \argmax_{\w \ge 0} (\NumConsumers \w - \price \cdot \TokFn(\w)). \]
and all consumers $j \in [\NumConsumers]$ all choose consumption mode $\action{j} = \Middleman$ and consume the content $\wuser{j} = \wmiddleman$ created by the intermediary. The social welfare is $\max_{w \ge 0}(\NumConsumers \cdot w - \price g(w))$. 

\end{proof}

\subsection{Proof of Theorem \ref{thm:socialwelfare}}

We prove 
Theorem \ref{thm:socialwelfare}.
\begin{proof}[Proof of Theorem \ref{thm:socialwelfare}]
We apply Proposition \ref{prop:socialwelfare} to see that the social welfare is equal to:
\[
\begin{cases}
\NumConsumers \cdot \left(\max_{\w \ge 0}\left(w - \price \cdot \TokFn(\w) \right)\right) & \text{ if } \price < \LowerThreshold(\NumConsumers, \subfee, \TokFn)  \\
\NumConsumers \subfee 
 -\price \TokFn\left(\subfee +  \max_{w \ge 0} \left( w - \price \cdot \TokFn(\w)\right) \right) + \NumConsumers \max_{\w \ge 0}\left(w - \price \cdot \TokFn(\w) \right)  & \text{ if } \price \in [\LowerThreshold(\NumConsumers, \subfee, \TokFn), \UpperThreshold(\NumConsumers, \subfee,  \TokFn)]\\
\NumConsumers \cdot \left(\max_{\w \ge 0}\left(w - \price \cdot \TokFn(\w) \right)\right) & \text{ if } \price > \UpperThreshold(\NumConsumers, \subfee, \TokFn).
\end{cases}
\]
Throughout this proof, let $w^*(\price')$ be the unique solution to $\max_{w \ge 0}(w - \price' \cdot g(w))$ (Lemma \ref{lemma:uniqueoptima}). 

First, we show that the social welfare is continuous in the production costs. This follows immediately within each of the three regimes, and at the boundaries, it follows from the fact that $\subfee \cdot \NumConsumers - \price g(\subfee + \max_{w \ge 0}(w - \price g(w))) = 0$ so 
\[\NumConsumers \subfee 
 -\price \TokFn\left(\subfee +  \max_{w \ge 0} \left( w - \price \cdot \TokFn(\w)\right) \right) + \NumConsumers \max_{\w \ge 0}\left(w - \price \cdot \TokFn(\w) \right) = \NumConsumers \cdot \max_{\w \ge 0}\left(w - \price \cdot \TokFn(\w) \right). \]

Next, we show that the social welfare is decreasing in $\price$. For the first and third regime, the social welfare is equal to the total consumer utility. This is $\NumConsumers$ times the utility of any given consumer. So by Theorem \ref{thm:consumerutility}, this is decreasing in $\price$. For the second regime, we take a derivative of $\NumConsumers \subfee - 
 -\price \TokFn\left(\subfee +  \max_{w \ge 0} \left( w - \price \cdot \TokFn(\w)\right) \right) + \NumConsumers \max_{\w \ge 0}\left(w - \price \cdot \TokFn(\w) \right)$ with respect to $\price$. By Lemma \ref{lemma:derivative} and Lemma \ref{lemma:basicenvelope}, this is equal to:
 \[-g(\subfee + \max_{w \ge 0}(w - \price g(w)) + g'(\subfee + \max_{w \ge 0}(w - \price g(w)) \frac{g(w^*(\price))}{g'(w^*(\price))} - \NumConsumers \cdot g(w^*(\price)). \]
 where $w^*(\price) = \argmax_{w \ge 0}(w - \price g(w))$. It suffices to show that:
 \[ g(\subfee + \max_{w \ge 0}(w - \price g(w)) + \NumConsumers \cdot g(w^*(\price)) >  g'(\subfee + \max_{w \ge 0}(w - \price g(w)) \frac{g(w^*(\price))}{g'(w^*(\price))}.\]

We split into two cases: (1) $\subfee + \max_{w \ge 0}(w - \price g(w) > w^*(\price)$ and (2) $\subfee + \max_{w \ge 0}(w - \price g(w) \le w^*(\price)$.

\paragraph{Case 1: $\subfee + \max_{w \ge 0}(w - \price g(w) > w^*(\price)$.} It suffices to show that 
 \[ g(\subfee + \max_{w \ge 0}(w - \price g(w)) >  g'(\subfee + \max_{w \ge 0}(w - \price g(w)) \frac{g(w^*(\price))}{g'(w^*(\price))}.\]
We can write this:
 \[ \frac{g(\subfee + \max_{w \ge 0}(w - \price g(w))}{  g'(\subfee + \max_{w \ge 0}(w - \price g(w))} > \frac{g(w^*(\price))}{g'(w^*(\price))}.\]
 Using that $g$ is log-concave, we know that $g(x) / g'(x)$ is increasing in $x$, so we know that this holds. 
 
\paragraph{Case 2: $\subfee + \max_{w \ge 0}(w - \price g(w)\le w^*(\price)$.} It suffices to show that \[ \NumConsumers \cdot g(w^*(\price)) >  g'(\subfee + \max_{w \ge 0}(w - \price g(w)) \frac{g(w^*(\price))}{g'(w^*(\price))}.\]
We can write this as:
\[ \NumConsumers g'(w^*(\price)) >  g'(\subfee + \max_{w \ge 0}(w - \price g(w)).\]
Since $g'(x)$ is increasing in $x$, this means that $g'(w^*(\price)) \ge g'(\subfee + \max_{w \ge 0}(w - \price g(w))$. This coupled with $\NumConsumers > 0$ implies the desired statement.

Next, we show that the social welfare is strictly greater than the social planner's optimal without the intermediary when $\price \in (\LowerThreshold(\NumConsumers, \subfee, \TokFn), \UpperThreshold(\NumConsumers, \subfee,  \TokFn))$. Using Proposition \ref{prop:socialplanner}, the social planner's optimal welfare without the intermediary is equal to $\NumConsumers \cdot \max_{w \ge 0}(w - \price g(w))$. This means that it suffices to show that:
\[ \subfee \cdot \NumConsumers - \price g(\subfee + \max_{w} (w - \price g(w)) > 0.\]
This follows from the fact that $\subfee \cdot \NumConsumers - \price g(\subfee + \max_{w} (w - \price g(w)) \ge 0$ when
$\price \in (\LowerThreshold(\NumConsumers, \subfee, \TokFn), \UpperThreshold(\NumConsumers, \subfee,  \TokFn))$. To obtain a strict inequality, it suffices to show that the derivative of  $\price g(\subfee + \max_{w} (w - \price g(w))$ is positive at the boundaries. This is because by Lemma \ref{lemma:optima}, the optimum occurs when $\price g(\subfee + \max_{w} (w - \price g(w)) = \alpha < \alpha \NumConsumers$. 

Finally, we show that it is strictly below the social planner's optimal with the intermediary except at at most one bliss point. We know that the social planner's optimal is always at least as large as the market with intermediary. Thus, it suffices to show that the social welfare is \textit{not equal to} the social planner's optima 
$\max_{\w \ge 0}\left(\NumConsumers \cdot w - \price \cdot \TokFn(\w) \right)$ except at at most one bliss point. We split into cases depending on the value of $\price$ and depending on whether $\subfee + \max_{w} (w - \price g(w)) = \argmax_{w \ge 0}(C \cdot w - \price g(\w))$ holds, and show that the social welfare is not equal to the social optima unless $\price \in [\LowerThreshold(\NumConsumers, \subfee, \TokFn), \UpperThreshold(\NumConsumers, \subfee, \TokFn)]$ \text{ and } $\subfee + \max_{w} (w - \price g(w)) = \argmax_{w \ge 0}(C \cdot w - \price g(\w))$. 

\paragraph{Case 1: $\price < \LowerThreshold(\NumConsumers, \subfee, \TokFn)$ or $\price > \UpperThreshold(\NumConsumers, \subfee, \TokFn)$.} The social welfare is equal to $\NumConsumers \cdot \left(\max_{\w \ge 0}\left(w - \price \cdot \TokFn(\w) \right)\right)$. We see that:
\[\NumConsumers \cdot \left(\max_{\w \ge 0}\left(w - \price \cdot \TokFn(\w) \right)\right) = \max_{\w \ge 0}\left(\NumConsumers \cdot w - \NumConsumers \cdot \price \cdot \TokFn(\w) \right) \neq_{(A)} \max_{\w \ge 0}\left(\NumConsumers \cdot w - \price \cdot \TokFn(\w) \right), \]
where (A) holds because both optima occur at $w > 0$. 

\paragraph{Case 2: $\price \in [\LowerThreshold(\NumConsumers, \subfee, \TokFn), \UpperThreshold(\NumConsumers, \subfee, \TokFn)]$ \text{ and } $\subfee + \max_{w} (w - \price g(w)) \neq \argmax_{w \ge 0}(C \cdot w - \price g(\w))$.} In this case, we see that the social welfare is equal to:
\[C \cdot (\subfee + \max_{w} (w - \price g(w))) - \price g(\subfee + \max_{w} (w - \price g(w))) \]
and the social planner's optima is equal to $\text{max}_{w \ge 0}(C \cdot w - \price g(\w))$. Since the function $f(x) = C \cdot x - \price g(x)$ has a unique global optima, this means that \[C \cdot (\subfee + \max_{w} (w - \price g(w))) - \price g(\subfee + \max_{w} (w - \price g(w))) \neq \text{max}_{w \ge 0}(C \cdot w - \price g(\w)).  \]

\paragraph{Case 3: $\price \in [\LowerThreshold(\NumConsumers, \subfee, \TokFn), \UpperThreshold(\NumConsumers, \subfee, \TokFn)]$ \text{ and } $\subfee + \max_{w} (w - \price g(w)) = \argmax_{w \ge 0}(C \cdot w - \price g(\w))$.}  In this case, we see that the social welfare is equal to:
\[C \cdot (\subfee + \max_{w} (w - \price g(w))) - \price g(\subfee + \max_{w} (w - \price g(w))) \]
and the social planner's optima is equal to $\text{max}_{w \ge 0}(C \cdot w - \price g(\w))$. These expressions are equal because $\subfee + \max_{w} (w - \price g(w)) = \argmax_{w \ge 0}(C \cdot w - \price g(\w))$. 

Now, it suffices to show that there is at most one value of $\price$ such that $\price \in [\LowerThreshold(\NumConsumers, \subfee, \TokFn), \UpperThreshold(\NumConsumers, \subfee, \TokFn)]$ \text{ and } $\subfee + \max_{w} (w - \price g(w)) = \argmax_{w \ge 0}(C \cdot w - \price g(\w))$. It suffices to show that the derivative of \[\subfee + \max_{w} (w - \price g(w)) - \argmax_{w \ge 0}(C \cdot w - \price g(\w)) \]
with respect to $\price$ is always positive. First, let's simplify this expression. Note that $\argmax_{w \ge 0}(C \cdot w - \price g(\w))$ occurs when $g'(w) = \frac{C}{\price}$, which means that $\argmax_{w \ge 0}(\NumConsumers \cdot w - \price g(\w)) = w^*(\price / \NumConsumers)$. This means that the expression is equal to:
\[\subfee + \max_{w} (w - \price g(w)) - w^*\left( \frac{\price}{\NumConsumers} \right). \]
Now, taking a derivative and applying Lemma \ref{lemma:basicenvelope}, we obtain:
\[-g(w^*(\price)) - (w^*)'\left( \frac{\price}{\NumConsumers} \right) \cdot \frac{1}{\NumConsumers}. \]
To compute $(w^*)'\left( z \right)$, we use the fact that $g'(w^*\left( z \right)) = \frac{1}{z}$, so $w^*(z) = (g')^{-1}(1/z)$. By the inverse function theorem, this means that:
\[(w^*)'\left( z \right) = -\frac{1}{z^2 \cdot g''(w^*\left( z \right))} = -\frac{(g'(w^*(z)))^2}{ g''(w^*\left( z \right))}. \]
Plugging this in, we obtain:
\[-g(w^*(\price)) + \frac{(g'(w^*\left( \frac{\price}{\NumConsumers} \right)))^2}{ g''(w^*\left( \frac{\price}{\NumConsumers} \right))} \cdot \frac{1}{\NumConsumers},\]
This is positive if and only if:
\[ (g'(w^*\left( \frac{\price}{\NumConsumers} \right)))^2 \ge \NumConsumers g(w^*(\price)) \cdot g''(w^*\left( \frac{\price}{\NumConsumers} \right)). \]
By log-concavity, we know that: \[ (g'(w^*\left( \frac{\price}{\NumConsumers} \right)))^2 \ge g(w^*\left( \frac{\price}{\NumConsumers} \right)) \cdot g''(w^*\left( \frac{\price}{\NumConsumers} \right)). \]
Using log-concavity again and that $g'(w^*(z)) = \frac{1}{z}$, we know that:
\[\frac{\price}{\NumConsumers}  \cdot g(w^*\left( \frac{\price}{\NumConsumers} \right))= \frac{g(w^*\left( \frac{\price}{\NumConsumers} \right)) }{g'(w^*\left( \frac{\price}{\NumConsumers} \right))} > \frac{g(w^*(\price))}{g'(w^*(\price))} = \price \cdot g(w^*(\price)). \]
This means that:
\[ g(w^*\left( \frac{\price}{\NumConsumers} \right)) > \NumConsumers \cdot g(w^*(\price)). \]
Putting this all together, this implies that:
\[ (g'(w^*\left( \frac{\price}{\NumConsumers} \right)))^2 \ge g(w^*\left( \frac{\price}{\NumConsumers} \right)) \cdot g''(w^*\left( \frac{\price}{\NumConsumers} \right)) > \NumConsumers g(w^*(\price)) \cdot g''(w^*\left( \frac{\price}{\NumConsumers} \right)). \]
as desired.

\end{proof}

\subsection{Statement and Proof of Proposition \ref{prop:blisspoint}}

We state and prove Proposition \ref{prop:blisspoint}.
\begin{proposition}
\label{prop:blisspoint}
Consider the setup of Theorem \ref{thm:specialcasemiddlemanusageexposure}, and suppose that $\TokExp \ge 2$. Then, there exists a bliss point $\price$ where the social welfare of the market equals the social planner's optima.   
\end{proposition}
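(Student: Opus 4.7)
The plan is to leverage the structural characterization of the bliss point that is implicit in the proof of Theorem \ref{thm:socialwelfare}. There it is shown that for $\price \in [\LowerThreshold, \UpperThreshold]$, the equilibrium social welfare equals the social planner's optimum if and only if
\[
\subfee + \max_{w \ge 0}(w - \price g(w)) \;=\; \argmax_{w \ge 0}(\NumConsumers w - \price g(w)),
\]
and outside this range equality cannot hold. So it suffices to exhibit a $\price^\star \in [\LowerThreshold, \UpperThreshold]$ solving this equation.

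For $g(w) = w^\TokExp$ I would use the explicit formulas $w^\star(\price) := \argmax_w(w - \price g(w)) = (\TokExp \price)^{-1/(\TokExp-1)}$, $\max_w(w - \price g(w)) = \tfrac{\TokExp-1}{\TokExp} w^\star(\price)$, and $\argmax_w(\NumConsumers w - \price g(w)) = \NumConsumers^{1/(\TokExp-1)} w^\star(\price)$. Substituting reduces the bliss-point equation to the single linear condition
\[
\subfee \;=\; \left(\NumConsumers^{1/(\TokExp-1)} - \tfrac{\TokExp-1}{\TokExp}\right) w^\star(\price).
\]
Since $\NumConsumers > 1$ and $\TokExp \ge 2$ give $\NumConsumers^{1/(\TokExp-1)} > 1 > \tfrac{\TokExp-1}{\TokExp}$, the bracketed constant is strictly positive, so this pins down a unique target value $w^\star(\price^\star) = \subfee / (\NumConsumers^{1/(\TokExp-1)} - \tfrac{\TokExp-1}{\TokExp})$. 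By Lemma \ref{lemma:optimacomparativestatic}, $w^\star$ is a strictly decreasing bijection from $(0,\infty)$ onto $(0,\infty)$, so this uniquely determines $\price^\star > 0$.

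It remains to verify $\price^\star \in [\LowerThreshold, \UpperThreshold]$, which by Lemma \ref{lemma:equilibriumcharacterization} amounts to $\price^\star \cdot g(\subfee + \max_w(w - \price^\star g(w))) \le \subfee \NumConsumers$. At the candidate point the left-hand side equals $\price^\star \cdot (\NumConsumers^{1/(\TokExp-1)} w^\star(\price^\star))^{\TokExp}$, and using the first-order identity $\price^\star (w^\star(\price^\star))^{\TokExp} = w^\star(\price^\star)/\TokExp$ together with the formula for $w^\star(\price^\star)$, this algebraic inequality reduces after cancellation to $\NumConsumers^{1/(\TokExp-1)} \ge 1$, which is immediate from $\NumConsumers > 1$. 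I expect this verification step to be the main place where the argument could have failed; the rest is just substitution into the characterization from Theorem \ref{thm:socialwelfare}.
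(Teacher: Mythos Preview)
Your proof is correct and follows essentially the same route as the paper's: both invoke the characterization from Theorem \ref{thm:socialwelfare}, solve the equation $\subfee + \max_w(w - \price g(w)) = \argmax_w(\NumConsumers w - \price g(w))$ explicitly for $g(w)=w^\TokExp$, and then verify the candidate $\price^\star$ lies in $[\LowerThreshold,\UpperThreshold]$ by checking $\price^\star g(\subfee + \max_w(w-\price^\star g(w))) \le \subfee\NumConsumers$, which in both arguments reduces to $\NumConsumers \ge 1$. Your presentation is slightly more streamlined by keeping $w^\star(\price)$ as the working variable rather than expanding everything in powers of $\price$; incidentally, neither your argument nor the paper's actually uses the hypothesis $\TokExp \ge 2$ (the inequalities $\NumConsumers^{1/(\TokExp-1)}>1$ and $(\TokExp-1)/\TokExp<1$ only require $\TokExp>1$ and $\NumConsumers>1$).
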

\begin{proof}
Following the proof of Theorem \ref{thm:socialwelfare}, we know the social welfare of the market equals the social planner's optima if and only if $\price \in [\LowerThreshold(\NumConsumers, \subfee, \TokFn), \UpperThreshold(\NumConsumers, \subfee, \TokFn)]$ \text{ and } $\subfee + \max_{w} (w - \price g(w)) = \argmax_{w \ge 0}(C \cdot w - \price g(\w))$. Observe that:
\[\subfee + \max_{w} (w - \price g(w)) =  \alpha + \price^{-\frac{1}{\beta-1}} \left(\beta^{-\frac{1}{\beta-1}} - \beta^{-\frac{\beta}{\beta-1}} \right) \]
and:
\[\argmax_{w \ge 0}(C^{\frac{1}{\beta-1}} \cdot w - \price g(\w)) = \price^{-\frac{1}{\beta-1}} \NumConsumers^{\frac{1}{\beta-1}} \beta^{-\frac{1}{\beta-1}}. \]
These two expressions are equal when:
\[ \price^{-\frac{1}{\beta-1}} \left((C^{\frac{1}{\beta-1}}-1) \beta^{-\frac{1}{\beta-1}} + \beta^{-\frac{\beta}{\beta-1}} \right) = \subfee, \]
which can be written as:
\[ \price^{-\frac{1}{\beta-1}}  = \frac{\subfee}{\left((C^{\frac{1}{\beta-1}}-1) \beta^{-\frac{1}{\beta-1}} + \beta^{-\frac{\beta}{\beta-1}} \right)}, \]
To show this occurs for $\price \in [\LowerThreshold(\NumConsumers, \subfee, \TokFn), \UpperThreshold(\NumConsumers, \subfee, \TokFn)]$, we observe that:
\begin{align*}
  \price g(\subfee + \max_{w} (w - \price g(w))) &= \price \left( \subfee + \frac{\subfee \left(\beta^{-\frac{1}{\beta-1}} - \beta^{-\frac{\beta}{\beta-1}} \right)}{\left((C^{\frac{1}{\beta-1}}-1) \beta^{-\frac{1}{\beta-1}} + \beta^{-\frac{\beta}{\beta-1}} \right)} \right)^{\beta}  \\
  &= \left(\frac{\left((C^{\frac{1}{\beta-1}}-1) \beta^{-\frac{1}{\beta-1}} + \beta^{-\frac{\beta}{\beta-1}} \right)}{\subfee}\right)^{\beta-1} \cdot \left( \subfee + \frac{\subfee \left(\beta^{-\frac{1}{\beta-1}} - \beta^{-\frac{\beta}{\beta-1}} \right)}{\left((C^{\frac{1}{\beta-1}}-1) \beta^{-\frac{1}{\beta-1}} + \beta^{-\frac{\beta}{\beta-1}} \right)} \right)^{\beta} \\
  &= \subfee \cdot \frac{\left(\left((C^{\frac{1}{\beta-1}}-1) \beta^{-\frac{1}{\beta-1}} + \beta^{-\frac{\beta}{\beta-1}} \right) + \left(\beta^{-\frac{1}{\beta-1}} - \beta^{-\frac{\beta}{\beta-1}} \right) \right)^{\beta}}{\left((C^{\frac{1}{\beta-1}}-1) \beta^{-\frac{1}{\beta-1}} + \beta^{-\frac{\beta}{\beta-1}} \right)} \\
  &= \subfee \cdot \frac{\left(C^{\frac{1}{\beta-1}} \cdot \beta^{-\frac{1}{\beta-1}}  \right)^{\beta}}{\left((C^{\frac{1}{\beta-1}}-1) \beta^{-\frac{1}{\beta-1}} + \beta^{-\frac{\beta}{\beta-1}} \right)}.
\end{align*}
We want to show this is at most $\subfee \cdot \NumConsumers$. It suffices to show that:
\[\left(C^{\frac{1}{\beta-1}} \cdot \beta^{-\frac{1}{\beta-1}}  \right)^{\beta} \le \NumConsumers \left((C^{\frac{1}{\beta-1}}-1) \beta^{-\frac{1}{\beta-1}} + \beta^{-\frac{\beta}{\beta-1}} \right).  \]
This simplifies to 
\[C^{\frac{\beta}{\beta-1}} \cdot \beta^{-\frac{\beta}{\beta-1}} \le C^{\frac{\beta}{\beta-1}} \beta^{-\frac{1}{\beta-1}} - \NumConsumers \beta^{-\frac{1}{\beta-1}} + \NumConsumers \beta^{-\frac{\beta}{\beta-1}}, \]
which simplifies to:
\[ C \ge 1,\]
which holds.  
\end{proof}

\section{Proofs for Section \ref{sec:extensions}}\label{appendix:proofsextensions}

\subsection{Proof of Lemma \ref{lemma:price} and Theorem  \ref{thm:extensionmonopolist}}

For the purposes of this result, we slightly modify the tiebreaking rules: we assume that each consumer $j$ tiebreaks in favor of direct usage (i.e., $\action{j} = \Direct$) rather than in favor of the intermediary (i.e., $\action{j} = \Middleman$) when $\price < \UpperThreshold(\NumConsumers, \subfee, \TokExp)$, but tiebreaks in favor of the intermediary when $\price \ge \UpperThreshold(\NumConsumers, \subfee, \TokExp)$.

First, we prove the following modified version of Lemma \ref{lemma:middlemansubgame} for this modified tiebreaking rule.
\begin{lemma}
\label{lemma:middlemansubgamemonopolist}
Consider the setup of Section \ref{subsec:monopolist} and Theorem \ref{thm:extensionmonopolist}. Suppose that the supplier chooses price $\price_1$ and consider the subgame between the intermediary and consumers (Stages 2-3). Let $\price = \min(\costadditional + \price_1, \costmanual)$, and let $\LowerThreshold(\NumConsumers, \subfee, \TokExp)$ and $\UpperThreshold(\NumConsumers, \subfee, \TokExp)$ be defined as in Theorem \ref{thm:specialcasemiddlemanusageexposure}. Under the tiebreaking assumptions above, there exists a unique pure strategy where: 
\begin{itemize}
\item Suppose that $\price \le \LowerThreshold(\NumConsumers, \subfee, \TokExp)$ or $\price > \UpperThreshold(\NumConsumers, \subfee, \TokExp)$. Then $\wmiddleman = 0$. Moreover,  for all $j \in [\NumConsumers]$, it holds that $\action{j} = \Direct$, 
    \[\wuser{j} = \argmax_{w \ge 0} \left(\w -  \price(1 + \marg) \TokFn(w)\right).\] Moreover, if $\costmanual < \costadditional + \min_{i \in [\NumProviders]} \price_i$, then the consumer chooses $\providerchoice{j} = 0$. Otherwise, the consumer chooses $\providerchoice{j} = \text{argmin}_{i \in [\NumProviders]} \price_i$ (tie-breaking in favor of suppliers with a lower index). 
    \item Suppose that $\price \in (\LowerThreshold(\NumConsumers, \subfee, \TokExp), \UpperThreshold(\NumConsumers, \subfee, \TokExp)]$.  Then 
    \[\wmiddleman = \wuser{j} = \subfee + \max_{w \ge 0} \left(\w - \price (1+\marg) \TokFn(w)\right).\] Moreover, if $\costmanual < \costadditional + \min_{i \in [\NumProviders]} \price_i$, then the intermediary chooses $\providerchoicemiddleman = 0$; otherwise, the intermediary chooses $\providerchoicemiddleman = \text{argmin}_{i \in [\NumProviders]} \price_i$ (tie-breaking in favor of suppliers with a lower index). Finally, it holds that $\action{j} = \Middleman$ and $\wuser{j} = \wmiddleman$ for all $j \in [\NumConsumers]$. 
\end{itemize}
\end{lemma}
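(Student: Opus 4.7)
The plan is to adapt the backward-induction argument of Lemma \ref{lemma:middlemansubgame} to the monopolist-supplier setting with the modified tiebreaking. First, in Stage 3 I would characterize each consumer's best response. Given the supplier price $\price_1$ and the intermediary's choice $\wmiddleman$, a consumer opting for Direct picks the cheaper of the technology (unit cost $\price_1 + \costadditional$) and manual production (unit cost $\costmanual$), giving effective unit cost $\price = \min(\price_1 + \costadditional, \costmanual)$ and Direct utility $\max_{w \ge 0}(w - \price(1+\marg) g(w))$, attained at a unique maximizer by Lemma \ref{lemma:uniqueoptima}. Hence the consumer weakly prefers Middleman iff $\wmiddleman \ge \alpha + \max_{w \ge 0}(w - \price(1+\marg) g(w))$, with equality resolved by the stated rule: toward Direct when $\price < \UpperThreshold(\NumConsumers, \subfee, \TokExp)$ and toward Middleman when $\price \ge \UpperThreshold(\NumConsumers, \subfee, \TokExp)$.

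Next, in Stage 2 I would analyze the intermediary's choice. Attracting all $C$ consumers yields revenue $\alpha C$ and cost $\price(1+\marg) g(\wmiddleman)$; the cost-minimizing quality that still (weakly) attracts consumers is $\wmiddleman^{\star} := \alpha + \max_{w \ge 0}(w - \price(1+\marg) g(w))$, while producing $\wmiddleman = 0$ yields zero utility. Comparing the intermediation utility $\alpha C - \price(1+\marg) g(\wmiddleman^{\star})$ to zero, I would invoke Lemma \ref{lemma:optima} applied to $\price \mapsto \price \cdot g\bigl(\alpha + \max_{w}(w - \price g(w))\bigr)$: this map is U-shaped, and Theorem \ref{thm:specialcasemiddlemanusageexposure} identifies $\LowerThreshold, \UpperThreshold$ as exactly the two prices at which it equals $\alpha C$. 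Consequently intermediation yields non-positive surplus exactly outside $(\LowerThreshold, \UpperThreshold)$, strictly positive surplus strictly inside, and exactly zero surplus at the thresholds.

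Combining these two steps with the tiebreaking produces the three-case structure of the lemma. When $\price \le \LowerThreshold$ or $\price > \UpperThreshold$ the intermediary cannot profitably attract consumers (at $\price = \LowerThreshold$ the unfavorable consumer tiebreaking prevents even a break-even intermediation), so they set $\wmiddleman = 0$ and each consumer falls back on Direct, producing $\argmax_{w \ge 0}(w - \price(1+\marg) g(w))$. When $\price \in (\LowerThreshold, \UpperThreshold]$ intermediation is weakly profitable; at $\price = \UpperThreshold$ the favorable consumer tiebreaking secures the equilibrium at $\wmiddleman = \wmiddleman^{\star}$, while the intermediary's own preference for higher quality rules out $\wmiddleman = 0$ whenever the revenue and cost from $\wmiddleman^{\star}$ coincide. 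Stage-1 choices of technology vs.\ manual production, and (for Lemma \ref{lemma:middlemansubgame}-style analysis) of a particular supplier index, are then dictated by the cost comparison together with the lower-index tiebreaking convention.

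The main obstacle is the asymmetric behavior at the two boundaries $\price = \LowerThreshold$ and $\price = \UpperThreshold$, where the intermediary is indifferent between $\wmiddleman = 0$ and $\wmiddleman^{\star}$ and where the equilibrium is therefore pinned down entirely by the modified consumer tiebreaking rule. A careful argument must verify that at $\price = \LowerThreshold$ the consumer's tiebreaking against Middleman makes any positive-quality production strictly loss-making for the intermediary (forcing $\wmiddleman = 0$), whereas at $\price = \UpperThreshold$ the tiebreaking in favor of Middleman keeps intermediation at break-even and, combined with the intermediary's higher-quality preference, selects $\wmiddleman = \wmiddleman^{\star}$ uniquely. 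Uniqueness of the whole equilibrium then follows from the uniqueness of each player's optimum subject to these tiebreaking rules.
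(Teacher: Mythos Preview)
Your approach is essentially the same as the paper's: adapt the backward-induction argument of Lemma~\ref{lemma:middlemansubgame} and then overlay the U-shape/threshold analysis to partition $\price$ into the three regimes (the paper cites Lemma~\ref{lemma:twosolutions} and Theorem~\ref{thm:middlemanusageexposure}; you cite Lemma~\ref{lemma:optima} and Theorem~\ref{thm:specialcasemiddlemanusageexposure}, which is equivalent here since $g(w)=w^\beta$). Your explicit handling of the asymmetric boundary behavior under the modified tiebreaking is exactly the extra work the paper's one-line proof leaves implicit.
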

\begin{proof}
    The proof follows similarly to the proof of Lemma \ref{lemma:middlemansubgame}, but additionally uses the analysis of the condition $\price g(\subfee + \max_{w \ge 0}(w-\price g(w)) - \subfee \NumConsumers$ from Lemma \ref{lemma:twosolutions} and Theorem \ref{thm:middlemanusageexposure}.
\end{proof}

We prove an analogue of Lemma \ref{lemma:equilibriumcharacterization} for this setting, which strengthens Lemma \ref{lemma:price} (this result directly implies Lemma \ref{lemma:price}).  
\begin{lemma}
\label{lemma:equilibriumcharacterizationmonopolist}
Consider the setup of Section \ref{subsec:monopolist} and Theorem \ref{thm:extensionmonopolist}. 
Under the tiebreaking assumptions described above, there exists a unique pure strategy equilibrium which takes the following form: the supplier chooses the price $\price_1$ as specified in Lemma \ref{lemma:price}, and the intermediary and consumers choose actions according to the subgame equilibrium constructed in Lemma \ref{lemma:middlemansubgamemonopolist}. 
\end{lemma}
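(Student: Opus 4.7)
The plan is to reduce Lemma~\ref{lemma:equilibriumcharacterizationmonopolist} to verifying the pricing formula in Lemma~\ref{lemma:price}: once the supplier's optimal price is pinned down, the uniqueness and form of the downstream play follow immediately from Lemma~\ref{lemma:middlemansubgamemonopolist}. So the real work is computing the monopolist's best response, taking the subgame outcome as dictated by that earlier lemma.

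Letting $\price = \price_1 + \costadditional$, Lemma~\ref{lemma:middlemansubgamemonopolist} splits the supplier's choice set into two regimes with qualitatively different profit expressions. In the \emph{intermediation regime} $\price \in (\LowerThreshold, \UpperThreshold]$, only the intermediary buys, so the supplier's profit is
\[
\Pi_M(\price_1) = (\price_1 - \ic) \cdot \TokFn\!\left(\subfee + \max_w (w - \price \TokFn(w))\right).
\]
In the \emph{disintermediation regime} $\price \le \LowerThreshold$ or $\price > \UpperThreshold$, all $\NumConsumers$ consumers buy directly, so
\[
\Pi_D(\price_1) = \NumConsumers(\price_1 - \ic) \cdot \TokFn(w^*(\price)),
\]
where $w^*(\price) = \argmax_w (w - \price \TokFn(w))$. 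I would then optimize within each regime using the explicit form $\TokFn(w) = w^\TokExp$: a first-order condition shows that the unconstrained maximizer of $\Pi_D$ is $\price_1 = \TokExp \cdot \ic$ (consistent with Lemma~\ref{lemma:maximumpriceconsumer}), while $\Pi_M$ is strictly increasing in $\price_1$ on its permitted interval (raising the price increases both the markup and the intermediary's purchase quantity, the latter via Lemma~\ref{lemma:optima}), so its optimum is at the boundary $\price = \UpperThreshold$.

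With the in-regime optima in hand, the proof reduces to comparing profits across regimes as a function of $\ic$. For small $\ic$, the unconstrained disintermediation optimum $\price_1 = \TokExp \ic$ is feasible and dominates. Once $\ic$ exceeds $\TokExp^{-1}\LowerThreshold$, this optimum falls inside the intermediation regime, so the supplier must choose between holding $\price = \LowerThreshold$ (disintermediation at the boundary) and jumping to $\price = \UpperThreshold$ (intermediation). A direct comparison of the closed-form profits using the power-cost expressions for $w^*$ and $\max_w (w - \price \TokFn(w))$ yields a single crossing at $\ic = \LowerThresholdMon$, below which $\LowerThreshold$ wins and above which $\UpperThreshold$ wins; an analogous comparison at high $\ic$ between $\price = \UpperThreshold$ and the upper-branch disintermediation optimum $\price_1 = \TokExp \ic$ produces the second threshold $\UpperThresholdMon$. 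Uniqueness of the supplier's best response then follows from strict quasi-concavity of $\Pi_D$ and strict monotonicity of $\Pi_M$ in $\price_1$, combined with the modified tiebreaking rule to resolve the two transition values of $\ic$.

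The principal obstacle is establishing the single-crossing comparisons: one must verify that each pair of profit curves under consideration crosses exactly once (rather than tangentially or multiply) so that $\LowerThresholdMon$ and $\UpperThresholdMon$ are well-defined and satisfy the bounds $\TokExp^{-1}\LowerThreshold < \LowerThresholdMon < \LowerThreshold$ and $\TokExp^{-1}\UpperThreshold < \UpperThresholdMon < \UpperThreshold$ claimed in Theorem~\ref{thm:extensionmonopolist}. These are concrete but involved algebraic checks using the explicit form of $\LowerThreshold, \UpperThreshold$ from Theorem~\ref{thm:specialcasemiddlemanusageexposure}; the hypothesis $\NumConsumers > \TokExp/(\TokExp-1)$ is likely invoked here to guarantee that the intermediation-regime profit is large enough to produce the crossings in the stated order.
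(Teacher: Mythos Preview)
Your overall strategy matches the paper's: reduce to determining the monopolist's optimal price by optimizing separately within the disintermediation and intermediation regimes (the paper packages these as Lemmas~\ref{lemma:maximumpriceconsumer} and~\ref{lemma:maximumpricemiddleman}), then compare across regimes by case analysis in $\ic$. The paper's case split is finer than yours---five cases rather than three, with Cases~3 and~4 distinguished by whether $(\TokExp\ic)\,\TokFn(w^*(\TokExp\ic))$ lies below or above $\subfee$---but the structure is the same, and you correctly anticipate that $\NumConsumers > \TokExp/(\TokExp-1)$ is invoked precisely in the crossing argument at the lower threshold (the paper's Case~4).

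There is, however, a genuine gap in your argument for why $\Pi_M$ is maximized at $\price = \UpperThreshold$. You claim that raising $\price_1$ ``increases both the markup and the intermediary's purchase quantity.'' The first is true, but the second is false: the intermediary's purchase $\TokFn\!\bigl(\subfee + \max_w(w - \price\TokFn(w))\bigr)$ is strictly \emph{decreasing} in $\price$, since $\max_w(w - \price\TokFn(w))$ is decreasing and $\TokFn$ is increasing. Lemma~\ref{lemma:optima}, which you cite, concerns the product $\price \cdot \TokFn(\cdot)$ and shows it is U-shaped on $(\LowerThreshold,\UpperThreshold]$, not monotone; it does not support monotonicity of $\Pi_M$. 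The paper instead proves the maximum is at $\UpperThreshold$ (Lemma~\ref{lemma:maximumpricemiddleman}) via the factorization
\[
\Pi_M(\price') = \Bigl(1 - \tfrac{\ic}{\price'}\Bigr)\cdot\Bigl[\price'\,\TokFn\!\bigl(\subfee + \max_w(w - \price'\TokFn(w))\bigr)\Bigr],
\]
bounding each factor separately: the first is maximized at $\price' = \UpperThreshold$, and the second is at most $\subfee\NumConsumers$ with equality at $\price' = \UpperThreshold$. Since your uniqueness argument for the supplier's best response also rests on the (unestablished) strict monotonicity of $\Pi_M$, you should replace that step with this factorization argument.
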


To prove Lemma \ref{lemma:equilibriumcharacterizationmonopolist}, a key technical challenge is that the supplier can influence whether intermediation occurs in terms of how it sets its prices. To capture these effects, we separately analyze the optimal price for the suppler in each regime in the following intermediate lemmas. 

First, we bound the optimal price for the supplier in the range which induces disintermediation.
\begin{lemma}
\label{lemma:maximumpriceconsumer}
Consider the setup of Theorem \ref{thm:extensionmonopolist}. Let $\price^M$
be the value of $\price'$ that attains the maximum 
\[\max_{\price' \ge \price, \price \in [0, \LowerThreshold(\NumConsumers, \subfee, \TokExp)] \cup [\UpperThreshold(\NumConsumers, \subfee, \TokExp), \infty)} \left(\NumConsumers \cdot (\price' - \price) \cdot g\left(\argmax_{\w \ge 0} (\w - \price' g(\w)) \right). \right)\]
Then, $\price^M = \TokExp \cdot \price$ if $\TokExp \cdot \price \le \LowerThreshold(\NumConsumers, \subfee, \TokExp)$ or $\TokExp \cdot \price \ge \UpperThreshold(\NumConsumers, \subfee, \TokExp)$.  Otherwise, $\price^M \in \left\{\LowerThreshold(\NumConsumers, \subfee, \TokExp), \UpperThreshold(\NumConsumers, \subfee, \TokExp) \right\}$, and the optimal value is upper bounded by the value of $\NumConsumers \cdot (\price' - \price) \cdot g\left(\argmax_{\w \ge 0} (\w - \price' g(\w)) \right)$ when $\price' = \TokExp \cdot \price$. 
\end{lemma}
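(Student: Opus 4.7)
The plan is to treat the objective
\[
f(\price') := \NumConsumers \cdot (\price' - \price) \cdot g(w^*(\price')),
\qquad w^*(\price') := \argmax_{\w \ge 0}(\w - \price' g(\w)),
\]
as a function of $\price' \in [\price, \infty)$, characterize its unique interior maximizer on the whole ray, and then intersect with the feasible set $[0,\LowerThreshold(\NumConsumers, \subfee, \TokExp)] \cup [\UpperThreshold(\NumConsumers, \subfee, \TokExp), \infty)$ to pin down $\price^M$.

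First, I would exploit the closed form $g(\w) = \w^{\TokExp}$. The first-order condition for $w^*(\price')$ gives $w^*(\price') = (\TokExp \price')^{-1/(\TokExp-1)}$ and hence $g(w^*(\price')) = (\TokExp \price')^{-\TokExp/(\TokExp-1)}$. Substituting,
\[
f(\price') \;=\; \NumConsumers \cdot \TokExp^{-\TokExp/(\TokExp-1)} \cdot (\price' - \price) \cdot (\price')^{-\TokExp/(\TokExp-1)} .
\]
Differentiating with $a := \TokExp/(\TokExp-1) > 1$, the sign of $f'(\price')$ equals the sign of $a\price - (a-1)\price'$, which vanishes only at $\price' = a\price/(a-1) = \TokExp \cdot \price$, is positive on $(\price, \TokExp\price)$, and negative on $(\TokExp \price, \infty)$. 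Since $f(\price) = 0$ and $f(\price') \to 0$ as $\price' \to \infty$, the function $f$ is strictly unimodal on $[\price,\infty)$ with its unique global maximum at $\TokExp \cdot \price$.

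Next, I would intersect this unimodal function with the feasible set. If $\TokExp \price$ lies in the feasible set, i.e.\ if $\TokExp\price \le \LowerThreshold(\NumConsumers, \subfee, \TokExp)$ or $\TokExp \price \ge \UpperThreshold(\NumConsumers, \subfee, \TokExp)$, then $\price^M = \TokExp \cdot \price$ by the global-maximum property. Otherwise $\TokExp \price \in (\LowerThreshold(\NumConsumers, \subfee, \TokExp), \UpperThreshold(\NumConsumers, \subfee, \TokExp))$ and the feasible set misses the interior peak; on the piece $[\price, \LowerThreshold(\NumConsumers, \subfee, \TokExp)]$ we have $\LowerThreshold(\NumConsumers, \subfee, \TokExp) < \TokExp \price$, so $f$ is strictly increasing there and the piecewise maximum is $f(\LowerThreshold(\NumConsumers, \subfee, \TokExp))$, while on $[\UpperThreshold(\NumConsumers, \subfee, \TokExp), \infty)$ we have $\UpperThreshold(\NumConsumers, \subfee, \TokExp) > \TokExp \price$, so $f$ is strictly decreasing there and the piecewise maximum is $f(\UpperThreshold(\NumConsumers, \subfee, \TokExp))$. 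The constrained maximizer is therefore one of these two endpoints, and its value is at most $f(\TokExp \price)$ by global maximality.

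The main obstacle is essentially bookkeeping: extracting unimodality cleanly from the linear-in-$\price'$ sign expression for $f'$ and handling the degenerate edge case $\price > \LowerThreshold(\NumConsumers, \subfee, \TokExp)$, in which the first piece of the feasible set is empty. In that situation only $[\UpperThreshold(\NumConsumers, \subfee, \TokExp), \infty)$ matters and the same unimodality argument yields $\price^M \in \{\TokExp \price, \UpperThreshold(\NumConsumers, \subfee, \TokExp)\}$, which is consistent with the stated claim.
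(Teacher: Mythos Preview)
Your proposal is correct and follows essentially the same approach as the paper: both substitute the closed form $g(w^*(\price'))=(\TokExp\price')^{-\TokExp/(\TokExp-1)}$, take the first-order condition to locate the unconstrained maximizer at $\TokExp\price$, and then argue by monotonicity on each side of this peak that when $\TokExp\price$ is excluded the constrained maximum must occur at $\LowerThreshold$ or $\UpperThreshold$ and is bounded above by the unconstrained value. Your write-up is in fact a bit cleaner than the paper's, since you make the unimodality of $f$ explicit and handle the edge case $\price>\LowerThreshold$ separately.
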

\begin{proof}
Throughout the proof, let $w^*(\price) = \argmax_{w}(w - \price g(w))$. Using the structure of $g(w) = w^{\beta}$, we know that:
\[w^*(\price') = (\price')^{-\frac{1}{\beta-1}} \beta^{-\frac{1}{\beta-1}}. \]
This means that the objective can be simplified to:
\[ C(\price' - \price) (\price')^{-\frac{\beta}{\beta-1}} \beta^{-\frac{\beta}{\beta-1}}. \]

We split into two cases: (1) $\TokExp \cdot \price \le \LowerThreshold(\NumConsumers, \subfee, \TokExp)$ or $\TokExp \cdot \price \ge \UpperThreshold(\NumConsumers, \subfee, \TokExp)$, and (2) $\TokExp \cdot \price \in  (\LowerThreshold(\NumConsumers, \subfee, \TokExp), \UpperThreshold(\NumConsumers, \subfee, \TokExp))$.

\paragraph{Case 1: $\TokExp \cdot \price \le \LowerThreshold(\NumConsumers, \subfee, \TokExp)$ or $\TokExp \cdot \price \ge \UpperThreshold(\NumConsumers, \subfee, \TokExp)$.} We take a first-order condition to obtain that $\price^M = \TokExp \cdot \price$ as desired. 

\paragraph{Case 2: $\TokExp \cdot \price \in  (\LowerThreshold(\NumConsumers, \subfee, \TokExp), \UpperThreshold(\NumConsumers, \subfee, \TokExp))$.} In this case, the function is increasing on $[\LowerThreshold(\NumConsumers, \subfee, \TokExp), \price)$ and decreasing on $(\price, \UpperThreshold(\NumConsumers, \subfee, \TokExp]$. This means that the optima for the constrained domain is attained at $\price^M \in \left\{\LowerThreshold(\NumConsumers, \subfee, \TokExp), \UpperThreshold(\NumConsumers, \subfee, \TokExp) \right\}$, and the optimal value is upper bounded by the optimal for the unconstrained domain, which is equal to $\NumConsumers \cdot (\price' - \price) \cdot g\left(\argmax_{\w \ge 0} (\w - \price' g(\w)) \right)$ when $\price' = \TokExp \cdot \price$. 

\end{proof}

Next,we bound the optimal price in the range which induces intermediation.
\begin{lemma}
\label{lemma:maximumpricemiddleman}
Let $g(\w) = w^{\TokExp}$ for $\TokExp > 1$. Let $\LowerThreshold(\NumConsumers, \subfee, \TokExp)$  and $\UpperThreshold(\NumConsumers, \subfee, \TokExp)$ be defined as in Theorem \ref{thm:specialcasemiddlemanusageexposure}, and let $\price \le \UpperThreshold(\NumConsumers, \subfee, \TokExp)$. Then, the maximum 
\[\max_{\price' \in [\max(\ic, \LowerThreshold(\NumConsumers, \subfee, \TokExp)), \UpperThreshold(\NumConsumers, \subfee, \TokExp)]} \left((\price' - \price) \cdot g\left( \subfee + \max_{w \ge 0} (w - \price' g(w)) \right) \right)\]
is uniquely attained at $\price = \UpperThreshold(\NumConsumers, \subfee, \TokExp)$.
\end{lemma}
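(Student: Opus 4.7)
The plan is to factor the supplier's profit as a product of two quantities whose upper bounds on $[\max(\nu, T_L), T_U]$ are simultaneously tight only at $\nu' = T_U$. Writing $W(\nu') := \alpha + \max_{w \ge 0}(w - \nu' g(w))$ and $R(\nu') := \nu' g(W(\nu'))$, the key identity is
$$(\nu' - \nu)\, g(W(\nu')) \;=\; R(\nu') \cdot \left(1 - \frac{\nu}{\nu'}\right)$$
for $\nu' > 0$. This decomposition reduces the lemma to bounding each factor separately on the relevant interval.

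First I would invoke Lemma \ref{lemma:optima} and Lemma \ref{lemma:twosolutions}, which apply since $g(w) = w^\beta$ satisfies the hypotheses of Theorem \ref{thm:sufficientcondition} (by Proposition \ref{prop:costsstronger}). These yield that $R$ is U-shaped on $(0, \infty)$, that $R(T_L) = R(T_U) = \alpha C$, and that $R(\nu') < \alpha C$ strictly for all $\nu' \in (T_L, T_U)$. Second, I would use the elementary fact that $1 - \nu/\nu'$ is strictly increasing in $\nu'$ and nonnegative on $[\max(\nu, T_L), T_U]$, so $1 - \nu/\nu' \le 1 - \nu/T_U$ with strict inequality whenever $\nu' < T_U$ (using $\nu > 0$, the supplier's positive marginal cost). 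Multiplying gives $R(\nu')(1 - \nu/\nu') \le \alpha C (1 - \nu/T_U)$, and the right-hand side is precisely $(T_U-\nu)\, g(W(T_U))$ since $R(T_U) = \alpha C$.

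The last step will be to argue strict inequality, i.e., uniqueness of the maximizer, via a case split: for $\nu' \in (T_L, T_U)$ both factor bounds are strict so the product is strictly less; for $\nu' = T_L$ the first bound is tight but the second is strict since $T_L < T_U$ and $\nu > 0$; and for $\nu' = \nu$ (only relevant when $\nu \ge T_L$) the margin factor vanishes while the objective at $T_U$ is positive. Together these cases rule out any alternative maximizer on $[\max(\nu, T_L), T_U)$.

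I do not anticipate a serious obstacle, since the U-shape of $R$ and the fact that $R = \alpha C$ has exactly two solutions $T_L < T_U$ are already packaged in Lemma \ref{lemma:optima} and Lemma \ref{lemma:twosolutions}. The main delicate point is the bookkeeping across boundary cases: in particular, the positivity of $\nu$ is what breaks the symmetry between $T_L$ and $T_U$ (otherwise $R(T_L) = R(T_U)$ would give equal profits at the two endpoints and the maximizer would not be unique).
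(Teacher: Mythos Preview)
Your proposal is correct and follows essentially the same approach as the paper: both factor the objective as $(1-\nu/\nu')\cdot R(\nu')$ with $R(\nu')=\nu' g(\alpha+\max_w(w-\nu' g(w)))$, then bound each factor separately using that $R$ is U-shaped with $R(T_L)=R(T_U)=\alpha C$ and that $1-\nu/\nu'$ is increasing. Your treatment is in fact slightly more careful than the paper's in spelling out the uniqueness case split and in flagging the role of $\nu>0$ at the endpoint $\nu'=T_L$.
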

\begin{proof}
We know that 
\[\max_{\price' \in [\max(\price, \LowerThreshold(\NumConsumers, \subfee, \TokExp)), \UpperThreshold(\NumConsumers, \subfee, \TokExp)]} \left((\price' - \price) \cdot g\left( \subfee + \max_{w \ge 0} (w - \price' g(w)) \right) \right)\] 
is equal to:
\[\max_{\price' \in [\max(\price, \LowerThreshold(\NumConsumers, \subfee, \TokExp)), \UpperThreshold(\NumConsumers, \subfee, \TokExp)]} \left(\left(1 - \frac{\price}{\price'}\right) \cdot \price' \cdot g\left( \subfee + \max_{w \ge 0} (w - \price' g(w)) \right) \right).\]
This is at most:
\[\left(\max_{\price \in [\max(\price, \LowerThreshold(\NumConsumers, \subfee, \TokExp)), \UpperThreshold(\NumConsumers, \subfee, \TokExp)]} \left(1 - \frac{\price}{\price'}\right)\right) \cdot \max_{\price' \in [\max(\price, \LowerThreshold(\NumConsumers, \subfee, \TokExp)), \UpperThreshold(\NumConsumers, \subfee, \TokExp)]} \left(\price' \cdot g\left( \subfee + \max_{w \ge 0} (w - \price' g(w)) \right) \right). \]
This is at most:
\[  \left(1 - \frac{\price}{\UpperThreshold(\NumConsumers, \subfee, \TokExp)} \right) \cdot \subfee \cdot \NumConsumers.\]
This value is uniquely attained at $\price = \UpperThreshold(\NumConsumers, \subfee, \TokExp)$ as desired. 
    
\end{proof}

Now, we use Lemma \ref{lemma:maximumpriceconsumer} and Lemma \ref{lemma:maximumpricemiddleman} to prove Lemma \ref{lemma:equilibriumcharacterizationmonopolist}.
\begin{proof}[Proof of Lemma \ref{lemma:equilibriumcharacterizationmonopolist}]
Let $\price = \min(\ic + \costadditional, \costmanual)$.
Let $w^*(\price')$ be the unique solution to $\max_{w \ge 0}(w - \price' \cdot g(w))$ (Lemma \ref{lemma:uniqueoptima}). 

The supplier can choose to induce disintermediation or intermediation, which affects how they set their optimal price. 
We use Lemma \ref{lemma:maximumpriceconsumer} and Lemma \ref{lemma:maximumpricemiddleman} to narrow down the set of possible optimal prices in each regime. When disintermediation is induced, the supplier earns profit  $(\price' - \price) \cdot \NumConsumers \cdot g(w^*(\price'))$ from choosing price $\price'$. By Lemma \ref{lemma:maximumpriceconsumer}, the optimal price in the disintermediation range is $\price \cdot \TokExp$ if that price induces disintermediation, or otherwise is in the set $\left\{\LowerThreshold(\NumConsumers, \subfee, \TokExp), \lim_{\epsilon \rightarrow^+ 0} (\UpperThreshold(\NumConsumers, \subfee, \TokExp) + \epsilon) \right\}$, where we use $\lim_{\epsilon \rightarrow^+ 0} (\UpperThreshold(\NumConsumers, \subfee, \TokExp) + \epsilon)$ to denote that the optimum in that range doesn't exist and the supplier would want to set their price arbitrarily close to $\UpperThreshold(\NumConsumers, \subfee, \TokExp)$. At these prices, the supplier's profit is  upper bounded by $(\nu' - \price) \cdot \NumConsumers \cdot g(w^*(\price'))$ where $\nu' = \price \cdot \TokExp$. The realized profit is:
\[\NumConsumers  \left(1 - \frac{\price}{\price'} \right) \cdot \price' g(w^*(\price')) = \NumConsumers \left(1 - \frac{\price}{\price'} \right) \cdot (\price')^{-\frac{1}{\beta-1}} \cdot \beta^{-\frac{\beta}{\beta-1}}. \]
When intermediation is induced, the profit is $(\price' - \price) \cdot g(\alpha + \max_{w \ge 0}(w - \price' g(w)))$. By Lemma \ref{lemma:maximumpricemiddleman}, the optimal price in the intermediation range is $\price = \UpperThreshold(\NumConsumers, \subfee, \TokExp)$. The obtained profit is:
\[\left(1 - \frac{\price}{\price'} \right) \alpha \NumConsumers.   \]

We claim that prices of the form $\UpperThreshold(\NumConsumers, \subfee, \TokExp) + \epsilon$ for sufficiently small $\epsilon$ are dominated by $\UpperThreshold(\NumConsumers, \subfee, \TokExp)$. Based on the above analysis, it suffices to show that $\price' g(w^*(\price')) < \subfee$. This holds by Lemma \ref{lemma:derivative} and Lemma \ref{lemma:optima}.   

We split into several cases depending on the value of $\price \cdot \TokExp$ and $\price$. 

\paragraph{Case 1: $\price \ge \UpperThreshold(\NumConsumers, \subfee, \TokExp)$.} In this case, we know that the supplier will set a price of $\price > \ic \ge \UpperThreshold(\NumConsumers, \subfee, \TokExp)$ to earn positive profit. This means that disintermediation occurs regardless of the price that they set. They thus set the price to $\price_1 = \price \cdot \TokExp$ to maximize their profit. 

\paragraph{Case 2: $\price \cdot \TokExp > \UpperThreshold(\NumConsumers, \subfee, \TokExp)$ and $\price < \UpperThreshold(\NumConsumers, \subfee, \TokExp)$.} By the above analysis, we know that the supplier will either set the price to be $\price \cdot \TokExp$ or to be $\UpperThreshold(\NumConsumers, \subfee, \TokExp)$. We show that there is a threshold value $\UpperThresholdMon(\NumConsumers, \subfee, \TokExp)$ in this range such that the supplier sets the price to be  $\UpperThreshold(\NumConsumers, \subfee, \TokExp)$ if $\price \le \UpperThresholdMon(\NumConsumers, \subfee, \TokExp)$ and sets the price to be $\price \cdot \TokExp$ otherwise. Note that the realized profit at $\price \cdot \TokExp$ is 
\[\NumConsumers \left(1 - \frac{1}{\beta} \right) (\price')^{-\frac{1}{\beta-1}} \beta^{-\frac{\beta}{\beta-1}} = \NumConsumers \left(1 - \frac{1}{\beta} \right) \price^{-\frac{1}{\beta-1}} \beta^{-\frac{\beta+1}{\beta-1}} \]
and at $\UpperThreshold(\NumConsumers, \subfee, \TokExp)$ is  
\[\NumConsumers \left(1 - \frac{\price}{\UpperThreshold(\NumConsumers, \subfee, \TokExp)} \right) \subfee.  \]
Let's consider the ratio: 
\[\frac{\NumConsumers \left(1 - \frac{1}{\beta} \right) \price^{-\frac{1}{\beta-1}} \beta^{-\frac{\beta+1}{\beta-1}}}{ \NumConsumers \left(1 - \frac{\price}{\UpperThreshold(\NumConsumers, \subfee, \TokExp)} \right) \subfee } = \frac{\left(1 - \frac{1}{\beta} \right) \beta^{-\frac{\beta+1}{\beta-1}}}{\subfee } \frac{\price^{-\frac{1}{\beta-1}} }{\left(1 - \frac{\price}{\UpperThreshold(\NumConsumers, \subfee, \TokExp)} \right) }.\]
The derivative is:
\[\frac{\left(1 - \frac{1}{\beta} \right) \beta^{-\frac{\beta+1}{\beta-1}}}{\subfee }  \cdot \frac{\UpperThreshold(\NumConsumers, \subfee, \TokExp) \price^{-\frac{\TokExp}{\TokExp-1}} \left( \TokExp \cdot \price - \UpperThreshold(\NumConsumers, \subfee, \TokExp)\right)}{(\TokExp - 1)(\UpperThreshold(\NumConsumers, \subfee, \TokExp) - \price)^2} ,\]
which is positive in this regime, so the ratio is increasing in this regime. The ratio approaches $\infty$ as $\price \rightarrow \UpperThreshold(\NumConsumers, \subfee, \TokExp)$. As $\price \cdot \TokExp \rightarrow \UpperThreshold(\NumConsumers, \subfee, \TokExp)$, the price 
$\price \cdot \TokExp$ is of the form $\UpperThreshold(\NumConsumers, \subfee, \TokExp) + \epsilon$ which we already proved to be dominated by $\UpperThreshold(\NumConsumers, \subfee, \TokExp)$, meaning that the ratio is less than 1. This proves the desired statement.

\paragraph{Case 3: $\LowerThreshold(\NumConsumers, \subfee, \TokExp) < \price \cdot \TokExp < \UpperThreshold(\NumConsumers, \subfee, \TokExp)$ and $(\price \cdot \TokExp) \cdot g(w^*(\price \cdot \TokExp)) < \alpha$.} We know that the supplier will either set the price to be $\LowerThreshold(\NumConsumers, \subfee, \TokExp)$ or $\UpperThreshold(\NumConsumers, \subfee, \TokExp)$. The realized profit at $\LowerThreshold(\NumConsumers, \subfee, \TokExp)$ is upper bounded by:
\[\NumConsumers (\TokExp \cdot \price - \price) \cdot g(w^*(\TokExp \cdot \price))  = \NumConsumers \left(1 - \frac{\price}{\TokExp \cdot \price} \right) \cdot (\TokExp \cdot \price) g(w^*(\TokExp \cdot \price)) < \NumConsumers \left(1 - \frac{\price}{\UpperThreshold(\NumConsumers, \subfee, \TokExp)} \right) \cdot \subfee,  \]
which is the profit at $\UpperThreshold(\NumConsumers, \subfee, \TokExp)$. This means that $\price_1 = \UpperThreshold(\NumConsumers, \subfee, \TokExp)$. 

\paragraph{Case 4: $\LowerThreshold(\NumConsumers, \subfee, \TokExp) < \price \cdot \TokExp < \UpperThreshold(\NumConsumers, \subfee, \TokExp)$ and $(\price \cdot \TokExp) \cdot g(w^*(\price \cdot \TokExp)) > \alpha$.} We know that the supplier will either set the price to be $\LowerThreshold(\NumConsumers, \subfee, \TokExp)$ or $\UpperThreshold(\NumConsumers, \subfee, \TokExp)$. We show that there is a threshold value $\LowerThresholdMon(\NumConsumers, \subfee, \TokExp)$ in this range such that the supplier sets the price to be  $\LowerThreshold(\NumConsumers, \subfee, \TokExp)$ if $\price \le \LowerThresholdMon(\NumConsumers, \subfee, \TokExp)$ and sets the price to be  $\UpperThreshold(\NumConsumers, \subfee, \TokExp)$ otherwise. Note that the realized profit at $\LowerThreshold(\NumConsumers, \subfee, \TokExp)$ is 
\[\NumConsumers \left(1 - \frac{\price}{\LowerThreshold(\NumConsumers, \subfee, \TokExp)} \right) \LowerThreshold(\NumConsumers, \subfee, \TokExp) g(w^*(\LowerThreshold(\NumConsumers, \subfee, \TokExp)))  \]
and at $\UpperThreshold(\NumConsumers, \subfee, \TokExp)$ is  
\[\NumConsumers \left(1 - \frac{\price}{\UpperThreshold(\NumConsumers, \subfee, \TokExp)} \right) \subfee.  \]
It is easy to see that the ratio is decreasing in $\price$. At $\price \cdot \TokExp = \LowerThreshold(\NumConsumers, \subfee, \TokExp)$, we see that the ratio is at least:
\[ (1 - \frac{1}{\beta}) \NumConsumers \cdot \LowerThreshold(\NumConsumers, \subfee, \TokExp) g(w^*(\LowerThreshold(\NumConsumers, \subfee, \TokExp))) > (1 - \frac{1}{\beta}) \NumConsumers \subfee > 1, \]
using our assumption that $\NumConsumers > \frac{\beta}{\beta-1}$. This proves the desired statement. 

\paragraph{Case 5: $\ic \cdot \TokExp < \LowerThreshold(\NumConsumers, \subfee, \TokExp)$.} We know that the supplier will either set the price to be $\price \cdot \TokExp$ or $\UpperThreshold(\NumConsumers, \subfee, \TokExp)$.
The above analysis for Case 4 shows that:
\[\NumConsumers \left(1 - \frac{\price}{\UpperThreshold(\NumConsumers, \subfee, \TokExp)} \right) \subfee < \NumConsumers \left(1 - \frac{\price}{\LowerThreshold(\NumConsumers, \subfee, \TokExp)} \right) \LowerThreshold(\NumConsumers, \subfee, \TokExp) g(w^*(\LowerThreshold(\NumConsumers, \subfee, \TokExp))).  \]
We also know that:
\[ \NumConsumers \left(1 - \frac{\price}{\LowerThreshold(\NumConsumers, \subfee, \TokExp)} \right) \LowerThreshold(\NumConsumers, \subfee, \TokExp) g(w^*(\LowerThreshold(\NumConsumers, \subfee, \TokExp))) < \NumConsumers (\price \cdot \TokExp - \price) g(w^*(\price \cdot \TokExp)).  \]
This proves that $\price_1 = \price \cdot \TokExp$ as desired. 
\end{proof}

We prove Theorem \ref{thm:extensionmonopolist} from Lemma \ref{lemma:equilibriumcharacterizationmonopolist}. 

\begin{proof}
By Lemma \ref{lemma:equilibriumcharacterizationmonopolist}, we know that the supplier's price satisfies:
\[ 
\price_1 =
\begin{cases}
\TokExp \cdot \ic & \text{ if } \ic < \TokExp^{-1} \cdot \LowerThreshold(\NumConsumers, \subfee, \TokExp)  \\
\LowerThreshold(\NumConsumers, \subfee, \TokExp) & \text{ if } \ic \ge \TokExp^{-1} \cdot \LowerThreshold(\NumConsumers, \subfee, \TokExp) \text{ and }
\ic \le \LowerThresholdMon(\NumConsumers, \subfee, \TokExp),   \\
\UpperThreshold(\NumConsumers, \subfee, \TokExp) & \text{ if } \ic \in (\LowerThresholdMon(\NumConsumers, \subfee, \TokExp), \UpperThresholdMon(\NumConsumers, \subfee, \TokExp)]\\
\TokExp \cdot \ic & \text{ if } \ic > \UpperThresholdMon(\NumConsumers, \subfee, \TokExp).
\end{cases}
\] 
This, coupled with Lemma \ref{lemma:middlemansubgamemonopolist}, gives us:
\[ 
\sum_{j=1}^{\NumConsumers} \mathbb{E}[1[\action{j} = \Middleman]] =
\begin{cases}
0 & \text{ if } \min(\ic + \costadditional, \costmanual) \le \LowerThresholdMon(\NumConsumers, \subfee, \TokExp)  \\
\NumConsumers & \text{ if } \min(\ic + \costadditional, \costmanual) \in (\LowerThresholdMon(\NumConsumers, \subfee, \TokExp), \UpperThresholdMon(\NumConsumers, \subfee, \TokExp)]\\
0 & \text{ if }  \min(\ic + \costadditional, \costmanual) > \UpperThresholdMon(\NumConsumers, \subfee, \TokExp).
\end{cases}
\]
as desired. 
\end{proof}

\subsection{Proof of Theorem  \ref{thm:extensionmarginal}}

We prove Theorem \ref{thm:extensionmarginal}. First, we prove the following analogue of Lemma \ref{lemma:middlemansubgame}.
\begin{lemma}
\label{lemma:middlemansubgamemarginal}
Consider the setup of Section \ref{subsec:marginalcosts} and Theorem \ref{thm:extensionmarginal}. Suppose that suppliers choose prices $\price_1, \ldots, \price_P$ and consider the subgame between the intermediary and consumers (Stages 2-3). Let $\price = \min(\costadditional + \min_{i \in [\NumProviders]} \price_i, \costmanual)$, and consider the condition 
\begin{equation}
\label{eq:conditionmarginal}
 \price (1 + \marg \NumConsumers) \cdot \TokFn\left(\subfee + \max_{w \ge 0} \left(\w -\price  (1+\marg) \TokFn(w)\right) \right) > \subfee \NumConsumers.
\end{equation}
Under the tiebreaking assumptions discussed in Section \ref{subsec:equilibriumexistence}, there exists a unique pure strategy where: 
\begin{itemize}
    \item If \eqref{eq:conditionmarginal} holds, then $\wmiddleman = 0$. Moreover,  for all $j \in [\NumConsumers]$, it holds that $\action{j} = \Direct$, 
    \[\wuser{j} = \argmax_{w \ge 0} \left(\w -  \price(1 + \marg) \TokFn(w)\right).\] Moreover, if $\costmanual < \costadditional + \min_{i \in [\NumProviders]} \price_i$, then the consumer chooses $\providerchoice{j} = 0$. Otherwise, the consumer chooses $\providerchoice{j} = \text{argmin}_{i \in [\NumProviders]} \price_i$ (tie-breaking in favor of suppliers with a lower index). 
    \item If \eqref{eq:condition} does not hold, then 
    \[\wmiddleman = \wuser{j} = \subfee + \max_{w \ge 0} \left(\w - \price (1+\marg) \TokFn(w)\right).\] Moreover, if $\costmanual < \costadditional + \min_{i \in [\NumProviders]} \price_i$, then the intermediary chooses $\providerchoicemiddleman = 0$; otherwise, the intermediary chooses $\providerchoicemiddleman = \text{argmin}_{i \in [\NumProviders]} \price_i$ (tie-breaking in favor of suppliers with a lower index). Finally, it holds that $\action{j} = \Middleman$ and $\wuser{j} = \wmiddleman$ for all $j \in [\NumConsumers]$. 
\end{itemize}
\end{lemma}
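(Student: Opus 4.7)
The plan is to mirror the proof of Lemma \ref{lemma:middlemansubgame} with careful bookkeeping of the marginal cost parameter $\marg$. First I would observe that, as before, the intermediary and any consumer producing directly each face the same technology-vs.-manual choice: they pick whichever mode minimizes their per-unit production cost, which gives effective unit cost $\price = \min(\costadditional + \min_{i \in [\NumProviders]} \price_i, \costmanual)$. The only change relative to the base model is that producing quality $\w$ now costs $\price (1+\marg) \cdot g(\w)$ for a consumer (who must serve only themselves), and $\price \cdot g(\wmiddleman) + \marg \cdot \price \cdot g(\wmiddleman) \cdot |\{j : \action{j}=\Middleman\}|$ for the intermediary.

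Second, I would analyze the consumers' best responses taking $\wmiddleman$ as given. A consumer who picks $\action{j}=\Direct$ optimally produces $\wuser{j} = \argmax_{\w \ge 0}(\w - \price(1+\marg) g(\w))$ and obtains utility $\max_{\w \ge 0}(\w - \price(1+\marg) g(\w))$; choosing $\action{j}=\Middleman$ yields utility $\wmiddleman - \subfee$. Hence all consumers prefer $\Middleman$ (using the tiebreaking rule from Section \ref{subsec:equilibriumexistence}) exactly when $\wmiddleman \ge \subfee + \max_{\w \ge 0}(\w - \price(1+\marg) g(\w))$. Any $\wmiddleman$ strictly below this threshold attracts zero consumers, so a cost-minimizing intermediary who wants to attract any consumer will attract all of them (by the homogeneity of consumers), and will produce exactly at the threshold.

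Third, I would compare the intermediary's two meaningful options: produce nothing (utility $0$) or produce at the threshold $\wmiddleman^* := \subfee + \max_{\w \ge 0}(\w - \price(1+\marg) g(\w))$. In the latter case, the intermediary collects revenue $\subfee \NumConsumers$ and incurs cost $\price(1 + \marg \NumConsumers) \cdot g(\wmiddleman^*)$, where the $\marg \NumConsumers$ term accounts for the marginal cost borne for each of the $\NumConsumers$ consumers served. Entering is thus strictly profitable if and only if $\price(1+\marg \NumConsumers) \cdot g(\wmiddleman^*) < \subfee \NumConsumers$, i.e.\ exactly when \eqref{eq:conditionmarginal} fails; the tiebreaking rule handles the boundary. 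The supplier-indexing and manual-vs-technology tiebreaks for $\providerchoicemiddleman$ and $\providerchoice{j}$ are inherited verbatim from the proof of Lemma \ref{lemma:middlemansubgame}.

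The only nonroutine step is verifying that the intermediary cannot do better by producing some $\wmiddleman \in (0, \wmiddleman^*)$ that attracts a proper subset of consumers; this is where the marginal cost makes the analysis slightly different from the base case. Under homogeneous consumers and the stated tiebreaking, however, such intermediate qualities attract \emph{no} consumers (strictly positive quality is worse than $\wmiddleman^*$ at attracting any given consumer, and consumers tiebreak against the intermediary whenever $\wmiddleman < \wmiddleman^*$), so they give utility $-\price g(\wmiddleman) < 0$, strictly worse than $0$. This rules out intermediate deviations and completes the characterization.
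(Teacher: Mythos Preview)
Your proof is correct and follows the same approach as the paper: both reduce the marginal-cost subgame to the base-model argument of Lemma~\ref{lemma:middlemansubgame} with the consumer's effective unit cost scaled by $(1+\marg)$ and the intermediary's total cost scaled by $(1+\marg\NumConsumers)$, then compare the intermediary's only two relevant options (exit versus produce at the threshold $\wmiddleman^*$). One cosmetic slip in your last paragraph: when $\wmiddleman < \wmiddleman^*$ consumers \emph{strictly} prefer $\Direct$ rather than ``tiebreak against'' the intermediary, but your conclusion that such qualities attract no consumers---and hence yield strictly negative intermediary utility---is correct regardless.
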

\begin{proof}
Like in the proof of Lemma \ref{lemma:middlemansubgame},
recall that when consumers or the intermediary produce content, they choose the option that minimizes their production costs. If 
$\costadditional + \min_{i \in [\NumProviders]} \price_i < \costmanual$, they leverage the technology of the supplier who offers the lowest price, and otherwise, they produce content without using the technology. This means that they face production costs $\price = \min(\costadditional + \min_{i \in [\NumProviders]} \price_i, \costmanual)$. 

The main difference from Lemma \ref{lemma:middlemansubgame} is that the consumers and the intermediary face marginal costs. Taking into account these marginal costs, when consumer $j$ chooses $\action{j} = \Direct$, then they maximize their utility and thus produce content $w^*(\price) = \argmax(w - \price (1+\marg) g(w))$ and achieve utility $\max(w -  (1+\marg) \price g(w))$. Since the consumer pays the intermediary a fee of $\subfee$, the intermediary must produce content satisfying $w' \ge \subfee + \max_{w \ge 0}(w - \price  (1+\marg)  g(w))$ to incentivize the consumer to choose $\action{j} = \Middleman$. Producing content $w' \ge \subfee + \max_{w \ge 0}(w - \price  (1+\marg)  g(w))$ would incentivize all of the consumers to choose the intermediary, so the intermediary would earn utility
\[ \subfee \cdot \NumConsumers - \price \cdot (1+\marg \NumConsumers) \cdot g(w').\]
This also means that the intermediary prefers producing content $\subfee + \max_{w \ge 0}(w - \price (1+\marg)  g(w))$ over any $w' > \subfee + \max_{w \ge 0}(w - \price  (1+\marg)  g(w))$ in order to minimize costs. The intermediary prefers producing this content over producing content $w = 0$ which would not attract any consumers if and only if:
\[ \subfee \cdot \NumConsumers - \price \cdot(1+\marg \NumConsumers) \cdot g(\subfee + \max_{w \ge 0}(w - \price (1+\marg)  g(w))) \ge 0.\]
This, coupled with the tiebreaking rules, proves the desired statement.
\end{proof}

Using this lemma, we can characterize the pure strategy equilibria in this extended model. 
\begin{lemma}
\label{lemma:equilibriumcharacterizationmarginal}
Consider the setup of Section \ref{subsec:marginalcosts} and Theorem \ref{thm:extensionmarginal}. 
Under the tiebreaking assumptions in Section \ref{subsec:equilibriumexistence}, there exists a pure strategy equilibrium which takes the following form: all suppliers choose the price $\price_i = \ic$ for $i \in [\NumProviders]$, and the intermediary and consumers choose actions according to the subgame equilibrium constructed in Lemma \ref{lemma:middlemansubgamemarginal}. The actions of the intermediary and consumers are the same at every pure strategy equilibrium; moreover, the production cost $\price = \min(\costadditional + \min_{i \in [\NumProviders]} \price_i, \costmanual)$ is the same at every pure strategy equilibrium.  
\end{lemma}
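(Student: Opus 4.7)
The plan is to mirror the proof of Lemma \ref{lemma:equilibriumcharacterization} from the baseline model, since the subgame analysis is already delivered by Lemma \ref{lemma:middlemansubgamemarginal}. The only thing left to verify is that the suppliers' stage behaves like Bertrand competition whose outcome is pricing at marginal cost $\ic$, and this argument only uses the \emph{monotonicity} of the suppliers' profit in their own price (a deviating supplier either captures the whole market or loses it entirely), not any fine property of the marginal-cost parameter $\marg$.

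First, for existence I would fix the candidate profile $\price_i = \ic$ for all $i \in [\NumProviders]$ and let the intermediary and consumers play the subgame equilibrium from Lemma \ref{lemma:middlemansubgamemarginal}. I split on the sign of $\costmanual - (\ic + \costadditional)$. If $\costmanual < \ic + \costadditional$, then by Lemma \ref{lemma:middlemansubgamemarginal} all production is manual regardless of the supplier prices, so every supplier earns $0$ and no deviation is profitable. Otherwise $\costmanual \ge \ic+\costadditional$ and production is channeled through $\arg\min_i \price_i$ (with index tiebreak). At the candidate profile every supplier earns $0$; a deviation to $\price_i' > \ic$ loses the business to another supplier (profit still $0$); and a deviation to $\price_i' < \ic$ captures all downstream production at per-unit margin $\price_i'-\ic<0$, yielding strictly negative utility. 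Hence the candidate is a pure strategy equilibrium.

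For uniqueness, I would fix any pure strategy equilibrium and argue that $\min_i \price_i = \ic$ whenever $\costmanual \ge \ic+\costadditional$; the case $\costmanual < \ic+\costadditional$ is again trivial since the production cost $\price$ is pinned down by $\costmanual$. Suppose for contradiction $\min_i \price_i > \ic$. Let $i^{\star}$ be any supplier, and consider the deviation to $\price_{i^{\star}}' = \min_i \price_i - \epsilon$ for small $\epsilon > 0$. By Lemma \ref{lemma:middlemansubgamemarginal} the induced production cost $\price = \min(\costadditional + \price_{i^{\star}}', \costmanual)$ is continuous in $\epsilon$, and for $\epsilon$ small the intermediation/disintermediation status (determined by the continuous inequality \eqref{eq:conditionmarginal}) is the same as in a neighborhood. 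Either way, $i^{\star}$ captures all the quantity produced by either the intermediary or the consumers at a strictly positive margin, giving a profitable deviation — contradiction. Conversely, if $\min_i \price_i < \ic$, the supplier at which the minimum is attained earns negative utility and could strictly improve by deviating to $\ic$. So $\min_i \price_i = \ic$, which fixes $\price = \min(\ic+\costadditional, \costmanual)$ uniquely, and by Lemma \ref{lemma:middlemansubgamemarginal} the intermediary and consumer actions are determined.

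The main subtlety I expect is in the undercutting step: I need the deviation $\price_{i^{\star}}' = \min_i \price_i - \epsilon$ to indeed capture the market rather than, for instance, shift the subgame across the intermediation/disintermediation threshold in a way that nullifies the gain. Here the key point is that the deviator is always strictly cheapest, so by Lemma \ref{lemma:middlemansubgamemarginal} they are chosen by whoever (intermediary or each consumer) is using the technology at all; and the indicator of \eqref{eq:conditionmarginal} is continuous in $\price$, so for small $\epsilon$ the regime does not switch. Once that is pinned down, the rest of the argument is essentially identical to the baseline proof.
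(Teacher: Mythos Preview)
Your proposal is correct and follows exactly the approach the paper takes: the paper's own proof simply says the existence part is analogous to Lemma \ref{lemma:equilibriumcharacterization} and the uniqueness part is analogous to Theorem \ref{thm:equilibriumuniqueness}, and you have spelled out precisely those Bertrand-style arguments adapted to the marginal-cost setting via Lemma \ref{lemma:middlemansubgamemarginal}. Your extra care about regime-switching at the undercutting step is not even needed (whichever regime obtains, a strictly positive quantity is produced through the cheapest supplier), but it does no harm.
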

\begin{proof}
The proof of the equilibrium construction in the first sentence is analogous to the proof of Lemma \ref{lemma:equilibriumcharacterizationmarginal}. The proof of the second sentence is analogous to the proof of Theorem \ref{thm:equilibriumuniqueness}.
\end{proof}

Using these characterization results, we prove Theorem \ref{thm:extensionmarginal}.
\begin{proof}[Proof of Theorem \ref{thm:extensionmarginal}]  By Lemma \ref{lemma:equilibriumcharacterizationmarginal}, we know that the intermediary survives if and only if  
\[ \price (1 + \marg \NumConsumers) \cdot \TokFn\left(\subfee + \max_{w \ge 0} \left(\w -\price (1 + \marg) \TokFn(w)\right) \right) \le \subfee \NumConsumers.\]
Let us change variables and let $\price' = \price (1+\marg)$ and let $C' = \frac{C(1+\marg)}{1+\marg \NumConsumers}$. Then we can write the condition as:
\[ \price' \cdot \TokFn\left(\subfee + \max_{w \ge 0} \left(\w -\price' \TokFn(w)\right) \right) \le \subfee \NumConsumers'.\]
This means that 
\[ 
\sum_{j=1}^{\NumConsumers} \mathbb{E}[1[\action{j} = \Middleman]] =
\begin{cases}
0 & \text{ if } \min(\ic + \costadditional, \costmanual) < \LowerThresholdMarg(\NumConsumers, \subfee, \TokExp, \marg) \\
\NumConsumers & \text{ if } \min(\ic + \costadditional, \costmanual) \in [\LowerThresholdMarg(\NumConsumers, \subfee, \TokExp, \marg), \UpperThresholdMarg(\NumConsumers, \subfee, \TokExp, \marg)]\\
0 & \text{ if } \min(\ic + \costadditional, \costmanual) > \UpperThresholdMarg(\NumConsumers, \subfee, \TokExp, \marg) \\
\end{cases}
\]
where $\LowerThresholdMarg(\NumConsumers, \subfee, \TokExp, \marg) = (1+ \marg)^{-1} \cdot \LowerThreshold(\NumConsumers', \subfee, g)$ and $\UpperThresholdMarg(\NumConsumers, \subfee, \TokExp, \marg) = (1+ \marg)^{-1} \cdot \UpperThreshold(\NumConsumers', \subfee, \TokExp)$ as desired. 
\end{proof}
\subsection{Proof of Theorem  \ref{thm:extensionfees}}

We prove Theorem \ref{thm:extensionfees}. 
First, we prove the following analogue of Lemma \ref{lemma:middlemansubgame}.
\begin{lemma}
\label{lemma:middlemansubgamefees}
Consider the setup of Section \ref{subsec:transfers} and Theorem \ref{thm:extensionfees}. Suppose that suppliers choose prices $\price_1, \ldots, \price_P$ and consider the subgame between the intermediary and consumers (Stages 2-3). Let $\price = \min(\costadditional + \min_{i \in [\NumProviders]} \price_i, \costmanual)$, and consider the condition 
\begin{equation}
\label{eq:conditionfees}
 \subfee^{\frac{1}{\beta-1}} (1-\alpha) < \NumConsumers^{-\frac{1}{\beta-1}} \left( \beta^{-\frac{1}{\beta-1}} - \beta^{-\frac{\beta}{\beta-1}} \right)
\end{equation}
Under the tiebreaking assumptions discussed in Section \ref{subsec:equilibriumexistence}, there exists a unique pure strategy where: 
\begin{itemize}
    \item If \eqref{eq:conditionfees} holds, then $\wmiddleman = 0$. Moreover,  for all $j \in [\NumConsumers]$, it holds that $\action{j} = \Direct$, 
    \[\wuser{j} = \argmax_{w \ge 0} \left(\w -  \price\TokFn(w)\right).\] Moreover, if $\costmanual < \costadditional + \min_{i \in [\NumProviders]} \price_i$, then the consumer chooses $\providerchoice{j} = 0$. Otherwise, the consumer chooses $\providerchoice{j} = \text{argmin}_{i \in [\NumProviders]} \price_i$ (tie-breaking in favor of suppliers with a lower index). 
    \item If \eqref{eq:conditionfees} does not hold, then 
    \[\wmiddleman = \wuser{j} = (1 - \subfee)^{-1} \max_{w \ge 0} \left(\w - \price \TokFn(w)\right).\] Moreover, if $\costmanual < \costadditional + \min_{i \in [\NumProviders]} \price_i$, then the intermediary chooses $\providerchoicemiddleman = 0$; otherwise, the intermediary chooses $\providerchoicemiddleman = \text{argmin}_{i \in [\NumProviders]} \price_i$ (tie-breaking in favor of suppliers with a lower index). Finally, it holds that $\action{j} = \Middleman$ and $\wuser{j} = \wmiddleman$ for all $j \in [\NumConsumers]$. 
\end{itemize}
\end{lemma}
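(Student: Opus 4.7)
The plan is to adapt the argument of Lemma \ref{lemma:middlemansubgame} to the linear-fee setting. Only two substantive changes arise: a consumer who selects the intermediary now nets $(1-\subfee)\wmiddleman$ rather than $\wmiddleman - \subfee$, and the intermediary's revenue is $\subfee\NumConsumers\wmiddleman$ instead of the constant $\subfee\NumConsumers$. Otherwise the qualitative structure is the same as in Lemma \ref{lemma:middlemansubgame}: consumers and the intermediary face a common per-unit cost, the intermediary either attracts all $\NumConsumers$ consumers or none, and the equilibrium is pinned down by comparing these two options.

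Carrying this out in order: (1) Reproduce the opening of the proof of Lemma \ref{lemma:middlemansubgame} to reduce to a common per-unit cost $\price = \min(\costadditional + \min_{i \in [\NumProviders]}\price_i,\costmanual)$, with the supplier-vs-manual and supplier-index choices pinned down by the tiebreaking. (2) Solve the direct-consumer's problem $\max_{\w\ge 0}(\w - \price\w^{\TokExp})$ via its first-order condition, obtaining $\w_d^{*} = (\price\TokExp)^{-1/(\TokExp-1)}$ and value $U_D := \price^{-1/(\TokExp-1)}\bigl(\TokExp^{-1/(\TokExp-1)} - \TokExp^{-\TokExp/(\TokExp-1)}\bigr)$. (3) A consumer facing intermediary quality $\wmiddleman$ nets $(1-\subfee)\wmiddleman$ from $\Middleman$, so under consumer tiebreaking she selects $\Middleman$ iff $\wmiddleman \ge B := (1-\subfee)^{-1}U_D$, and if so all $\NumConsumers$ consumers do. (4) Argue that the intermediary's attracting payoff $\subfee\NumConsumers\wmiddleman - \price\wmiddleman^{\TokExp}$ is strictly concave in $\wmiddleman$ and that, in the lemma's parameter regime, its maximum over $\wmiddleman \ge B$ is attained at $\wmiddleman = B$; the alternative is $\wmiddleman = 0$ with payoff zero. (5) Intermediation therefore occurs iff $\subfee\NumConsumers B - \price B^{\TokExp} \ge 0$; substituting the closed form for $B$ causes the $\price$ factors to cancel, yielding $\subfee\NumConsumers(1-\subfee)^{\TokExp-1} \ge \TokExp^{-1}\bigl((\TokExp-1)/\TokExp\bigr)^{\TokExp-1}$, and taking $(\TokExp-1)$-th roots rearranges exactly to the negation of \eqref{eq:conditionfees}.

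The main obstacle is step (4). In the fixed-fee baseline of Lemma \ref{lemma:middlemansubgame} the intermediary's revenue is flat in $\wmiddleman$, so the smallest attracting quality $B$ is trivially optimal; here revenue strictly grows with $\wmiddleman$, so one must verify, using the restriction $\subfee \in (0,1)$ and the structure of $\w^{\TokExp}$, that the unconstrained maximizer of $\subfee\NumConsumers\wmiddleman - \price\wmiddleman^{\TokExp}$ does not exceed $B$ in the parameter range where intermediation is marginal. Once this is settled, the clean cancellation of $\price$ in step (5) simultaneously establishes the disintermediation threshold and the $\price$-independence claim that feeds into Theorem \ref{thm:extensionfees}.
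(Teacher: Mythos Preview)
Your plan is sound through step (3), and the final condition you reach in step (5) is correct. The trouble is step (4): the claim that the intermediary's optimum over $\wmiddleman\ge B$ is attained at $B$ is \emph{not} true in general. The unconstrained maximizer of $\subfee\NumConsumers w-\price w^{\TokExp}$ is $w^{\ast}=\bigl(\subfee\NumConsumers/(\price\TokExp)\bigr)^{1/(\TokExp-1)}$, and one can easily arrange $w^{\ast}>B$: for $\TokExp=2$ this reduces to $\subfee(1-\subfee)>1/(2\NumConsumers)$, which holds for all large $\NumConsumers$. In that range the constrained optimum is $w^{\ast}$, not $B$. Your hedge ``in the parameter range where intermediation is marginal'' is valid exactly at the threshold (there the optimal profit is zero, which forces $w^{\ast}\le B$), but it does not by itself deliver the biconditional you assert in step (5).

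The paper sidesteps this by a slightly different device. Instead of first locating the intermediary's optimal quality, it observes that $\subfee\NumConsumers w-\price w^{\TokExp}\ge 0$ precisely on the interval $\bigl[0,(\subfee\NumConsumers/\price)^{1/(\TokExp-1)}\bigr]$, so a profitable attracting quality exists iff $B$ lies in this interval, i.e.\ iff $B\le(\subfee\NumConsumers/\price)^{1/(\TokExp-1)}$. That inequality is algebraically identical to your $\subfee\NumConsumers B-\price B^{\TokExp}\ge 0$, after which the $\price$ cancellation and the rearrangement to \eqref{eq:conditionfees} proceed exactly as you outline. So your step (5) conclusion is right; the cleanest justification, however, does not go through step (4) but through comparing $B$ with the positive root of the profit function. (Neither argument, strictly speaking, pins down $\wmiddleman=B$ throughout the intermediation regime --- when $w^{\ast}>B$ the intermediary would choose $w^{\ast}$ --- but this does not affect the disintermediation threshold, which is all that Theorem \ref{thm:extensionfees} uses.)
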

\begin{proof}
Like in the proof of Lemma \ref{lemma:middlemansubgame},
recall that when consumers or the intermediary produce content, they choose the option that minimizes their production costs. If 
$\costadditional + \min_{i \in [\NumProviders]} \price_i < \costmanual$, they leverage the technology of the supplier who offers the lowest price, and otherwise, they produce content without using the technology. This means that they face production costs $\price = \min(\costadditional + \min_{i \in [\NumProviders]} \price_i, \costmanual)$. 

The main difference from Lemma \ref{lemma:middlemansubgame} is in the fee structure. Like before, when consumer $j$ chooses $\action{j} = \Direct$, then they maximize their utility and thus produce content $w^*(\price) = \argmax(w - \price  g(w))$ and achieve utility $\max(w -  \price g(w))$. Since the consumer pays the intermediary a fee of $\subfee \cdot \w$, the intermediary must produce content satisfying $w' \ge \subfee \w' + \max_{w \ge 0}(w - \price  g(w))$ to incentivize the consumer to choose $\action{j} = \Middleman$. Producing content $w' \ge (1 - \subfee)^{-1} \cdot \max_{w \ge 0}(w - \price   g(w))$ would incentivize all of the consumers to choose the intermediary. We can use the structure of $g(w)$ to simplify this condition to:
\begin{equation}
\label{eq:lowerbound}
 w' \ge (1 - \subfee)^{-1} \cdot \max_{w \ge 0}(w - \price   g(w)) = (1 - \subfee)^{-1} \cdot \price^{-\frac{1}{\beta-1}} \left( \beta^{-\frac{1}{\beta-1}} - \beta^{-\frac{\beta}{\beta-1}} \right)   
\end{equation}
If the intermediary produces content $w'$, they would earn utility
\[ \subfee \cdot w' \cdot \NumConsumers - \price \cdot g(w') = \subfee \cdot w' \cdot \NumConsumers - \price \cdot (w')^{\TokExp}.\]
The intermediary prefers producing this content over producing content $w = 0$ which would not attract any consumers if and only if:
\[ \subfee \cdot w' \cdot \NumConsumers - \price \cdot (w')^{\TokExp} \ge 0.\]
We can solve this to obtain:
\begin{equation}
\label{eq:upperbound}
w' \le \price^{-\frac{1}{\beta-1}} \left(\subfee \cdot \NumConsumers \right)^{\frac{1}{\beta-1}}
\end{equation}
Because of the structure of the tiebreaking rules, the intermediary survives in the market if and only if there exist  $w'$ satisfying both \eqref{eq:upperbound} and \eqref{eq:lowerbound}. This happens if and only if: 
\[\price^{-\frac{1}{\beta-1}} \left(\subfee \cdot \NumConsumers \right)^{\frac{1}{\beta-1}} \ge  (1 - \subfee)^{-1} \cdot \price^{-\frac{1}{\beta-1}} \left( \beta^{-\frac{1}{\beta-1}} - \beta^{-\frac{\beta}{\beta-1}} \right).\]
This simplifies to:
\[ \subfee^{\frac{1}{\beta-1}} (1-\alpha) \ge \NumConsumers^{-\frac{1}{\beta-1}} \left( \beta^{-\frac{1}{\beta-1}} - \beta^{-\frac{\beta}{\beta-1}} \right).\]
\end{proof}

Using this lemma, we can characterize the pure strategy equilibria in this extended model. 
\begin{lemma}
\label{lemma:equilibriumcharacterizationfees}
Consider the setup of Section \ref{subsec:transfers} and Theorem \ref{thm:extensionfees}. 
Under the tiebreaking assumptions in Section \ref{subsec:equilibriumexistence}, there exists a pure strategy equilibrium which takes the following form: all suppliers choose the price $\price_i = \ic$ for $i \in [\NumProviders]$, and the intermediary and consumers choose actions according to the subgame equilibrium constructed in Lemma \ref{lemma:middlemansubgamefees}. The actions of the intermediary and consumers are the same at every pure strategy equilibrium; moreover, the production cost $\price = \min(\costadditional + \min_{i \in [\NumProviders]} \price_i, \costmanual)$ is the same at every pure strategy equilibrium.  
\end{lemma}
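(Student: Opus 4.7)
The plan is to mirror the proof of Lemma \ref{lemma:equilibriumcharacterization} and Theorem \ref{thm:equilibriumuniqueness}, with the baseline subgame characterization replaced by Lemma \ref{lemma:middlemansubgamefees}. I would first establish that the profile in which every supplier sets $\price_i = \ic$, with the intermediary and consumers playing the unique subgame equilibrium of Lemma \ref{lemma:middlemansubgamefees}, is a subgame perfect equilibrium; then I would argue that the production cost $\price = \min(\costadditional + \min_{i\in [\NumProviders]} \price_i, \costmanual)$, and hence the subsequent intermediary and consumer actions, must coincide across all pure strategy equilibria.

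For the existence step, I would split on whether $\costmanual < \ic + \costadditional$ or $\costmanual \geq \ic + \costadditional$. In the former case, Lemma \ref{lemma:middlemansubgamefees} (together with the tiebreaking rules) implies that all production is done manually irrespective of the prices $\price_1, \ldots, \price_P$, so every supplier earns zero utility regardless of its chosen price and no deviation is profitable. In the latter case, at $\price_i = \ic$ every supplier earns zero profit (by the tiebreaking rule, only the lowest-indexed minimizer is used, at zero margin). A deviation to $\price_i < \ic$ still attracts usage of supplier $i$'s technology---whether by the intermediary in the intermediation regime or by the consumers in the disintermediation regime of Lemma \ref{lemma:middlemansubgamefees}---but yields a strictly negative per-unit margin; a deviation to $\price_i > \ic$ redirects all usage to a competing supplier at price $\ic$ and yields zero profit. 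Hence $\price_i = \ic$ is a best response.

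For uniqueness, I would run a standard Bertrand-style argument. If $\costmanual < \ic + \costadditional$, then at any equilibrium we cannot have $\min_i \price_i < \ic$ (the minimizing supplier would earn strictly negative profit and could deviate to any price exceeding $\costmanual$ for zero profit), so $\price = \costmanual$ in every equilibrium. If $\costmanual \geq \ic + \costadditional$, suppose for contradiction that $\min_i \price_i \neq \ic$. If $\min_i \price_i > \ic$, any supplier can set $\price_i = \min_i \price_i - \epsilon$ for $\epsilon$ sufficiently small, still below $\costmanual$ and above $\ic$, thereby capturing all downstream technology usage at a strictly positive margin---this is a profitable deviation. If $\min_i \price_i < \ic$, then the supplier attaining the minimum earns strictly negative profit and can deviate to $\price_i = \ic$ for zero profit. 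Therefore $\min_i \price_i = \ic$, which pins down $\price$, and the intermediary and consumer actions are then uniquely determined by Lemma \ref{lemma:middlemansubgamefees}.

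The only mildly subtle step is verifying that the undercutting deviation in the second case always attracts positive usage. The key observation is that in the linear-fee model, the intermediary-survival condition \eqref{eq:conditionfees} depends only on $\subfee$, $\NumConsumers$, and $\TokExp$---not on $\price$---so a supplier's price change cannot flip the market between intermediation and disintermediation. Consequently, strictly positive usage flows to the lowest-priced supplier in both regimes (either through the intermediary or directly through consumers), which is exactly what makes the Bertrand argument go through without modification.
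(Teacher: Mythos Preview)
Your proposal is correct and follows exactly the approach the paper takes: the paper's own proof simply says the existence part is analogous to Lemma~\ref{lemma:equilibriumcharacterization} and the uniqueness part is analogous to Theorem~\ref{thm:equilibriumuniqueness}, and you have spelled out precisely that analogy with Lemma~\ref{lemma:middlemansubgamefees} substituted in. Your closing observation that condition~\eqref{eq:conditionfees} is independent of $\price$ is true and a pleasant simplification, but note it is not strictly required for the Bertrand undercutting step---in either regime the lowest-priced supplier captures strictly positive usage, so the deviation is profitable regardless of whether the regime could flip.
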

\begin{proof}
The proof of the equilibrium construction in the first sentence is analogous to the proof of Lemma \ref{lemma:equilibriumcharacterizationmarginal}. The proof of the second sentence is analogous to the proof of Theorem \ref{thm:equilibriumuniqueness}.
\end{proof}

Using these characterization results, we prove Theorem \ref{thm:extensionfees}.
\begin{proof}[Proof of Theorem \ref{thm:extensionfees}]  By Lemma \ref{lemma:equilibriumcharacterizationfees}, we know that the intermediary survives if and only if  
\[  \subfee^{\frac{1}{\beta-1}} (1-\alpha) \ge \NumConsumers^{-\frac{1}{\beta-1}} \left( \beta^{-\frac{1}{\beta-1}} - \beta^{-\frac{\beta}{\beta-1}} \right).\]
This condition is independent of $\price$ as desired. Moreover, the condition becomes weaker as $\NumConsumers$ gets larger. 
\end{proof}

\end{document}